\newtheorem{lemma}{Lemma}
\newtheorem{lemma*}[lemma]{Lemma*}
\newtheorem{theorem}[lemma]{Theorem}
\newtheorem{theorem*}[lemma]{Theorem*}
\newtheorem{corollary}[lemma]{Corollary}
\newtheorem{definition}[lemma]{Definition}
\newtheorem{observation}[lemma]{Observation}
\newtheorem{fact}[lemma]{Fact}
\definecolor{blueish}{rgb}{0.122, 0.435, 0.698}
\definecolor{dagstuhlyellow}{rgb}{0.99,0.78,0.07}
\definecolor{lightgray}{rgb}{0.9,0.9,0.9}
\newtcbox{\colbox}{
size=title,
  nobeforeafter,
  colframe=white,
  colback=blue!5!white,
  arc=10pt,
  tcbox raise base}
\newcommand{\pname}{\textsc}
\newcommand{\ProblemFormat}[1]{\pname{#1}}
\newcommand{\ProblemIndex}[1]{\index{problem!\ProblemFormat{#1}}}
\newcommand{\ProblemName}[1]{\ProblemFormat{#1}\ProblemIndex{#1}{}\xspace}
\newcommand{\probEB}{\ProblemName{Edge Bipartization}}
\newcommand{\probOCT}{\ProblemName{Odd Cycle Transversal}}
\newcommand{\probEMC}{\ProblemName{Edge Multiway Cut}}
\newcommand{\probVMC}{\ProblemName{Vertex Multiway Cut}}
\newcommand{\probGFVS}{\ProblemName{Group Feedback Vertex Set}}
\newcommand{\probGFES}{\ProblemName{Group Feedback Edge Set}}
\newcommand{\probGFEVS}{\ProblemName{Group Feedback Edge/Vertex Set}}
\newcommand{\probEMultiC}{\ProblemName{Edge Multicut}}
\newcommand{\probVMultiC}{\ProblemName{Vertex Multicut}}
\newcommand{\probEVMultiC}{\ProblemName{Edge/Vertex Multicut}}
\newcommand{\probFVS}{\ProblemName{Feedback Vertex Set}}
\newcommand{\probSteiner}{\ProblemName{Steiner Tree}}
\title{\LARGE Subexponential Parameterized Algorithms for Cut and Cycle Hitting Problems on $H$-Minor-Free Graphs\thanks{Daniel Lokshtanov is supported by United States - Israel Binational Science Foundation (BSF) grant No. 2018302 and National Science Foundation (NSF) award CCF-2008838. Saket Saurabh is supported by the European Research Council (ERC) under the European Union's Horizon 2020 research and innovation programme (grant agreement No. 819416), and Swarnajayanti Fellowship (No. DST/SJF/MSA01/2017-18). Jie Xue is supported in part by the National Science Foundation (NSF) grant CCF-1814172.}}
\author{\normalsize Sayan Bandyapadhyay\thanks{University of Bergen, Norway, \texttt{sayan.bandyapadhyay@gmail.com}.}
\and \normalsize William Lochet\thanks{University of Bergen, Norway, \texttt{william.lochet@gmail.com}.}
\and \normalsize Daniel Lokshtanov\thanks{University of California, Santa Barbara, USA, \texttt{daniello@ucsb.edu}.}
\and \normalsize Saket Saurabh\thanks{Institute of Mathematical Sciences, Chennai, India, \texttt{saket@imsc.res.in}.}
\and \normalsize Jie Xue\thanks{New York University Shanghai, China, \texttt{jiexue@nyu.edu}.}}
\date{}
\begin{document}

\pagenumbering{gobble}

 \maketitle
 \thispagestyle{empty}
 
\begin{abstract}
We design the first subexponential-time (parameterized) algorithms for several cut and cycle-hitting problems on 
$H$-minor free graphs. In particular, we obtain the following results (where $k$ is the solution-size parameter). 
\begin{itemize}
\item $2^{O(\sqrt{k}\log k)} \cdot n^{O(1)}$ time algorithms for \probEB and \probOCT; 
\item a $2^{O(\sqrt{k}\log^4 k)} \cdot n^{O(1)}$ time algorithm for \probEMC and a $2^{O(r \sqrt{k} \log k)} \cdot n^{O(1)}$ time algorithm for \probVMC (with undeletable terminals), where $r$ is the number of terminals to be separated;
\item a $2^{O((r+\sqrt{k})\log^4 (rk))} \cdot n^{O(1)}$ time algorithm for \probEMultiC and a $2^{O((\sqrt{rk}+r) \log (rk))} \cdot n^{O(1)}$ time algorithm for \probVMultiC (with undeletable terminals), where $r$ is the number of terminal pairs to be separated;
\item a $2^{O(\sqrt{k} \log g \log^4 k)} \cdot n^{O(1)}$ time algorithm for \probGFES and a $2^{O(g \sqrt{k}\log(gk))} \cdot n^{O(1)}$ time algorithm for \probGFVS, where $g$ is the size of the group.
%\item a $2^{O(\sqrt{k} \log g \log^4 k)} \cdot n^{O(1)}$ time algorithm for \probGFES, where $g$ is the size of the group.
\item In addition, our approach also gives $n^{O(\sqrt{k})}$ time algorithms for all above problems with the exception of $n^{O(r+\sqrt{k})}$ time for \probEVMultiC and $(ng)^{O(\sqrt{k})}$ time for \probGFEVS. 
\end{itemize}
All of our FPT algorithms (the first four items above) are randomized, as they use known randomized kernelization algorithms as sub-routines.

We obtain our results by giving a new decomposition theorem on graphs of bounded genus, or more generally, an $h$-almost-embeddable graph for an arbitrary but fixed constant $h$. Our new decomposition theorem generalizes known Contraction Decomposition Theorem. Prior studies on this topic exhibited that the classes of planar graphs~[Klein, SICOMP, 2008], graphs of bounded genus~[Demaine, Hajiaghayi and Mohar, Combinatorica 2010] and $H$-minor free graphs~[Demaine, Hajiaghayi and Kawarabayashi, STOC 2011] admit a Contraction Decomposition Theorem. In particular we show the following. 
Let $G$ be a graph of bounded genus, or more generally, an $h$-almost-embeddable graph for an arbitrary but fixed constant $h$.
Then for every $p \in \mathbb{N}$, there exist disjoint sets $Z_1,\dots,Z_p \subseteq V(G)$ such that for every $i \in \{1,\dots,p\}$ and every $Z' \subseteq Z_i$, the treewidth of $G / (Z_i \backslash Z')$ is upper bounded by $O(p+|Z'|)$, where the constant hidden in $O(\cdot)$ depends on $h$.
Here $G/(Z_i \backslash Z')$ denotes the graph obtained from $G$ by contracting every edge with both endpoints in $Z_i \backslash Z'$. When $Z'=\emptyset$, this corresponds to classical Contraction Decomposition Theorem.

%Furthermore, if the almost-embeddable structure of $G$ is given, then $Z_1,\dots,Z_p$ can be computed in polynomial time. 
%This lemma can be seen as a strengthening of the classic Baker's layering technique [J.ACM 94].

%\todo[inline]{parallel with Contraction Decomposiiton}
%Up to this point, it was not known whether these problems can be solved in subexponential parameterized time on $H$-minor free graphs, because they are not amenable to the classic technique of bidimensionality or the newly invented technique of pattern coverage. 

%\todo[inline]{If possible add {\sc Subset FVS}}

%\todo[inline]{State that everything is randomized everywhere}

%, {\sc Edge Bipartization}, {\sc Edge Multiway Cut}, {\sc Gr}

\end{abstract}

\newpage

\pagenumbering{arabic}

\section{Introduction}
The study of subexponential time parameterized algorithms on planar and $H$-minor free graphs has been one of the most active sub-areas of parameterized algorithms, which led to exciting results and powerful methods. Examples include Bidimensionality~\cite{demaine2005subexponential}, applications of Baker's layering technique~\cite{FominLMPPS16,DornFLRS13,Tazari12,BuiP92}, bounds on the treewidth of the solution~\cite{FominLKPS20,KleinM14,KleinM12,MarxPP18}, and pattern coverage~\cite{FominLMPPS16,Nederlof20a}. The central theme of all these results is that planar graphs exhibit the ``{\em square root phenomenon}'': parameterized problems whose fastest parameterized algorithm run in time $f(k)n^{O(k)}$ or $2^{O(k)}$ on general graphs admit $f(k)n^{O(\sqrt{k})}$ or even $2^{O(\sqrt{k})}n^{O(1)}$ time algorithms when input is restricted to planar or  $H$-minor free graphs. 

Another central research direction within parameterized algorithms is the study of cut and cycle hitting problems such as {\sc Multiway Cut}, {\sc Multicut}, {\sc Odd Cycle Transversal} and {\sc Directed Feebdack Vertex Set}. The existence of $f(k)n^{O(1)}$ time algorithms for these problems were considered major open problems~\cite{DowneyF99}, and the algorithms developed for these problems~\cite{BousquetDT18,ChenLLOR08,MarxR14,Marx06,ReedSV04} and their generalizations~\cite{ChitnisCHPP16,ChitnisHM13,ChitnisCHM15,CyganKLPPSW21,CyganPPW13,GokeMM20,KratschPPW15,KimKPW21,LokshtanovMRS17,LokshtanovPSSZ20,MarxOR13,Wahlstrom17} have become standard textbook material~\cite{CyganFKLMPPS15}. 
Our paper is concerned with the following question: 
\begin{tcolorbox}
%{\em 
\begin{center}
Does the square root phenomenon apply to cut and cycle hitting problems?
\end{center}
%} 
\end{tcolorbox}
\smallskip
\noindent 
Despite substantial previous progress this question still remains largely unresolved. In fact, for most of these problems even an algorithm with running time $n^{O(\sqrt{k})}$  is not known, when input is restricted to planar or  $H$-minor free graphs. Here $k$ is solution size. In this paper, we make substantial progress on this question. 

%%\paragraph{Previous Work on Cut and Cycle Hitting Problems on Planar Graphs.}
\subsection{Previous work on cut and cycle hitting problems on planar graphs}
%Problems arising from the domain of Cuts \& Connectivity hold a lot of promise and remain hitherto unexplored form the perspective of parameterized subexponential algorithms, with exceptions that are few and far between. 
In most cut problems, we have an input graph $G$, a set of terminals $T\subseteq V(G)$, a set 
 $R$ of unordered pairs of vertices in $T$, called terminal pairs, and a positive integer $k$, and the objective is to test whether we can delete at most $k$ edges (or vertices), say $S$, so that ``certain demands on  terminal pairs in $R$ are met''.  When we demand that  for every terminal pair 
 $(t_1,t_2)\in R$, the vertices $t_1$ and $t_2$ lie in different connected components of $G-S$, this corresponds to \probEMultiC. When the set $R$ of pairs of terminals comprises all the pairs of vertices in the set $T$ of terminals, it corresponds to \probEMC (also known as {\sc Multiterminal Cut}). Thus, naturally cut problems have three parameters: (a) size of the set of terminal ($r)$ (or the the number of request pairs); (b) the solution size $k$; and (c) the combined parameter $r+k$. 

Dahlhaus et al.~\cite{DahlhausJPSY94} initiated an algorithmic study of  \probEMC, and showed that the problem is NP-complete on general graphs  when $|T|\geq 3$, and  on planar graphs when $|T|$ is part of the input.  They complemented the result on planar graphs by designing an algorithm with running time $O((4r)^r n^{2r-1}\log n)$. Klein and Marx, revisited this problem in $2012$, and designed an algorithm with running time 
$2^{O(r)} n^{O(\sqrt{r})}$~\cite{KleinM12}. Marx~\cite{Marx12} also showed that, this algorithm is essentially tight. That, is they showed that the problem is W[1]-hard. In fact, Marx~\cite{Marx12} showed that unless Exponential Time Hypothesis (ETH) is false, there is no algorithm solving the problem in time $f(r)n^{O(\sqrt{r})}$, for some function $f$ depending on $r$ alone.  de Verdi\`{e}re~\cite{Verdiere17} generlized this result to \probEMultiC and to graphs of genus $g$. In particular, 
%This result was extended by .. ~\cite{} to graphs of genus $g$ and to \probEMultiC. That is, .. ~\cite{} 
he designed an algorithm with running time $(g+r)^{O(g+r)} n^{O(\sqrt{g^2+gr})}$~\cite{Verdiere17}; which is shown to be optimal in~\cite{Cohen-AddadVMM19}. Finally, Cohen-Addad et al.~\cite{Cohen-AddadVM21} designed $(1+\epsilon)$-approximation algorithm 
for \probEMultiC running in time $(g+r)^{O((g+r)^3)} (\frac{1}{\epsilon})^{O(\sqrt{g+r})}n \log n$. In the last two results $r=|R|$. 

While, \probEMC and  \probEMultiC, parameterized by $|T|$ and $|R|$, on planar graphs and graphs of bounded genus have been studied in literature, there is only one such result with respect to solution size $k$, as a parameter.  In a seminal result Pilipczuk et al.~\cite{PilipczukPSL18} designed a polynomial kernel for \probSteiner on graphs of bounded genus and utilized it together with a known algorithm to design $2^{O(\sqrt{k \log k})} n^{O(1)}$ time algorithm for \probSteiner on graphs of bounded genus. Using a duality between  \probSteiner  and \probEMC  on planar graphs, Pilipczuk et al.~\cite{PilipczukPSL18} also designed a polynomial kernel for \probEMC on planar graphs. Using this kernel they  designed  $2^{O(\sqrt{k} \log k)} n^{O(1)}$ time algorithm for \probEMC on planar graphs.

%This result can also be derived by an application of tools from Fomin et al.~\cite{}. 

A subexponential time algorithm for \probFVS (deleting vertices to hit all cycles) on $H$-minor free graphs  follows from the the theory of Bidimensionality~\cite{demaine2005subexponential}. However, if we are only interested in hitting odd cycles (\probOCT), theory of  Bidimensionality~\cite{demaine2005subexponential} does not apply.  Lokshtanov et al.~\cite{LokshtanovSW12} gave the first $2^{O(\sqrt{k} \log k)}n^{O(1)}$ time randomized algorithm for \probOCT on planar graphs. Later, Jansen et al.~\cite{JansenPL19} designed deterministic polynomial kernels for \probOCT and \probVMC. Using this kernel the algorithm of Lokshtanov et al.~\cite{LokshtanovSW12}  can be made deterministic. Finally, we would also like to mention results on {\sc Subset TSP}.  Marx et al.~\cite{MarxPP18} gave $2^{O(\sqrt{k} \log k)} n^{O(1)}$ time algorithm for {\sc Subset TSP} on edge-weighted directed planar graphs. Earlier,  Klein and Marx~\cite{KleinM14} obtained a similar result for  {\sc Subset TSP}  on  undirected planar graphs.

%\paragraph{Our results.}
\subsection{Our results}
In this paper we take a significant step forward in understanding the parameterized complexity of cut and cycle hitting problems on $H$-minor free graphs. 

\smallskip
\begin{tcolorbox}
%{\em 
\begin{center}
We provide an algorithmic framework that allows us to design the first $n^{O(\sqrt{k})}$ time algorithm for several cut and  cycle hitting problems on $H$-minor free graphs in a uniform way. This includes  the first $n^{O(\sqrt{k})}$ time algorithm for \probEMC, \probVMC, \probEB, and \probOCT on 
$H$-minor free graphs. 

\end{center}
%} 
\end{tcolorbox}

\smallskip

Algorithms running in time $n^{O(\sqrt{k})}$ are subexponential but they are not parameterized  subexponential algorithms; that is they are not of the form $2^{o(k)}n^{O(1)}$.   We incorporate 
a notion of candidate sets, a notion weaker than polynomial kernels, in our framework and design parameterized  subexponential algorithms for several aforementioned problems.  A parameterized problem is said to admit a {\em  kernel}, if there is a polynomial time algorithm (the degree of polynomial is independent of $k$), called a kernelization algorithm, that reduces the input instance  $(I,k)$ down to an instance $(I',k')$ with size bounded by a function $p(k)$ in $k$, while preserving the answer. This reduced instance is called a $p(k)$ kernel for the problem. If $p(k)$ is (quasi)-polynomial  then we say 
that a parameterized problem admits a (quasi)-polynomial kernel~\cite{CyganFKLMPPS15,fomin2019kernelization}. Randomized kernels of 
(quasi)-polynomial size is known for several cut problems~\cite{KratschW14,KratschW20,HolsK18,Wahlstrom20,corr/abs-2002-08825} on general graphs. For these problems, 
it is natural to first apply the kernelization algorithm and then run the $n^{O(\sqrt{k})}$ time algorithm on the obtained kernel to design parameterized  subexponential algorithms. There are two difficulties with this approach. 
\begin{enumerate}
\setlength{\itemsep}{-2pt}
\item  Even though we start with a graph $G$ that excludes some fixed graph $H$ as a minor, it is not necessary that the output graph $G'$ will also exclude $H$ as a minor, as we are running a kernelization algorithm for general graphs. 
\item  The reduced parameter $k'$ could become $k^{O(1)}$, and thus running $n^{O(\sqrt{k})}$ time  algorithm on kernel may not yield the desired result. 
\end{enumerate}
 
In most of the known kernelization algorithm $k'\leq k$ and hence, the second difficulty can be overcome. However, the first constraint is more challenging and requires making the known kernelization algorithm return an instance in the desired family of graphs. That is, if $\cal G$ is the family of input graphs, then we would like the kernelization algorithm also to return the output graph $G' \in {\cal G}$. We overcome this difficulty by observing that known kernelization algorithm for cut problems can be made to output  a 
``candidate set'' of size (quasi)-polynomial in $k$. A candidate set is a set $\mathsf{Cand} \subseteq V(G)$ (resp., $\mathsf{Cand} \subseteq E(G)$), satisfying that 
$(I,k)$ has a solution of size at most $k$ if and only if $I$ has a solution of size at most $k$ that is contained in $\mathsf{Cand}$. By incorporating candidate set in our framework, we obtain the following parameterized  subexponential algorithms. 

%
%Let $I$ be an instance of a parameterized vertex-deletion (resp., edge-deletion) problem with graph $G$ and solution-size parameter $k \in \mathbb{N}$.
%A \textbf{candidate set} for $I$ is a set $\mathsf{Cand} \subseteq V(G)$ (resp., $\mathsf{Cand} \subseteq E(G)$) satisfying that $I$ has a solution of size at most $k$ iff $I$ has a solution of size at most $k$ that is contained in $\mathsf{Cand}$.

%parameterized  subexponential algorithms
%remedy this situation for se
%by utilizing a notion, weaker than
%This brings us to the notion of polynomial kernels.  
%Observe that if 
%any of the above mentioned problems admit a polynomial kernel (a polyn), or even a kernel of size $2^{o(k)}$. 

\begin{itemize}
\setlength{\itemsep}{-1pt}
\item In the \probOCT (resp., \probEB) problem input is a graph $G$ and integer $k$. The task is to determine whether there exists a set $S$ of at most $k$ vertices (resp., edges) so that $G - S$ is bipartite. 
We give a $2^{O(\sqrt{k}\log k)} \cdot n^{O(1)}$ time algorithm for \probOCT and for \probEB. 

\item In the \probVMC (resp., \probEMC) problem input is a graph $G$ a vertex set $T \subseteq V(G)$ of size $r$, and an integer $k$. The task is to determine whether there exists a set $S$ of at most $k$ vertices in $V(G) \setminus T$ (resp., edges in $E(G)$) such that no connected component of $G - S$ contains at least two vertices from $T$. We obtain a $2^{O(r \sqrt{k} \log k)} \cdot n^{O(1)}$ time algorithm for \probVMC and a $2^{O(\sqrt{k}\log^4 k)} \cdot n^{O(1)}$ time algorithm for \probEMC. Note that the running time of the algorithm for \probVMC depends on $r$ while the one for \probEMC does not. 

\item In the \probVMultiC (resp., \probEMultiC) problem input is a graph $G$, together with $r$ vertex pairs $(s_1, t_1), \ldots, (s_r, t_r)$. The task is to determine whether there exists a vertex subset $S \subseteq V(G) \setminus \{s_i, t_i ~:~ i \leq r\}$ (resp., edge subset $S \subseteq E(G)$) such that no connected component of $G - S$ contains both $s_i$ and $t_i$ for any $i \leq r$. We obtain a $2^{O((\sqrt{rk}+r) \log (rk))} \cdot n^{O(1)}$ time algorithm for \probVMultiC and a $2^{O((r+\sqrt{k})\log^4 (rk))} \cdot n^{O(1)}$ time algorithm for \probEMultiC.

\item In the \probGFVS (resp., \probGFES) problem input is a graph $G$, together with a function $\Lambda : V(G) \times V(G) \rightarrow \Sigma$ and an integer $k$, where $\Sigma$ is a group of size $g$ and $\Lambda$ satisfies $\Lambda(u,v) \times \Lambda(v,u) = \mathbf{1}$. The task is to determine whether there exists $S \subseteq V(G)$ (resp., $S \subseteq E(G)$) of size at most $k$ such that $G-S$ has no {\em non-null cycles}. Here a non-null cycle is a cycle $(v_0,v_1,\dots,v_m = v_0)$ in the graph satisfying $\prod_{i=1}^m \Lambda(v_{i-1},v_i) \neq \mathbf{1}$.
We obtain a $2^{O(g \sqrt{k}\log(gk))} \cdot n^{O(1)}$ time algorithm for \probGFVS and a $2^{O(\sqrt{k} \log g \log^4 k)} \cdot n^{O(1)}$ time algorithm for \probGFES.
\end{itemize}

For convenience, in the rest of the paper, we use the following abbreviations for the problem names: \probOCT (OCT), \probEB (EB), \probVMC (VMWC), \probEMC (EMWC), \probVMultiC (VMC), \probEMultiC (EMC), \probGFVS (GFVS), and \probGFES (GFES).

\subsection{Methods: algorithms via a contraction decomposition theorem}
%\paragraph{Methods.} 
%\todo[inline]{Add some parallel with contraction theorem and ..}
One of the modern tools to design polynomial-time approximation schemes (PTASs) and FPT algorithms  on planar graphs or more generally on $H$-minor free graphs is to prove  strengthening of the classic Baker's layering technique~\cite{Baker94}. This generally yields, in what is known as (Vertex) Edge Decomposition Theorems~\cite{Baker94,eppstein2000diameter,DemaineHK05,DeVosDOSRSV04,DBLP:conf/soda/Dvorak18} (see~\cite{Panolan0Z19} for a detailed introduction). There are several problems for which an optimal solution with respect to subgraphs or minors of a graph is not larger than an optimal solution with respect to the graph itself. For those, (Vertex) Edge Decomposition Theorem~\cite{Baker94,eppstein2000diameter,DemaineHK05,DeVosDOSRSV04,DBLP:conf/soda/Dvorak18}  are much more relevant. 
However, it is easy to observe that classical problems such as {\sc Dominating Set}, {\sc Bisection} and {\sc Traveling Salesman Problem (TSP)}, are not  closed under taking subgraph or minors. Nevertheless, they are closed under another useful graph operation, namely, {\em edge contraction}. That is, if $G$ is a graph and $H$ is obtained from $G$ by a series of edge contractions, then the ``solution size\rq{}\rq{} of $H$ is not larger than that of $G$. Such problems naturally led to the following notion of {\em contraction decomposition}, which is most relevant to our studies in this paper.

\medskip 
\noindent 
[{\bf The notion of a Contraction  Decomposition Theorem (CDT)}]
{\sl Let $G$ be a graph (say, which belongs to a prespecified graph class $\cal C$) and let $k\in\mathbb{N}$. Then, the edge set of $G$ can be partitioned into $k + 1$ sets (possibly empty) in such a way that contracting any of these sets in $G$ yields a graph of treewidth at most $f(k)$. Moreover, such a partition, together with tree decompositions of width at most $f(k)$ of the respective graphs, can be found in polynomial time.}

\medskip
A CDT is known for planar graphs~\cite{Klein08,Klein06} (initially for a variation of contraction called compression), graphs of bounded genus~\cite{DemaineHM10}, $H$-minor free graphs~\cite{DemaineHK11} and unit disk graphs~\cite{Panolan0Z19}. In all of these works, the treewidth is bounded by a linear function of $k$. That is, $f(k)=O(k)$. However, for most known applications on these classes of graphs, any computable function $f$ of $k$ suffices.   In this paper, we prove a generalization of CDT on graph of bounded genus, or more generally, on $h$-almost-embeddable graph for an arbitrary but fixed constant $h$, and apply a non-trivial dynamic programming algorithm on Robertson-Seymour decomposition theorem for $H$-minor free graphs~\cite{robertson2003graph}. 

%All of our algorithms are based on what can be seen as a new version of the classical contraction decomposition theorem which has been used extensively to design parameterized and approximation algorithms.

To describe our results and especially the new CDT, we only focus on planar graphs in the introduction.
%The contraction decomposition theorem states that the edges of every planar graph can be partitioned into any prescribed number $p$ of
%pieces such that contracting any piece results in a graph of $O(p)$ treewidth.
Our generalization can be expressed in the context of planar graphs as follows.

\begin{lemma}\label{lem:exposition}
Let $G$ be a planar graph.
Then for any $p \in \mathbb{N}$, there exist disjoint sets $Z_1,\dots,Z_p \subseteq V(G)$ such that for every $i \in [p]$ and every $Z' \subseteq Z_i$, $\mathbf{tw}(G/(Z_i \backslash Z')) = O(p+|Z'|)$.
Furthermore, the sets $Z_1, \dots, Z_p$ can be computed in polynomial time. 
\end{lemma}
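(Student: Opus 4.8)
The plan is to derive Lemma~\ref{lem:exposition} from the classical Contraction Decomposition Theorem for planar graphs by a suitable BFS-layering argument, but applied in a ``doubled'' fashion so that removing a set $Z'$ from a layer class only increases the treewidth additively. First I would recall the standard way one proves a CDT on planar graphs: fix a BFS ordering from an arbitrary root, group the vertices (or edges) into layers $L_0, L_1, L_2, \dots$ by distance, and observe that for every residue $j \in \{0,1,\dots,p-1\}$ the graph obtained by contracting all layers $L_i$ with $i \equiv j \pmod p$ has treewidth $O(p)$, since after contraction the BFS ``width'' in the remaining direction is $O(p)$ and a planar graph with BFS layers of total depth $O(p)$ has treewidth $O(p)$ (this is Baker's/Eppstein's argument, or can be taken verbatim from \cite{DemaineHK11,Klein08}). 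The new ingredient is that we want a \emph{vertex} partition $Z_1,\dots,Z_p$ that is robust: contracting $Z_i \setminus Z'$ rather than all of $Z_i$ should cost only $O(|Z'|)$ extra treewidth.

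The key step is the choice of the $Z_i$'s. I would take the BFS layers and, for each $i \in [p]$, let $Z_i$ be the union of the BFS layers whose index is congruent to $i$ modulo $p$ --- so the $Z_i$ form a partition of $V(G)$ into $p$ classes, each a union of ``every $p$-th layer''. Now fix $i$ and $Z' \subseteq Z_i$. After contracting $Z_i \setminus Z'$, I claim the resulting graph still has bounded treewidth: contracting the full class $Z_i$ yields a graph $G/Z_i$ of treewidth $O(p)$ by the standard argument; \emph{not} contracting the vertices of $Z'$ means we are un-contracting at most $|Z'|$ vertices back into this graph. The relevant fact is that each vertex of $Z'$, when re-expanded, sits inside one contracted ``layer blob'', and re-inserting a single vertex into a graph can raise treewidth by at most a bounded amount \emph{if} we are careful --- but in general inserting a vertex into an arbitrary position can raise treewidth arbitrarily, so this needs the planar/layered structure. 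The correct formalization: the graph $G/(Z_i \setminus Z')$ is a minor of the graph obtained from $G$ by contracting, for each layer $L$ in class $i$, the connected components of $L \setminus Z'$; since $|Z'|$ vertices are removed, each affected layer is split into at most $|Z'|+1$ extra pieces in total across all layers, so the ``contracted-layered'' structure now has $O(p + |Z'|)$ layers-worth of complexity in the sense needed for the planar treewidth bound. Concretely I would invoke the bound that a planar graph which admits a BFS-type layering where we have contracted all but $O(p+|Z'|)$ ``layers'' (counting a split layer with its pieces) has treewidth $O(p+|Z'|)$; this is exactly the quantitative version of the Baker-layering/CDT lemma, and the constant depends only on planarity (later, on $h$).

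The main obstacle I expect is making the ``removing $|Z'|$ vertices only splits layers into $O(|Z'|)$ extra pieces, hence treewidth goes up by only $O(|Z'|)$'' step fully rigorous: one must argue that in $G/(Z_i\setminus Z')$ the vertices of $Z'$ can be placed into the tree decomposition of $G/Z_i$ with only $O(|Z'|)$ overhead, which really uses that $Z'$ is small and that each of its vertices was internal to a contracted blob with controlled neighborhood structure in the layered planar embedding. An alternative, cleaner route that I would pursue in parallel is: prove the statement directly by re-running the planar-treewidth-from-layering argument on $G/(Z_i\setminus Z')$ from scratch --- this graph, viewed with the induced (contracted) BFS layering, has the property that between any two ``kept'' layers there are at most $p$ original layers plus at most $|Z'|$ ``extra'' non-contracted vertices, so its layered width is $O(p+|Z'|)$ and planarity gives treewidth $O(p+|Z'|)$. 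Finally, the "computable in polynomial time" clause is immediate since BFS and the resulting partition are computed in linear time, and the tree decompositions of width $O(p+|Z'|)$ are produced by the (constructive) planar treewidth algorithm; but as stated the lemma only asserts computing the $Z_i$, which is trivially polynomial.
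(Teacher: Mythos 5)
Your choice of the $Z_i$'s (every $p$-th layer class of a BFS-type layering) matches the paper's construction in spirit (the paper layers by BFS in the vertex--face incidence graph, i.e.\ by repeatedly peeling the outer boundary, rather than by vertex BFS, but that is a minor difference). The genuine gap is in the treewidth bound, and it is exactly the step you flag as the ``main obstacle'' and then do not resolve. Neither of your two routes works as stated. For the first route, $G/(Z_i\setminus Z')$ is not obtained from $G/Z_i$ by a treewidth-controlled modification: un-contracting a blob into several pieces plus the $|Z'|$ kept vertices is not a minor operation on $G/Z_i$, and there is no general principle that re-inserting few vertices into a contracted planar layer costs little treewidth. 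For the second route, the claim ``the induced layering of $G/(Z_i\setminus Z')$ has layered width $O(p+|Z'|)$, and planarity gives treewidth $O(p+|Z'|)$'' is unsound: the contracted graph still has $\Theta(m)$ layers globally (the $O(p)$-deep bands between consecutive contracted layer classes repeat arbitrarily many times), so no ``few layers / small radial diameter implies small treewidth'' bound applies to the whole graph. Even the classical $Z'=\emptyset$ case is not proved that way; it is proved by a gluing argument: each band has vertex--face diameter $O(p)$, hence treewidth $O(p)$, and each connected component of the deeper part attaches to its band through a boundary cycle that has been contracted to a \emph{single} vertex, which trivially sits in some bag, so the per-band decompositions can be glued.

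What $Z'$ destroys is precisely this gluing: after contracting $Z_i\setminus Z'$, the attachment set of a deep component is a contracted cycle of size up to $1+|Z'\cap\partial_j|$, and an arbitrary width-$O(p+|Z'|)$ tree decomposition of the band has no reason to contain that set in one bag. The paper's key new ingredient, which your proposal has no substitute for, is Lemma~\ref{lem:exposition_planar} (in general Lemma~\ref{lem-twdiam2}): if the VFI graph of a band has diameter $O(p)$ and each deep face $f$ is assigned at most $q$ boundary vertices $\kappa(f)$, then there is a tree decomposition of width $O(p+q)$ in which every $\kappa(f)$ lies in a single bag. Its proof inserts a $|\kappa(f)|\times|\kappa(f)|$ grid into each deep face, bounds the vertex--face diameter of the augmented graph by $O(p+q)$, and uses the well-linkedness of the grids (via a central-node/dragging argument) to force each $\kappa(f)$ into one bag; only then can the decompositions of the deep components be appended and the whole bound $O(p+|Z'|)$ be assembled recursively. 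Without this lemma (or an equivalent ``prescribed boundary sets in bags at additive cost $O(q)$'' statement), your argument does not go through.
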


Here $G/V$ denote the graph obtained from $G$ by contracting each edge in the induced subgraph $G[V]$.
%In other words, noting $E_i$ the set of edges in each $G[Z_i]$, then the $E_i$ for $i \in [p]$ are disjoint sets of edges such that $tw(G/ E_i) = O(p)$ for every $i \in [p]$.
%The only difference with the contraction decomposition theorem is that the $E_i$ are not a partition of the edges, however this property is usually not utilized in most applications of said theorem.
To see why the above lemma implies the contraction decomposition theorem, let $E_1,\dots,E_p$ be the edges in $G[Z_1],\dots,G[Z_p]$.
Since $Z_1, \dots, Z_p$ are disjoint, $E_1,\dots,E_p$ are also disjoint and thus there exists a partition $E_1',\dots,E_p'$ of the edges of $G$ such that $E_i \subseteq E_i'$.
Then we have $\mathbf{tw}(G/E_i') \leq \mathbf{tw}(G/E_i) = \mathbf{tw}(G/Z_i) = O(p)$.
A very important difference however is that, in our result, we still have control on the treewidth of the resulting graph, even if we do not contract all the edges of each $E_i$. 

To see an example where this difference is crucial, let us consider the \probEB problem on planar graphs, where the goal is to find a set $S$ of $k$ edges such that $G-S$ is bipartite. Using known techniques (see Lemma \ref{lem-candidate} below), we can find in polynomial time a set $\mathsf{Cand}$ of size $k^{O(1)}$ such that we only need to look inside $\mathsf{Cand}$ for the solution. Let $Z_1, \dots, Z_p$ be the sets obtained from applying Lemma \ref{lem:exposition} with $p = \sqrt{k}$, and $E_1,\dots,E_p$ be the sets of edges in $G[Z_1],\dots,G[Z_p]$. Because we are looking for a solution $S$ of size $k$ and $E_1,\dots,E_p$ are disjoint, it means that one of these sets, say $E_i$, contains less than $\sqrt{k}$ edges of $S$. As we are looking for a solution $S \subseteq \mathsf{Cand}$, there are at most $k^{O(\sqrt{k})}$ possible sets for $S_i :=S \cap E_i$. Therefore, by trying all the possibilities of $i$ and $S_i$, we can assume that the algorithm knows $S_i$. Denoting $Z'$ the set of vertices adjacent to edges in $S_i$, it means that we have $|Z'| = O(\sqrt{k})$ and thus our lemma implies that $\mathbf{tw}(G/(Z_i \setminus Z')) = O(\sqrt{k})$. A pretty standard DP argument shows that \probEB on a graph $H$ can be solved in time $2^{O(\mathbf{tw}(H))}n^{O(1)}$, and we will be able to adapt this argument to solve \probEB in $G$ in time $2^{O(\mathbf{tw}(G/(Z_i \setminus Z')))}n^{O(1)} = 2^{O(\sqrt{k})}n^{O(1)}$. Indeed, while some of the vertices in $G/(Z_i \setminus Z')$ can correspond to a large connected component $C$ of $G[Z_i \setminus Z']$, the fact that we have contracted only edges which are \textit{not} in the solution means that $C$ is bipartite, and in the DP we only have to guess which side of $C$ will belongs to which side of the final bipartition of $G-S$. 

A similar approach using only the contraction decomposition Theorem would fail, as it requires that one set completely avoids the solution. Therefore $p$ needs to be greater than $k$, and we obtain a decomposition of a graph after contraction of treewidth at least $k$, which cannot lead to a subexponential-time algorithm.

In order to prove Lemma \ref{lem:exposition} we will consider the vertex-face incidence (VFI) graph of $G$, which is a bipartite graph containing on one side the set of vertices of $G$ and on the other side the set of faces. Moreover, a vertex is adjacent to a face if it belongs to its boundary. Classically, if the VFI graph of a planar graph has diameter at most $p$, then $\mathbf{tw}(G) = O(p)$.
The \textit{deep} faces of $G$ refer to the faces farthest from the outer face $o$ of $G$, i.e., the faces have the maximum distance from $o$ in the VFI of $G$.
The following lemma is an essential part of our proof (which is a special case of Lemma~\ref{lem-twdiam2}).

\begin{lemma}\label{lem:exposition_planar}
    Let $G$ be a planar graph such that the VFI graph of $G$ has diameter $p$.
    For each deep face $f$ of $G$, let $\kappa(f)$ be a set of at most $q$ vertices on the boundary of $f$.
%\begin{itemize}
%\item For every $i \in I$, $N(i)$ is included in the boundary  $\partial f_i$ of a face $f_i$ and $|N(i)| \leq t$;
%\item For every disjoint pair $i,j \in I$, the faces $f_i$ and $f_j$ are different.
%\end{itemize}
    Then there exists a tree decomposition of $G$ of width $O(p+q)$ where for every deep face $f$ of $G$, $\kappa(f)$ is contained in some bag of the decomposition. 
\end{lemma}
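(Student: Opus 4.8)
The plan is to build the tree decomposition by working with the vertex-face incidence (VFI) graph of $G$, which I will denote $\Gamma$, and exploiting the classical fact that a BFS layering of $\Gamma$ from the outer face $o$ yields a tree decomposition of $G$ of width $O(p)$ when $\Gamma$ has diameter $p$. Concretely, let $L_0 = \{o\}, L_1, L_2, \dots, L_{2d}$ be the BFS layers of $\Gamma$ rooted at $o$ (vertices of $G$ and faces of $G$ alternate, so layers with even index consist of faces and those with odd index consist of vertices, or vice versa). The standard construction (as used in Baker's technique and its planar-diameter refinements) produces a tree decomposition $\mathcal{T}_0$ of $G$ of width $O(p)$ in which the "interface" between consecutive layers has bounded size; the deep faces are precisely those appearing in the last nonempty layer. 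I will take $\mathcal{T}_0$ as the starting point and then \emph{augment} it so that each set $\kappa(f)$, $f$ a deep face, lands in a single bag.

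The key observation is that the $q$ vertices of $\kappa(f)$ all lie on the boundary of a single deep face $f$, hence they are "geometrically close together" in $\Gamma$: each of them is at VFI-distance $1$ from $f$, so they are pairwise at VFI-distance $\le 2$. The plan is to route, for each deep face $f$, a small connected "gadget" in $G$ that touches all of $\kappa(f)$ and whose vertices can be added to a controlled set of existing bags. More precisely: pick for $f$ a "center" vertex $z_f \in \kappa(f)$ on its boundary, and for each other $w \in \kappa(f)$ take the two VFI-edges $w - f - z_f$; tracing the boundary walk of $f$ between consecutive elements of $\kappa(f)$ gives $q$ short boundary arcs whose union is a connected subgraph $B_f$ of $G$ containing $\kappa(f)$. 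Now walk up from $f$ toward the root in $\Gamma$: the union of the BFS "ancestor" structure over the (at most $q$) boundary-arc endpoints of $B_f$, together with $B_f$ itself, forms a subtree-like region that I will add, layer by layer, into the bags of $\mathcal{T}_0$ that correspond to those layers. Since each deep face $f$ contributes at most $O(q)$ new vertices to each of the $O(p)$ layers it spans, and — crucially — the deep faces sharing a given region of $\mathcal{T}_0$ can be handled so that only $O(q)$ extra vertices are forced into any single bag (this is where I expect to exploit that the deep faces are "spread out" along the last layer and that each bag of $\mathcal{T}_0$ already localizes a bounded portion of that layer), the width goes up only to $O(p) + O(q) = O(p+q)$. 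To make $\kappa(f)$ sit in one bag, I then insert, adjacent to the bag of $\mathcal{T}_0$ that already contains the interface near $f$, a fresh bag equal to that interface together with $\kappa(f)$; validity of the decomposition (connectivity of the bags containing any fixed vertex, and coverage of every edge) is preserved because $B_f$ is connected and every vertex of $\kappa(f)$ already appears in nearby bags coming from the layered structure.

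I would organize the write-up in the following steps: (1) recall/derive the layered tree decomposition $\mathcal{T}_0$ of width $O(p)$ from the bounded diameter of $\Gamma$, tracking explicitly which bag is associated to which portion of which layer; (2) for each deep face $f$, define the boundary gadget $B_f \subseteq G$ with $\kappa(f) \subseteq V(B_f)$ and $|V(B_f)| = O(p\cdot q)$ but with only $O(q)$ vertices per layer; (3) prove the "charging" claim that augmenting $\mathcal{T}_0$ with all the $B_f$'s increases the width of any bag by only $O(q)$ — this requires arguing that distinct deep faces do not pile up super-constantly many extra vertices into one bag; (4) splice in, for each $f$, one additional bag realizing $\kappa(f)$, and verify the tree-decomposition axioms.

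The main obstacle is Step~(3): controlling how many deep faces, and how much of their boundary gadgets, can simultaneously be charged to a single bag of $\mathcal{T}_0$. A naive bound gives (number of deep faces)$\times q$ extra vertices, which is far too much. The fix I anticipate is to not route each $B_f$ all the way to the root independently, but to observe that the bags of $\mathcal{T}_0$ near the deepest layer already partition that layer into pieces each of bounded "span," so that a given bag only sees $O(1)$ deep faces and hence only $O(q)$ new vertices; equivalently, one should define $B_f$ so that it lives essentially within the last two layers (plus a bounded-depth "cap" toward the root), where the per-bag multiplicity is already $O(1)$ by construction of $\mathcal{T}_0$. Pinning down this locality statement cleanly — likely by a careful choice of the separators used to build $\mathcal{T}_0$, so that deep faces are in bijection with leaf bags — is the crux, and everything else is bookkeeping.
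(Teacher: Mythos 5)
There is a genuine gap, and it is exactly where you locate it (your Step (3)), but the fixes you anticipate do not work. First, nothing guarantees that a bag of the layered decomposition $\mathcal{T}_0$ meets only $O(1)$ deep faces: a single vertex of $G$ can lie on the boundary of arbitrarily many deep faces, so a bag of size $O(p)$ can be incident to arbitrarily many of them, and charging $O(q)$ gadget vertices per deep face to such a bag blows the width up by a factor of the number of deep faces, not by $O(q)$. Restricting $B_f$ to the last two layers does not help, since many deep faces can share the portion of those layers covered by one bag. Second, your Step (4) is not valid as stated: hanging a fresh bag $(\text{interface}) \cup \kappa(f)$ off a node $t$ preserves the connectivity axiom only if every vertex of $\kappa(f)$ already lies in $\beta(t)$; otherwise each $v \in \kappa(f)$ must be added to every bag on the path from the new leaf to the existing bags containing $v$. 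Bounding the total cost of this ``dragging'' over all deep faces is precisely the hard part, and your argument provides no mechanism for it.

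The paper resolves this with a different device. Inside each deep face $f$ it inserts a $|\kappa(f)| \times |\kappa(f)|$ grid attached to $\kappa(f)$; this keeps the graph planar and increases the vertex-face diameter only to $O(p+q)$, so the augmented graph has a tree decomposition of width $O(p+q)$. For each $f$ it then picks a \emph{central} node $t_f$ for the last row of the grid and drags those grid vertices to $t_f$. The key counting claim (Claim~3 in the proof of Lemma~\ref{lem-twdiam2}) uses the well-linkedness of the grid: the number of vertices of the grid of $f$ that must pass through any fixed bag is $O(|\beta(t) \cap V(\varGamma_f)|)$, i.e.\ proportional to how much of that grid the bag already contains. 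Since the grids of distinct faces are disjoint, summing over all deep faces increases each bag by only a constant factor. Contracting each grid column (and the attachment edges) then yields a clique on $\kappa(f)$-representatives, forcing $\kappa(f)$ into a single bag. Your proposal lacks an analogue of this self-charging argument, which is the crux; without it the per-bag bound of $O(p+q)$ cannot be established along your route.
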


%Note that, since all the faces $f_i$ are different, we can easily see that the new graph $G'$ is planar, and that its VFI graph has diameter $O(p)$.
The key idea for proving the above lemma (without going into details) is the following.
We draw inside each deep face $f$ of $G$ a $|\kappa(f)| \times |\kappa(f)|$ grid and connecting the vertices in the first row of the grid to the vertices in $\kappa(f)$.
The resulting graph $G'$ is still planar.
Since the grids are drawn in the deep faces of $G$, the VFI graph of $G'$ has diameter $O(p+q)$, so there is a tree decomposition of $G'$ of width $O(p+q)$.
We then exploit the well-linked property of grids to show how to modify this tree decomposition (without significantly increasing its width) to force the vertices in each $\kappa(f)$ to be contained in the same bag.
%The idea to obtain a decomposition of $O(p+q)$ will be to replace every $i \in I$ by some grid of order $O(|N(i)|)$ and use the well linked property of the grid to show that we can force the vertices of $\kappa(f)$ to belong to the same bag. 

With Lemma~\ref{lem:exposition_planar}, let us explain roughly how to obtain the sets $Z_1, \dots, Z_p$ of Lemma \ref{lem:exposition}. Let $G$ be a planar graph and assumed that the embedding is fixed. Consider now a BFS in the VFI graph of $G$ starting from the outer face. Let $L_0, L_1, \dots, L_m$ denote the different levels of the BFS. In particular, we will be interested in the odd levels $L_{2i + 1}= L'_i$ for $i \leq m/2$. As $L_0$ is the outer face and the VFI graph is bipartite, the $L'_i$ form a partition of the vertices of $G$. An interesting way to understand the $L'_i$ is by noting that $L'_1$ is exactly the boundary of the outer face of $G$, $L'_2$ is exactly the boundary of the outerface of $G - L'_1$ and so on.   

For every $i \in [p]$, we now define each $Z_i$ as the union of the $L'_j$ for $j = i \pmod p$. Let $Z'$ be a subset of vertices in $Z_i$. In order to understand what happens when we contract the edges of $G[Z_i \setminus Z']$, let us partition the graph into $G_{\leq} := G[L'_1 \cup \cdots \cup L'_i ]$ and $G_{>} := G[L'_{i+1} \cup \cdots \cup L'_{m/2} ]$ the rest. The idea is to apply some sort of inductive argument in $G_>$ and then combine the tree decomposition with the one of $G_{\leq}/(Z_i \setminus Z')$. First note that, because $G_{\leq}$ consists of the first $2i < 2p$ layers of the BFS in the VFI graph of $G$, we have that the diameter in the VFI graph of $G_{\leq}$ is $O(p)$. Moreover, we know $L'_i$ is the outerface of $G[L'_{i+1} \cup \dots \cup L'_{m/2} ]$ which means that $L'_i$ is a cut between $G_{\leq}$ and $G_>$. Moreover, it means that every connected component of $G_>$ sits inside one face of $G_{\leq}$ whose boundary is a cycle of $L'_i$. Let $C_1, \dots, C_r$ denote the connected components of $G_>$ and $\partial_1, \dots, \partial_r$ the set of cycles bounding the associated faces $f_1,\dots,f_r$ (we assume here for simplification that $f_1,\dots,f_r$ are different, and it is easy to see that they are all deep faces of $G$).
As all the $\partial_i$ are inside $L_i$, it means that in $G/(Z_i \backslash Z')$, each $\partial_j$ will be contracted into a cycle $\partial'_j$ of size $1 + |Z' \cap \partial_j| = O(|Z'|)$.

The idea of the proof is then to compute inductively a tree decomposition for each $C_j/(Z_i \backslash Z')$ as well as $G_\leq/(Z_i \backslash Z')$ and merge them together. In order to do that efficiently, we require that each $\partial'_j$ appears in one bag of the decomposition of $G_\leq/(Z_i \backslash Z')$ so that we can append the tree decomposition of $C_j/(Z_i \backslash Z')$ to this bag. In order to obtain such a tree decomposition of $G_\leq/(Z_i \backslash Z')$ we will use Lemma \ref{lem:exposition_planar}.
We let $\kappa(f_j)$ include, for every vertex $v$ in the contracted cycle $\partial'_j$, a vertex in the cycle $\partial_j$ corresponding to $v$.
Thus, $|\kappa(f_j)| = |\partial'_j| = O(|Z'|)$.
By Lemma \ref{lem:exposition_planar}, $G_\leq$ has a tree decomposition of width $O(p+|Z'|)$ in which every $\kappa(f_j)$ is contained in some bag.
This tree decomposition then induces an $O(p+|Z'|)$-width tree decomposition of $G_\leq/(Z_i \backslash Z')$ in which every $\partial'_j$ is contained in some bag.
%Note that the neighborhood of each contracted $C_j$ is indeed $\partial'_j$ and since every $\partial'_j$ has size $O(|Z'|)$, we obtain a tree decomposition of width $O(p + |Z'|)$. 

So far we sketched our results in only the context of planar graphs.
Of course, our actual results are much more general, and require more technical proofs.
We prove Lemma~\ref{lem:exposition} for \textit{almost-embeddable} graphs, which is a class of graphs generalizing the class of bounded-genus graphs.
With this in hand, we solve our problems by doing dynamic programming on the famous ``almost-embeddable'' tree decomposition of an $H$-minor-free graph introduced by Robertson and Seymour \cite{robertson2003graph}.
Specifically, the generalized version of Lemma~\ref{lem:exposition} allows us to apply Baker's technique with a second-level dynamic programming to efficiently compute the DP table at each node of the Robertson-Seymour tree decomposition.

\paragraph{Organization.}
The rest of the paper is organized as follows.
In Section~\ref{sec:framework}, we present the generic framework for solving our problems on $H$-minor-free graphs, which all of our algorithms follow.
In Section~\ref{sec-keyproof}, we prove our main technical lemma (Lemma~\ref{lem-key}).
In Section~\ref{sec-app}, we apply our framework in Section~\ref{sec:framework} to solve each of our problems individually.
Finally, in Section~\ref{sec-conclusion}, we conclude the paper.

\section{Preliminaries}

\paragraph{Basic notations.}
Let $G$ be a graph.
We use $V(G)$ and $E(G)$ to denote the vertex set and the edge set of $G$, respectively.
For $V \subseteq V(G)$, we use $G[V]$ to denote the induced subgraph of $G$ on $V$ and use $G - V$ to denote the induced subgraph of $G$ on $V(G) \backslash V$.
The notation $G/V$ denotes the graph obtained from $G$ by contracting all edges in $G[V]$, or equivalently, contracting every connected component of $G[V]$ to a single vertex.
We denote by $N_G(V)$ the set of vertices in $V(G) \backslash V$ that are neighboring to at least one vertex in $V$.

\paragraph{Tree decomposition and treewidth.}
A \textit{tree decomposition} $\mathcal{T}$ of a graph $G$ is a tree $T$ where each node $t \in T$ is associated with a bag $\beta(t) \subseteq V(G)$ such that
\textbf{(i)} $\bigcup_{t \in T} \beta(t) = V(G)$, \textbf{(ii)} for any edge $(u,v) \in E(G)$, there exists $t \in T$ with $u,v \in \beta(t)$, and \textbf{(iii)} for any $v \in V(G)$, the nodes $t \in T$ with $v \in \beta(t)$ form a connected subset in $T$.
The \textit{width} of $\mathcal{T}$ is $\max_{t \in T} |\beta(t)| - 1$.
The \textit{treewidth} of a graph $G$, denoted by $\mathbf{tw}(G)$ is the minimum width of a tree decomposition of $G$.
It is sometimes more convenient to consider \textit{rooted} trees.
Throughout this paper, we always view the underlying tree of a tree decomposition as a rooted tree.
A tree decomposition is \textit{binary} if its underlying (rooted) tree is binary.
The following result is well-known.

\begin{lemma}[\cite{CyganFKLMPPS15}] \label{lem-tdecomp}
Given a graph $G$ with $|V(G)| = n$ and $\mathbf{tw}(G) = w$, a binary tree decomposition of $G$ of width $O(w)$ can be computed in $2^{O(w)} n^{O(1)}$ time.
\end{lemma}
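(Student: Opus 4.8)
The plan is to obtain the decomposition in two stages: first compute \emph{some} tree decomposition of $G$ of width $O(w)$ in single-exponential time, and then reshape it into a binary one by a routine transformation that does not increase the width.

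For the first stage I would invoke the single-exponential-time constant-factor approximation algorithm for treewidth (as presented in~\cite{CyganFKLMPPS15}): given $G$ and a parameter $\ell$, it runs in time $2^{O(\ell)} n^{O(1)}$ and either correctly reports $\mathbf{tw}(G) > \ell$ or returns a tree decomposition of $G$ of width $O(\ell)$. Since $w = \mathbf{tw}(G)$ need not be supplied as part of the input, I would run this subroutine with $\ell = 1, 2, 4, 8, \dots$, doubling each time, and stop at the first value $\ell^{\star}$ for which it succeeds; then $\ell^{\star} < 2w$, the returned decomposition $\mathcal{T}_0$ has width $O(\ell^{\star}) = O(w)$, and the total running time is dominated by the last call, i.e.\ $2^{O(w)} n^{O(1)}$.

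For the second stage I would first bring $\mathcal{T}_0$ down to $O(n)$ nodes: as long as there are two adjacent nodes $t,t'$ with $\beta(t) \subseteq \beta(t')$, contract the edge $tt'$ and keep the bag $\beta(t')$; this preserves all three tree-decomposition axioms, never increases the width, and strictly decreases the number of nodes, so it terminates in a tree decomposition in which no bag is contained in an adjacent one. Rooting the resulting tree and charging each non-root node $t$ to some vertex of $\beta(t)$ not lying in its parent's bag gives an injection into $V(G)$, so at most $n+1$ nodes remain. Then I would make the underlying rooted tree binary: each node $t$ with children $c_1, \dots, c_d$ where $d \ge 3$ is replaced by a path of $d-1$ fresh copies of the bag $\beta(t)$, hanging one $c_i$ off each copy. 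This again preserves the three axioms, leaves the width unchanged, and adds at most $n$ new nodes since the total number of children over all nodes is one less than the number of nodes. The whole second stage is polynomial in $n$.

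The only substantive ingredient is the first stage; everything after it is bookkeeping. If a more self-contained route is preferred, Robertson and Seymour's original treewidth approximation, run with the same doubling trick, likewise produces a width-$O(w)$ tree decomposition within the $2^{O(w)} n^{O(1)}$ budget, so the precise choice of approximation subroutine is immaterial for the statement.
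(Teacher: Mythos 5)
Your proposal is correct and is exactly the standard argument behind the citation: the paper gives no proof of its own and simply invokes~\cite{CyganFKLMPPS15}, where the result is obtained precisely by running the single-exponential constant-factor treewidth approximation and then massaging the output (pruning redundant bags, splitting high-degree nodes) into a binary decomposition of the same asymptotic width. The only cosmetic remark is that $w$ is given in the statement, so the doubling trick is unnecessary, though harmless.
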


Let $\mathcal{T}$ be a tree decomposition of a graph $G$, and $T$ be the underlying (rooted) tree of $\mathcal{T}$.
The \textit{adhesion} of a non-root node $t \in T$, denoted by $\sigma(t)$, is defined as $\sigma(t) = \beta(t) \cap \beta(t')$ where $t'$ is the parent of $t$.
For convenience, we also define the adhesion of the root of $T$ as the empty set.
The \textit{adhesion size} of $\mathcal{T}$ is the maximum size of the adhesion of a node $t \in T$.
For a node $t \in T$, we define the \textit{$\gamma$-set} of $t$, denoted by $\gamma(t)$, as $\gamma(t) = \bigcup_{s \in T_t} \beta(s)$ where $T_t$ is the subtree of $T$ rooted at $t$.
The \textit{torso} of $t$, denoted by $\mathsf{tor}(t)$, is the graph obtained from $G[\beta(t)]$ by making $\sigma(s)$ a clique for all children $s$ of $t$, i.e., adding edges between any two vertices $u,v \in \beta(t)$ such that $u,v \in \sigma(s)$ for some child $s$ of $t$.

\paragraph{Almost-embeddable graphs and graph minors.} %\todo{capitalize 1st letters of the words in the title like in the intro or use consistent convention}
The class of almost-embeddable graphs is a generalization of the class of bounded-genus graphs.
The formal definition of almost-embeddable graphs (Definition~\ref{def-almost}) will be given in Section~\ref{sec-keyproof}, because it is technical and we only need the formal definition in Section~\ref{sec-keyproof} when proving Lemma~\ref{lem-key}.
Roughly speaking, a graph $G$ is \textit{$h$-almost-embeddable} if it can be partitioned into an apex set $A \subseteq V(G)$ of size at most $h$, a subgraph $G_0$ that can be embedded in a surface of (Euler) genus $h$, and $h$ disjoint vortices $G_1,\dots,G_h$ which admit $h$-width path decompositions and can be attached to faces of (the embedded) $G_0$ in a well-structured way.
The \textit{almost-embeddable structure} of $G$ refers to the ``witness'' of the almost-embeddability of $G$ (i.e., the partition of $G$ into the apex set, the embeddable subgraph $G_0$, and the vortices, together with the way that the vortices are attached to $G_0$).

A graph $H$ is a \textit{minor} of a graph $G$ if $H$ can be obtained from $G$ by deleting vertices, deleting edges, and contracting edges.
A graph $G$ is \textit{$H$-minor-free} if $H$ is not a minor of $G$.
One of the most important results in graph minor theory by Robertson and Seymour \cite{robertson2003graph} states that every $H$-minor-free graph (for a fixed $H$) admits a tree decomposition with $O(1)$ adhesion size and $O(1)$-almost-embeddable torsos.

\begin{lemma}[Robertson-Seymour decomposition] \label{lem-rsdecomp}
Let $H$ be a fixed graph.
Then for some constant $h>0$ depending on $H$, every $H$-minor-free graph $G$ admits a tree decomposition $\mathcal{T}_\textnormal{RS}$ with adhesion size at most $h$ in which the torso of each node is $h$-almost-embeddable.
In addition, $\mathcal{T}_\textnormal{RS}$ has the property that $G[\gamma(t) \backslash \sigma(t)]$ is connected and $\sigma(t) = N_G(\gamma(t) \backslash \sigma(t))$ 
%\todo{this expression sounds a bit weird as sigma t appears on both sides, should we say in english that neighbors of gamma t - sigma t are only vertices of sigma t?}
for all node $t$.
Such a tree decomposition (and the almost-embeddable structures of the torsos) can be computed in polynomial time.
\end{lemma}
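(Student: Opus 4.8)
The statement is the Robertson--Seymour decomposition theorem equipped with two additional structural guarantees, so the plan is to pull the decomposition out of graph minor theory and then \emph{clean it up} to install those two properties. First I would invoke the structure theorem of Robertson and Seymour \cite{robertson2003graph}, in its algorithmic form (as used e.g.\ in \cite{DemaineHK05,DemaineHK11}): for the fixed graph $H$ there is a constant $h_0$ such that $G$ admits, computable in polynomial time together with the almost-embeddable structures of the torsos, a tree decomposition $\mathcal{T}_0$ of adhesion size at most $h_0$ all of whose torsos are $h_0$-almost-embeddable. It then suffices to massage $\mathcal{T}_0$ into a decomposition $\mathcal{T}_\mathrm{RS}$ of $G$ that also satisfies, for every node $t$, that $G[\gamma(t)\setminus\sigma(t)]$ is connected and $\sigma(t)=N_G(\gamma(t)\setminus\sigma(t))$, while only enlarging the constant to some $h=O(h_0)$. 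Throughout I would keep three invariants: the object remains a valid tree decomposition of $G$; its adhesion size stays $O(1)$; every torso stays $O(1)$-almost-embeddable. The last two are stable under all operations below because each operation only deletes vertices from bags and clique-fill edges from torsos, and a graph obtained from an $h$-almost-embeddable graph by deleting vertices and edges is again $h$-almost-embeddable (restrict the apex set, the surface embedding, and the vortex path decompositions); an adhesion can only shrink. We may also assume $G$ is connected, else apply the lemma to each connected component.

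Next I would install the neighbourhood condition. The inclusion $N_G(\gamma(t)\setminus\sigma(t))\subseteq\sigma(t)$ is automatic: by the connectedness axiom, any vertex of $\gamma(t)$ occurring in a bag outside the subtree $T_t$ must lie in $\sigma(t)$, so no edge leaves $\gamma(t)\setminus\sigma(t)$ except into $\sigma(t)$. For the reverse inclusion, as long as some non-root node $t$ has a vertex $v\in\sigma(t)$ with no neighbour in $\gamma(t)\setminus\sigma(t)$, I delete $v$ from every bag of $T_t$. This remains a valid tree decomposition: $v$ still occurs (in the untouched bag of the parent of $t$); every edge incident to $v$ that was covered only inside $T_t$ has its other endpoint a neighbour of $v$ in $\gamma(t)$, hence in $\sigma(t)\subseteq\beta(\mathrm{parent}(t))$, so it stays covered; and the occurrences of $v$ remain a subtree since $v$ is deleted from a full subtree of its occurrence-tree. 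Each new torso is a subgraph of an induced subgraph of an old torso, so the invariants persist.

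Then I would install the connectivity condition, processing nodes from the root downward: at a node $t$ for which $G[\gamma(t)\setminus\sigma(t)]$ has components $C_1,\dots,C_\ell$ with $\ell\ge 2$, replace the subtree $T_t$ by $\ell$ vertex-disjoint copies, the $j$-th obtained by intersecting every bag of $T_t$ with $C_j\cup\sigma(t)$, and hang all copies below $\mathrm{parent}(t)$. Since $N_G(\gamma(t)\setminus\sigma(t))\subseteq\sigma(t)$ there is no edge between distinct $C_j$'s, every edge with an endpoint in $\gamma(t)\setminus\sigma(t)$ lives inside a single $C_j\cup\sigma(t)$, and the root-adhesion of each copy is still $\sigma(t)$; hence each copy is a valid tree decomposition and together they cover $\gamma(t)$. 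The two phases interact — deleting an adhesion vertex can disconnect some interior, and splitting can create a copy in which a surviving adhesion vertex has no interior neighbour — so I alternate the two until neither violation remains; at the root, $\sigma(\mathrm{root})=\emptyset$ and connectivity of $G$ make both conditions hold for free.

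The hard part I expect is arguing that this alternation terminates in polynomial time and keeps the decomposition of polynomial size, since a component split naively multiplies the node count of a subtree. I would control this by the order of processing: going strictly root-downward, a node whose adhesion is already cleaned stays clean because every later operation touches only its descendants and only deletes vertices or splits subtrees (never enlarges an adhesion); thus each originally present node is resolved once, the split at a node produces at most $|\gamma(t)\setminus\sigma(t)|\le n$ copies whose own roots need no further split, and the $O(1)$ bound on adhesion sizes keeps the accounting polynomial. The remaining checks — that each clean-up step preserves validity, the $O(1)$ adhesion bound, and the $O(1)$-almost-embeddability of every torso — are precisely the three invariants set up above and follow from the ``subgraph/induced-subgraph of an almost-embeddable graph is almost-embeddable'' observation.
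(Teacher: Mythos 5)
Your overall strategy is the same as the paper's: start from an algorithmic Robertson--Seymour decomposition and repair it with two operations (splitting a subtree along the connected components of $G[\gamma(t)\setminus\sigma(t)]$, and deleting from $T_t$ the vertices of $\sigma(t)$ with no neighbour in $\gamma(t)\setminus\sigma(t)$), and your validity checks for both operations match the paper's. Your scheduling differs slightly -- the paper does all splits first (with a lexicographic potential for termination) and then all clean-ups, observing that the clean-up never changes any set $\gamma(s)\setminus\sigma(s)$ and hence cannot destroy connectivity, so no alternation is needed -- but your top-down single-pass alternation is workable and this is not the real issue.

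The genuine gap is the sentence ``a graph obtained from an $h$-almost-embeddable graph by deleting vertices and edges is again $h$-almost-embeddable (restrict the apex set, the surface embedding, and the vortex path decompositions).'' This is exactly the fact your whole invariant rests on (after Split and Clean each torso is only a subgraph of an original torso), and the naive restriction does not satisfy Definition~\ref{def-almost}. The definition ties each vortex $G_i$ to the disk $D_i$ through a permutation $\tau_i=(v_{i,1},\dots,v_{i,q_i})$ of \emph{all} vertices in $V(G_0)\cap V(G_i)$ and a path decomposition whose underlying path has length exactly $q_i-1$ with $v_{i,j}\in\beta(u_{i,j})$. If a boundary vertex $v_{i,j}$ is deleted while other vortex vertices whose bags were anchored at position $j$ survive, the restricted structure has too many bags for the shortened permutation and the alignment $v_{i,j}\in\beta(u_{i,j})$ breaks; you cannot simply drop the offending bag without possibly violating the path-decomposition or attachment conditions. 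The paper explicitly flags that it could not find this closure property in the literature and spends the second half of Appendix~\ref{appx-proofrs} proving it: it introduces ``anchor'' vertices added to the skeleton at the embedded positions of deleted boundary vertices and merges consecutive bags of $\mathcal{P}_i$, which is why the constant degrades from $c$ to $3c$ rather than staying the same. So your proof needs either this construction or some substitute argument; as written, the key lemma that keeps the torsos almost-embeddable (and even the claim that the constant is unchanged) is asserted rather than proved.
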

\begin{proof}[Proof sketch]
%Let $G$ be an $H$-minor-free graph for some fixed graph $H$.
The fact that any $H$-minor free graph $G$ admits a tree decomposition $\mathcal{T}_\textnormal{RS}$ of adhesion size at most $h$ whose torsos are $h$-almost-embeddable (for some constant $h>0$ depending on $H$) follows from the profound work of Robertson and Seymour \cite{robertson2003graph}; we call such a tree decomposition \textit{Robertson-Seymour} decomposition.
Several algorithms have been developed to compute a Robertson-Seymour decomposition (and the almost-embeddable structures of the torsos) in polynomial time \cite{demaine2005algorithmic,grohe2013simple,kawarabayashi2011simpler}.
%So it suffices to show how to make a Robertson-Seymour decomposition have the additional property.
To further make the tree decomposition $\mathcal{T}_\textnormal{RS}$ satisfy the additional property, we do some simple modifications on $\mathcal{T}_\textnormal{RS}$.
Roughly speaking, for each node $t$ where $G[\gamma(t) \backslash \sigma(t)]$ is disconnected, we split the subtree rooted at $t$ into multiple subtrees each of which corresponds to a connected component of $G[\gamma(t) \backslash \sigma(t)]$.
For each node $t$ where $\sigma(t) \neq N_G(\gamma(t) \backslash \sigma(t))$, we remove the vertices in $\sigma(t) \backslash N_G(\gamma(t) \backslash \sigma(t))$ from the bags of all nodes in the subtree rooted at $t$.
These modifications are standard, so we defer the details to Appendix~\ref{appx-proofrs}.
We can guarantee that after these modifications, $\mathcal{T}_\textnormal{RS}$ still has bounded adhesion size and almost-embeddable torsos.
Thus, the resulting $\mathcal{T}_\textnormal{RS}$ is what we want.
\end{proof}

\paragraph{Candidate sets for vertex/edge-deletion problems.}
All problems studied in this paper are \textit{vertex-deletion} (resp., \textit{edge-deletion}) problems, in which a feasible solution is a set of vertices (resp., edges) of the input graph such that after deleting them the resulting graph satisfies certain conditions.
Let $I$ be an instance of a parameterized vertex-deletion (resp., edge-deletion) problem with input graph $G$ and solution-size parameter $k \in \mathbb{N}$.
A \textit{candidate set} for $I$ is a set $\mathsf{Cand} \subseteq V(G)$ (resp., $\mathsf{Cand} \subseteq E(G)$) such that $I$ has a solution of size at most $k$ iff $I$ has a solution of size at most $k$ that is contained in $\mathsf{Cand}$.
It was known that all problems studied in this paper admit small candidate sets that can be computed using polynomial-time randomized algorithms with high success probability.

\begin{lemma}[small candidate set lemma] \label{lem-candidate}
One can compute candidate set for the following problems in polynomial time, by randomized algorithms, with success probability at least $1-1/2^n$ (where $n$ is the size of the input graph): 
candidate sets for EB and OCT instances of size $k^{O(1)}$, EMWC and GFES of size $k^{O(\log^3 k)}$, EMC of size $(r+k)^{O(\log^3(r+k))}$, VMWC of size $k^{O(r)}$, VMC of size $k^{O(\sqrt{r})}$, GFVS of size $k^{O(g)}$. 
\end{lemma}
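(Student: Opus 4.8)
The plan is to obtain all eight candidate sets by extracting them from the corresponding known randomized kernelization algorithms, rather than building anything from scratch. Recall that EB and OCT admit randomized polynomial kernels \cite{KratschW14} (and even deterministic ones \cite{JansenPL19}), while EMWC, EMC, GFES, VMWC, VMC and GFVS admit randomized (quasi)polynomial kernels \cite{KratschW20,HolsK18,Wahlstrom20,corr/abs-2002-08825} with exactly the parameter dependence claimed in the statement. The observation that makes these useful here is that all of these kernelizations are of the same ``isolate the relevant part of the instance'' flavour: while shrinking $(I,k)$ they single out a set $X \subseteq V(G)$ (resp.\ $X \subseteq E(G)$) -- the cut-covering / multicut-mimicking-network vertex (or edge) set for the cut-type problems, and the matroid/gammoid-based set of non-irrelevant vertices for OCT and EB -- whose size is bounded by precisely the claimed function of $k$ (and of $r$, $g$), and such that the correctness analysis of the kernelization already shows that whenever a solution of size at most $k$ exists, some minimum solution lies entirely inside $X$. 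Taking $\mathsf{Cand} := X$ then immediately gives a candidate set of the required size: the ``$\Leftarrow$'' direction of the candidate-set definition is trivial, and ``$\Rightarrow$'' is exactly the property of $X$ just described. Since each $X$ is computed in polynomial time, so is $\mathsf{Cand}$.

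Concretely, for OCT and EB I would invoke the matroid-based kernelization of Kratsch and Wahlstr\"om \cite{KratschW14}: after the standard iterative-compression reduction to a cut-type problem, it computes a set of $k^{O(1)}$ vertices (edges) that covers all relevant odd-cycle-hitting cuts, and its analysis establishes that some optimum is contained in that set, which is then the candidate set (alternatively, the deterministic construction of \cite{JansenPL19} can be used for these two). For EMWC, EMC and GFES, and likewise for the vertex variants VMWC, VMC and GFVS, I would invoke Wahlstr\"om's quasipolynomial multicut-mimicking-network machinery \cite{Wahlstrom20,corr/abs-2002-08825} together with \cite{KratschW20,HolsK18}: the mimicking network is supported on a vertex/edge set of size $k^{O(\log^3 k)}$, $(r+k)^{O(\log^3(r+k))}$, $k^{O(r)}$, $k^{O(\sqrt{r})}$ and $k^{O(g)}$ respectively, and the reduction rules building it never discard a vertex or edge that could be needed in an optimum, so that set again serves as $\mathsf{Cand}$. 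In each case the algorithm is randomized only because it relies on representing a (trans)matroid over a sufficiently large field by random assignment; each run already fails with probability $2^{-\Omega(n)}$ (or can be driven to that regime by independent repetition), which yields the stated success probability of at least $1-1/2^n$.

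The main work -- and essentially the only obstacle -- is bookkeeping: the cited results are phrased as bounds on the size of an \emph{equivalent instance}, not as bounds on a candidate subset of the original vertex/edge set, and several of the quasipolynomial kernels output instances that are not subgraphs of $G$ at all. One therefore has to open up each construction and verify that the bounded-size ``relevant set'' it computes internally is (a) genuinely a set of original vertices/edges of $G$ of the stated size, and (b) provably contains a minimum solution of $(I,k)$ whenever one exists -- taking care, in the edge-deletion problems, that the set is shown to capture minimum \emph{edge} solutions rather than only vertex solutions. These checks are routine given the structure of the proofs in \cite{KratschW14,KratschW20,HolsK18,Wahlstrom20}, but they have to be carried out problem by problem, and that per-problem verification is the bulk of the argument.
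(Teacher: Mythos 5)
Your proposal is correct and follows essentially the same route as the paper: the paper's proof likewise observes that the known randomized kernelizations of Kratsch--Wahlstr\"om and Wahlstr\"om internally compute exactly such a bounded-size set of original vertices/edges containing an optimum solution (before applying torso-type reduction rules), and simply extracts that set as $\mathsf{Cand}$, citing the relevant lemmas problem by problem. The per-problem ``bookkeeping'' you defer is also where the paper spends its effort (e.g.\ a preprocessing step bounding the total terminal capacity by $2k$ for EMWC, and starting from an $O(k\log k)$ approximate solution for GFES), so your outline matches the intended argument.
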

\begin{proof}[Proof sketch]
Proof of this lemma is implicit in literature \cite{KratschW20,corr/abs-2002-08825}.
Known kernels for all these problems actually first compute a candidate set and then apply reduction rules (in fact, an appropriate torso operation) to reduce the graph size.
The candidate sets for EB, OCT, VMWC, VMC, and GFVS follow from the work \cite{KratschW20}, while the candidate sets for EMWC, EMC, and GFES follow from the work \cite{corr/abs-2002-08825}.
See Appendix~\ref{appx-proofcand} for a more detailed discussion about the references for each of these problems.
\end{proof}
%\begin{description}
%\item $\beta(t)$ - the bag of $t$
%\item $\sigma(t)$ - the adhesion of $t$ (the intersection of the bag of $t$ and the bag of the parent of $t$)
%\item $\gamma(t)$ - the union of the bags in the subtree rooted at $t$
%\item tor$(t)$ - the torso of $t$ (obtained from the graph $G[\beta(t)]$ by making $\sigma(s)$ a clique for all children $s$ of $t$)
%\item $G/V$ - the graph obtained from $G$ by contracting the connected components of $G[V]$ (or equivalently contracting the edges of $G[V]$)
%\end{description}

\section{The generic framework}
\label{sec:framework}

All of our algorithms follow the same framework.
Before discussing the framework, let us first observe some common features of the problems to be studied.
\begin{enumerate}
    \item \textbf{Small candidate sets.}
    All problems we study admit small-size candidate sets that can be computed efficiently (Lemma~\ref{lem-candidate}).
    %This feature helps design subexponential-time \textit{FPT} algorithms.
    Our framework will make use of these small candidate sets.
    However, even without the candidate sets in Lemma~\ref{lem-candidate}, our framework still works and leads to algorithms with $n^{O(\sqrt{k})}$ kind running time, by using the trivial candidate set, i.e., the entire vertex set or edge set.
    %\todo[inline]{Add here what happens when we do not have candidate set lemma, we get $n^{O(\sqrt{k})}$ kind of algorithm. This shows that our framework is robust and can be used to design subexponential algorithm, beyond FPT subexp algo.}
    \item \textbf{Tree-decomposition dynamic programming.}
    Given a tree decomposition $\mathcal{T}$ of the input graph of width $w$, all problems we study can be solved by applying standard dynamic programming (DP) on $\mathcal{T}$ in time depending on $w$.
    For example, consider the OCT problem on a graph $G$.
    Let $\mathcal{T}$ be a tree decomposition of $G$ of width $w$, and $T$ be the underlying (rooted) tree of $\mathcal{T}$.
    We can do DP on $\mathcal{T}$ by considering the nodes of $T$ from bottom to top.
    Specifically, we compute a DP table at each node $t \in T$ in which every entry corresponds to a sub-problem on $G[\gamma(t)]$ with a constraint on the ``state'' of the vertices in $\sigma(t)$ in the solution: which vertices in $\sigma(t)$ are in the OCT, which vertices are on one side of the remaining bipartite graph, and which vertices are on the other side; the entry stores an optimal solution for that sub-problem.
    Since $\beta(t) \leq w+1$, one can compute the DP table at $t$ in $2^{O(w)}$ time provided the DP tables of all children of $t$, by simply enumerating the states of the vertices in $\beta(t)$.
    In this way, we solve the OCT problem in $2^{O(w)} \cdot n^{O(1)}$ time.
    %\item \textbf{Dynamic programming on a tree decomposition.}
    \item \textbf{Contraction-friendly for tree-decomposition DP.}
    On top of the solvability by tree-decomposition DP, the problems we study have a very important feature, which we call \textit{contraction-friendly}.
    That is, if we know that some parts of the graph is disjoint from the solution, then we can contract these parts and do DP on a tree decomposition $\mathcal{T}$ of the graph \textit{after the contraction}, in time depending on the width of $\mathcal{T}$.
    For example, consider again the OCT problem on a graph $G$.
    Suppose we somehow know that a set $X \subseteq V(G)$ of vertices are not in the solution OCT, so $G[X]$ must be bipartite.
    We then contract the connected components of $G[X]$ and compute a tree decomposition $\mathcal{T}$ for the resulting graph $G/X$ of width $w = O(\mathbf{tw}(G/X))$.
    Let $T$ be the underlying tree of $\mathcal{T}$.
    The bag $\beta(t)$ of each node $t \in T$ contains at most $w+1$ vertices of $G/X$, each of which either corresponds to a single vertex in $G$ or a connected component of $G[X]$.
    How can we do DP on $\mathcal{T}$ in time depending on $w$?
    We do it in a similar fashion as above.
    At each node $t \in T$, we compute a DP table in which every entry corresponds to a sub-problem on the pre-image of $(G/X)[\gamma(t)]$ in $G$ with a constraint on the state of the vertices in the pre-image of $\sigma(t)$ in the solution.
    However, the pre-image of $\sigma(t)$ can have a large size, because a connected component of $G[X]$ might be large.
    The key observation here is that each connected component of $G[X]$ can have only \textit{two} possible states in the solution.
    Indeed, each connected component of $G[X]$, as a connected bipartite graph, has a \textit{unique} bipartite structure, i.e., can be \textit{uniquely} partitioned into two independent sets $\varGamma_1$ and $\varGamma_2$.
    Since $X$ is disjoint from the solution OCT, $\varGamma_1$ and $\varGamma_2$ must belong to opposite sides of the remaining bipartite graph (and hence there are only two possibilities).
    Therefore, the pre-image of $\sigma(t)$ can have in total $2^{O(|\sigma(t)|)} = 2^{O(w)}$ possible states in the solution, and thus the size of the DP table at $t$ is $2^{O(w)}$.
    For the same reason, the pre-image of $\beta(t)$ can have $2^{O(|\beta(t)|)} = 2^{O(w)}$ possible states in the solution; by enumerating these states, one can compute the DP table at $t$ in $2^{O(w)}$ time (provided the DP tables of all children of $t$).
    In this way, the entire problem can be solved in $2^{O(w)} \cdot n^{O(1)}$ time.
    %The standard way of doing DP is to compute a DP table at each node $t \in T$ in which each entry corresponds to a possible ``status'' of the vertices relevant to $t$ in the solution: which vertices are in the OCT, which vertices are on one part of the remaining bipartite graph, and which vertices are on the other part.
    %Note that although the width of $\mathcal{T}$ is only $w$, each bag may contain some ``heavy'' vertices which correspond to large connected components of $G[X]$.
    %However, each connected component of $G[X]$ has a \textit{unique bipartite structure} no matter how large it is.
\end{enumerate}

Our algorithmic framework can be applied to any vertex/edge-deletion problems on $H$-minor-free graphs that have the above features.
Essentially, if we are given a candidate set $\mathsf{Cand}$ of the problem and we are able to do DP in $f(\vec{r},w) \cdot n^{O(1)}$ time on a $w$-width tree decomposition of the graph after contracting some part that is disjoint from the solution, then our framework gives an algorithm for the problem that runs in $|\mathsf{Cand}|^{O(\sqrt{k})} \cdot f(\vec{r},\sqrt{k}) \cdot n^{O(1)}$ time; here $\vec{r}$ denotes the parameter(s) of the problem other than the solution-size parameter $k$ (e.g., the number of terminals in VMWC/EMWC, the group size in GFVS/GFES, etc.).
%Specifically, we show that 
Note that any vertex-deletion (resp., edge-deletion) problem admits a trivial candidate set of size $O(n)$ (resp., $O(n^2)$), i.e., the entire vertex set $V(G)$ (resp., edge set $E(G)$).
Therefore, even without the small candidate sets in Lemma~\ref{lem-candidate}, our framework can lead to algorithms with $f(\vec{r},\sqrt{k}) \cdot n^{O(\sqrt{k})}$ running time.

Next, we discuss our framework in detail.
Consider a problem instance with graph $G$ which is $H$-minor-free and solution-size parameter $k \in \mathbb{N}$.
First of all, we use Lemma~\ref{lem-candidate} to compute a small candidate set $\mathsf{Cand} \subseteq V(G)$ or $\mathsf{Cand} \subseteq E(G)$ for the instance.
By the definition of a candidate set, we now only need to decide whether there exists a solution inside $\mathsf{Cand}$ of size at most $k$.
Since $G$ is $H$-minor-free, by Lemma~\ref{lem-rsdecomp}, we can compute in polynomial time a tree decomposition $\mathcal{T}_\textnormal{RS}$ of $G$ with adhersion size at most $h$ in which each torso is $h$-almost-embeddable (for some constant $h>0$), together with the almost-embeddable structures of the torsos.
%Also, we can compute $\mathcal{T}_\textnormal{RS}$ together with the almost-embeddable structures of the torsos in polynomial time.
Let $T_\textnormal{RS}$ be the underlying (rooted) tree of $\mathcal{T}_\textnormal{RS}$.
By Lemma~\ref{lem-rsdecomp}, $\mathcal{T}_\textnormal{RS}$ also satisfies the property that for all $t \in T_\textnormal{RS}$, $G[\gamma(t) \backslash \sigma(t)]$ is connected and $\sigma(t) = N_G(\gamma(t) \backslash \sigma(t))$.

In a high level, our algorithm solves the problem using dynamic programming on the tree decomposition $\mathcal{T}_\textnormal{RS}$.
However, the width of $\mathcal{T}_\textnormal{RS}$ could be large, so doing DP in a standard way does not work.
Fortunately, $\mathcal{T}_\textnormal{RS}$ has some useful properties.
First, the adhesion size of $\mathcal{T}_\textnormal{RS}$ is bounded by $h$, i.e., $|\sigma(t)| \leq h$ for all $t \in T_\textnormal{RS}$.
This means, though the width of $\mathcal{T}_\textnormal{RS}$ can be large, the DP table to be computed at each node $t \in T_\textnormal{RS}$ only has a constant size.
Therefore, the main challenge actually occurs in how to implement a single step of the DP procedure, that is, how to compute the DP table of a node $t \in T_\textnormal{RS}$, provided the DP tables of all children of $t$.
To this end, we exploit the second good property of $\mathcal{T}_\textnormal{RS}$: the torso of each node $t \in T_\textnormal{RS}$ is $h$-almost-embeddable.
In order to use this property, we establish the following structural lemma about almost-embeddable graphs, which serves as a technical core of our framework.
\begin{lemma} \label{lem-key}
Let $G$ be an $h$-almost-embeddable graph, for some constant $h$.
Then for any $p \in \mathbb{N}$, there exist disjoint sets $Z_1,\dots,Z_p \subseteq V(G)$ such that for any $i \in [p]$ and $Z' \subseteq Z_i$, $\mathbf{tw}(G/(Z_i \backslash Z')) = O(p+|Z'|)$, where the constant hidden in $O(\cdot)$ depends on $h$.
Furthermore, if the almost-embeddable structure of $G$ is given, then $Z_1,\dots,Z_p$ can be computed in polynomial time.
\end{lemma}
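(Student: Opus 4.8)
The plan is to reduce the case of a general $h$-almost-embeddable graph $G$ to the case of a bounded-genus graph, handle the bounded-genus case via the vertex-face incidence (VFI) graph and a BFS-layering argument, and then show how the apex set and the vortices can be absorbed into the construction at a controlled cost. First I would strip off the apex set $A$ (of size at most $h$): the apices are placed into \emph{every} $Z_i$ — or rather, into a separate ``always kept'' pool that is never contracted and is added to every bag of the final tree decomposition. Since $|A|\le h=O(1)$, this only increases the width by $O(1)$, and after deleting $A$ we may assume $G$ is embedded in a surface of genus $h$ together with $h$ vortices.

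For the genus-$h$ embeddable part $G_0$, I would mimic the planar sketch given in the introduction, using the generalization Lemma~\ref{lem-twdiam2} (of which Lemma~\ref{lem:exposition_planar} is the planar special case) in place of its planar version. Concretely: fix the embedding of $G_0$, run BFS in the VFI graph of $G_0$ from a fixed (root) face, and let $L'_1,L'_2,\dots$ be the odd BFS levels, which partition $V(G_0)$. The crucial structural facts are that $L'_j$ is the boundary of the outer region of $G_0$ after deleting $L'_1\cup\cdots\cup L'_{j-1}$, so $L'_j$ separates the ``inside'' from the ``outside,'' and that the first $2p$ BFS levels of the VFI graph induce a subgraph whose VFI graph has diameter $O(p)$, hence treewidth $O(p)$. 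Define $Z_i = \bigcup_{j \equiv i \pmod p} L'_j$ for $i\in[p]$; these are disjoint. Given $Z'\subseteq Z_i$, split $G_0$ at level $L'_i$ into $G_{\le}$ and the components $C_1,\dots,C_r$ of $G_>$, each sitting inside a deep face $f_\ell$ of $G_{\le}$ bounded by a cycle $\partial_\ell\subseteq L'_i$. Contracting $Z_i\setminus Z'$ shrinks each $\partial_\ell$ to a cycle of size $1+|Z'\cap\partial_\ell| = O(|Z'|)$; setting $\kappa(f_\ell)$ to be the corresponding $O(|Z'|)$ preimage vertices and applying Lemma~\ref{lem-twdiam2} gives a tree decomposition of $G_{\le}$ of width $O(p+|Z'|)$ in which each $\kappa(f_\ell)$ — and hence, after contraction, each contracted cycle $\partial'_\ell$ — lies in a single bag. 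Recurse on each $C_\ell$ (with the induced instance, noting $C_\ell$ sees only $O(p)$ further BFS levels relative to its own outer face, though one must be careful that the ``budget'' $p$ is reused cyclically, so the recursion depth is genuinely bounded by the number of level-blocks and the width bound $O(p+|Z'|)$ is maintained uniformly), attach its decomposition to the bag containing $\partial'_\ell$, and add $\partial'_\ell$ to that bag to glue consistently. This yields $\mathbf{tw}(G_0/(Z_i\setminus Z'))=O(p+|Z'|)$.

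It remains to fold in the $h$ vortices $G_1,\dots,G_h$ and to pass from $G_0$ to all of $G$. Each vortex $G_\ell$ is attached along a face $f$ of $G_0$ via a cyclic sequence of ``attachment vertices'' on $\partial f$ and admits a path decomposition of width $h$. The idea is: for each vortex, ensure that the entire attachment cycle of $f$ is forced into a single bag of the $G_0$-decomposition — this is again arranged by adding the attachment vertices to the relevant $\kappa(f)$ sets (the attachment face is a deep face, or can be made one by the same grid-insertion trick used in the proof of Lemma~\ref{lem:exposition_planar}), which costs $O(\text{length of attachment cycle})$ in width. But the attachment cycle may be long, so instead one uses the bounded path-width of the vortex: slide the vortex's path decomposition alongside, introducing a path of bags each of which contains $O(h)$ vortex vertices plus the single ``current'' attachment vertex plus the global apex pool, and hook this path into the $G_0$-decomposition at the bag owning that face. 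Standard vortex-handling (as in the proofs of the existing genus/$H$-minor contraction decomposition theorems) shows this adds only $O(h)$ to the width. One must also decide which vortex vertices go into the $Z_i$'s: the cleanest choice is to put all vortex vertices into the ``never contracted'' pool as well, since there are only boundedly many width-$h$ vortices and they contribute $O(h)$ to every bag regardless. Finally, $G$ differs from $G_0\cup(\text{vortices})\cup A$ only by edges internal to these pieces, all of which are already covered, so a tree decomposition of $G/(Z_i\setminus Z')$ of width $O(p+|Z'|)+O(h)=O(p+|Z'|)$ follows. All steps are constructive given the almost-embeddable structure, so the sets $Z_1,\dots,Z_p$ and the witnessing decompositions are computable in polynomial time.

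The main obstacle I anticipate is controlling the recursion on the ``deep'' components $C_\ell$ while the index budget $p$ is reused cyclically: one needs the invariant that within each block of $p$ consecutive BFS levels the contracted-cycle boundaries interacting with level $L'_i$ have size $O(|Z'|)$ \emph{summed appropriately}, and that merging the recursively obtained decompositions along the bounded-size shared boundaries does not let the width drift above $O(p+|Z'|)$ — in particular that $|Z'|$ here is the \emph{global} $|Z'|$ and not an accumulating quantity down the recursion. The vortex bookkeeping (making attachment faces deep, and interleaving the vortex path decomposition without breaking the connectivity/running-intersection conditions) is technically fiddly but follows the established template; the genuinely new content is the $+|Z'|$ slack in Lemma~\ref{lem-twdiam2}, which is exactly what the grid-insertion-plus-well-linkedness argument in the proof of Lemma~\ref{lem:exposition_planar} is designed to deliver.
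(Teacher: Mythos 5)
Your outline for the bounded-genus skeleton matches the paper's planar sketch, but the genuinely new difficulty of Lemma~\ref{lem-key} — making the layer-based sets $Z_i$ coexist with the vortices — is not resolved by your proposal. In the paper one first adds virtual edges along each vortex disk so the disks become faces of the embedded skeleton, observes that each vortex face touches at most two BFS layers of the VFI graph (hence at most $2h$ ``bad'' layers), and then takes residues modulo $p' = p+2h$, choosing $p$ good residues that avoid all bad layers. This guarantees that no attachment vertex of any vortex lies in any $Z_i$, which is needed twice: (a) the vortex-absorption step (Lemma~\ref{lem-relation}: add the $j$-th bag of the vortex path decomposition to every node containing the $j$-th attachment vertex) requires the attachment vertices to survive uncontracted and to form a cycle of virtual edges in the contracted skeleton, otherwise many attachment vertices can merge into one contracted vertex and the $O(h)$-per-bag accounting breaks; (b) the block-wise width bound requires every vertex of a $Z_i$-layer outside $Z'$ to actually be contracted, so that deep-face boundaries shrink to $O(|Z'|)$ vertices. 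Your fallback of putting all vortex vertices into a ``never contracted'' pool collides with (b): whenever an attachment vertex sits in a layer assigned to $Z_i$ (which happens without the residue shift), exempting it from contraction leaves a deep-face boundary of length up to the attachment-cycle length, so the $O(p+|Z'|)$ bound fails; contracting it instead breaks (a). Also, ``hook the vortex path decomposition at the bag owning that face'' is not a valid gluing, since the attachment vertices are spread over many bags; the correct gluing is the per-attachment-vertex one of Lemma~\ref{lem-relation}.

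The second gap is the one you flag yourself. The paper does not recurse: it partitions the skeleton into blocks of consecutive layers, builds one width-$O(p+|Z'|)$ decomposition per block via Lemma~\ref{lem-twdiam2} with $\kappa(f)$ consisting of one representative per connected component of $G_0'[Z_i \backslash Z']$ meeting $\partial f$ plus $Z' \cap V(\partial f)$, and glues these along a support tree by merging each child decomposition with a single portal bag, so no accumulation occurs. That each $\kappa(f)$ has size $O(|Z'|)$ with the \emph{global} $Z'$ is not automatic when the genus is positive, because a deep-face boundary need not be a simple cycle; it is proved via Lemma~\ref{lem-degree} (the boundary has $O(h)$ components and maximum degree $O(h)$, so deleting $Z'$ leaves $O(|Z'|)$ components, Fact~\ref{fact-deepface}). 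Your sketch relies on the planar ``a cycle minus $Z'$ has at most $|Z'|$ pieces'' bound, which does not transfer directly. So the two ingredients your proposal is missing are exactly the bad-layer residue shift $p' = p+2h$ with virtual edges, and the support-tree plus Lemma~\ref{lem-degree} argument that keeps the $\kappa$-sets globally $O(|Z'|)$.
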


To see how powerful Lemma~\ref{lem-key} is, let us first consider a special case where the tree $T_\textnormal{RS}$ only has one node, the root $\mathsf{rt}$.
That is, the graph $G = G[\beta(\mathsf{rt})] = \mathsf{tor}(\mathsf{rt})$ itself is $h$-almost-embeddable.
We observe that Lemma~\ref{lem-key} almost directly solves our problems in this case (and in particular solves the problems on bounded-genus graphs).
Again, we use the OCT problem as an example.
Set $p = \lfloor \sqrt{k} \rfloor$ and we compute the disjoint sets $Z_1,\dots,Z_p \subseteq V(G)$ in Lemma~\ref{lem-key}.
We apply (a variant of) Baker's technique on $Z_1,\dots,Z_p$.
%Let $E_i$ be the set of edges in the subgraph $G[Z_i]$, for $i \in [p]$; note that $E_1,\dots,E_p$ are also disjoint.
Specifically, we observe that $G$ has an OCT of size at most $k$ iff for some $i \in [p]$ and $Z' \subseteq Z_i \cap \mathsf{Cand}$ with $|Z'| \leq k/p$, $G$ has an OCT of size at most $k$ that is \textit{disjoint} with $Z_i \backslash Z'$.
The ``if'' part is obvious.
To see the ``only if'' part, suppose $G$ has an OCT of size at most $k$.
Then by the definition of candidate sets, $G$ has an OCT $V_\mathsf{oct}$ of size at most $k$ that is contained in $\mathsf{Cand}$.
Since $Z_1,\dots,Z_p$ are disjoint, there exists some $i \in [p]$ such that $|V_\mathsf{oct} \cap Z_i| \leq k/p$.
Define $Z' = V_\mathsf{oct} \cap Z_i$.
Since $V_\mathsf{oct} \subseteq \mathsf{Cand}$, we have $Z' \subseteq Z_i \cap \mathsf{Cand}$.
Also, we have $V_\mathsf{oct} \cap (Z_i \backslash Z') = \emptyset$, which implies the ``only if'' part.
Therefore, to solve the problem, it now suffices to test, for every pair $(i,Z')$ where $i \in [p]$ and $Z' \subseteq Z_i \cap \mathsf{Cand}$ with $|Z'| \leq k/p$, whether $G$ has an OCT of size at most $k$ that is disjoint from $Z_i \backslash Z'$.
How many pairs $(i,Z')$ are there to be considered?
Since $|\mathsf{Cand}| = k^{O(1)}$ for OCT, the total number of such pairs is bounded by $p \cdot k^{O(k/p)} = k^{O(\sqrt{k})}$.
To solve the problem for each pair $(i,Z')$, we recall the ``contraction-friendly'' feature of OCT (discussed at the beginning of this section).
As we know that $Z_i \backslash Z'$ is disjoint from the solution, we can contract the connected components of $G[Z_i \backslash Z']$ and do DP on a tree decomposition of the resulting graph $G/(Z_i \backslash Z')$.
By Lemma~\ref{lem-key}, $\mathbf{tw}(G/(Z_i \backslash Z')) = O(p+|Z'|) = O(\sqrt{k})$, and thus the DP procedure takes $2^{O(\sqrt{k})} \cdot n^{O(1)}$ time.
In this way, we solve the OCT problem in $k^{O(\sqrt{k})} \cdot n^{O(1)}$ time for this special case.

The general case where $|T_\textnormal{RS}| > 1$ is more involved.
Recall that we want to do a single step of DP on $\mathcal{T}_\textnormal{RS}$: computing the DP table of a node $t \in T_\text{RS}$ provided the DP table of the children of $t$.
Each entry of this DP table corresponds to a sub-problem on $G[\gamma(t)]$ with some constraint on the configuration of $\sigma(t)$ in the solution.
More formally, each such sub-problem can be represented by a pair $(\lambda,l)$ where $\lambda$ is a configuration of $\sigma(t)$ (which is problem-specific) and $l \in [k]$ is an upper bound on the solution size (we denote this sub-problem as $\mathsf{Prob}_{\lambda,l}$).
The sub-problem $\mathsf{Prob}_{\lambda,l}$ asks if there exists a solution for $G[\gamma(t)]$ of size at most $l$ that is contained in $\mathsf{Cand}$ and is \textit{compatible} with the configuration $\lambda$ of $\sigma(t)$; the corresponding entry of the DP table stores the YES/NO answer to the sub-problem $\mathsf{Prob}_{\lambda,l}$.
Thus, computing the DP table of $t$ is equivalent to solving a set of sub-problems on $G[\gamma(t)]$.
To this end, our basic idea is similar to the above, i.e., applying Baker's technique on $G[\gamma(t)]$, contracting the part that is disjoint from the solution, and doing a second-level DP on a tree decomposition of the graph after the contraction.
However, Lemma~\ref{lem-key} can only give us a decomposition on $\beta(t)$ because only $\mathsf{tor}(t)$ is almost-embeddable, while what we need is a decomposition on $\gamma(t)$ in order to apply Baker's technique.
%With Lemma~\ref{lem-key} in hand, we achieve a single step of the DP procedure on $\mathcal{T}_\textnormal{RS}$ by properly applying (a variant of) Baker's technique and another layer of tree-decomposition DP.
Therefore, we first need to establish a corollary of Lemma~\ref{lem-key}, which extends a decomposition on $\beta(t)$ obtained by Lemma~\ref{lem-key} to a ``decomposition'' on $\gamma(t)$.
Consider a node $t \in T_\textnormal{RS}$ and a set $C$ of children of $t$.
Let $U_C = \bigcup_{s \in C} (\gamma(s) \backslash \sigma(s))$.
We define $G_t^C$ as the graph obtained from $G[\gamma(t) \backslash U_C]$ by making $\sigma(s)$ a clique for all $s \in C$.
In other words, $G_t^C$ is obtained from $G[\gamma(t) \backslash U_C]$ by adding, for all $s \in C$ and all $u,v \in \sigma(s)$ that are not adjacent in $G[\gamma(t) \backslash U_C]$, an edge $(u,v)$.
With this definition, we have $G_t^C = \mathsf{tor}(t)$ if $C$ consists of all children of $t$ and $G_t^C = G[\gamma(t)]$ if $C = \emptyset$.
Lemma~\ref{lem-key} implies the following corollary.

\begin{corollary} \label{cor-extention}
For any node $t \in T_\textnormal{RS}$ and $p \in [n]$, one can compute in polynomial time $Y_1,\dots,Y_p \subseteq \gamma(t) \backslash \sigma(t)$ satisfying the following conditions. \\
\textbf{(i)} For any child $s$ of $t$ and any $i \in [p]$, either $\gamma(s) \backslash \sigma (s) \subseteq Y_i$ or $\gamma(s) \cap Y_i = \emptyset$. \\
\textbf{(ii)} For any vertex $v \in \gamma(t)$, there are at most $h$ indices $i \in [p]$ such that $v \in Y_i$. \\
\textbf{(iii)} For any $i \in [p]$ and $Y' \subseteq Y_i$, $\mathbf{tw}(G_t^{C_i}/(Y_i \backslash Y')) = O(p+|Y'|)$, where $C_i$ is the set of children $s$ of $t$ satisfying $\gamma(s) \cap Y_i = \emptyset$.
\end{corollary}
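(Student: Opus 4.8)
The plan is to obtain $Y_1,\dots,Y_p$ by lifting the sets $Z_1,\dots,Z_p$ produced by Lemma~\ref{lem-key} applied to the torso $\mathsf{tor}(t)$, and then ``blowing up'' each $Z_i$ by the subtrees hanging below the children of $t$ that sit inside it. Concretely, first apply Lemma~\ref{lem-key} to the $h$-almost-embeddable graph $\mathsf{tor}(t)$ with parameter $p$, obtaining disjoint $Z_1,\dots,Z_p \subseteq V(\mathsf{tor}(t)) = \beta(t)$ with the property that for every $i$ and every $Z' \subseteq Z_i$, $\mathbf{tw}(\mathsf{tor}(t)/(Z_i\backslash Z')) = O(p+|Z'|)$. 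We may assume (restricting $Z_i$) that $Z_i \subseteq \beta(t)\backslash\sigma(t)$: removing $\sigma(t)$ from each $Z_i$ costs only $O(h)=O(1)$ in treewidth, since deleting $|\sigma(t)|\le h$ vertices changes treewidth by at most $h$, and contracting a vertex set that is a subset of a larger one only helps. Now for each child $s$ of $t$, the adhesion $\sigma(s)\subseteq\beta(t)$ is a clique in $\mathsf{tor}(t)$, hence $\sigma(s)$ lies in a single bag of any tree decomposition of $\mathsf{tor}(t)/(\cdot)$ after contraction; more importantly, since the $Z_i$ are disjoint and each has empty intersection with $\sigma(t)$, the set $\sigma(s)$ can meet at most a bounded number of the $Z_i$'s — but we do not even need that. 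We define
\[
Y_i \ =\ (Z_i\cap(\gamma(t)\backslash\sigma(t)))\ \cup\ \bigcup_{s\in C_i'}\bigl(\gamma(s)\backslash\sigma(s)\bigr),
\]
where $C_i'$ is the set of children $s$ of $t$ with $\sigma(s)\subseteq Z_i$ (equivalently, as we will arrange, $\gamma(s)\cap Y_i\neq\emptyset$). Condition (i) is then immediate by construction: for a child $s$, either $\sigma(s)\subseteq Z_i$, in which case $\gamma(s)\backslash\sigma(s)\subseteq Y_i$ (and also $\sigma(s)\subseteq Y_i$ provided $\sigma(s)$ avoids $\sigma(t)$; a vertex of $\sigma(s)\cap\sigma(t)$ simply is excluded, which is harmless), or $\sigma(s)\not\subseteq Z_i$, and since the only vertices of $Y_i$ inside $\beta(t)$ are those of $Z_i$, we get $\gamma(s)\cap Y_i=\emptyset$ after also discarding from $Y_i$ the interiors of children not fully inside $Z_i$. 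For condition (ii): a vertex $v\in\beta(t)$ lies in at most one $Y_i$ since the $Z_i$ are disjoint, and a vertex $v\in\gamma(t)\backslash\beta(t)$ lies in $\gamma(s)\backslash\sigma(s)$ for a unique child $s$ (by the laminar structure of $\mathcal{T}_\textnormal{RS}$ and the fact that $G[\gamma(s)\backslash\sigma(s)]$ is connected), hence $v\in Y_i$ for at most one $i$; so in fact the bound in (ii) can be taken to be $1$, well within $h$. (The statement allows $h$ presumably to absorb the $\sigma(t)$-overlap bookkeeping.)

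The real content is condition (iii). Fix $i$ and $Y'\subseteq Y_i$. Write $Z' := Y'\cap\beta(t) = Y'\cap Z_i$. The graph $G_t^{C_i}/(Y_i\backslash Y')$ is built as follows: we take $G[\gamma(t)\backslash U_{C_i}]$ — which is exactly $G[\beta(t)\,\cup\,\bigcup_{s\in C_i'}(\gamma(s)\backslash\sigma(s))]$ minus nothing, i.e. the union of $G[\beta(t)]$ with the subgraphs $G[\gamma(s)]$ for $s\in C_i'$ glued along their adhesions — make $\sigma(s)$ a clique for the children $s\in C_i$ (those disjoint from $Y_i$), and contract $Y_i\backslash Y'$. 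I would prove the treewidth bound by constructing a tree decomposition explicitly. Start from the tree decomposition $\mathcal{T}_0$ of $\mathsf{tor}(t)/(Z_i\backslash Z')$ of width $O(p+|Z'|)$ guaranteed by Lemma~\ref{lem-key}. This decomposition's bags are subsets of $(\beta(t)\backslash(Z_i\backslash Z'))\cup\{\text{contracted blobs}\}$. Now for each child $s\in C_i$, $\sigma(s)$ is a clique in $\mathsf{tor}(t)$, so it survives as a clique (or a single contracted vertex plus the rest) in $\mathsf{tor}(t)/(Z_i\backslash Z')$ — note $\sigma(s)\cap Z_i=\emptyset$ here since $s\notin C_i'$ — and hence $\sigma(s)$ lies inside a single bag $\beta_0(t_s)$ of $\mathcal{T}_0$. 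We then attach, at $t_s$, any width-$O(1)$ tree decomposition of $G[\gamma(s)]$ — recall $\mathsf{tor}(s)$ is $h$-almost-embeddable and more to the point $G[\gamma(s)]$ has the same recursive structure, but actually for the corollary we just need that we can recurse or that $\gamma(s)$ with $\sigma(s)$ a clique has bounded-in-$p$ treewidth; in fact the cleanest route is: this corollary's proof does not bound $\mathbf{tw}(G[\gamma(s)])$ at all — instead we observe that $G_t^{C_i}$ only makes $\sigma(s)$ a clique for $s\in C_i$, i.e. it deletes $U_{C_i}$, so the children in $C_i$ contribute only cliques of size $\le h$ and the children in $C_i'$ contribute their full interiors. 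For $s\in C_i'$ we have $\gamma(s)\backslash\sigma(s)\subseteq Y_i$, so if $Y'$ misses all of $\gamma(s)\backslash\sigma(s)$ then this whole interior gets contracted together with (the part of $Z_i$ adjacent to it) — and since $G[\gamma(s)\backslash\sigma(s)]$ is connected and $\sigma(s)\subseteq Z_i\backslash Z'$ (assuming $Z'$ also misses $\sigma(s)$) the entire $\gamma(s)$ collapses into one of the blobs already present in $\mathcal{T}_0$; if instead $Y'$ hits $\gamma(s)\backslash\sigma(s)$, attach at $t_s$ a tree decomposition of $G[\gamma(s)]/(\text{interior}\backslash Y')$, whose width I would bound recursively by descending into $\mathcal{T}_\textnormal{RS}$.

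The main obstacle, and the step I would spend the most care on, is exactly this recursion: bounding $\mathbf{tw}\bigl(G[\gamma(s)]/((\gamma(s)\backslash\sigma(s))\cap(Y_i\backslash Y'))\bigr)$ for $s\in C_i'$ in terms of $p$ and $|Y'\cap\gamma(s)|$ when $\gamma(s)$ is itself a large $H$-minor-free graph rather than a single almost-embeddable piece. The resolution is that $G[\gamma(s)]$ inherits the Robertson–Seymour tree decomposition (the subtree of $\mathcal{T}_\textnormal{RS}$ rooted at $s$), and $Y_i\cap\gamma(s)$ is by construction exactly $\gamma(s)\backslash\sigma(s)$ — we are contracting \emph{everything} below $s$ except the guessed set $Y'\cap\gamma(s)$ — so $G[\gamma(s)]/((\gamma(s)\backslash\sigma(s))\backslash Y')$ contracts \emph{all} of $\gamma(s)\backslash\sigma(s)$ except at most $|Y'\cap\gamma(s)|$ vertices and the connected set $G[\gamma(s)\backslash\sigma(s)]$ therefore collapses to $O(|Y'\cap\gamma(s)|+1)$ vertices; adding back $\sigma(s)$ (size $\le h$) this is a graph on $O(|Y'\cap\gamma(s)|)$ vertices, of treewidth $O(|Y'\cap\gamma(s)|)$. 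Hence no genuine recursion into the subtree is needed: the children in $C_i'$ contribute only $O(|Y'|+1)$ extra vertices \emph{each}, but because they hang off distinct bags and a tree decomposition of width $w$ can absorb a clique/blob of size $b$ into a single adjacent bag at a cost of $b-1$ in width at that bag, I glue them in one at a time, each gluing raising the local bag size by $O(|Y'\cap\gamma(s)|)$. Summing over children, the total extra vertices simultaneously present in any one bag is controlled because each such gluing is local to one bag of $\mathcal{T}_0$ and a bag of $\mathcal{T}_0$ has $O(p+|Z'|)$ vertices; after the gluing its size is $O(p+|Z'|+\max_s|Y'\cap\gamma(s)|) = O(p+|Y'|)$. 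Collecting everything, the resulting tree decomposition of $G_t^{C_i}/(Y_i\backslash Y')$ has width $O(p+|Y'|)$, giving (iii). Finally, polynomial-time computability follows from that of Lemma~\ref{lem-key} (given the almost-embeddable structure of $\mathsf{tor}(t)$, which $\mathcal{T}_\textnormal{RS}$ supplies) together with the polynomial-time constructions of the small constant-width decompositions of the $G[\gamma(s)]$-pieces, so I would close by remarking that every step above is constructive.
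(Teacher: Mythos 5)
Your construction of the $Y_i$'s deviates from what the statement needs in a way that breaks conditions \textbf{(i)} and \textbf{(iii)}. You add the interior $\gamma(s)\backslash\sigma(s)$ of a child $s$ to $Y_i$ only when $\sigma(s)\subseteq Z_i$. But consider a child $s$ whose adhesion meets $Z_i$ only partially, i.e.\ $\emptyset\neq\sigma(s)\cap Z_i\subsetneq\sigma(s)$: since $\sigma(s)\cap Z_i\subseteq Z_i\subseteq Y_i$, we have $\gamma(s)\cap Y_i\neq\emptyset$ while $\gamma(s)\backslash\sigma(s)\not\subseteq Y_i$, so \textbf{(i)} fails (your remark that this is cured by ``discarding interiors'' does not remove $\sigma(s)\cap Z_i$ from $Y_i$, and removing those adhesion vertices instead would shrink $Y_i\cap\beta(t)$ below $Z_i$ by an amount not controlled by $|Y'|$). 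Worse, such a child is then \emph{not} in $C_i$ (as $C_i$ is defined by $\gamma(s)\cap Y_i=\emptyset$), so its entire interior is neither deleted in $G_t^{C_i}$ nor contracted by $Y_i\backslash Y'$, and $\mathbf{tw}(G_t^{C_i}/(Y_i\backslash Y'))$ can be as large as the treewidth of that untouched subtree — \textbf{(iii)} fails. The paper's construction adds $\gamma(s)\backslash\sigma(s)$ to $Y_i$ whenever $\sigma(s)\cap Z_i\neq\emptyset$; this is exactly why condition \textbf{(ii)} carries the factor $h$ (a child's interior can land in up to $|\sigma(s)|\le h$ of the disjoint $Z_i$'s). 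Your claim that the bound in \textbf{(ii)} can be taken to be $1$ is a symptom of this error, not a strengthening.

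There is also a genuine gap in your argument for \textbf{(iii)} even after fixing the construction: setting $Z'=Y'\cap\beta(t)$ and gluing onto a decomposition of $\mathsf{tor}(t)/(Z_i\backslash Z')$ is not sound, because the two quotients do not match. For a child $s$ whose interior lies in $Y_i$, two vertices $u,v\in\sigma(s)\cap(Z_i\backslash Z')$ are merged in $\mathsf{tor}(t)/(Z_i\backslash Z')$ (they are adjacent via the torso clique on $\sigma(s)$), but in $G_t^{C_i}/(Y_i\backslash Y')$ that clique edge does not exist and their only connection runs through $\gamma(s)\backslash\sigma(s)$, which can be severed when $Y'$ keeps interior vertices uncontracted; then $u$ and $v$ sit in different blobs of the real graph, which is therefore a proper ``uncontraction'' of the graph you build a decomposition for, and attaching subtrees cannot repair this. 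The paper resolves it by enlarging $Z'$ to $(Y'\cap\beta(t))\cup\bigl(\bigcup_{s\in C'}Z_i\cap\sigma(s)\bigr)$ over the children $C'$ hit by $Y'$ (still of size $O(h|Y'|)$), and then proving via the good/bad-pair and minor argument (its Claims~1--4) that the relevant induced part of $G'/(Y_i\backslash Y')$ is a minor of $\mathsf{tor}(t)/(Z_i\backslash Z')$ before handling the remaining vertices. Your proposal needs both of these repairs before it establishes the corollary.
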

\begin{proof}
Since $\mathsf{tor}(t)$ is $h$-almost-embeddable, we can use Lemma~\ref{lem-key} to compute the disjoint sets $Z_1^*,\dots,Z_p^* \subseteq \beta(t)$ 
%\todo{should this be just beta t? later you remove sigma t from $Z_i^*$}
satisfying $\mathbf{tw}(\mathsf{tor}(t)/(Z_i^* \backslash Z')) = O(p+|Z'|)$ for any $i \in [p]$ and $Z' \subseteq Z_i^*$.
Setting $Z_i = Z_i^* \backslash \sigma(t)$ for $i \in [p]$, we have $Z_i \subseteq \beta(t) \backslash \sigma(t)$ and for any $Z' \subseteq Z_i$,
%, $\mathsf{tor}(t) - \sigma(t)$ is also $h$-almost-embeddable.
%It follows that for any $i \in [p]$ and $Z' \subseteq Z_i$,
\begin{equation*}
    \mathbf{tw}(\mathsf{tor}(t)/(Z_i \backslash Z')) = \mathbf{tw}(\mathsf{tor}(t)/(Z_i^* \backslash (Z' \cup \sigma(t))) = O(p+|Z' \cup \sigma(t)|) = O(p+|Z'|).
\end{equation*}
%$\mathbf{tw}(\mathsf{tor}(t)/(Z_i \backslash Z')) \leq \mathbf{tw}((\mathsf{tor}(t)-\sigma(t))/(Z_i \backslash Z')) + |\sigma(t)| = O(p+|Z'|)$.
We then obtain $Y_1,\dots,Y_p$ by ``extending'' $Z_1,\dots,Z_p$ to $\gamma(t) \backslash \sigma(t)$ as follows.
We define $Y_i$ as the union of $Z_i$ and $\gamma(s) \backslash \sigma (s)$ for all children $s$ of $t$ such that $\sigma(s) \cap Z_i \neq \emptyset$.
%We show that $Y_1,\dots,Y_p$ satisfy the three conditions.
By this construction, it is clear that $Y_1,\dots,Y_p$ satisfy condition \textbf{(i)}.
To see that condition \textbf{(ii)} is also satisfied, observe that for each child $s$ of $t$ and each $i \in [p]$, $\gamma(s) \backslash \sigma (s) \subseteq Y_i$ only if $\sigma(s) \cap Z_i \neq \emptyset$.
Since $|\sigma(s)| \leq h$ and $Z_1,\dots,Z_p$ are disjoint, $\gamma(s) \backslash \sigma (s) \subseteq Y_i$ for at most $h$ indices $i \in [p]$.
It follows that condition \textbf{(ii)} holds for any $v \in \gamma(t) \backslash \beta(t)$.
For $v \in \beta(t)$, we have $v \in Y_i$ iff $v \in Z_i$ and hence condition \textbf{(ii)} also holds.

The rest of the proof is dedicated to verifying condition \textbf{(iii)}.
This part carefully uses of the facts that $\mathbf{tw}(\mathsf{tor}(t)/(Z_i \backslash Z')) = O(p+|Z'|)$, $G[\gamma(t) \backslash \sigma(t)]$ is connected, and $\sigma(t) = N_G(\gamma(t) \backslash \sigma(t))$.
The proof is somehow tedious but technically not difficult, so the reader can feel free to skip this part.
%It is conceptually not difficult but somehow tedious.

Let $i \in [p]$ and $Y' \subseteq Y_i$.
We want to show $\mathbf{tw}(G_t^{C_i}/(Y_i \backslash Y')) = O(p+|Y'|)$.
%The basic idea is to use the fact $\mathbf{tw}(\mathsf{tor}(t)/(Z_i \backslash Z')) = O(p+|Z'|)$ and obtain a tree decomposition of $G_t^{C_i}/(Y_i \backslash Y')$ by modifying a tree decomposition of $\mathsf{tor}(t)/(Z_i \backslash Z')$.
For $u,v \in \beta(t)$, we say the pair $(u,v)$ is \textit{bad} if $u,v \in Y_i \backslash Y'$ and $u,v$ belong to different connected components of $G_t^{C_i}[Y_i \backslash Y']$, and is \textit{good} otherwise.
Note that if $u,v \in \beta(t)$ are connected by an edge in $G$, then $(u,v)$ is good because $u$ and $v$ are also connected by an edge in $G_t^{C_i}[Y_i \backslash Y']$.
Let $G'$ be the graph obtained from $G_t^{C_i}$ by adding edges $(u,v)$ for all good pair $(u,v)$ of vertices in $\beta(t)$ such that $(u,v) \in E(\mathsf{tor}(t))$.
We notice that $G_t^{C_i}/(Y_i \backslash Y')$ is a subgraph of $G'/(Y_i \backslash Y')$, because $G_t^{C_i}$ is a subgraph of $G'$ and two vertices belong to the same connected component of $G_t^{C_i}[Y_i \backslash Y']$ iff they belong to the same connected component of $G'[Y_i \backslash Y']$.
So it suffices to show $\mathbf{tw}(G'/(Y_i \backslash Y')) = O(p+|Y'|)$.
Define $Z' = (Y' \cap \beta(t)) \cup (\bigcup_{s \in C'} Z_i \cap \sigma(s))$, where $C'$ is the set of children $s$ of $t$ satisfying $(\gamma(s) \backslash \sigma(s)) \cap Y' \neq \emptyset$.
We have the following properties about $Z'$.

\medskip
\noindent
\textit{Claim~1. $Z_i \backslash Z' \subseteq Y_i \backslash Y'$.}

\medskip
\noindent
\textit{Proof of Claim~1.}
By construction, we have $Z_i \subseteq Y_i$.
Furthermore, we have $Y' \cap Z_i \subseteq Y' \cap \beta(t) \subseteq Z'$.
Therefore, $Z_i \backslash Z' \subseteq Z_i \backslash (Y' \cap Z_i) = Z_i \backslash Y' \subseteq Y_i \backslash Y'$.
\hfill $\lhd$

\medskip
\noindent
\textit{Claim~2. $\mathbf{tw}(\mathsf{tor}(t)/(Z_i \backslash Z')) = O(p+|Y'|)$.}

\medskip
\noindent
\textit{Proof of Claim~2.}
We have $|C'| \leq |Y'|$ because the sets $\gamma(s) \backslash \sigma(s)$ are disjoint for all children $s$ of $t$.
It follows that $|Z'| \leq (h+1) \cdot |Y'| = O(|Y'|)$.
Also, we have $Z' \subseteq Z_i$ since $Y' \cap \beta(t) \subseteq Y_i \cap \beta(t) = Z_i$.
Therefore, by the property of $Z_1,\dots,Z_p$, $\mathbf{tw}(\mathsf{tor}(t)/(Z_i \backslash Z')) = O(p+|Z'|) = O(p+|Y'|)$.
\hfill $\lhd$

%For convenience, we write $G_1^* = G_t^{C_i}/(Y_i \backslash Y')$ and $G_2^* = \mathsf{tor}(t)/(Z_i \backslash Z')$.
%Note that the vertex set of $G_t^{C_i}$ is $\gamma(t) \backslash (\bigcup_{s \in C_i} (\gamma(s) \backslash \sigma(s)))$ by definition, which %contains $\beta(t)$.
%So every $v \in \beta(t)$ has a corresponding image vertex in $G_1^*$.
%Let $B$ be the set of image vertices of all $v \in \beta(t)$ in $G_1^*$.
%Our key observation is the following.

\medskip
\noindent
\textit{Claim~3. For any $u,v \in Z_i \backslash Z'$ that $(u,v)$ is an edge of $\mathsf{tor}(t)$, $(u,v)$ is also an edge of $G'$.}

\medskip
\noindent
\textit{Proof of Claim~3.}
It suffices to show that $(u,v)$ is good.
As observed before, if $u,v$ are connected by an edge in $G$, then $(u,v)$ is good.
So assume $(u,v)$ is not an edge of $G$.
But $(u,v)$ is an edge of $\mathsf{tor}(t)$, thus $u,v \in \sigma(s)$ for some child $s$ of $t$.
It follows that $u,v \in \sigma(s) \cap Z_i$.
If $(\gamma(s) \backslash \sigma(s)) \cap Y' \neq \emptyset$, then $s \in C'$ and $\sigma(s) \cap Z_i \subseteq Z'$, which contradicts with the facts that $u,v \in \sigma(s) \cap Z_i$ and $u,v \notin Z'$.
Therefore, $(\gamma(s) \backslash \sigma(s)) \cap Y' = \emptyset$.
On the other hand, since $u,v \in \sigma(s) \cap Z_i$, we have $\sigma(s) \cap Z_i \neq \emptyset$.
By the construction of $Y_i$, this implies $\gamma(s) \backslash \sigma(s) \subseteq Y_i$.
As a result, $\gamma(s) \backslash \sigma(s) \subseteq Y_i \backslash Y'$.
By assumption, $G[\gamma(s) \backslash \sigma(s)]$ is connected and $\sigma(s) = N_G(\gamma(s) \backslash \sigma(s))$.
Thus, $u$ and $v$ belong to the same connected component of $G[(\gamma(s) \backslash \sigma(s)) \cup \{u,v\}]$.
Because $(\gamma(s) \backslash \sigma(s)) \cup \{u,v\} \subseteq Y_i \backslash Y'$, $G[(\gamma(s) \backslash \sigma(s)) \cup \{u,v\}]$ is a subgraph of $G[Y_i \backslash Y']$, which is in turn a subgraph of $G_t^{C_i}[Y_i \backslash Y']$.
This implies $u,v$ belong to the same connected component of $G_t^{C_i}[Y_i \backslash Y']$ and hence $(u,v)$ is good.
\hfill $\lhd$

\medskip
\noindent
Note that Claim~1 and Claim~3 imply that the vertices in each connected component of $\mathsf{tor}(t)[Z_i \backslash Z']$ are contained in the same connected component of $G'[Y_i \backslash Y']$.
With this in hand, we shall show that a ``large'' induced subgraph of $G'/(Y_i \backslash Y')$ is a minor of $\mathsf{tor}(t)/(Z_i \backslash Z')$.
For convenience, we write $G_1 = G'/(Y_i \backslash Y')$ and $G_2 = \mathsf{tor}(t)/(Z_i \backslash Z')$.
Let $\pi_1: V(G') \rightarrow V(G_1)$ and $\pi_2: \beta(t) \rightarrow V(G_2)$ be the natural quotient maps (which map each vertex to its corresponding vertex in the contracted graph).
%Since $\beta(t) \subseteq V(G')$, each vertex $v \in \beta(t)$ has an image vertex in $G_1$.
Define $B = \pi_1(\beta(t))$.

\medskip
\noindent
\textit{Claim~4. $G_1[B]$ is a minor of $G_2$. In particular, $\mathbf{tw}(G_1[B]) \leq \mathbf{tw}(G_2) = O(p+|Y'|)$.}

\medskip
\noindent
\textit{Proof of Claim~4.}
Since the vertices in each connected component of $\mathsf{tor}(t)[Z_i \backslash Z']$ are contained in the same connected component of $G'[Y_i \backslash Y']$, we have $\pi_1(v) = \pi_1(v')$ for any $v,v' \in \beta(t)$ such that $\pi_2(v) = \pi_2(v')$.
Thus, there exists a unique map $f: V(G_2) \rightarrow B$ such that $f \circ \pi_2 = \pi_1|_{\beta(t)}$.
%Every vertex $v \in G_2$ has a set $\text{Pre}_v$ of pre-image vertices in $\mathsf{tor}(t)$.
%For each $v \in V(G_2)$, we choose an arbitrary $u \in \text{Pre}_v$ and define $f(v)$ as the image vertex of $u$ in $G_1$.
%Note that that $f$ is well-defined.
%Indeed, if $|\text{Pre}_v| \geq 2$, then $\text{Pre}_v$ consists of the vertices in the same connected component of $\mathsf{tor}(t)[Z_i \backslash Z']$; by our observation above, these vertices are also in the same connected component of $G'[Y_i \backslash Y']$ and hence have the same image vertex in $G_1$.
We show that \textbf{(1)} $G_2[f^{-1}(b)]$ is connected for all $b \in B$ and \textbf{(2)} for each edge $(b,b')$ of $G_1[B]$, there exist $v \in f^{-1}(b)$ and $v' \in f^{-1}(b')$ such that $(v,v')$ is an edge in $G_2$.
Note that these two properties imply that $G_1[B]$ is a minor of $G_2$.

To see \textbf{(1)}, consider a vertex $b \in B$.
Since $G_1 = G'/(Y_i \backslash Y')$, $\pi_1^{-1}(b)$ is either a single vertex in $\beta(t) \backslash (Y_i \backslash Y')$ or (the vertex set of) a connected component of $G'[Y_i \backslash Y']$.
If $\pi_1^{-1}(b)$ is a single vertex in $\beta(t) \backslash (Y_i \backslash Y')$, then $|f^{-1}(b)| = 1$ and thus $G_2[f^{-1}(b)]$ is connected.
So assume $\pi_1^{-1}(b)$ is a connected component of $G'[Y_i \backslash Y']$.
Consider two vertices $v,v' \in f^{-1}(b)$; we want to show $v$ and $v'$ are connected by a path in $G_2[f^{-1}(b)]$.
Let $u \in \pi_2^{-1}(v)$ and $u' \in \pi_2^{-1}(v')$.
Now $u,u' \in \pi_1^{-1}(b)$, so $u$ and $u'$ are connected by a path $(u,a_1,\dots,a_q,u')$ in $G'[\pi_1^{-1}(b)]$.
Define $J = \{j \in [q]: a_j \in \beta(t)\}$ and suppose $J = \{j_1,\dots,j_{q'}\}$ where $j_1<\cdots<j_{q'}$.
%Note that $u,u' \in \beta(t)$.
We claim that $(u,a_{j_1},\dots,a_{j_{q'}},u')$ is a path connecting $u$ and $u'$ in $\mathsf{tor}(t)$.
Consider $u$ and $a_{j_1}$.
Since $a_1,\dots,a_{j_1-1} \notin \beta(t)$ and $(a_1,\dots,a_{j_1-1})$ is a path in $G'$, we must have $a_1,\dots,a_{j_1-1} \in \gamma(s) \backslash \sigma(s)$ for some child $s$ of $t$.
Because $u$ is neighboring to $a_1$ and $a_{j_1}$ is neighboring to $a_{j_1-1}$ in $G'$, we have $u,a_{j_1} \in \sigma(s)$ and hence $(u,a_{j_1})$ is an edge in $\mathsf{tor}(t)$.
By the same argument, we deduce any two consecutive vertices in the sequence $(u,a_{j_1},\dots,a_{j_{q'}},u')$ are connected by an edge in $\mathsf{tor}(t)$.
Thus, $(u,a_{j_1},\dots,a_{j_{q'}},u')$ is a path in $\mathsf{tor}(t)$.
Since $a_1,\dots,a_q \in \pi_1^{-1}(b)$, we have $\pi_2(a_{j_1}),\dots,\pi_2(a_{j_{q'}}) \in f^{-1}(b)$.
As a result, $(v,\pi_2(a_{j_1}),\dots,\pi_2(a_{j_{q'}}),v')$ is a path connecting $v$ and $v'$ in $G_2[f^{-1}(b)]$.

To see \textbf{(2)}, consider an edge $(b,b')$ in $G_1[B]$.
As mentioned before, each of $\pi_1^{-1}(b)$ and $\pi_1^{-1}(b')$ is either a single vertex in $\beta(t) \backslash (Y_i \backslash Y')$ or (the vertex set of) a connected component of $G'[Y_i \backslash Y']$.
We first notice that $\pi_1^{-1}(b)$ and $\pi_1^{-1}(b')$ cannot both be connected components of $G'[Y_i \backslash Y']$, because the images of the connected components of $G'[Y_i \backslash Y']$ under $\pi_1$ form an independent set in $G_1$ but $(b,b')$ is an edge in $G_1$.
Without loss of generality, assume $\pi_1^{-1}(b)$ is a single vertex in $\beta(t) \backslash (Y_i \backslash Y')$, which implies $|f^{-1}(b)| = 1$.
Let $v$ be the only vertex in $f^{-1}(b)$.
If $\pi_1^{-1}(b')$ is also a single vertex in $\beta(t) \backslash (Y_i \backslash Y')$, then $(v,v')$ is an edge in $G_2$ where $v'$ is the only vertex in $f^{-1}(b')$.
If $\pi_1^{-1}(b')$ is a connected component of $G'[Y_i \backslash Y']$, then there must be a \textit{witness vertex} $u' \in \pi_1^{-1}(b')$ such that $(\pi_1^{-1}(b),u')$ is an edge in $G'$.
If $u' \in \beta(t)$, we are done because $(\pi_1^{-1}(b),u')$ is also an edge in $\mathsf{tor}(t)$ and thus $(v,v')$ is an edge in $G_2$ for $v' = \pi_2(u') \in f^{-1}(b)$.
So assume $u' \notin \beta(t)$.
Then $u' \in \gamma(s) \backslash \sigma(s)$ for some child $s$ of $t$.
Since $\pi_1^{-1}(b) \in \beta(t)$ and $(\pi_1^{-1}(b),u')$ is an edge in $G'$, we must have $\pi_1^{-1}(b) \in \sigma(s)$.
Furthermore, because $\pi_1^{-1}(b')$ is a connected component of $G'[Y_i \backslash Y']$ and $\pi_1^{-1}(b')$ contains at least one vertex in $\beta(t)$ (as $b' \in B$), $\pi_1^{-1}(b')$ must contain a vertex $u^* \in \sigma(s)$ in order to connecting $u'$ and the vertex in $\beta(t)$.
Observe that $(\pi_1^{-1}(b),u^*)$ is an edge in $\mathsf{tor}(t)$, since $\pi_1^{-1}(b),u^* \in \sigma(s)$.
As a result, $(v,v')$ is an edge in $G_2$ for $v' = \pi_2(u^*) \in f^{-1}(b')$.

Once we have conditions \textbf{(1)} and \textbf{(2)}, we see that $G_1[B]$ is a minor of $G_2$.
Therefore, $\mathbf{tw}(G_1[B]) \leq \mathbf{tw}(G_2)$.
By Claim~2, $\mathbf{tw}(G_2) = O(p+|Y'|)$ and thus $\mathbf{tw}(G_1[B]) = O(p+|Y'|)$.
\hfill $\lhd$

%Let $\mathcal{T}'$ be a tree decomposition of $\mathsf{tor}(t)/(Z_i \backslash Z')$ of width $w = O(p+|Z'|)$ and $T'$ be the underlying tree of $\mathcal{T}'$.
\medskip
Now we are ready to show $\mathbf{tw}(G_1) = O(p+|Y'|)$, i.e., $\mathbf{tw}(G'/(Y_i \backslash Y')) = O(p+|Y'|)$.
By Claim~4, $\mathbf{tw}(G_1[B]) = O(p+|Y'|)$.
Let $\mathcal{T}_0$ be a tree decomposition of $G_1[B]$ of width $w = O(p+|Y'|)$, and $T_0$ the underlying (rooted) tree of $\mathcal{T}_0$.
We now modify $\mathcal{T}_0$ to obtain a tree decomposition of $G_1$ as follows.
Consider the vertices in $V(G_1) \backslash B$.
For each $v \in V(G_1) \backslash B$, $\pi_1^{-1}(v)$ is either \textbf{(1)} a connected component of $G'[Y_i \backslash Y]$ that does not contain any vertex in $\beta(t)$ or \textbf{(2)} a single vertex in $Y' \backslash \beta(t)$.
We say $v$ is a type-1 vertex for case \textbf{(1)}, and a type-2 vertex for case \textbf{(2)}.
For each type-1 vertex $v \in V(G_1) \backslash B$, since $G'[\pi_1^{-1}(v)]$ is connected and $\pi_1^{-1}(v) \cap \beta(t) = \emptyset$, we have $\pi_1^{-1}(v) \subseteq \gamma(s) \backslash \sigma(s)$ for some child $s$ of $t$.
Thus, $N_{G'}(\pi_1^{-1}(v)) \cap \beta(t) \subseteq \sigma(s)$.
We now claim that $N_{G'}(\pi_1^{-1}(v)) \cap \beta(t)$ forms a clique in $G'$.
First, any vertex $u \in N_{G'}(\pi_1^{-1}(v))$ is not in $Y_i \backslash Y'$, for otherwise $u$ and $\pi_1^{-1}(v)$ belong to the same connected component of $G'[Y_i \backslash Y]$, which implies $u \in \pi_1^{-1}(v)$, contradicting with the fact $u \in N_{G'}(\pi_1^{-1}(v))$.
It follows that $(u,u')$ is good for all $u,u' \in N_{G'}(\pi_1^{-1}(v)) \cap \beta(t)$.
Second, $N_{G'}(\pi_1^{-1}(v)) \cap \beta(t)$ forms a clique in $\mathsf{tor}(t)$ because $N_{G'}(\pi_1^{-1}(v)) \cap \beta(t) \subseteq \sigma(s)$.
Therefore, $N_{G'}(\pi_1^{-1}(v)) \cap \beta(t)$ forms a clique in $G'$, by the construction of $G'$.
Note that $N_{G_1}(v) \cap B = \pi_1(N_{G'}(\pi_1^{-1}(v)) \cap \beta(t))$, because $\pi_1^{-1}(v')$ is a single vertex in $N_{G'}(\pi_1^{-1}(v))$ for all $v' \in N_{G_1}(v)$.
This implies that $N_{G_1}(v) \cap B$ forms a clique in $G_1[B]$.
Thus, there is a node $t_v \in T_0$ such that $N_{G_1}(v) \cap B \subseteq \beta(t_v)$.
We now add a new node $t_v^*$ to $T_0$ with bag $\beta(t_v^*) = (N_{G_1}(v) \cap B) \cup \{v\}$ as a child of $t_v$.
We do this for all type-1 vertices $v \in V(G_1) \backslash B$.
After this, we add all type-2 vertices in $V(G_1) \backslash B$ to the bag of every node of $T_0$ (including the newly added ones).
As one can easily verify, this results in a tree decomposition of $G_1$.
Finally, it suffices to bound the width of this tree decomposition.
Before adding the type-2 vertices to the bags, the size of each bag is at most $w$ (recall that $w$ is the width of the original $\mathcal{T}_0$).
Since the number of type-2 vertices is at most $|Y' \backslash \beta(t)|$, the width of the final tree decomposition is bounded by $w+|Y' \backslash \beta(t)| = O(p+|Y'|)$.
\end{proof}

With Corollary~\ref{cor-extention} in hand, we are ready to apply Baker's technique to compute the DP table of $t$ (provided the DP tables of the children of $t$).
Recall that computing the DP table of $t$ is equivalent to solving a set of sub-problems $\mathsf{Prob}_{\lambda,l}$ on $G[\gamma(t)]$, where $\mathsf{Prob}_{\lambda,l}$ asks if there exists a solution (contained in $\mathsf{Cand}$) of size at most $l \in [k]$ that is compatible with the configuration $\lambda$ of $\sigma(t)$.
Now we define another set of sub-problems on $G[\gamma(t)]$ as follows.
For a configuration $\lambda$ of $\sigma(t)$, a number $l \in [k]$, and a set $Y \subseteq \gamma[t]$, we define $\mathsf{Prob}_{\lambda,l,Y}$ as a sub-problem on $G[\gamma(t)]$ that asks if there exists a solution (contained in $\mathsf{Cand}$) of size at most $l$ that is compatible with the configuration $\lambda$ and in addition is \textit{disjoint from} $Y$ (for edge-deletion problems, a solution is disjoint from $Y$ if none of the edges in the solution is incident to a vertex in $Y$).
It is clear that the answer to $\mathsf{Prob}_{\lambda,l}$ is YES if the answer to $\mathsf{Prob}_{\lambda,l,Y}$ is YES for some $Y \subseteq \gamma[t]$.
Next, we are going to reduce the task of solving the sub-problems $\mathsf{Prob}_{\lambda,l}$ to solving a set of sub-problems $\mathsf{Prob}_{\lambda,l,Y}$, by applying Corollary~\ref{cor-extention}.
Set $p = \lfloor \sqrt{k} \rfloor$ and compute the sets $Y_1,\dots,Y_p \subseteq \gamma(t) \backslash \sigma(t)$ in Corollary~\ref{cor-extention}.
Then we construct a set $\varPi$ of pairs $(i,Y')$ where $i \in [p]$ and $Y' \subseteq Y_i$ satisfying three conditions: \textbf{(1)} $|\varPi| = |\mathsf{Cand}|^{O(\sqrt{k})}$, \textbf{(2)} $|Y'| = O(k/p)$ for all $(i,Y') \in \varPi$, and \textbf{(3)} the answer to $\mathsf{Prob}_{\lambda,l}$ is YES iff the answer to $\mathsf{Prob}_{\lambda,l,Y_i \backslash Y'}$ is YES for some $(i,Y') \in \varPi$.
%We claim that the answer to $\mathsf{Prob}_{\lambda,l}$ is YES iff the answer to $\mathsf{Prob}_{\lambda,l,Y_i \backslash Y'}$ is YES for some $i \in [p]$ and $Y' \subseteq Y_i \cap \mathsf{Cand}$ with $|Y'| = O(k/p)$.
%The ``if'' part is obvious.
To this end, we consider vertex-deletion and edge-deletion problems separately.

For vertex-deletion problems, $\mathsf{Cand} \subseteq V(G)$.
We simply define $\varPi$ as the set of all pairs $(i,Y')$ where $i \in [p]$ and $Y' \subseteq Y_i \cap \mathsf{Cand}$ satisfying $|Y'| \leq hk/p$.
Clearly, $|\varPi| \leq p \cdot |\mathsf{Cand}|^{hk/p} = |\mathsf{Cand}|^{O(\sqrt{k})}$.
Also, it is obvious that $\varPi$ satisfies condition \textbf{(2)}.
To see $\varPi$ satisfies condition \textbf{(3)}, it suffices to verify the ``only if'' part (as the ``if'' part is obvious).
Suppose the answer to $\mathsf{Prob}_{\lambda,l}$ is YES and consider a solution $V_\mathsf{sol} \subseteq \gamma(t) \cap \mathsf{Cand}$ to $\mathsf{Prob}_{\lambda,l}$. 
We have $|V_\mathsf{sol}| \leq l$.
%We show that there exist $i \in [p]$ and $Y' \subseteq Y_i \cap \mathsf{Cand}$ with $|Y'| = O(k/p)$ such that $V_\mathsf{sol}$ is disjoint from $Y_i \backslash Y'$.
Due to property \textbf{(ii)} of Corollary~\ref{cor-extention}, each vertex in $\gamma(t)$ is contained in at most $h$ $Y_i$'s.
Therefore, there exists $i \in [p]$ such that $|V_\mathsf{sol} \cap Y_i| \leq hl/p \leq hk/p$.
Setting $Y' = V_\mathsf{sol} \cap Y_i$, we now have $(i,Y') \in \varPi$ and $V_\mathsf{sol} \cap (Y_i \backslash Y') = \emptyset$.
Thus, $V_\mathsf{sol}$ is also a solution to $\mathsf{Prob}_{\lambda,l,Y_i \backslash Y'}$, which implies that the answer to $\mathsf{Prob}_{\lambda,l,Y_i \backslash Y'}$ is YES.
%Therefore, in order to solve the sub-problem $\mathsf{Prob}_{\lambda,l}$, it suffices to decide, for all pairs $(i,Y')$ where $i \in [p]$ and $Y' \subseteq Y_i \cap \mathsf{Cand}$ with $|Y'| \leq hl/p$, whether there exists a solution to $\mathsf{Prob}_{\lambda,l}$ that is disjoint from $Y_i \backslash Y'$.
%The number of such pairs is clearly $|\mathsf{Cand}|^{O(k/p)}$.

For edge-deletion problems, $\mathsf{Cand} \subseteq E(G)$.
Let $\mathsf{Cand}' \subseteq V(G)$ be the set of vertices of $G$ that are incident to the edges in $\mathsf{Cand}$.
We have $|\mathsf{Cand}'| \leq 2|\mathsf{Cand}|$.
We define $\varPi$ as the set of all pairs $(i,Y')$ where $i \in [p]$ and $Y' \subseteq Y_i \cap \mathsf{Cand}'$ satisfying $|Y'| \leq 2hk/p$.
Clearly, $|\varPi| \leq p \cdot |\mathsf{Cand}'|^{2hk/p} = |\mathsf{Cand}|^{O(\sqrt{k})}$ and $\varPi$ satisfies condition \textbf{(2)}.
To see $\varPi$ satisfies condition \textbf{(3)}, again it suffices to verify the ``only if'' part.
Suppose the answer to $\mathsf{Prob}_{\lambda,l}$ is YES and consider a solution $E_\mathsf{sol} \subseteq E(G[\gamma(t)]) \cap \mathsf{Cand}$ to $\mathsf{Prob}_{\lambda,l}$.
We have $|E_\mathsf{sol}| \leq l$.
Let $V_\mathsf{sol} \subseteq \gamma(t) \cap \mathsf{Cand}'$ consist of vertices incident to the edges in $E_\mathsf{sol}$; so we have $|V_\mathsf{sol}| \leq 2l$.
By the same argument as in the vertex-deletion problems, we see there exist $i \in [p]$ such that $|V_\mathsf{sol} \cap Y_i| \leq 2hl/p \leq 2hk/p$.
Setting $Y' = V_\mathsf{sol} \cap Y_i$, we now have $(i,Y') \in \varPi$ and $V_\mathsf{sol}$ is disjoint from $Y_i \backslash Y'$ (hence $E_\mathsf{sol}$ is disjoint from $Y_i \backslash Y'$).
Thus, $E_\mathsf{sol}$ is also a solution to $\mathsf{Prob}_{\lambda,l,Y_i \backslash Y'}$, which implies that the answer to $\mathsf{Prob}_{\lambda,l,Y_i \backslash Y'}$ is YES.
%Therefore, in order to solve the sub-problem $\mathsf{Prob}_{\lambda,l}$, it suffices to decide, for all pairs $(i,Y')$ where $i \in [p]$ and $Y' \subseteq Y_i \cap \mathsf{Cand}'$ with $|Y'| \leq 2hl/p$, whether there exists a solution that is disjoint from $Y_i \backslash Y'$.
%The number of such pairs is clearly $|\mathsf{Cand}'|^{O(k/p)} = |\mathsf{Cand}|^{O(k/p)}$.

By the above argument, to solve a sub-problem $\mathsf{Prob}_{\lambda,l}$, it suffices to solve the sub-problems $\mathsf{Prob}_{\lambda,l,Y_i \backslash Y'}$ for all $(i,Y') \in \varPi$.
%The number of such pairs is clearly $|\mathsf{Cand}'|^{O(k/p)} = |\mathsf{Cand}|^{O(k/p)}$.
To this end, we recall the ``contraction-friendly'' feature of our problems.
Since $\mathsf{Prob}_{\lambda,l,Y_i \backslash Y'}$ looks for a solution that is disjoint from $Y_i \backslash Y'$, the ``contraction-friendly'' feature allows us to contract the connected components of $Y_i \backslash Y'$ and do DP on a tree decomposition of the resulting graph $G[\gamma(t)]/(Y_i \backslash Y')$.
If the treewidth of $G[\gamma(t)]/(Y_i \backslash Y')$ was small, we can solve $\mathsf{Prob}_{\lambda,l,Y_i \backslash Y'}$ efficiently.
Unfortunately, we are not able to bound the treewidth of $G[\gamma(t)]/(Y_i \backslash Y')$.
However, by property \textbf{(iii)} of Corollary~\ref{cor-extention}, we have $\mathbf{tw}(G_t^{C_i}/(Y_i \backslash Y')) = O(p+|Y'|) = O(\sqrt{k})$.
So the idea here is to begin with an $O(\sqrt{k})$-width tree decomposition of $G_t^{C_i}/(Y_i \backslash Y')$ and modify it to obtain a tree decomposition of $G[\gamma(t)]/(Y_i \backslash Y')$ that ``almost'' has width $O(\sqrt{k})$ and has some other good properties.
For convenience of exposition, in what follows, we do not distinguish the vertices in $\gamma(t) \backslash (Y_i \backslash Y')$ with their images in $G[\gamma(t)]/(Y_i \backslash Y')$, and similarly the vertices in $V(G_t^{C_i}) \backslash (Y_i \backslash Y')$ with their images in $G_t^{C_i}/(Y_i \backslash Y')$.
Note that $\sigma(t) \subseteq \gamma(t) \backslash (Y_i \backslash Y')$ and $\gamma(s) \subseteq \gamma(t) \backslash (Y_i \backslash Y')$ for all $s \in C_i$ (by the definition of $C_i$); so they can be viewed as sets of vertices in $G[\gamma(t)]/(Y_i \backslash Y')$.

\begin{lemma} \label{lem-specialtd}
For any $(i,Y') \in \varPi$, one can construct in $2^{O(\sqrt{k})} n^{O(1)}$ time a tree decomposition $\mathcal{T}^*$ of $G[\gamma(t)]/(Y_i \backslash Y')$ with underlying (rooted) tree $T^*$ which has the following properties: \\
\textbf{(i)} each node of $T^*$ has at most 3 children; \\
\textbf{(ii)} the root $\mathsf{rt}$ of $T^*$ has only one child $\mathsf{rt}'$, where $\sigma(\mathsf{rt}') = \sigma(t)$ and $ \beta(\mathsf{rt}) = \sigma(t)$; \\
\textbf{(iii)} for any node $t^* \in T^*$, either $\beta(t^*) = O(\sqrt{k})$ or $t^*$ is a leaf of $T^*$ satisfying $\beta(t^*) = \gamma(s)$ and $\sigma(t^*) = \sigma(s)$ for some $s \in C_i$.
\end{lemma}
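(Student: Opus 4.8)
The plan is to start from the treewidth bound of Corollary~\ref{cor-extention}\textbf{(iii)} and then ``re-attach'' the parts of $G[\gamma(t)]$ that live below the children in $C_i$, as bounded-degree gadgets glued onto a tree decomposition of $G_t^{C_i}/(Y_i\backslash Y')$.

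First I set up the core decomposition. Since $(i,Y')\in\varPi$ we have $|Y'|=O(k/p)=O(\sqrt k)$, so Corollary~\ref{cor-extention}\textbf{(iii)} gives $\mathbf{tw}(G_t^{C_i}/(Y_i\backslash Y'))=O(p+|Y'|)=O(\sqrt k)$. Applying Lemma~\ref{lem-tdecomp} I compute, in $2^{O(\sqrt k)}n^{O(1)}$ time, a binary tree decomposition $\mathcal{T}'$ of $G_t^{C_i}/(Y_i\backslash Y')$ of width $O(\sqrt k)$. Note $\sigma(t)\subseteq\beta(t)\subseteq V(G_t^{C_i})$ and $\sigma(t)\cap(Y_i\backslash Y')=\emptyset$ (since $Y_i\subseteq\gamma(t)\backslash\sigma(t)$ by Corollary~\ref{cor-extention}), so $\sigma(t)$ is a set of vertices of the contracted graph; I add all of $\sigma(t)$ to every bag of $\mathcal{T}'$, which keeps the width $O(\sqrt k)$ because $|\sigma(t)|\le h$. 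Finally I create a new root $\mathsf{rt}$ with $\beta(\mathsf{rt})=\sigma(t)$ whose unique child is the old root $\mathsf{rt}'$; since $\beta(\mathsf{rt}')\supseteq\sigma(t)$ this already gives $\sigma(\mathsf{rt}')=\sigma(t)$, i.e.\ property \textbf{(ii)}.

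Next I re-attach the subtrees. The facts I will use for each $s\in C_i$ are: (a) $\gamma(s)\cap Y_i=\emptyset$ (by definition of $C_i$) and $\beta(t)\cap\gamma(s)=\sigma(s)$ (a standard tree-decomposition property, giving in particular $\gamma(s)\cap U_{C_i}=\gamma(s)\backslash\sigma(s)$), so contracting $Y_i\backslash Y'$ does not touch $\gamma(s)$ and $V(G_t^{C_i}/(Y_i\backslash Y'))\cap\gamma(s)=\sigma(s)$; (b) $\sigma(s)$ is a clique in $G_t^{C_i}$, hence in $G_t^{C_i}/(Y_i\backslash Y')$, so some bag of $\mathcal{T}'$ contains $\sigma(s)$; and (c) by Lemma~\ref{lem-rsdecomp} the graph $G[\gamma(s)\backslash\sigma(s)]$ is connected with $N_G(\gamma(s)\backslash\sigma(s))=\sigma(s)$, and the sets $\gamma(s)\backslash\sigma(s)$ are pairwise disjoint and disjoint from $\beta(t)$. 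For each $s\in C_i$ I pick a node $n_s$ of $\mathcal{T}'$ whose original bag contains $\sigma(s)$ and hang a new leaf $L_s$ with bag $\gamma(s)$ below $n_s$. Using $\gamma(s)\cap\sigma(t)\subseteq\gamma(s)\cap\beta(t)=\sigma(s)$ together with (a), one checks that the adhesion of $L_s$ equals $\gamma(s)\cap\beta(n_s)=\sigma(s)$; facts (a)--(c) also give the vertex-covering, edge-covering and connectivity axioms, so the result is a tree decomposition of $G[\gamma(t)]/(Y_i\backslash Y')$.

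It remains to enforce the degree bound \textbf{(i)}. The tree $\mathcal{T}'$ is binary, but a single node $n$ may have been chosen as $n_s$ for several $s\in C_i$; for such a node with $m$ pending leaves I replace $n$ by a path of $m+1$ copies of $n$ (all with the same bag), route the $\le 2$ original tree-children to the first copy, and hang one pending leaf on each of the remaining copies. This keeps every node with at most $3$ children and changes nothing else. All internal nodes then have bags of size $O(\sqrt k)$, and the only large bags are the leaves $L_s$, which by construction satisfy $\beta(L_s)=\gamma(s)$ and $\sigma(L_s)=\sigma(s)$ for $s\in C_i$ --- exactly property \textbf{(iii)}. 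Everything after Lemma~\ref{lem-tdecomp} is polynomial, so the total time is $2^{O(\sqrt k)}n^{O(1)}$. The main point requiring care is that the adhesions come out \emph{exactly} right --- in particular that inflating the bags of $\mathcal{T}'$ by $\sigma(t)$ does not enlarge $\sigma(L_s)$ beyond $\sigma(s)$ --- and that the re-attached leaves do not break the connectivity axiom; both hinge on $Y_i$ avoiding $\gamma(s)$ for $s\in C_i$ and on $\beta(t)\cap\gamma(s)=\sigma(s)$. The degree-reduction gadget is routine.
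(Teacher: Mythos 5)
Your proposal is correct and follows essentially the same route as the paper: compute a binary $O(\sqrt{k})$-width tree decomposition of $G_t^{C_i}/(Y_i\backslash Y')$ via Corollary~\ref{cor-extention}\textbf{(iii)} and Lemma~\ref{lem-tdecomp}, add $\sigma(t)$ to all bags and a new root with bag $\sigma(t)$, then attach a leaf with bag $\gamma(s)$ below a node whose bag contains the clique $\sigma(s)$, using $\gamma(s)\cap Y_i=\emptyset$, $\beta(t)\cap\gamma(s)=\sigma(s)$, and $N_G(\gamma(s)\backslash\sigma(s))=\sigma(s)$ to verify the axioms and the adhesions. The only (cosmetic) difference is the degree-reduction gadget: you duplicate each support node into a path of copies with identical bags, while the paper hangs a ``heavy subtree'' whose internal bags are unions of the relevant $\sigma(s_j)$'s; both are routine and yield the same properties.
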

\begin{proof}
Consider a pair $(i,Y') \in \varPi$.
We write $G_1 = G_t^{C_i}/(Y_i \backslash Y')$, $G_1' = G[V(G_t^{C_i})]/(Y_i \backslash Y')$, $G_2 = G[\gamma(t)]/(Y_i \backslash Y')$.
We first notice that $G_t^{C_i}[Y_i \backslash Y'] = G[Y_i \backslash Y']$, because $\sigma(s) \cap Y_i = \emptyset$ for all $s \in C_i$.
Furthermore, $G[V(G_t^{C_i})]$ is a subgraph of $G_t^{C_i}$ and the two graphs share the same vertex set.
It follows that $G_1'$ is a subgraph of $G_1$ which has the same vertex set as $G_1$.
On the other hand, we have $V(G_t^{C_i}) \subseteq \gamma(t)$, so $G_1'$ is an induced subgraph of $G_2$.
Therefore, $V(G_1)$ can be viewed as a subset of $V(G_2)$ and we have $V(G_2) = V(G_1) \cup (\bigcup_{s \in C_i} \gamma(s))$; for convenience, we do not distinguish the vertices in $V(G_1)$ with their corresponding vertices in $V(G_2)$.
%We have $V(G_2) = V(G_1) \cup (\bigcup_{s \in C_i} \gamma(s))$.

%Therefore, the vertex set of $G_1$ can be viewed as a subset of the vertex set of $G_2$, i.e., $V(G_1) \subseteq V(G_2)$.
%Furthermore, $G_2[V(G_1)]$ is a subgraph of $G_1$, since $G_2[V(G_1)] = G[V(G_t^{C_i})]/(Y_i \backslash Y')$ and $G[V(G_t^{C_i})]$ is a subgraph of $G_t^{C_i}$.

Since $\mathbf{tw}(G_1) = O(p+|Y'|) = O(\sqrt{k})$ by property \textbf{(iii)} of Corollary~\ref{cor-extention}, we can use Lemma~\ref{lem-tdecomp} to compute in $2^{O(\sqrt{k})} n^{O(1)}$ time a binary tree decomposition $\mathcal{T}^*$ for $G_1$ of width $O(\sqrt{k})$.
Let $T^*$ be the underlying (rooted) tree of $\mathcal{T}^*$.
We are going to modify $\mathcal{T}^*$ to a tree decomposition of $G_2$ satisfying the three desired properties.
%For convenience of exposition, in what follows, we do not distinguish the vertices in $\gamma(t) \backslash (Y_i \backslash Y')$ with their images in $G[\gamma(t)]/(Y_i \backslash Y')$, and similarly the vertices in $V(G_t^{C_i}) \backslash (Y_i \backslash Y')$ with their images in $G_t^{C_i}/(Y_i \backslash Y')$.
%Since $Y_1,\dots,Y_p \subseteq \gamma(t) \backslash \sigma(t)$, $\sigma(t)$ is disjoint from $Y_i \backslash Y'$.
%Thus, $\sigma(t)$ can also be viewed as a set of vertices in $G[\gamma(t)]/(Y_i \backslash Y')$ and $G_t^{C_i}/(Y_i \backslash Y')$.
In the first step of modification, we add the vertices in $\sigma(t)$ to the bags of all nodes of $T^*$.
Then we create a new root $\mathsf{rt}$ for $T^*$ with bag $\beta(\mathsf{rt}) = \sigma(t)$ and let the orginal root $\mathsf{rt}'$ of $T^*$ be the only child of $\mathsf{rt}$.
Clearly, the resulting $\mathcal{T}^*$ is still a binary tree decomposition of $G_1$ of width $O(\sqrt{k})$, because $|\sigma(t)| \leq h$.
In the second step of modification, we add some new nodes (with bags) to $T^*$ as follows.
For each $s \in C_i$, $\sigma(s)$ forms a clique in $G_t^{C_i}$ and $\sigma(s) \cap Y_i = \emptyset$ by the definition of $C_i$.
Thus, $\sigma(s)$ also forms a clique in $G_1$.
So there exists a node $t^* \in T^*$ such that $\sigma(s) \subseteq \beta(t^*)$; we call $t^*$ the \textit{support node} of $s$ (if there exist multiple $t^* \in T^*$ satisfying $\sigma(s) \subseteq \beta(t^*)$, we arbitrarily pick one as the support node of $s$).
Note that different nodes in $C_i$ can have the same support node in $T^*$.
Now consider a node $t^* \in T^*$.
Suppose $t^*$ is the support node of $s_1,\dots,s_q \in C_i$.
We define a binary tree $T_q$ with bags as follows.
Define $T_0$ as the empty tree.
%the tree consisting of a single node whose bag is $\gamma(s_1)$.
For $j \in [q]$, define $T_j$ as the tree consisting of a root node with bag $\bigcup_{j'=1}^j \sigma(s_{j'})$ whose left subtree is $T_{j-1}$ and right subtree is a single node with bag $\gamma(s_j)$.
Note that $T_q$ has exactly $q$ leaves in which the $j$-th leftmost leaf has bag $\gamma(s_j)$ and adhesion $\sigma(s_j)$.
Also note that the bag of any internal node of $T_q$ is a subset of $\beta(t^*)$, and thus is of size $O(\sqrt{k})$.
We then add the tree $T_q$ to $T^*$ as a subtree of $t^*$, and call $T_q$ the \textit{heavy subtree} of $t^*$.
We do this for every $t^* \in T^*$.
%For every $s \in C_i$, we create a new node $s^*$ with bag $\beta(s^*) = \gamma(s)$ and add $s^*$ to $T^*$ as a child of the support node of $s$.
This completes the modification of $\mathcal{T}^*$.

We now verify that the resulting $\mathcal{T}^*$ is indeed a tree decomposition of $G_2$ with the three desired properties.
Call a node in $T^*$ \textit{new} if it is newly added to $T^*$ during our modification and \textit{old} if it is originally in $T^*$.
Properties \textbf{(i)} and \textbf{(ii)} follow directly from our construction.
To see property \textbf{(iii)}, we observe a one-to-one correspondence between the nodes in $C_i$ and the new leaves in $T^*$.
Indeed, each $s \in C_i$ corresponds to a leaf $s^*$ in the heavy subtree of its support node in $T^*$, where we have $\beta(s^*) = \gamma(s)$ and $\sigma(s^*) = \sigma(s)$.
It is clear that this correspondence is one-to-one.
Consider a node $s^* \in T^*$.
If $s^*$ is an old node or is an internal node in a heavy subtree, then $|\beta(s^*)| = O(\sqrt{k})$.
If $s^*$ is a new leaf (i.e., a leaf of a heavy subtree), then it corresponds to some $s \in C_i$ and we have $\beta(s^*) = \gamma(s)$ and $\sigma(s^*) = \sigma(s)$.
So property \textbf{(iii)} holds.
To show $\mathcal{T}^*$ is a tree decomposition of $G_2$, consider an edge $(u,v)$ of $G_2$.
If $u,v \in V(G_1)$, then $(u,v)$ is an edge of $G_1$ and thus $u,v \in \beta(t^*)$ for some old node $t^*$ in $T^*$, since the original $\mathcal{T}^*$ is a tree decomposition of $G_1$.
Otherwise, one of $u$ and $v$ must be contained in $\gamma(s) \backslash \sigma(s)$ for some $s \in C_i$, say $u \in \gamma(s) \backslash \sigma(s)$.
Then $v \in \gamma(s)$ as $\sigma(s) = N_G(\gamma(s) \backslash \sigma(s))$.
Therefore, $u,v \in \beta(s^*) = \gamma(s)$, where $s^*$ is the new leaf in $T^*$ corresponding to $s$.
Now it suffices to show that for any vertex $v \in V(G_2)$, the nodes whose bags containing $v$ form a connected subset in $T^*$.
Call a node in $T^*$ a $v$-\textit{node} if its bag contains $v$.
If $v \in \gamma(s) \backslash \sigma(s)$ for some $s \in C_i$, then the only $v$-node in $T^*$ is the new leaf corresponding to $s$.
Otherwise, we have $v \in V(G_1)$.
Since the original $\mathcal{T}^*$ is a tree decomposition of $G_1$, the old $v$-nodes form a connected subset in $T^*$ (which we denote by $X$).
If $t^*$ is an old $v$-node, then $v$ may also be contained in (the bags of the nodes of) the heavy subtree of $t^*$.
By our construction, if $v$ is contained in a heavy subtree $T_0$, then the $v$-nodes in $T_0$ form a connected subset that contains the root of $T_0$.
Therefore, the $v$-nodes in each heavy subtree form a connected subset in $T^*$ that is adjacent to $X$.
It follows that the $v$-nodes form a connected subset in $T^*$.
%\textbf{(not completed)}
\end{proof}

%The ``contraction-friendly'' feature allows us to apply DP on $\mathcal{T}^*$ to find a solution that is disjoint from $Y_i \backslash Y'$.
For a pair $(i,Y') \in \varPi$, to solve the sub-problems $\mathsf{Prob}_{\lambda,l,Y_i \backslash Y'}$, we compute the tree decomposition $\mathcal{T}^*$ in the above lemma and try to do DP on $\mathcal{T}^*$.
If the width of $\mathcal{T}^*$ was exactly $O(\sqrt{k})$, we are good.
But $T^*$ can have some ``heavy'' leaves, whose bags might be large.
However, by property \textbf{(iii)} of Lemma~\ref{lem-specialtd}, each leaf $t^* \in T^*$ with $\beta(t^*)$ not bounded by $O(\sqrt{k})$ satisfies $\gamma(t^*) = \beta(t^*) = \gamma(s)$ and $\sigma(t^*) = \sigma(s)$ for some $s \in C_i$.
Therefore, the DP table of $t^*$ to be computed is nothing but the DP table of $s$ (in the DP procedure on $\mathcal{T}_\text{RS}$).
Recall that when we reach $t$ in the DP procedure on $\mathcal{T}_\text{RS}$, the DP tables of all children $s$ of $t$ are already computed.
In other words, we have the DP tables of these leaves \textit{for free}.
As such, when doing DP on $\mathcal{T}^*$, we do not need to worry about the heavy leaves. 
The property \textbf{(i)} of Lemma~\ref{lem-specialtd} allows us to do DP more conveniently on $\mathcal{T}^*$ (for some of our problems).
%By further exploiting property \textbf{(i)} of Lemma~\ref{lem-specialtd}, the entire DP procedure can actually be done in $2^{\widetilde{O}(r+\sqrt{k})} \cdot n^{O(1)}$ time for EMC and $2^{\widetilde{O}(\sqrt{k})} \cdot n^{O(1)}$ time for all the other problems.
Finally, once the DP on $\mathcal{T}^*$ is done, the DP table of the child $\mathsf{rt}'$ of the root $\mathsf{rt}$ of $T^*$ encodes the answers to all the sub-problems $\mathsf{Prob}_{\lambda,l,Y_i \backslash Y'}$, because $\sigma(\mathsf{rt}') = \sigma(t)$ and $\gamma(\mathsf{rt}')$ consists of all vertices of $G[\gamma(t)]/(Y_i \backslash Y')$, by property \textbf{(ii)} of Lemma~\ref{lem-specialtd}.
After all pairs $(i,Y') \in \varPi$ are considered, we solve all $\mathsf{Prob}_{\lambda,l,Y_i \backslash Y'}$, which in turn solves all $\mathsf{Prob}_{\lambda,l}$ and computes the DP table of $t$.
For a specific problem, if we are able to do DP in $f(\vec{r},w) \cdot n^{O(1)}$ time on a $w$-width tree decomposition of the graph after contracting some part disjoint from the solution, then solving each $\mathsf{Prob}_{\lambda,l,Y_i \backslash Y'}$ takes $f(\vec{r},\sqrt{k}) \cdot n^{O(1)}$ time, and solving $\mathsf{Prob}_{\lambda,l,Y_i \backslash Y'}$ for all $(i,Y') \in \varPi$ takes $|\mathsf{Cand}|^{O(\sqrt{k})} \cdot f(\vec{r},\sqrt{k}) \cdot n^{O(1)}$ time since $|\varPi| = |\mathsf{Cand}|^{O(\sqrt{k})}$.
As a result, the main DP procedure on $\mathcal{T}_\text{RS}$ (i.e., the entire algorithm) takes $|\mathsf{Cand}|^{O(\sqrt{k})} \cdot f(\vec{r},\sqrt{k}) \cdot n^{O(1)}$ time.
%To summarize, we showed that with Corollary~\ref{cor-extention} in hand, we can apply Baker's technique with a second-level tree-decomposition DP to efficiently implement a single step of the DP procedure on $\mathcal{T}_\text{RS}$, which in turn solves the entire problem.

This completes the exposition of the generic framework of our algorithms.
The only missing piece now is the DP procedures for various problems, which are standard (and mostly known in literature).
We shall give detailed discussions for each of our algorithms individually in Section~\ref{sec-app}.

\section{Proof of Lemma~\ref{lem-key}} \label{sec-keyproof}
This section is dedicated to proving Lemma~\ref{lem-key}.
In Section~\ref{sec-surface}, we first review some basic notions about surface-embedded graphs and establish two technical lemmas (Lemma~\ref{lem-degree} and Lemma~\ref{lem-twdiam2}) which will be used in the proof of Lemma~\ref{lem-key}.
These two lemmas (especially Lemma~\ref{lem-twdiam2}) might be of independent interest.
Then in Section~\ref{sec-Z1Zp}, we construct the sets $Z_1,\dots,Z_p$ in Lemma~\ref{lem-key}.
Finally, in Section~\ref{sec-boundingtw}, we prove the sets $Z_1,\dots,Z_p$ we construct satisfy the desired property.
%\subsection{An overview of the proof}

\subsection{Results for surface-embedded graphs} \label{sec-surface}
We begin with introducing some basic notions about surface-embedded graphs.
Let $\varSigma$ be a connected orientable surface.
A $\varSigma$-embedded graph is represented as a pair $(G,\eta)$ where $G$ is the graph and $\eta$ is an embedding of $G$ to $\varSigma$.
%and $(G,\eta)$ be a $\varSigma$-embedded graph.
For any subgraph $G'$ of $G$, $\eta$ induces an embedding of $G'$ to $\varSigma$; for convenience, we usually use the same notation ``$\eta$'' to denote this subgraph embedding.
A \textit{face} of $(G,\eta)$ refers to (the closure of) a connected component of $\varSigma \backslash \eta(G)$, where $\eta(G)$ is the image of $G$ on $\varSigma$ under the embedding $\eta$.
We denote by $F_\eta(G)$ the set of faces of $(G,\eta)$.
The \textit{boundary} of a face $f \in F_\eta(G)$, denoted by $\partial f$, is the subgraph of $G$ consisting of all vertices and edges that are incident to $f$ (under the embedding $\eta$).
Note that $f$ itself is a face in its boundary subgraph $(\partial f, \eta)$, i.e., $f \in F_\eta(\partial f)$.
%It is clear that each edge of $G$ can belong to the boundaries of at most two faces (each of which corresponds to one side of the edge).
\begin{lemma} \label{lem-degree}
Let $f \in F_\eta(G)$ be a face of a $\varSigma$-embedded graph $(G,\eta)$.
Then for any face $f' \in F_\eta(\partial f)$ of $(\partial f,\eta)$ such that $f' \neq f$, the graph $\partial f'$ has $O(g)$ connected components and the maximum degree of $\partial f'$ is $O(g)$, where $g = \mathsf{gns}(\varSigma)$.
\end{lemma}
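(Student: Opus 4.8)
The plan is to analyze the combinatorial structure of the ``second-order'' boundary $\partial f'$ by tracking what a walk around $f'$ looks like inside the graph $\partial f$, and to control everything in terms of the genus $g = \mathsf{gns}(\varSigma)$. First I would recall that $\partial f$ is the subgraph of $G$ consisting of all edges and vertices incident to the face $f$, so in the embedding $(\partial f,\eta)$ the face $f$ is a single face whose facial walk uses every edge of $\partial f$ either once or twice. A key preliminary observation is that $\partial f$, being connected to $f$ along its entire boundary, cannot itself have many faces other than $f$: by Euler's formula for the surface $\varSigma$, if $\partial f$ has $c$ connected components then $|V| - |E| + |F_\eta(\partial f)| \ge 2 - 2g - (\text{something accounting for components})$, and since $f \in F_\eta(\partial f)$ is just one of these faces, one extracts that the number of faces of $(\partial f,\eta)$ other than $f$, and hence the number of candidate faces $f'$, is $O(g)$ — but more importantly the same Euler-characteristic bookkeeping will bound the complexity we actually need.

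The main step is the degree bound. Fix $f' \ne f$, a face of $(\partial f,\eta)$. I would walk around the boundary of $f'$: its facial walk $W'$ is a closed walk in $\partial f$. The crucial point is that \emph{every} edge of $\partial f$ lies on the facial walk of $f$ (by definition of $\partial f$), so every edge appearing on $W'$ is an edge also incident to $f$. Now consider a vertex $v$ on $\partial f'$ and count its degree in $\partial f'$, i.e., the edges of $\partial f$ incident to $v$ that bound $f'$. The cyclic order of edges around $v$ given by the rotation system partitions the ``corners'' at $v$ among the faces incident to $v$; each time $f'$ appears as a corner at $v$ it contributes (at most) two edge-sides to $\partial f'$ at $v$. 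So $\deg_{\partial f'}(v)$ is controlled by the number of distinct corners of $f'$ at $v$. If that number were large for many vertices, the total number of corner-incidences of $f'$ would be large, but the total number of face-corners in the embedding $(\partial f,\eta)$ is $2|E(\partial f)|$, distributed among $|F_\eta(\partial f)|$ faces of which $f$ dominates; the surplus of corners not assigned to $f$ is again an Euler-characteristic quantity of size $O(g)$. This is the same mechanism that bounds the number of ``splitting vertices'' (vertices visited more than once by a facial walk) of a face on a genus-$g$ surface by $O(g)$. Concretely: a vertex $v$ with $\deg_{\partial f'}(v) \ge 3$ must be a cut vertex or a vertex visited multiple times on $W'$ in a way that forces extra topological complexity, and the number of such events is $O(g)$; every other vertex of $\partial f'$ has degree $\le 2$.

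Putting it together: $\partial f'$ is a graph in which all but $O(g)$ vertices have degree at most $2$, so it is a disjoint union of paths and cycles together with $O(g)$ ``branch'' vertices; the number of connected components is then $O(g)$ (paths and cycles containing no branch vertex would have to be separate faces, which the Euler bound forbids beyond $O(g)$ of them), and the maximum degree, being attained only at a branch vertex, is bounded by the number of corners of $f'$ at that vertex, which is again $O(g)$. I expect the main obstacle to be making the ``corner-counting / splitting-vertex'' argument fully rigorous on a general orientable surface rather than the sphere — in particular handling the case where $\partial f$ is disconnected, where $f'$ wraps around a handle, or where an edge of $\partial f$ is traversed twice by the facial walk of $f'$ (a bridge of $\partial f'$); in each of these the bound still comes from Euler's formula applied to the subsurface, but the bookkeeping of which topological feature ``costs'' a unit of genus needs care. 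I would isolate this into a clean sublemma: for any graph $K$ embedded in a surface of genus $g$ and any face $\phi$ of $K$, the boundary $\partial \phi$ has $O(g)$ components and maximum degree $O(g)$ \emph{provided} $\phi$ is the unique face on the other side — applying it with $K = \partial f$ and $\phi = f'$, noting that $\partial f'$ viewed inside $\partial f$ has $f'$ as a face whose complement in the relevant subsurface is essentially $f$.
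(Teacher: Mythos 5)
There is a genuine gap. Your high-level target is right — exploit the fact that every edge of $\partial f$ is incident to $f$, so $f$ sits ``on the other side'' of every edge of $\partial f'$, and then finish with Euler's formula — but the quantitative claims you use to get there are false in general, and the two facts that actually make the Euler computation work are never established. Concretely: (i) the claim that the number of faces of $(\partial f,\eta)$ other than $f$ is $O(g)$ is false (take a planar bouquet of $k$ triangles sharing one vertex and let $f$ be the outer face: $\partial f$ is the whole graph and there are $k$ other faces with $g=0$); (ii) the claim that the ``surplus of corners not assigned to $f$'' is an $O(g)$ quantity is false (for a planar cycle $C_n$ with $f$ the inner face, the outer face has $n$ corners); (iii) the claimed general principle that the number of vertices visited more than once by a facial walk is $O(g)$ is false (the single face of a planar tree revisits every internal vertex, and cut vertices cost no genus). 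So the step ``a vertex with $\deg_{\partial f'}(v)\ge 3$ forces topological complexity, and the number of such events is $O(g)$'' is exactly the point that needs proof, and your corner-counting mechanism does not supply it; likewise the component bound (``cycles with no branch vertex would have to be separate faces'') is not a valid inference, since components of $\partial f'$ need not bound faces.

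What rescues the statement is the special two-face structure of the sub-embedding $(\partial f',\eta)$, which your final sublemma gestures at (``$\phi$ is the unique face on the other side'') but which you neither state precisely nor prove. The argument the paper uses is: every edge of $\partial f'$ is incident to $f'$ by definition and, being an edge of $\partial f$, also to the face $f_0\in F_\eta(\partial f')$ that contains $f$; since an edge bounds at most two faces, $F_\eta(\partial f')=\{f',f_0\}$ exactly. The same incidence argument shows every vertex of $\partial f'$ touches both $f'$ and $f_0$, hence has degree at least $2$ (degree-$0$ or degree-$1$ vertices touch only one face). Now Euler's formula for $(\partial f',\eta)$ with $\#_F=2$ gives $\#_E+\#_C-\#_V=O(g)$; combined with $\#_E\ge \#_V$ (min degree $2$) this yields $\#_C=O(g)$, and $\sum_{v}(\deg(v)-2)=2\#_E-2\#_V=O(g)$ with all summands nonnegative yields maximum degree $O(g)$. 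Your write-up needs these two structural facts (exactly two faces, minimum degree two) made explicit; once they are in place the ``bookkeeping that needs care'' is a three-line Euler computation, and the corner/splitting-vertex machinery can be dropped entirely.
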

\begin{proof}
We first figure out what the faces of $(\partial f',\eta)$ are.
As mentioned before, $f'$ itself is a face of $(\partial f',\eta)$.
In addition, since $\partial f'$ is a subgraph of $\partial f$, there must be another face $f_0 \in F_\eta(\partial f')$ such that $f \subseteq f_0$.
By definition, each edge of $\partial f'$ is incident to $f'$.
Also, each edge of $\partial f'$ is incident to $f_0$, because it is incident to $f$ in $(\partial f,\eta)$.
Note that one edge can be incident to at most two faces in a surface-embedded graph, and thus the edges of $\partial f'$ are only incident to $f'$ and $f_0$.
It follows that $f'$ and $f_0$ are the only two faces of $(\partial f',\eta)$, i.e., $F_\eta(\partial f') = \{f',f_0\}$, because any face of $(\partial f',\eta)$ must be incident to some edge of $\partial f'$.

Based on this observation, we further argue that $\partial f'$ has no vertex of degree 0 or 1.
Similarly to the edges, each vertex of $\partial f'$ is incident to $f'$ (by definition) and also incident to $f_0$ since it is incident to $f$ in $(\partial f,\eta)$.
But vertices of degree 0 or 1 can only be incident to one face in a surface-embedded graph.
Thus, every vertex of $\partial f'$ has degree at least 2.

Now we are ready to prove the lemma.
Let $\#_V$, $\#_E$, $\#_F$, $\#_C$ be the numbers of vertices, edges, faces, connected components of $(\partial f',\eta)$.
By Euler's formula, $-\#_V+\#_E-\#_F+\#_C = O(g)$, where $g$ is the genus of $\varSigma$.
We have shown that $\#_F = |F_\eta(\partial f')| = 2$.
Therefore, $(\#_E+\#_C) - \#_V = O(g)$.
Since every vertex of $\partial f'$ has degree at least 2, we have $\#_E \geq \#_V$, which implies $\#_C = O(g)$.
%Furthermore, because $\#_E - \#_V = O(g)$, the sum of the degrees of 
It suffices to show that the maximum degree of $\partial f'$ is $O(g)$.
Since $(\#_E+\#_C) - \#_V = O(g)$, we have $2 \#_E - 2 \#_V = O(g)$.
Note that $2 \#_E - 2 \#_V = \sum_{v \in V(\partial f')} (\mathsf{deg}(v)-2)$.
We have $\mathsf{deg}(v)-2 \geq 0$ for all $v \in V(\partial f')$, as each vertex of $\partial f'$ has degree at least 2.
Thus, $\mathsf{deg}(v)-2 \leq 2 \#_E - 2 \#_V = O(g)$ for all $v \in V(\partial f')$.
As a result, the maximum degree of $\partial f'$ is $O(g)$.
%To see the maximum degree of $\partial f'$ is $O(g)$, assume there exists a vertex of $\partial f'$ whose degree is larger than
\end{proof}

The \textit{vertex-face incidence} (VFI) graph of $(G,\eta)$ is a bipartite graph with vertex set $V(G) \cup F_\eta(G)$ and edges connecting every pair $(v,f) \in V(G) \times F_\eta(G)$ such that $v$ is incident to $f$ (or equivalently, $v$ is a vertex in $\partial f$).
For a subset $F \subseteq F_\eta(G)$ of faces, let $F^+ \subseteq V(G) \cup F_\eta(G)$ denote the subset consisting of all $f \in F$ and all $v \in V(G)$ that are on the boundary of some face in $F$, i.e., $F^+ = (\bigcup_{f \in F} V(\partial f)) \cup F$.
We then define the VFI graph of $(G,\eta)$ \textit{restricted to} $F$ as the induced subgraph $G^*[F^+]$, where $G^*$ is the VFI graph of $(G,\eta)$.
We notice the following simple fact.
\begin{fact} \label{fact-connected}
For any $F \subseteq F_\eta(G)$, if $\bigcup_{f \in F} f$ is connected (as a subspace of $\varSigma$), then the VFI graph of $(G,\eta)$ restricted to $F$ is connected.
In particular, the VFI graph of $(G,\eta)$ is connected.
\end{fact}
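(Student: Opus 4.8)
The plan is to reduce the statement to a chain‑connectivity property of the faces in $F$. First I would observe that in the bipartite graph $G^*$ every boundary vertex $v \in F^+$ is, by definition of $F^+$, incident (hence adjacent in $G^*[F^+]$) to some face $f \in F$; so it suffices to connect any two faces $f_1, f_2 \in F$ by a path in $G^*[F^+]$. Since $G^*$ has no edges between two faces, such a path must alternate $f_1 = g_0, v_1, g_1, v_2, \dots, g_m = f_2$ with each $g_i \in F$ and each $v_i \in V(\partial g_{i-1}) \cap V(\partial g_i)$, so the real goal is to produce a sequence of faces in $F$ in which consecutive faces share a common boundary vertex of $G$.

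Next I would introduce an equivalence relation $\sim$ on $F$: declare $g \sim g'$ when there is a finite sequence $g = h_0, h_1, \dots, h_\ell = g'$ of faces in $F$ with $h_{j-1} \cap h_j \neq \emptyset$ (intersection as closed subsets of $\varSigma$) for every $j$. If $\sim$ had more than one class, split $F$ into two nonempty $\sim$‑saturated groups and let $X_1, X_2$ be the unions of the closed faces in the two groups. Each $X_i$ is closed (a finite union of closed faces, $G$ being finite), their union is $X = \bigcup_{f \in F} f$, and they are disjoint: a common point $p \in X_1 \cap X_2$ would lie in some $g$ of group~$1$ and some $g'$ of group~$2$, forcing $g \sim g'$, a contradiction. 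Thus $X$ would be disconnected, contradicting the hypothesis; hence $\sim$ has a single class, and any $f_1, f_2 \in F$ are linked by a chain of faces in $F$ with consecutive nonempty intersections.

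The key local step is then: if $g, g' \in F$ are distinct and $g \cap g' \neq \emptyset$, they share a boundary vertex of $G$. Here I would use that the open faces are pairwise disjoint, so $g \cap g' \subseteq \eta(G)$; pick a point $p \in g \cap g'$. If $p$ is the image of a vertex of $G$ we are done at once. Otherwise $p$ lies in the relative interior of an edge $e$; a point in the interior of an edge has exactly two local sides, these are occupied by $g$ and $g'$ (they must be distinct sides, else $g = g'$), and since the two sides of an edge are incident to fixed faces all along $e$, we get $e \in \partial g \cap \partial g'$, so both endpoints of $e$ are boundary vertices common to $g$ and $g'$. Combining this with the chain from the previous paragraph yields, for any $f_1, f_2 \in F$, the desired alternating path in $G^*[F^+]$, each intermediate vertex being in $F^+$ because it lies on the boundary of a face of $F$. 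For the ``in particular'' clause I would take $F = F_\eta(G)$: then $\bigcup_{f} f = \varSigma$ is connected, and $F^+$ is the whole vertex set of $G^*$ since every vertex of a nonempty embedded graph lies on some face boundary (the case $V(G) = \emptyset$ being trivial).

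I expect the main obstacle to be exactly this local step: carefully arguing that a shared boundary point that is not a vertex forces a shared incident edge, which relies on the standard fact that in a surface embedding each side of an edge is incident to a single well‑defined face, and which needs mild care about loop edges and about ruling out the degenerate possibility that both local sides of $e$ at $p$ belong to the same face — precisely what $g \neq g'$ excludes.
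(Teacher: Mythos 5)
Your proof is correct and follows essentially the same route as the paper: intersecting faces of $F$ share a boundary vertex (giving a path $f$--$v$--$f'$ in the restricted VFI graph), connectivity of $\bigcup_{f\in F} f$ chains all faces of $F$ into one component, and every vertex of $F^+$ is adjacent to some face of $F$, so the whole restricted graph is connected. The only difference is that you spell out in full the two steps the paper merely asserts — the equivalence-class/chain argument extracted from topological connectedness, and the local argument that a shared boundary point not at a vertex forces a shared incident edge and hence a shared boundary vertex — which strengthens rather than changes the argument.
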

\begin{proof}
Let $F \subseteq F_\eta(G)$ such that $\bigcup_{f \in F} f$ is connected.
We denote by $G_F^*$ the VFI graph of $(G,\eta)$ restricted to $F$.
First, we notice that any two faces $f,f' \in F$ satisfying $f \cap f' \neq \emptyset$ are in the same connected component of $G_F^*$.
Indeed, if $f \cap f' \neq \emptyset$, then $V(\partial f) \cap V(\partial f') \neq \emptyset$ and thus $f \rightarrow v \rightarrow f'$ is a path from $f$ to $f'$ in $G_F^*$ for any $v \in V(\partial f) \cap V(\partial f')$.
It follows that all $f \in F$ belong to the same connected component of $G_F^*$, because $\bigcup_{f \in F} f$ is connected.
Note that $F$ is a dominating set of $G_F^*$ by the definition of $G_F^*$.
Thus, $G_F^*$ is connected.
Finally, since the VFI graph of $(G,\eta)$ is nothing but $G_{F_\eta(G)}^*$ and $\bigcup_{f \in F_\eta(G)} f = \varSigma$ is connected, the VFI graph of $(G,\eta)$ is connected.
\end{proof}

Let $G^*$ be the VFI graph of $(G,\eta)$.
For two vertices $v,v' \in V(G)$, the \textit{vertex-face distance} between $v$ and $v'$ in $(G,\eta)$ is defined as the shortest-path distance between $v$ and $v'$ in $G^*$.
In the same way, we can define the \textit{vertex-face distance} between a vertex and a face of $(G,\eta)$, or between two faces of $(G,\eta)$.
%Since $G^*$ is a bipartite graph, the length of any path connecting $v$ and $v'$ in $G^*$ is even (where the length of a path is the number of edges it contains).
%In particular, the shortest-path distance between $v$ and $v'$ in $G^*$ is even.
%We define the \textit{vertex-face distance} between $v$ and $v'$ in $(G,\eta)$ as the half of the shortest-path distance between $v$ and $v'$ in $G^*$.
The \textit{vertex-face diameter} of $(G,\eta)$, denoted by $\mathsf{diam}^*(G,\eta)$, is the maximum vertex-face distance between vertices/faces of $(G,\eta)$ (or equivalently, the graph diameter of $G^*$).
The following lemma follows easily from the work \cite{eppstein2000diameter} of Eppstein.
%Next, we generalize the notion of vertex-face diameter as follows.
\begin{lemma} \label{lem-twdiam}
Let $(G,\eta)$ be a $\varSigma$-embedded graph where the genus of the surface $\varSigma$ is $O(1)$.
Then we have $\mathbf{tw}(G) = O(\mathsf{diam}^*(G,\eta))$.
\end{lemma}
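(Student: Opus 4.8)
The plan is to deduce this from Eppstein's diameter--treewidth bound for bounded-genus graphs, by passing to an auxiliary graph that contains $G$ as a subgraph, still embeds in $\varSigma$, and has small \emph{ordinary} diameter. Recall that Eppstein showed a graph of Euler genus $g$ and diameter $D$ has treewidth $O((g{+}1)(D{+}1))$; since here $g = \mathsf{gns}(\varSigma) = O(1)$, this simply reads $\mathbf{tw}(\cdot) = O(D)$ for every graph embeddable in $\varSigma$. So it suffices to produce a supergraph of $G$ that is embeddable in $\varSigma$ and whose graph diameter is $O(\mathsf{diam}^*(G,\eta))$.

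First I would assume the embedding $\eta$ is cellular (every face an open disk), deferring the general case to the end. Define $\widehat{G}$ by adding, for each face $f \in F_\eta(G)$, a new vertex $x_f$ joined to every vertex of $\partial f$. Placing $x_f$ in the interior of $f$ and drawing its incident edges as pairwise non-crossing arcs inside the disk $f$ (a ``star'' from the centre of the face) extends $\eta$ to an embedding of $\widehat{G}$ in the same surface $\varSigma$, so $\mathsf{gns}(\widehat{G}) \le \mathsf{gns}(\varSigma) = O(1)$; and by construction $G$ is a subgraph of $\widehat{G}$. To bound the diameter, note there is a natural bijection $\psi$ from $V(G^*) = V(G) \cup F_\eta(G)$ onto $V(\widehat{G}) = V(G) \cup \{x_f : f \in F_\eta(G)\}$ which is the identity on $V(G)$ and sends $f$ to $x_f$, and which maps every edge $\{v,f\}$ of the VFI graph $G^*$ to the edge $\{v,x_f\}$ of $\widehat{G}$. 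Hence $\psi$ carries any path of $G^*$ to a walk of $\widehat{G}$ of the same length, so $d_{\widehat{G}}(\psi(a),\psi(b)) \le d_{G^*}(a,b) \le \mathsf{diam}^*(G,\eta)$ for all $a,b$, and since $\psi$ is a bijection this gives $\mathrm{diam}(\widehat{G}) \le \mathsf{diam}^*(G,\eta)$. Applying Eppstein's bound to $\widehat{G}$ yields $\mathbf{tw}(\widehat{G}) = O(\mathsf{diam}^*(G,\eta))$, and $G \subseteq \widehat{G}$ gives $\mathbf{tw}(G) \le \mathbf{tw}(\widehat{G}) = O(\mathsf{diam}^*(G,\eta))$, as desired.

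The one delicate point --- and what I expect to be the main obstacle --- is the non-cellular case: if some face $f$ is not an open disk, the ``star inside $f$'' construction is not available, since it could raise the genus or be topologically impossible. The intended fix is to reduce to cellular embeddings beforehand: cut $\varSigma$ along a finite system of simple closed curves lying inside the non-disk faces, which lowers the genus (keeping it $O(1)$) until the induced embedding of $G$ becomes cellular on a simpler surface, while only splitting faces and hence only \emph{decreasing} the relevant vertex--face distances; alternatively, one simply invokes the form of Eppstein's theorem stated for surface-embedded graphs, which already allows arbitrary embeddings. Either way the hidden constants stay absolute once $\mathsf{gns}(\varSigma) = O(1)$, which is all we need. (A secondary, purely cosmetic point is that $\widehat{G}$ may become a multigraph when a vertex appears several times on $\partial f$, which is harmless: parallel edges affect neither the genus, the subgraph relation, nor the diameter estimate.)
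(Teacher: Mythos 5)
Your proof is correct and is essentially the paper's own argument: both add a new vertex inside each face joined to all vertices of its boundary, note that the resulting graph still embeds in $\varSigma$ and has diameter at most $\mathsf{diam}^*(G,\eta)$ because it contains the VFI graph as a spanning subgraph, and then conclude via Eppstein's diameter--treewidth bound and subgraph-monotonicity of treewidth. The cellularity caveat you flag is not actually needed (a star from an interior point of a face to its incident vertices can be drawn inside any face, disk or not, and the paper makes no cellularity assumption), so your argument matches the paper's.
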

\begin{proof}
We obtain another graph from $G$ by adding new vertices and edges as follows.
For each face $f \in F_\eta(G)$, we add a new vertex $v_f$ and edges connecting $v_f$ with all vertices in $\partial f$.
Let $G'$ be the resulting graph.
We notice that the embedding $\eta$ of $G$ can be extended to an embedding of $G'$ to $\varSigma$.
Indeed, we can embed each $v_f$ in the interior of the face $f$ and draw the edges connecting $v_f$ to the vertices in $\partial f$ inside $f$.
Thus, $G'$ is a bounded-genus graph.
Furthermore, the VFI graph $G^*$ of $G$ can be viewed as a subgraph of $G'$ which has the same vertex set as $G'$, if we identify each $f \in F_\eta(G)$ with the vertex $v_f$ of $G'$.
As such, the diameter of $G'$ is smaller than or equal to the diameter of $G^*$, where the latter is $\mathsf{diam}^*(G,\eta)$.
By the work \cite{eppstein2000diameter} of Eppstein, the treewidth of a bounded-genus graph is linear in its diameter.
So we have $\mathbf{tw}(G) \leq \mathbf{tw}(G') = O(\mathsf{diam}^*(G,\eta))$.
\end{proof}

The notion of vertex-face distance/diameter can be generalized to the situation where the faces of $(G,\eta)$ are weighted.
Let $w: F_\eta(G) \rightarrow \mathbb{N}$ be a weight function on the faces of $(G,\eta)$.
Recall that $G^*$ is the VFI graph of $(G,\eta)$.
Consider a simple path $\pi = (a_0,a_1,\dots,a_m)$ in $G^*$, and write $F_\pi = F_\eta(G) \cap \{a_0,a_1,\dots,a_m\}$.
%where $v_0,v_1\dots,v_m \in V(G)$ and $f_1,\dots,f_m \in F_\eta(G)$.
%Suppose $f_1,\dots,f_m \in F_\eta(G)$ are the faces of $(G,\eta)$ that $\pi$ goes through.
The \textit{cost} of $\pi$ under the weight function $w$ is defined as $m+\sum_{f \in F_\pi} w(f)$, i.e., the length of $\pi$ plus the total weights of the faces that $\pi$ goes through.
We define the \textit{$w$-weighted vertex-face distance} between vertices/faces $a,a' \in V(G) \cup F_\eta(G)$ in $(G,\eta)$ as the minimum cost of a path connecting $a$ and $a'$ in $G^*$ under the weight function $w$.
Note that $w$-weighted vertex-face distances satisfy the triangle inequality, although they do not necessarily form a metric because the $w$-weighted vertex-face distance from a face $f \in F_\eta(G)$ to itself is $w(f)$ rather than 0.
The \textit{$w$-weighted vertex-face diameter} of $(G,\eta)$, denoted by $\mathsf{diam}_w^*(G,\eta)$, is the maximum $w$-weighted vertex-face distance between vertices/faces of $(G,\eta)$.
Clearly, when $w$ is the zero function, the $w$-weighted vertex-face distance/diameter coincides with the ``unweighted'' vertex-face distance/diameter defined before.
We prove the following important lemma, which can be viewed as a generalization of Lemma~\ref{lem-twdiam}.
\begin{lemma} \label{lem-twdiam2}
Let $(G,\eta)$ be a $\varSigma$-embedded graph where the genus of the surface $\varSigma$ is $O(1)$ and $\kappa: F_\eta(G) \rightarrow 2^{V(G)}$ be a map satisfying $\kappa(f) \subseteq V(\partial f)$.
Then we have $\mathbf{tw}(G^\kappa) = O(\mathsf{diam}_{w_\kappa}^*(G,\eta))$, where $G^\kappa$ is the graph obtained from $G$ by making $\kappa(f)$ a clique for all $f \in F_\eta(G)$ and $w_\kappa: F_\eta(G) \rightarrow \mathbb{N}$ is a weighted function on the faces of $(G,\eta)$ defined as $w_\kappa(f) = |\kappa(f)|$.
\end{lemma}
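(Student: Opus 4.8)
The plan is to reduce Lemma~\ref{lem-twdiam2} to Lemma~\ref{lem-twdiam} by \emph{padding} each face of $(G,\eta)$ with a grid gadget whose size is tuned to $\kappa$. Concretely, for every face $f \in F_\eta(G)$ with $q_f := |\kappa(f)| \ge 1$, list $\kappa(f) = \{\kappa_1,\dots,\kappa_{q_f}\}$ in the cyclic order in which these vertices occur on the boundary walk of $f$, draw a $q_f \times q_f$ grid $\Gamma_f$ in the interior of $f$, and add a matching joining the $i$-th vertex of the first row of $\Gamma_f$ to $\kappa_i$. Because $\kappa(f) \subseteq V(\partial f)$ and all new vertices and edges for $f$ lie inside $f$, these edges can be drawn without crossings, so $\eta$ extends to an embedding $\eta'$ of the resulting graph $G'$ into $\varSigma$; moreover $\Gamma_f$ is attached to the rest of $G'$ only through $\kappa(f)$, and $\varSigma$ still has genus $O(1)$.

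First I would bound $\mathsf{diam}^*(G',\eta')$ in terms of $D := \mathsf{diam}_{w_\kappa}^*(G,\eta)$, using that $q_f = w_\kappa(f) \le D$ for every $f$. Each vertex or face of $G'$ lies in the closure of a unique face $f$ of $G$; send it to a vertex of $V(\partial f)$ within vertex-face distance $O(q_f) = O(D)$ of it in $G'$. Given $a,a'$, take a minimum-cost VFI path $(v_0,f_1,v_1,\dots,f_m,v_m)$ in $G$ between their images, of cost $m + \sum_i w_\kappa(f_i) \le D$; each sub-step $v_{i-1}\to f_i\to v_i$ lifts to a VFI walk in $G'$ inside the closure of $f_i$ of length $O(1+q_{f_i})$ (traversing the first row of $\Gamma_{f_i}$ and/or the $O(q_{f_i})$ faces between $\Gamma_{f_i}$ and $\partial f_i$). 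Concatenating, and adding the two $O(D)$ excursions, gives $\mathsf{diam}^*(G',\eta') = O(D)$, whence $\mathbf{tw}(G') = O(D)$ by Lemma~\ref{lem-twdiam}. Fix a binary tree decomposition $\mathcal{T}'$ of $G'$ of width $w' = O(D)$ (Lemma~\ref{lem-tdecomp}).

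The remaining — and main — step is to convert $\mathcal{T}'$ into a tree decomposition of $G^\kappa$ of width $O(w')$. Deleting the grid vertices from all bags gives a tree decomposition of $G$ of width $\le w'$, so it suffices to additionally force each $\kappa(f)$ into a single bag (the clique added on $\kappa(f)$ in $G^\kappa$ is then compatible). Here I would exploit that $\kappa(f)$ is \emph{well-linked} in $G'$: the first row of a $q_f \times q_f$ grid is well-linked in the grid, and the matching edges carry this over to $\kappa(f)$, with all certifying disjoint paths contained in $G'[V(\Gamma_f)\cup\kappa(f)]$. Well-linkedness prevents $\kappa(f)$ from straddling any separation of $G'$ of order $\le w'+1$, so (together with the fact that $\kappa(f)$ lives inside the subtree of $\mathcal{T}'$ hosting $\Gamma_f$, whose interface to the rest has size $\le w'+1$) the set $\kappa(f)$ is confined to a small part of $\mathcal{T}'$; there one can pick a node $t_f$, add $\kappa(f)$ to $\beta(t_f)$, and repair the tree-decomposition property by propagating the $O(q_f) = O(w')$ relocated vertices along the $O(1)$ short tree-paths leaving $t_f$, all at additive cost $O(w')$. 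The delicate point, which I expect to be the real obstacle, is to perform these surgeries for \emph{all} faces simultaneously so that no bag absorbs contributions from too many faces; this should follow from the gadgets $\Gamma_f$ being pairwise vertex-disjoint and each of treewidth $\Theta(q_f)$, which forces their hosting regions in $\mathcal{T}'$ to overlap only in a controlled way and lets the per-face surgeries be made essentially non-interfering. Once this is done, $\mathcal{T}'$ becomes a tree decomposition of $G^\kappa$ of width $O(w') = O(D)$, as required.
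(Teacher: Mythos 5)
Your construction is the same as the paper's: pad each face with a $q_f\times q_f$ grid attached to $\kappa(f)$, bound the (unweighted) vertex--face diameter of the padded graph by $O(D)$, invoke the treewidth-vs-VFI-diameter bound to get an $O(D)$-width tree decomposition, and then modify it so that each $\kappa(f)$ ends up in a single bag, using the well-linkedness of grids. The first two stages are fine as you sketch them (your diameter argument matches the paper's Claim~1). The problem is that the third stage -- which is the entire technical content of the lemma -- is exactly the step you defer with ``this should follow from the gadgets being pairwise vertex-disjoint,'' and as stated it does not follow. Well-linkedness of $\kappa(f)$ only tells you that no single bag-separator of order $w'+1$ splits $\kappa(f)$ evenly; it does not by itself confine $\kappa(f)$ to a region of the tree where you can insert it into one bag at additive cost $O(w')$, and your repair step (``propagating the relocated vertices along the $O(1)$ short tree-paths leaving $t_f$'') is not how the repair works: to keep the decomposition valid, a relocated vertex $v$ must be added to every bag on the tree path from the chosen node to the existing connected set $\{t:\, v\in\beta(t)\}$, and these paths can be long. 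What has to be proved is a per-bag bound, not a path-length bound.

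Concretely, the paper's proof supplies two ingredients you are missing. First, for each face it works not with $\kappa(f)$ but with the bottom row $\hat\kappa(f)$ of the grid $\varGamma_f$ (recovering $G^\kappa$ at the end as a minor, by contracting grid columns and the matching edges), chooses a \emph{central} node $t_f$ for $\hat\kappa(f)$ (every component of $T-\{t_f\}$ contains at most $|\hat\kappa(f)|/2$ of its vertices), and drags toward $t_f$ only the vertices $\alpha_f(t)$ lying ``on the wrong side'' of each node $t$. Second -- and this is the heart of the matter -- it proves a charging inequality (Claim~3): using the rows-and-columns structure of the grid, any bag $\beta(t)$ that separates a constant fraction of $\hat\kappa(f)$ from the rest must already contain $\Omega\bigl(w_\kappa(f)-\mu(t,\hat\kappa(f))\bigr)$ vertices of $\varGamma_f$, which yields $|\alpha_f(t)| = O(|\beta(t)\cap V(\varGamma_f)|)$; summing over faces and using disjointness of the grids gives $\sum_f |\beta(t)\cap V(\varGamma_f)|\le|\beta(t)|$, so every bag grows by only $O(w')$ even though all faces are handled simultaneously. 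Without this quantitative charging argument (or a substitute for it), your plan establishes $\mathbf{tw}(G')=O(D)$ but not $\mathbf{tw}(G^\kappa)=O(D)$, so the proposal as written has a genuine gap at its main step.
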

\begin{proof}
We first construct another graph $G'$ from $G$ by adding, for each face $f \in F_\eta(G)$, a new vertex $v_f$ and edges connecting $v_f$ with all vertices in $\kappa(f)$.
Also, we extend the embedding $\eta$ of $G$ to an embedding of $G'$ to $\varSigma$ by drawing $v_f$ in the interior of the face $f$ and drawing the edges connecting $v_f$ to the vertices in $\kappa(f)$ inside $f$.
For convenience, we still use the notation $\eta$ to denote the embedding of $G'$ to $\varSigma$.
Note that the degree of each new vertex $v_f$ is $w_\kappa(f)$.
Next, we replace each $v_f$ with a grid as follows.
Consider a vertex $v_f$.
Let $e_1,\dots,e_{w_\kappa(f)}$ be the edges adjacent to $v_f$ sorted in clockwise order around $v_f$ under the embedding $\eta$.
Now we replace $v_f$ with a $w_\kappa(f) \times w_\kappa(f)$ grid $\varGamma_f$ and let the edge $e_i$ connect to the $i$-th grid vertex in the top row of $\varGamma_f$ for $i \in [w_\kappa(f)]$.
We do this for every $v_f$ and let $G''$ be the resulting graph.
The way we replace $v_f$ with $\varGamma_f$ allows us to modify the embedding of $v_f$ and $e_1,\dots,e_{w_\kappa(f)}$ to an embedding of $\varGamma_f$ and the new $e_1,\dots,e_{w_\kappa(f)}$ inside $f$.
Specifically, we take a small neighborhood $X$ of $v_f$ in $\varSigma$ that is homeomorphism to a disk.
We ``clear'' the drawing of $v_f$ and $e_1,\dots,e_{w_\kappa(f)}$ in the interior of $X$, and then draw $\varGamma_f$ in the interior of $X$ with curves connecting the $i$-th grid vertex in the top row of $\varGamma_f$ with the intersection point of (the image of) $e_i$ and the boundary of $X$.
In other words, we can extend the embedding $\eta$ of $G$ to an embedding of $G''$ to $\varSigma$, which we still denote by $\eta$ for convenience.

\medskip
\noindent
\textit{Claim~1. $\mathsf{diam}^*(G'',\eta) = O(\mathsf{diam}_{w_\kappa}^*(G,\eta))$.}

\medskip
\noindent
\textit{Proof of Claim~1.}
Since the VFI graph of $(G'',\eta)$ is connected, it suffices to show that the vertex-face distance between any two vertices $u,v \in V(G'')$ in $(G'',\eta)$ is $O(\mathsf{diam}_{w_\kappa}^*(G,\eta))$.
There are two possibilities for the vertex $u$ (resp., $v$): it is either a vertex in $G$ or a vertex in $\varGamma_f$ for some $f \in F_\eta(G)$.
We first consider the case where both $u$ and $v$ are vertices in $G$ (the other cases are similar).
Let $(u,f_0,v_1,f_1\cdots,v_q,f_q,v)$ be the path in the VFI of $(G,\eta)$ with the minimum cost under the weight function $w_\kappa$, where $f_0,f_1\dots,f_{q} \in F_\eta(G)$ and $v_1,\dots,v_q \in V(G)$.
So we have $2q+2+\sum_{i=0}^q w_\kappa(f_i) \leq \mathsf{diam}_{w_\kappa}^*(G,\eta)$.
For convenience, we write $v_0 = u$ and $v_{q+1} = v$.
We claim that the vertex-face distance between $v_{i-1}$ and $v_{i}$ in $(G'',\eta)$ is at most $2w_\kappa(f_{i-1}) + 2$.
Since $v_{i-1}$ (resp., $v_i$) is incident to $f_{i-1}$ in $(G,\eta)$ and $\eta$ embeds the grid $\varGamma_{f_{i-1}}$ inside $f_{i-1}$, there exists a grid vertex $v_{i-1}'$ (resp., $v_i'$) in $\varGamma_{f_{i-1}}$ whose vertex-face distance to $v_{i-1}$ (resp., $v_i$) in $(G'',\eta)$ is 2.
Note that the vertex-face distance between any two grid vertices in $\varGamma_{f_{i-1}}$ is at most $2w_\kappa(f_{i-1})-2$.
Thus, the vertex-face distance between $v_{i-1}$ and $v_{i}$ in $(G'',\eta)$ is at most $2w_\kappa(f_{i-1}) + 2$.
This further implies the the vertex-face distance between $u$ and $v$ in $(G'',\eta)$ is $O(q+\sum_{i=0}^q w_\kappa(f_i))$, which is $O(\mathsf{diam}_{w_\kappa}^*(G,\eta))$.
The case where $u$ and/or $v$ is a vertex in some grid $\varGamma_f$ can be handled similarly.
Specifically, if $u$ is in $\varGamma_f$ and $v$ is in $G$ (resp., a vertex in $\varGamma_{f'}$), then we can show that the vertex-face distance between $u$ and $v$ in $(G'',\eta)$ is linear in the $w_\kappa$-weighted vertex-face distance from $f$ to $v$ (resp., $f'$) in $(G,\eta)$.
As a result, $\mathsf{diam}^*(G'',\eta) = O(\mathsf{diam}_{w_\kappa}^*(G,\eta))$.
\hfill $\lhd$

\medskip
By Claim~1 and Lemma~\ref{lem-twdiam}, we have $\mathbf{tw}(G'') = O(\mathsf{diam}_{w_\kappa}^*(G,\eta))$.
Let $\mathcal{T}$ be a tree decomposition of $G''$ of width $w = O(\mathsf{diam}_{w_\kappa}^*(G,\eta))$, and $T$ be the underlying tree of $\mathcal{T}$.
Our next plan is to modify $\mathcal{T}$ to obtain an $O(w)$-width tree decomposition of a graph that contains $G^\kappa$ as a minor.
To this end, we first introduce the notion of \textit{central} nodes.
Let $t \in T$ be a node.
% and $V \subseteq V(G'')$ be a subset of vertices in $G''$
Then $T - \{t\}$ is a forest.
We say a tree $T'$ in the forest $T - \{t\}$ \textit{contains} a vertex $v \in V(G'')$ if $v \in \beta(t')$ for some $t' \in T'$.
Note that every vertex in $V(G'') \backslash \beta(t)$ is contained in exactly one tree in the forest $T - \{t\}$, because $\mathcal{T}$ is a tree decomposition.
For a subset $V \subseteq V(G'')$ of vertices in $G''$, we denote by $\mu(t,V)$ the maximum number of vertices in $V \backslash \beta(t)$ a tree in $T - \{t\}$ contains.
We say $t$ is \textit{central} for $V$ if $\mu(t,V) \leq |V|/2$, i.e., each tree in the forest $T - \{t\}$ contains at most $|V|/2$ vertices in $V \backslash \beta(t)$.

\medskip
\noindent
\textit{Claim~2. For any $V \subseteq V(G'')$, there exists a node in $T$ that is central for $V$.}

\medskip
\noindent
\textit{Proof of Claim~2.}
We begin from an arbitrary node $t \in T$ and keep walking in $T$ until we reach a central node for $V$ in the following way.
If $t$ is a central node for $V$, we are done.
Otherwise, there exists a (unique) tree $T'$ in the forest $T - \{t\}$ which contains more than $|V|/2$ vertices in $V \backslash \beta(t)$; let $t' \in T'$ be the node adjacent to $t$.
We then go from $t$ to $t'$.
To see this walk always terminates, it suffices to show that when we go from $t$ to $t'$, we cannot go back to $t$ in the next step.
Indeed, if we go from $t$ to $t'$, then the tree in $T - \{t\}$ containing $t'$ contains more than $|V|/2$ vertices in $V \backslash \beta(t)$, which implies the tree in $T - \{t'\}$ containing $t$ contains less than $|V|/2$ vertices in $V \backslash \beta(t)$.
So we will not go back to $t$ from $t'$ in the next step.
\hfill $\lhd$

\medskip
Let $\hat{\kappa}(f) \subseteq V(\varGamma_f)$ consist of the vertices in the last row of the grid $\varGamma_f$.
By Claim~2, for each $f \in F_\eta(G)$, there exists a central node $t_f \in T$ for $\hat{\kappa}(f)$.
We ``drag'' the vertices in $\hat{\kappa}(f) \backslash \beta(t_f)$ to the node $t_f$ for each $f \in F_\eta(G)$ as follows.
%For two nodes $t,t' \in T$, we denote by $\pi_{t,t'}$ the (unique) path in $T$ connecting $t$ and $t'$.
%Consider a face $f \in F_\eta(G)$ and a vertex $v \in \kappa(f) \backslash \beta(t_f)$.
Consider a face $f \in F_\eta(G)$ and a node $t \in T$.
We define a set $\alpha_f(t) \subseteq \hat{\kappa}(f) \backslash \beta(t)$ as follows.
As argued before, every vertex in $\hat{\kappa}(f) \backslash \beta(t)$ is contained in exactly one tree in the forest $T - \{t\}$.
Let $\alpha_f(t)$ consist of all $v \in \hat{\kappa}(f) \backslash \beta(t)$ such that $t_f \notin T_v$ where $T_v$ is the tree in $T - \{t\}$ containing $v$.
We then add all vertices in $\bigcup_{f \in F_\eta(G)} \alpha_f(t)$ to the bag $\beta(t)$ for all $t \in T$.
%Since $\mathcal{T}$ is a tree decomposition, $v$ is contained in exactly one tree $T'$ in the forest $T - \{t_f\}$.
%For every node $t \in T$ that is on the path $\pi_{t_f,t'}$ for some $t' \in T'$, we add $v$ to the bag $\beta(t)$.
%Note that $v$ is always added to $\beta(t_f)$.
%We call this step \textit{dragging} $v$ to $t_f$.
%We drag $v$ to $t_f$ for every $f \in F_\eta(G)$ and every $v \in \kappa(f) \backslash \beta(t_f)$.
We show that after this modification, $\mathcal{T}$ is still a tree decomposition of $G''$ of width $O(w)$ satisfying an additional condition: for any $f \in F_\eta(G)$, there is a node whose bag contains $\hat{\kappa}(f)$.
For convenience of exposition, for each node $t \in T$, we write $\beta^*(t)$ as the bag of $t$ after the modification and $\beta(t)$ as the bag of $t$ before the modification, i.e., $\beta^*(t) = \beta(t) \cup (\bigcup_{f \in F_\eta(G)} \alpha_f(t))$.
It is easy to verify that for every $f \in F_\eta(G)$ and every $v \in \hat{\kappa}(f) \backslash \beta(t_f)$, the nodes $t \in T$ such that $v \in \alpha_f(t)$ form a path from $t_f$ to a node adjacent to the connected set $\{t \in T: v \in \beta(t)\}$.
Thus, for every vertex $v \in V(G'')$, the nodes $t \in T$ such that $v \in \beta^*(t)$ form a connected set in $T$.
This implies that $\mathcal{T}$ is a tree decomposition of $G''$ after the modification, because we only add vertices to the bags in the modification.
Also, it is clear that $\mathcal{T}$ satisfies the additional condition, since $\hat{\kappa}(f) \backslash \beta(t_f) \subseteq \alpha_f(t_f)$ and hence $\hat{\kappa}(f) \subseteq \beta^*(t_f)$.
Therefore, it suffices to show that the width of $\mathcal{T}$ is $O(w)$, i.e., $|\beta^*(t)| = O(w)$ for all $t \in T$.
Consider a node $t \in T$.
Recall that $\mu(t,V)$ denotes the maximum number of vertices in $V \backslash \beta(t)$ a tree in $T - \{t\}$ can contain.
The key observation is the following.

\medskip
\noindent
\textit{Claim~3. $w_\kappa(f) - \mu(t,\hat{\kappa}(f)) = O(|\beta(t) \cap V(\varGamma_f)|)$ for any $f \in F_\eta(G)$.}

\medskip
\noindent
\textit{Proof of Claim~3.}
Suppose the forest $T - \{t\}$ consists of $q$ trees $T_1,\dots,T_q$.
As argued before, each vertex in $\hat{\kappa}(f) \backslash \beta(t)$ is contained in exactly one of $T_1,\dots,T_q$.
%Without loss of generality, we assume that $T_1$ contains $\mu(t,\hat{\kappa}(f))$ vertices in $\hat{\kappa}(f) \backslash \beta(t)$.
We say two vertices $v,v' \in \hat{\kappa}(f) \backslash \beta(t)$ are \textit{separated} if $v$ and $v'$ are contained in different trees in $\{T_1,\dots,T_q\}$, and two sets $K,K' \subseteq \hat{\kappa}(f)$ are \textit{separated} if $v$ and $v'$ are separated for any $v \in K \backslash \beta(t)$ and $v' \in K' \backslash \beta(t)$.
We first show that one can partition $\hat{\kappa}(f)$ into two separated subsets $K$ and $K'$ such that both of them are of size at least $(w_\kappa(f) - \mu(t,\hat{\kappa}(f)))/2$.
If $|\hat{\kappa}(f) \cap \beta(t)| \geq (w_\kappa(f) - \mu(t,\hat{\kappa}(f)))/2$, then we simply let $K$ be a subset of $\hat{\kappa}(f) \cap \beta(t)$ of size $(w_\kappa(f) - \mu(t,\hat{\kappa}(f)))/2$ and $K' = \hat{\kappa}(f) \backslash K$.
We have $|K'| = w_\kappa(f) - |K| = (w_\kappa(f) + \mu(t,\hat{\kappa}(f)))/2$ and it is clear that $K$ and $K'$ are separated as $K \backslash \beta(t) = \emptyset$.
If $|\hat{\kappa}(f) \cap \beta(t)| < (w_\kappa(f) - \mu(t,\hat{\kappa}(f)))/2$, then $\hat{\kappa}(f) \backslash \beta(t) \geq (w_\kappa(f) - \mu(t,\hat{\kappa}(f)))/2$.
Let $i \in [q]$ be the smallest index such that $T_1,\dots,T_i$ contains at least $(w_\kappa(f) - \mu(t,\hat{\kappa}(f)))/2$ vertices in $\hat{\kappa}(f) \backslash \beta(t)$.
Then we define $K$ as the set of vertices in $\hat{\kappa}(f) \backslash \beta(t)$ that are contained in $T_1,\dots,T_i$ and define $K' = \hat{\kappa}(f) \backslash K$.
Clearly, $K$ and $K'$ are separated and $|K| \geq (w_\kappa(f) - \mu(t,\hat{\kappa}(f)))/2$.
To see $|K'| \geq (w_\kappa(f) - \mu(t,\hat{\kappa}(f)))/2$, observe that $T_1,\dots,T_{i-1}$ contains less than $(w_\kappa(f) - \mu(t,\hat{\kappa}(f)))/2$ vertices in $\hat{\kappa}(f) \backslash \beta(t)$.
Since $T_i$ can contain at most $\mu(t,\hat{\kappa}(f))$ vertices in $\hat{\kappa}(f) \backslash \beta(t)$, we have $|K| \leq (w_\kappa(f) + \mu(t,\hat{\kappa}(f)))/2$ and thus $|K'| = w_\kappa(f) - |K| \geq (w_\kappa(f) - \mu(t,\hat{\kappa}(f)))/2$.

Next, we show that $\min\{|K|,|K'|\} \leq |\beta(t) \cap V(\varGamma_f)|$, which directly proves the claim because we have $\min\{|K|,|K'|\} \geq (w_\kappa(f) - \mu(t,\hat{\kappa}(f)))/2$.
We first observe that for any path $\pi$ in $G''$ connecting a vertex $v \in K$ and a vertex $v' \in K'$, $\beta(t)$ must contain at least one vertex on $\pi$.
Indeed, because $\mathcal{T}$ is a tree decomposition of $G''$ (before the modification) and $\pi$ is a connected subgraph of $G''$, the nodes whose bags contain at least one vertex on $\pi$ form a connected subset in $T$.
This connected subset must contain $t$, since either $\{v,v'\} \cap \beta(t) \neq \emptyset$ or $v$ and $v'$ are vertices in $\hat{\kappa}(f) \backslash \beta(t)$ that are contained in different trees in $T - \{t\}$ (as $K$ and $K'$ are separated).
With this observation, we are ready to show $\min\{|K|,|K'|\} \leq |\beta(t) \cap V(\varGamma_f)|$.
We have $\min\{|K|,|K'|\} \leq |\hat{\kappa}(f) \backslash \beta(t)| \leq w_\kappa(f)$.
If $|\beta(t) \cap V(\varGamma_f)| \geq w_\kappa(f)$, we are done.
So assume $|\beta(t) \cap V(\varGamma_f)| < w_\kappa(f)$.
We say $\beta(t)$ \textit{hits} a row/column of the grid $\varGamma_f$ if there is a vertex in the row/column that is contained in $\beta(t)$.
We claim that either all the columns of $\varGamma_f$ containing the vertices in $K$ are hit by $\beta(t)$ or all the columns of $\varGamma_f$ containing the vertices in $K'$ are hit by $\beta(t)$, which implies $|\beta(t) \cap V(\varGamma_f)| \geq \min\{|K|,|K'|\}$.
Suppose this is not true, then there exist $v \in K$ and $v' \in K'$ such that the columns $C$ and $C'$ containing $v$ and $v'$ are not hit by $\beta(t)$.
Since $\varGamma_f$ has $w_\kappa(f)$ rows and $|\beta(t) \cap V(\varGamma_f)| < w_\kappa(f)$, there is at least one row $R$ of $\varGamma_f$ that is not hit by $\beta(t)$.
Consider the path $\pi$ in $\varGamma_f$ connecting $v$ and $v'$ which goes through $C$, $R$, and $C'$.
Since $C,R,C'$ are not hit by $\beta(t)$, none of the vertices on $\pi$ is contained in $\beta(t)$.
This contradicts with the fact that $\beta(t)$ contains at least one vertex on each path in $G''$ connecting a vertex in $K$ and a vertex in $K'$.
Therefore, $\min\{|K|,|K'|\} \leq |\beta(t) \cap V(\varGamma_f)|$.
\hfill $\lhd$

\medskip
With Claim~3 in hand, we now show $|\beta^*(t)| = O(|\beta(t)|) = O(w)$.
We shall show that $|\alpha_f(t)| = O(|\beta(t) \cap V(\varGamma_f)|)$ for all $f \in F_\eta(G)$, which implies $|\beta^*(t)| = O(|\beta(t)|)$ because $|\beta^*(t)| \leq |\beta(t)| + \sum_{f \in F_\eta(G)} |\alpha_f(t)|$ and $\sum_{f \in F_\eta(G)} |\beta(t) \cap V(\varGamma_f)| \leq |\beta(t)|$.
If $\mu(t,\hat{\kappa}(f)) \leq w_\kappa(f)/2$, then by Claim~3 we have $w_\kappa(f) = O(|\beta(t) \cap V(\varGamma_f)|)$ and thus $|\alpha_f(t)| \leq w_\kappa(f) = O(|\beta(t) \cap V(\varGamma_f)|)$.
So assume $\mu(t,\hat{\kappa}(f)) > w_\kappa(f)/2$ and let $T'$ be the tree in the forest $T - \{t\}$ that contains $\mu(t,\hat{\kappa}(f))$ vertices in $\hat{\kappa}(f) \backslash \beta(t)$.
Observe that $t_f \in T'$ for otherwise $\mu(t_f,\hat{\kappa}(f)) \geq \mu(t,\hat{\kappa}(f)) > w_\kappa(f)/2$.
Therefore, $\alpha_f(t)$ is the set of vertices in $\hat{\kappa}(f) \backslash \beta(t)$ that are not contained in $T'$.
So we have
\begin{equation*}
    |\alpha_f(t)| = |\hat{\kappa}(f) \backslash \beta(t)| - \mu(t,\hat{\kappa}(f)) \leq w_\kappa(f) - \mu(t,\hat{\kappa}(f)) = O(|\beta(t) \cap V(\varGamma_f)|),
\end{equation*}
where the last equality follows from Claim~3.

Now we have seen that after the modification, $\mathcal{T}$ is a tree decomposition of $G''$ of width $O(w)$ satisfying that for any $f \in F_\eta(G)$, there is a node whose bag contains $\hat{\kappa}(f)$.
Equivalently, $\mathcal{T}$ is an $O(w)$-width tree decomposition of the graph $G^*$ obtained from $G''$ by making $\hat{\kappa}(f)$ a clique for all $f \in F_\eta(G)$.
Note that $G^*$ contains $G^\kappa$ as a minor.
Indeed, if we contract the edges on each column of the grid $\varGamma_f$ and the edges connecting the vertices in $\kappa(f)$ with the vertices in the first row of $\varGamma_f$ for all $f \in F_\eta(G)$, then we obtain $G^\kappa$ from $G^*$.
As a result, $\mathbf{tw}(G^\kappa) = O(w) = O(\mathsf{diam}_{w_\kappa}^*(G,\eta))$, which completes the proof.
\end{proof}

%Consider a surface-embedded graph $(G,\eta)$ with a weight function $\delta: F_\eta(G) \rightarrow \mathbb{N}$ on its faces.

\subsection{Constructing the sets $Z_1,\dots,Z_p$} \label{sec-Z1Zp}
Now we begin the proof of Lemma~\ref{lem-key}.
Let us recall the formal definition of almost-embeddable graphs.
Since this definition is somehow involved, we first need to introduce some basic notions.
A \textit{facial disk} of a $\varSigma$-embedded graph $(G,\eta)$ is a (topological) disk in $\varSigma$ whose interior is disjoint from the image of $(G,\eta)$ in $\varSigma$.
Clearly, a facial disk of $(G,\eta)$ is contained in some face of $(G,\eta)$.
For a disk $D$ in $\varSigma$, we define $V(G) \cap_\eta D \subseteq V(G)$ as the set of vertices whose images under $\eta$ lie in $D$.
If $D$ is a facial disk of $(G,\eta)$, then the images of the vertices in $V(G) \cap_\eta D$ under $\eta$ are all on the boundary of $D$; in this case, the vertices in $V(G) \cap_\eta D$ can be sorted in clockwise or counterclockwise order along the boundary of $D$.
A \textit{path decomposition} is a tree decomposition in which the underlying tree is a path.
\begin{definition}[almost-embeddable graphs] \label{def-almost}
For an integer $h > 0$, a graph $G$ is $h$-\textbf{almost-embeddable} if there exists $A \subseteq V(G)$ with $|A| \leq h$ such that we can write $G-A = G_0 \cup G_1 \cup \cdots \cup G_h$ and there exist an embedding $\eta$ of $G_0$ in a surface $\varSigma$ of genus $h$, $d$ disjoint facial disks $D_1,\dots,D_h$ of $(G_0,\eta)$, and $h$ pairs $(\tau_1,\mathcal{P}_1),\dots,(\tau_h,\mathcal{P}_h)$ such that the following conditions hold for all $i \in [h]$.
\begin{itemize}
    \item $G_1,\dots,G_h$ are mutually disjoint. 
    \item $V(G_0) \cap_\eta D_i = V(G_0) \cap V(G_i)$.
    Set $q_i = |V(G_0) \cap_\eta D_i| = |V(G_0) \cap V(G_i)|$.
    \item $\tau_i = (v_{i,1},\dots,v_{i,q_i})$ is a permutation of the vertices in $V(G_0) \cap_\eta D_i$ that is compatible with the clockwise or counterclockwise order along the boundary of $D_i$;
    \item $\mathcal{P}_i$ is a path decomposition of $G_i$ of width at most $h$ whose underlying path $\pi_i = (u_{i,1},\dots,u_{i,q_i})$ is of length $q_i - 1$ and satisfies $v_{i,j} \in \beta(u_{i,j})$ for all $j \in [q_i]$.
\end{itemize}
We call $A$ the \textbf{apex set}, $G_0$ the \textbf{embeddable skeleton}, and $G_1,\dots,G_h$ the \textbf{vortices} attached to the disks $D_1,\dots,D_h$.
Also, we call each pair $(\tau_i,\mathcal{P}_i)$ the \textbf{witness pair} of the vortex $G_i$, for $i \in [h]$.
These components together are called the \textbf{almost-embeddable structure} of $G$.
%The \textbf{almost-embeddable structure} of $G$ refers to the apex set $A$, the decomposition $G-A = G_0 \cup G_1 \cup \cdots \cup G_h$, the embedding $\eta$, the disks $D_1,\dots,D_h$, and the witness pairs.
\end{definition}

Let $G$ be an $h$-almost-embeddable graph with apex set $A \subseteq V(G)$, embeddable skeleton $G_0$ with an embedding $\eta$ to a genus-$h$ surface $\varSigma$, and vortices $G_1,\dots,G_h$ attached to disjoint facial disks $D_1,\dots,D_h$ in $(G_0,\eta)$ with witness pairs $(\tau_1,\mathcal{P}_1),\dots,(\tau_h,\mathcal{P}_h)$.
Let $p \in \mathbb{N}$ be the given number.
In this section, we construct the disjoint sets $Z_1,\dots,Z_p \subseteq V(G)$ in Lemma~\ref{lem-key}, and then in the next section we show these sets satisfy the desired property.
%Let $\eta$

In our construction, the sets $Z_1,\dots,Z_p$ will all be subsets of $V(G_0)$.
In the first step, we add some ``virtual'' edges to $G_0$.
Consider an index $i \in [h]$.
Suppose $\tau_i = (v_{i,1},\dots,v_{i,q_i})$.
By definition, $v_{i,1},\dots,v_{i,q_i}$ are the vertices of $(G_0,\eta)$ that lie on the boundary of $D_i$, sorted in clockwise or counterclockwise order.
For convenience, we write $v_{i,0} = v_{i,q_i}$.
We then add the edges $(v_{i,j-1},v_{i,j})$ for all $j \in [q_i]$ to $G_0$, and call them \textit{virtual} edges.
Furthermore, we draw these virtual edges inside (the interior of) the disk $D_i$ without crossing (this is possible because $v_1,\dots,v_{q_i}$ are sorted along the boundary of $D_i$).
The images of these virtual edges then enclose a disk $D_i'$ in $D_i$.
We do this for all indices $i \in [h]$.
Let $G_0'$ denote the resulting graph after adding the virtual edges.
Since $D_1,\dots,D_h$ are disjoint facial disks in $(G_0,\eta)$, the images of the virtual edges do not cross each other or cross the original edges in $(G_0,\eta)$.
Therefore, the drawing of the virtual edges extends $\eta$ to an embedding of $G_0'$ to $\varSigma$; for simplicity, we still use the notation $\eta$ to denote this embedding.
Note that the disks $D_1',\dots,D_h'$ are faces of $(G_0',\eta)$, which we call \textit{vortex faces}.
%Next, we extend the embedding $\eta$ to an embedding of $G_0'$ to $\varSigma$.
%For each disk $D_i$, 

Now fix an arbitrary point $x_\mathsf{out} \in \varSigma$ that is disjoint from the image of $G_0'$ under $\eta$, or equivalently, in the interior of a face of $(G_0',\eta)$.
For any $\varSigma$-embedded graph whose image is disjoint from $x_\mathsf{out}$, we call the face containing $x_\mathsf{out}$ the \textit{outer} face of the graph.
Let $o \in F_\eta(G_0')$ be the outer face of $(G_0',\eta)$.
For any vertex $v \in V(G_0') = V(G_0)$, the vertex-face distance between $o$ and $v$ in $(G_0',\eta)$ is an odd number (since $o$ is a face and $v$ is a vertex).
We define $L_i \subseteq V(G_0')$ as the subset consisting of all vertices $v \in V(G_0')$ such that the vertex-face distance between $o$ and $v$ in $(G_0',\eta)$ is $2i-1$.
Let $m \in \mathbb{N}$ be the largest number such that $L_m \neq \emptyset$.
By Fact~\ref{fact-connected}, the VFI graph of $(G_0',\eta)$ is connected, and hence $L_1,\dots,L_m$ forms a partition of $V(G_0')$.
We call $L_i$ the \textit{$i$-th layer} of $(G_0',\eta)$.
For convenience, we write $L_{i^-}^{i+} = \bigcup_{i=i^-}^{i^+} L_i$ for two indices $i^-,i^+ \in [m]$ that $i^- \leq i^+$, and write $L_{\geq i} = L_i^m$ and $L_{> i} = L_{i+1}^m$ for $i \in [m]$.
Define $\ell: V(G_0') \rightarrow [m]$ as the function which maps each $v \in V(G_0')$ to the unique index $i \in [m]$ such that $v \in L_i$.
One can easily see the following properties of $L_1,\dots,L_m$.

\begin{fact} \label{fact-diff1}
If two vertices $u,v \in V(G_0')$ are incident to a common face of $(G_0',\eta)$, $|\ell(u) - \ell(v)| \leq 1$.
Also, if two vertices $u,v \in V(G_0')$ are neighboring in $G_0'$, $|\ell(u) - \ell(v)| \leq 1$.
\end{fact}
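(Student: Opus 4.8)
The plan is to argue entirely inside the vertex-face incidence graph $G^*$ of $(G_0',\eta)$, using that by construction a vertex $v\in V(G_0')$ lies in $L_i$ exactly when its shortest-path distance to the outer face $o$ in $G^*$ equals $2i-1$; write $d^*(\cdot,\cdot)$ for distances in $G^*$, which are all finite since $G^*$ is connected by Fact~\ref{fact-connected}. For the first assertion, suppose $u$ and $v$ are both incident to a common face $f\in F_\eta(G_0')$. Then $u$ is adjacent to $f$ and $v$ is adjacent to $f$ in $G^*$, so $(u,f,v)$ is a walk of length $2$ and the triangle inequality gives $|d^*(o,u)-d^*(o,v)|\le d^*(u,v)\le 2$. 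Since $d^*(o,u)=2\ell(u)-1$ and $d^*(o,v)=2\ell(v)-1$, this reads $2\,|\ell(u)-\ell(v)|\le 2$, i.e.\ $|\ell(u)-\ell(v)|\le 1$.

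For the second assertion I would reduce it to the first. If $u$ and $v$ are neighbouring in $G_0'$, then the edge $uv$ lies in $\varSigma$ on the boundary of at least one face $f$ of $(G_0',\eta)$ (in a surface-embedded graph every edge is incident to one or two faces). Hence both $u$ and $v$ are incident to $f$, and the first part applies verbatim to conclude $|\ell(u)-\ell(v)|\le 1$.

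The argument is routine and I do not anticipate a genuine obstacle; the only points needing a line of care are the parity bookkeeping (that $d^*(o,v)$ is odd and equals $2\ell(v)-1$, immediate from the definition of the layers $L_i$) and the elementary topological fact that every edge of a surface-embedded graph lies on the boundary of some face. Both are standard, so the proof is short.
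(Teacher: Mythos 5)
Your proof is correct and matches the paper's argument essentially verbatim: both use the triangle inequality in the VFI graph (a common incident face gives a length-$2$ path $(u,f,v)$, so $|(2\ell(u)-1)-(2\ell(v)-1)|\le 2$), and both reduce the adjacency case to the common-face case via the fact that every edge of a surface-embedded graph is incident to some face. No gaps.
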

\begin{proof}
Let $u,v \in V(G_0')$.
If $u$ and $v$ are incident to a common face of $(G_0',\eta)$, then the vertex-face distance between $u$ and $v$ in $(G_0',\eta)$ is $2$.
Note that the vertex-face distance between $o$ and $u$ (resp., $v$) in $(G_0',\eta)$ is $2\ell(u)-1$ (resp., $2\ell(v)-1$).
By the triangle inequality, we have $|(2\ell(u)-1) - (2\ell(v)-1)| \leq 2$, which implies $|\ell(u) - \ell(v)| \leq 1$.
If $u$ and $v$ are neighboring in $G_0'$, then $u$ and $v$ are incident to a common face of $(G_0',\eta)$.
Indeed, there exists a face of $(G_0',\eta)$ incident to the edge $(u,v)$, and hence incident to both $u$ and $v$.
Therefore, $|\ell(u) - \ell(v)| \leq 1$.
\end{proof}

\begin{fact} \label{fact-sameface}
Let $o_i$ be the outer face of $(G_0'[L_{\geq i}],\eta)$.
The following statements are true. \\
\textbf{(i)} $o_i$ is incident to all vertices in $L_i$ but no vertex in $L_{> i}$. \\
\textbf{(ii)} $o_i$ is the outer face of $(G_0'[L'],\eta)$ for any $L'$ that $L_i \subseteq L' \subseteq L_{\geq i}$. \\
\textbf{(iii)} The image of any vertex $v \in L_1^{i-1}$ in $\varSigma$ under $\eta$ lies in the interior of $o_i$.
\end{fact}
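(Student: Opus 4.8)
The plan is to prove the three statements together by induction on $i$, carrying a slightly strengthened hypothesis. Write $H_i = G_0'[L_{\geq i}]$, so that $H_1 = G_0'$ and $o_1 = o$, and for $i\ge 1$ let the induction hypothesis be: \textbf{(a)} $o_i$ is incident to every vertex of $L_i$ and to no vertex of $L_{>i}$ (this is part \textbf{(i)}); \textbf{(c)} $\eta(v)$ lies in the interior of $o_i$ for every $v\in L_1^{i-1}$ (this is part \textbf{(iii)}); and the auxiliary invariant \textbf{(e)}: the interior of every face of $(G_0',\eta)$ at vertex-face distance at most $2i-2$ from $o$ is contained in the interior of $o_i$ (whence also the relative interior of every edge of $G_0'$ with an endpoint in $L_1^{i-1}$ lies in the interior of $o_i$, since such an edge is incident to two faces at distance $\le 2i-2$). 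The base case $i=1$ is immediate from the definition of $L_1$.

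For the inductive step, observe that $H_{i+1}$ is obtained from $H_i$ by deleting the vertex set $L_i$, which by \textbf{(a)} lies entirely on the boundary of $o_i$. Consequently, deleting $L_i$ merges $o_i$ with exactly those faces of $H_i$ that are incident to some vertex of $L_i$, and the merged region is $o_{i+1}$ (it still contains $x_\mathsf{out}$). To identify these faces, I first use \textbf{(c)} and \textbf{(e)} to show that the faces of $H_i$ other than $o_i$ are precisely the faces of $(G_0',\eta)$ at vertex-face distance $\ge 2i$ from $o$: deleting $L_1^{i-1}$ from $G_0'$ only merges faces of $G_0'$ incident to a deleted vertex, all such faces are at distance $\le 2i-2$, and by \textbf{(e)} they (together with the deleted vertices and edges) have already been absorbed into $o_i$; every other face of $G_0'$, namely those at distance $\ge 2i$, is untouched and hence remains a face of $H_i$. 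Now a face $f$ of $H_i$ incident to some $v\in L_i$ satisfies $\mathrm{dist}(o,f)\le \mathrm{dist}(o,v)+1 = 2i$, so together with $\mathrm{dist}(o,f)\ge 2i$ we get $\mathrm{dist}(o,f)=2i$; conversely every distance-$2i$ face is incident to a vertex of $L_i$. From this, \textbf{(a)} for $i+1$ follows: a vertex of $H_{i+1}$ incident to $o_{i+1}$ must lie in the closure of $o_i$, or in a deleted edge incident to $L_i$, or in a distance-$2i$ face, and in all three cases, by the triangle inequality for vertex-face distances and Fact~\ref{fact-diff1}, that vertex is in $L_i\cup L_{i+1}$, hence (being a vertex of $H_{i+1}$) in $L_{i+1}$; conversely every $v\in L_{i+1}$ lies on a shortest $o$-to-$v$ path whose last internal vertex is a distance-$2i$ face that gets merged into $o_{i+1}$, so $v$ becomes incident to $o_{i+1}$. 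The invariants \textbf{(c)} and \textbf{(e)} for $i+1$ are then bookkeeping, using $\mathrm{int}(o_i)\subseteq\mathrm{int}(o_{i+1})$ together with the fact that exactly the vertices of $L_i$ and the distance-$2i$ faces are newly absorbed.

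Part \textbf{(ii)} follows from part \textbf{(i)} alone, without induction. By \textbf{(i)}, the boundary subgraph $\partial o_i$ of the face $o_i$ of $H_i$ has all of its vertices in $L_i$, and hence all of its edges between vertices of $L_i$; since $L_i\subseteq L'$ and $G_0'[L']$ is an induced subgraph of $G_0'$, the whole subgraph $\partial o_i$ survives in $G_0'[L']$. Therefore the topological boundary of the open region $o_i$ is contained in $\eta(G_0'[L'])$, so $o_i$ is both open and closed in $\varSigma\setminus\eta(G_0'[L'])$; being connected and nonempty, it is a connected component of $\varSigma\setminus\eta(G_0'[L'])$, i.e.\ a face of $(G_0'[L'],\eta)$, and since it contains $x_\mathsf{out}$ it is the outer face.

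The step I expect to be the main obstacle is making the face-merging analysis rigorous, specifically the claim that the faces of $H_i$ other than $o_i$ are exactly the faces of $G_0'$ at distance $\ge 2i$; this is where the strengthened invariants \textbf{(c)} and \textbf{(e)} are essential, and it requires some care because on the genus-$h$ surface $\varSigma$ the faces need not be disks and their boundaries need not be connected. The remaining arguments use only local incidence and the triangle inequality for vertex-face distance, so positive genus is not an issue for them.
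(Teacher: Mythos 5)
Your proposal is correct and follows essentially the same route as the paper's own proof: induction on the layer index, combining Fact~\ref{fact-diff1} with the standard topology of how faces merge when boundary vertices of the current outer face are deleted, exactly as the paper does for parts \textbf{(i)} and \textbf{(iii)}. The points where you deviate — carrying the auxiliary invariant \textbf{(e)} so as to classify all faces of $G_0'[L_{\geq i}]$ as $o_i$ plus the distance-$\geq 2i$ faces of $G_0'$ and then deleting a whole layer at once (where the paper instead removes the single vertex $v_{j-1}$ on a shortest VFI path to merge $f_{j-1}$ into the outer face), and your direct "open-and-closed in $\varSigma \setminus \eta(G_0'[L'])$" argument for \textbf{(ii)} in place of the paper's two-sided containment via $\partial o_i$ — are organizational rather than substantive, and are at the same level of topological rigor as the paper's argument.
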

\begin{proof}
We prove the statements by induction on $i$.
The base case is $i=1$.
We have $L_{\geq 1} = V(G_0')$, and hence $o_1 = o$ (recall that $o$ is the outer face of $(G_0',\eta)$).
Note that $o$ is incident to exactly the vertices whose vertex-face distance to $o$ is $1$, i.e., the vertices in $L_1$.
Therefore, $o_1$ is incident to all vertices in $L_1$ but no vertex in $L_{\geq 2}$.
%In other words, $V(\partial o_1) = V(\partial o) = L_1$.
Now consider a set $L'$ that $L_1 \subseteq L' \subseteq L_{\geq 1}$.
Since $(G_0'[L'],\eta)$ is a subgraph of $(G_0'[L_{\geq 1}],\eta) = (G_0',\eta)$, the outer face $o_{L'}$ of $(G_0'[L'],\eta)$ contains $o$.
On the other hand, $(\partial o,\eta)$ is a subgraph of $(G_0'[L_1],\eta)$, which is in turn a subgraph of $(G_0'[L'],\eta)$.
So $o_{L'}$ is contained in the outer face of $(\partial o,\eta)$, which is clearly $o$.
%Furthermore, $o$ itself is a face of $(\partial o,\eta)$, which is clearly the outer face of $(\partial o,\eta)$.
Thus, we have $o_{L'} = o = o_1$.
So statements \textbf{(i)} and \textbf{(ii)} hold for $i=1$, while statement \textbf{(iii)} also holds since $L_1^0 = \emptyset$.
%We first notice that all vertices in $L_{i^-}$ are incident to the outer face of $(G_0'[\bigcup_{i=i^-}^{i^+} L_i],\eta)$.

Suppose the statements hold for all $i < j$, and we want to show they hold for $i = j$.
%Let $o_{j-1}$ and $o_j$ be the outer faces of $G_0'[L_{\geq j-1}]$ and $G_0'[L_{\geq j}]$, respectively.
Consider the outer face $o_j$ of $(G_0'[L_{\geq j}],\eta)$.
To verify statement \textbf{(i)}, we first show that $o_j$ is incident to all vertices in $L_j$.
We have $o_{j-1} \subseteq o_j$ as $(G_0'[L_{\geq j}],\eta)$ is a subgraph of $(G_0'[L_{\geq j-1}],\eta)$.
%We first show that $V(\partial o_j) = L_j$.
Let $v \in L_j$ be a vertex.
The vertex-face distance between $o$ and $v$ in $(G_0',\eta)$ is $2j-1$, so there is a shortest path $\pi = (o,v_1,f_1,\dots,v_{j-1},f_{j-1},v)$ in the VFI of $(G_0',\eta)$ connecting $o$ and $v$ where $v_1,\dots,v_{j-1} \in V(G_0')$ and $f_1,\dots,f_{j-1} \in F_\eta(G_0')$.
Since $\pi$ is a shortest path, the vertex-face distance between $o$ and $v_{j-1}$ in $(G_0',\eta)$ is $2j-3$, i.e., $v_{j-1} \in L_{j-1}$.
Note that $v_{j-1}$ is incident to $f_{j-1}$.
Also, $v_{j-1}$ is incident to $o_{j-1}$, by our induction hypothesis.
Therefore, if we remove $v_{j-1}$ from $(G_0'[L_{\geq j-1}],\eta)$, then $f_{j-1}$ will be ``merged'' to the outer face of the resulting graph.
It follows that $f_{j-1} \subseteq o_j$.
Since $v$ is incident to $f_{j-1}$, it is also incident to $o_j$.
This shows $o_j$ is incident to all vertices in $L_j$.
Next, we show that $o_j$ is incident to no vertex in $L_{> j}$.
Consider a vertex $v \in L_{> j}$ and a face $f \in F_\eta(G_0')$ of $(G_0',\eta)$ incident to $v$.
By Fact~\ref{fact-diff1}, for any vertex $v' \in V(G_0')$ incident to $f$, $|\ell(v') - \ell(v)| \leq 1$ and hence $\ell(v') \geq j$.
In other words, all vertices incident to $f$ are contained in $L_{\geq j}$.
Therefore, $f$ is also a face of $(G_0'[L_{\geq j}],\eta)$.
Note that $f \neq o$ since $o$ is only incident to the vertices in $L_1$.
It follows that $x_\mathsf{out} \notin f$ and thus $f$ is not the outer face of $(G_0'[L_{\geq j}],\eta)$.
This shows $o_j$ is incident to no vertex in $L_{> j}$.
So statement \textbf{(i)} follows.

%Finally, we show that $o_j$ is the outer face of $(G_0'[L'],\eta)$ for any $L'$ that $L_j \subseteq L' \subseteq L_{\geq j}$.
Next, we verify statement \textbf{(ii)}.
Since $(G_0'[L'],\eta)$ is a subgraph of $(G_0'[L_{\geq j}],\eta)$, the outer face $o_{L'}$ of $(G_0'[L'],\eta)$ contains $o_j$.
On the other hand, because $o_j$ is not incident to any vertex in $L_{> j}$, $(\partial o_j,\eta)$ is a subgraph of $(G_0'[L_j],\eta)$, which is in turn a subgraph of $(G_0'[L'],\eta)$.
So $o_{L'}$ is contained in the outer face of $(\partial o_j,\eta)$, which is clearly $o_j$.
%Furthermore, $o$ itself is a face of $(\partial o,\eta)$, which is clearly the outer face of $(\partial o,\eta)$.
Thus, $o_{L'} = o_j$ and statement \textbf{(ii)} holds.

Finally, we verify statement \textbf{(iii)}.
By our induction hypothesis, the image of any $v \in L_1^{j-2}$ lies in the interior of $o_{j-1}$ and hence in the interior of $o_j$ as $o_{j-1} \subseteq o_j$.
So it suffices to consider the vertices in $L_{j-1}$.
Again, by the induction hypothesis, all vertices in $L_{j-1}$ are incident to $o_{j-1}$ and hence their images are contained in $o_j$.
To further show that they are contained in the interior of $o_j$, observe that the image of any vertex in $V(G_0') \backslash L_{\geq j}$ must lie in the interior of some face of $(G_0'[L_{\geq j}],\eta)$, simply because $\eta$ is an embedding.
Therefore, the images of the vertices in $L_{j-1}$ lie in the interior of $o_j$, proving statement \textbf{(iii)}.
\end{proof}

We say a face $f \in F_\eta(G_0')$ \textit{hits} a layer $L_i$ if $f$ is incident to some vertex in $L_i$.
By Fact~\ref{fact-diff1}, any face of $(G_0',\eta)$ can hit at most two layers, which must be consecutive.
Recall the vortex faces $D_1',\dots,D_h'$.
If some layer $L_i$ is hit by some vortex face, we say $L_i$ is a \textit{bad} layer.
Since a face can hit at most two layers and there are $h$ vortex faces, the number of bad layers is at most $2h$.
Now set $p' = p+2h$.
We say a number $q \in [p']$ is \textit{bad} if it is congruent to the index of a bad layer modulo $p'$, i.e., $q \equiv i \pmod{p'}$ for some $i \in [m]$ such that $L_i$ is a bad layer, and is \textit{good} otherwise.
As there are at most $2h$ bad layers, there are at most $2h$ bad numbers in $[p']$.
So we can always find $p$ good numbers $q_1,\dots,q_p \in [p']$.
We then construct the sets $Z_1,\dots,Z_p$ by simply defining $Z_i$ as the union of all layers whose indices are congruent to $q_i$ modulo $p'$, i.e., $Z_i = \bigcup_{j=0}^{\lfloor(m-q_i)/p'\rfloor} L_{jp'+q_i}$.
Note that $Z_i \subseteq V(G_0') = V(G_0) \subseteq V(G)$ for $i \in [p]$ and $Z_1,\dots,Z_p$ are disjoint.
Furthermore, it is clear that $Z_1,\dots,Z_p$ can be computed in polynomial time given the almost-embeddable structure of $G$.

%almost-embeddable structure $(G_0,G_1,\dots,G_h,A)$ and $(\eta,D_1,\dots,D_h,\mathcal{P}_1,\dots,\mathcal{P}_h)$.
%Recall that $\eta$ is an embedding of $G_0$ to a surface $\varSigma$ of genus $h$.

%The sets $Z_1,\dots,Z_p$ we are going to construct are all subsets of $V$.

\subsection{Bounding the treewidth of $G/(Z_i \backslash Z')$} \label{sec-boundingtw}
To prove Lemma~\ref{lem-key}, it now suffices to show the sets $Z_1,\dots,Z_p$ constructed in the previous section satisfy that for any $i \in [p]$ and $Z' \subseteq Z_i$, $\mathbf{tw}(G/(Z_i \backslash Z')) = O(p+|Z'|)$.
Without loss of generality, it suffices to consider the case $i = 1$, as the constructions of $Z_1,\dots,Z_p$ are the same.
Let $Z' \subseteq Z_1$ be an arbitrary subset.
Our goal is to show $\mathbf{tw}(G/(Z_1 \backslash Z')) = O(p+|Z'|)$.

Recall that $Z_1 = \bigcup_{j=0}^{\lfloor(m-q)/p'\rfloor} L_{jp'+q}$ for some good number $q \in [p']$.
For simplicity of exposition, let us define $L_i = \emptyset$ for any integer $i \in (-\infty,0] \cup [m+1,+\infty)$ and write $i_j = (j-2)p' + q$ for any $j \in \mathbb{N}$.
Then we can write $Z_1 = \bigcup_{j=1}^{m'} L_{i_j}$, where $m' = \lfloor(m-q)/p'\rfloor + 2$.
We first observe a relation between the treewidth of $G/(Z_1 \backslash Z')$ and the treewidth $G_0'/(Z_1 \backslash Z')$.

\begin{lemma} \label{lem-relation}
$\mathbf{tw}(G/(Z_1 \backslash Z')) = O(h \cdot \mathbf{tw}(G_0'/(Z_1 \backslash Z'))+h)$.
\end{lemma}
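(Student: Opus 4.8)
\textbf{Proof plan for Lemma~\ref{lem-relation}.}

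The plan is to start from the almost-embeddable structure of $G$ and build a tree decomposition of $G/(Z_1\backslash Z')$ out of a tree decomposition of $G_0'/(Z_1\backslash Z')$ by absorbing the apex set, the vortices, and the virtual edges at a cost of only a multiplicative $O(h)$ factor plus an additive $O(h)$. First I would fix an optimal tree decomposition $\mathcal{T}_0$ of $G_0'/(Z_1\backslash Z')$ of width $w=\mathbf{tw}(G_0'/(Z_1\backslash Z'))$. Recall $G-A=G_0\cup G_1\cup\cdots\cup G_h$ and $G_0'$ is $G_0$ plus the virtual edges drawn inside the vortex disks, so $V(G_0')=V(G_0)$; crucially $Z_1\subseteq V(G_0)$, so contracting $Z_1\backslash Z'$ makes sense simultaneously in $G$, in $G_0$, and in $G_0'$, and the connected components of $G_0[Z_1\backslash Z']$ and of $G[Z_1\backslash Z']$ restricted to $V(G_0)$ agree once we note $N_G$ of such a component can reach outside $G_0$ only through the (at most $h$) apex vertices and the (at most $h$, and bounded-width) vortices.

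The main steps are then: (1) \emph{Handle the apex set.} Add all of $A$ (more precisely, its images, but $A\cap(Z_1\backslash Z')=\emptyset$ since $Z_1\subseteq V(G_0)$, so $A$ survives intact) to every bag of $\mathcal{T}_0$; this raises the width by at most $|A|\le h$. (2) \emph{Handle the vortices.} For each vortex $G_i$ attached to the facial disk $D_i$, its witness path decomposition $\mathcal{P}_i$ has width $\le h$ and its underlying path $\pi_i=(u_{i,1},\dots,u_{i,q_i})$ satisfies $v_{i,j}\in\beta(u_{i,j})$. The key structural input here is that $L_i$ is a \emph{good} layer choice: no vortex face $D_i'$ hits any layer in $\{i_j\}$, hence $V(G_0)\cap_\eta D_i$ (i.e.\ the boundary vertices $v_{i,1},\dots,v_{i,q_i}$, lying on consecutive layers that differ by at most $1$ by Fact~\ref{fact-diff1}) is disjoint from $Z_1\backslash Z'$ — so every attachment vertex $v_{i,j}$ either survives as itself or, if in $Z'$, survives as a singleton-image. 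Consequently $\{v_{i,1},\dots,v_{i,q_i}\}$ forms a path (via the virtual edges) in $G_0'/(Z_1\backslash Z')$, so its images lie in bags forming a connected subtree of $\mathcal{T}_0$; I would attach a caterpillar built from $\mathcal{P}_i$ along this subtree, gluing bag $\beta(u_{i,j})$ of $\mathcal{P}_i$ (size $\le h+1$) to a bag of $\mathcal{T}_0$ containing $v_{i,j}$. This is the standard ``sew a path decomposition onto the consecutive bags'' argument and costs an additive $O(h)$ per vortex, but since the vortices are vertex-disjoint from each other and from $G_0$ (except at boundary vertices), the additive costs do not stack — the width grows by $O(h)$ total, not $O(h^2)$. (3) \emph{Remove virtual edges.} The virtual edges are not in $G$, but they are edges of $G_0'$; since we only used them to get connectivity of attachment vertices, and a tree decomposition of a supergraph is a tree decomposition of the subgraph, deleting them is free. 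After these modifications we obtain a tree decomposition of $G/(Z_1\backslash Z')$ whose width is $O(w)+O(h)=O(h\cdot w+h)$ (the $h$ multiplicative factor is a convenient slack, e.g.\ to absorb the per-vortex contributions uniformly).

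The step I expect to be the main obstacle is (2), the vortex handling, and specifically verifying that after contracting $Z_1\backslash Z'$ the attachment vertices $v_{i,1},\dots,v_{i,q_i}$ still sit in a connected subtree of $\mathcal{T}_0$ and that sewing $\mathcal{P}_i$ on respects property (iii) of a tree decomposition (connectivity of the occurrence set of each vertex). The delicate point is a vertex of $G_i$ that is \emph{not} an attachment vertex: it occurs only in $\mathcal{P}_i$, so it is fine; but an attachment vertex $v_{i,j}$ now occurs both in (the image of) bags of $\mathcal{T}_0$ and in the newly-attached caterpillar, and I must ensure these two occurrence sets meet — which they do precisely because the caterpillar node carrying $\beta(u_{i,j})$ is glued to a $\mathcal{T}_0$-node containing $v_{i,j}$. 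A secondary subtlety is edges of $G$ between an apex vertex and a vortex-interior vertex, or between two apex vertices: these are covered because $A$ now lies in every bag, including every bag of every attached caterpillar, so every edge incident to $A$ is covered. Once these bookkeeping checks go through, the width bound is immediate. I would also remark that we do not need $G_i$'s to be edge-disjoint from $G_0'$ in any strong sense — overlaps only at boundary vertices of $D_i$ — which is exactly what the definition of almost-embeddable guarantees via $V(G_0)\cap_\eta D_i=V(G_0)\cap V(G_i)$.
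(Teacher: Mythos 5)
Your overall skeleton is the same as the paper's: observe that $Z_1$ lives in good layers, hence avoids the apex set, the vortex attachment vertices, and the virtual edges, so $G[Z_1\backslash Z'] = G_0'[Z_1\backslash Z']$; start from a tree decomposition $\mathcal{T}_0$ of $G_0'/(Z_1\backslash Z')$; add $A$ to all bags; and absorb each vortex via its witness path decomposition, using the virtual-edge path on $(v_{i,1},\dots,v_{i,q_i})$ to control connectivity. Those preliminary observations are correct and match the paper. The gap is in your step (2), the vortex handling, which is exactly the step you flagged as delicate. As described, ``attach a caterpillar built from $\mathcal{P}_i$, gluing bag $\beta(u_{i,j})$ to a $\mathcal{T}_0$-node containing $v_{i,j}$'' cannot be realized as a tree decomposition: if the caterpillar nodes $u_{i,1},\dots,u_{i,q_i}$ stay internally connected as a path (which you need so that a vortex-interior vertex, occurring in a consecutive range of bags $\beta(u_{i,j})$, has a connected occurrence set) and in addition each $u_{i,j}$ is attached to its own $\mathcal{T}_0$-node (which you need so that the occurrence sets of the attachment vertices $v_{i,j}$ meet), the underlying graph of nodes contains cycles and is no longer a tree. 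Dropping either requirement breaks condition (iii) of a tree decomposition: pendant leaves without the internal path disconnect the occurrences of interior vortex vertices; a single-point attachment disconnects the occurrences of the attachment vertices. Repairing this by routing the offending vertices along paths of $\mathcal{T}_0$ reintroduces an uncontrolled width blow-up, and your claim that the total increase is only additive $O(h)$ is unsubstantiated.

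The paper's mechanism avoids this by not creating new nodes at all: for every $j$, the whole bag $\beta(u_{i,j})$ of $\mathcal{P}_i$ is added to every node of $\mathcal{T}_0$ whose bag contains $v_{i,j}$ (then $A$ is added everywhere). Connectivity for an interior vortex vertex $x$, which lies in $\beta(u_{i,j})$ for a consecutive range of $j$, follows because consecutive attachment vertices are adjacent in $G_0'/(Z_1\backslash Z')$ via the virtual edges, so their occurrence subtrees pairwise intersect and the union over the range is connected. This is also where the multiplicative factor in the statement comes from: a single bag of $\mathcal{T}_0$ may contain up to $w+1$ attachment vertices (from possibly the same vortex but far-apart indices $j$), each contributing a vortex bag of size $O(h)$, giving width $O(hw+h)$ rather than $O(w+h)$. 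So the bound you should be aiming for is the stated $O(h\cdot\mathbf{tw}(G_0'/(Z_1\backslash Z'))+h)$, and the clean way to get it is to fold the vortex bags into the existing bags rather than to sew on new nodes.
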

\begin{proof}
We first notice that $G[Z_1 \backslash Z'] = G_0[Z_1 \backslash Z'] = G_0'[Z_1 \backslash Z']$.
We have $G-A = G_0 \cup (\bigcup_{i=1}^h G_i)$.
The vertices in the intersection of $G_0$ and the vortices $G_1,\dots,G_h$ are all on the boundaries of the vortex faces $D_1',\dots,D_h'$.
Hence, these vertices all lie in the bad layers of $G_0'$, while $Z_1$ is the union of the good layers $L_{j_1},\dots,L_{j_{m'}}$.
This implies $Z_1 \subseteq V(G_0) \backslash (\bigcup_{i=1}^h V(G_i))$ and thus $G[Z_1 \backslash Z'] = G_0[Z_1 \backslash Z']$.
To see $G_0[Z_1 \backslash Z'] = G_0'[Z_1 \backslash Z']$, recall that $G_0'$ is obtained from $G_0$ by adding some virtual edges.
The virtual edges are all on the boundaries of the vortex faces $D_1',\dots,D_h'$, and thus are disjoint from $Z_1$.
So we have $G_0[Z_1 \backslash Z'] = G_0'[Z_1 \backslash Z']$.
Based on this fact, we see that $G_0/(Z_1 \backslash Z')$ is a subgraph of $G_0'/(Z_1 \backslash Z')$ which has the same vertex set as $G_0'/(Z_1 \backslash Z')$, and $G_0/(Z_1 \backslash Z')$ is also a subgraph of $G/(Z_1 \backslash Z')$.
Therefore, the vertex set of $G_0'/(Z_1 \backslash Z')$ is a subset of the vertex set of $G/(Z_1 \backslash Z')$.

Now let $\mathcal{T}^*$ be a tree decomposition of $G_0'/(Z_1 \backslash Z')$ of width $w$ and $T^*$ be the underlying tree of $\mathcal{T}^*$.
We are going to modify $\mathcal{T}^*$ to a tree decomposition of $G/(Z_1 \backslash Z')$ of width $O(hw + h)$, which proves the claim.
To this end, we apply the same argument as in \cite{demaine2005subexponential} (Lemma 5.8).
For convenience, we do not distinguish the vertices in $V(G) \backslash (Z_1 \backslash Z')$ with their images in $G/(Z_1 \backslash Z')$.
For each vertex $v$ of $G_0'/(Z_1 \backslash Z')$, we use $T^*(v)$ to denote the subset of nodes in $T^*$ whose bags contain $v$, which is connected as $\mathcal{T}^*$ is a tree decomposition.
Consider a vortex $G_i$ with the witness pair $(\tau_i,\mathcal{P}_i)$.
Suppose $\sigma_i = (v_{i,1},\dots,v_{i,q_i})$.
Then $\mathcal{P}_i$ is a path decomposition of $G_i$ with path $P = (u_{i,1},\dots,u_{i,q_i})$ such that $v_{i,j} \in \beta(u_{i,j})$ for all $j \in [q_i]$.
We then add the bag $\beta(u_{i,j})$ of $\mathcal{P}_i$ to the bags of all vertices in $T^*(v_{i,j})$.
We do this for all vortices $G_1,\dots,G_h$.
After that, we add the apex set $A$ to the bags of all vertices in $T^*$.
It is easy to verify that $\mathcal{T}^*$ (after the modification) is a tree decomposition of $G/(Z_1 \backslash Z')$.
Indeed, the bags of $\mathcal{T}^*$ cover every edge of $G/(Z_1 \backslash Z')$: the edges in $G_0/(Z_1 \backslash Z')$ are covered by the original bags of $\mathcal{T}^*$, the edges in each vortex $G_i$ are covered by the bags of the path decomposition $\mathcal{P}_i$ (which are added to the bags of the corresponding vertices in $T^*$), and the edges adjacent to the apex set $A$ are also covered because we add $A$ to all bags of $\mathcal{T}^*$.
Furthermore, for any vertex $v$ of $G/(Z_1 \backslash Z')$, the nodes whose bags containing $v$ are connected in $T^*$; this follows from the fact that $(v_{i,1},\dots,v_{i,q_i})$ forms a path in $G_0'/(Z_1 \backslash Z')$ (which consists of virtual edges) for all $i \in [h]$ and thus $\bigcup_{j=j^-}^{j^+} T^*(v_{i,j})$ is connected in $T^*$ for any $j^-,j^+ \in [q_i]$.
Finally, we observe that the width of $\mathcal{T}^*$ is $O(hw+h)$.
Consider a node $t^* \in T^*$.
Originally, the size of the bag $\beta(t^*)$ is at most $w+1$.
If a vortex $G_i$ contains $c_i$ vertices in (the original) $\beta(t^*)$, then we added $c_i$ bags of the path decomposition $\mathcal{P}_i$ to $\beta(t^*)$.
Since the vortices $G_1,\dots,G_h$ are disjoint, each vertex in $\beta(t^*)$ can be contained in at most one vortex, which implies $\sum_{i=1}^h c_i \leq w+1$.
Therefore, we added at most $w+1$ bags of the path decompositions $\mathcal{P}_1,\dots,\mathcal{P}_h$ to $\beta(t^*)$, each of which has size $O(h)$.
So after this step, the size of $\beta(t^*)$ is $O(hw)$.
Then after we added the apex set $A$ to $\beta(t^*)$, the size $\beta(t^*)$ is $O(hw+h)$ because $|A| \leq h$.
\end{proof}

With the above lemma in hand, it now suffices to show $\mathbf{tw}(G_0'/(Z_1 \backslash Z')) = O(p+|Z'|)$.
We shall construct explicitly a tree decomposition of $G_0'/(Z_1 \backslash Z')$ of width $O(p+|Z'|)$.
To this end, we first need to define a support tree $T_\mathsf{supp}$ as follows.
Roughly speaking, $T_\mathsf{supp}$ is a tree that interprets the containment relation between the connected components of $G_0'[L_{> i_1}],\dots,G_0'[L_{> i_{m'}}]$.
The depth of $T_\mathsf{supp}$ is $m'$.
The root (i.e., the node in the 0-th level) of $T_\mathsf{supp}$ is a dummy node.
For all $j \in [m']$, the nodes in the $j$-th level of $T_\mathsf{supp}$ are one-to-one corresponding to the connected components of $G_0'[L_{> i_j}]$.
%Note that $i_1 = -p'+q \leq 0$ and thus $G_0'[L_{\geq i_1}] = G_0'$.
%So the nodes in the first level of $T_\mathsf{supp}$ one-to-one correspond to the connected components of $G_0'$.
The parent of the nodes in the first level is just the root.
The parents of the nodes in the lower levels are defined as follows.
Consider a node $t \in T_\mathsf{supp}$ in the $j$-th level for $j \geq 2$, and let $C_t$ be the connected component of $G_0'[L_{> i_j}]$ corresponding to $t$.
Since $G_0'[L_{> i_j}]$ is a subgraph of $G_0'[L_{> i_{j-1}}]$, $C_t$ is contained in a unique connected component of $G_0'[L_{> i_{j-1}}]$, which corresponds to a node $t'$ in the $(j-1)$-th level.
We then define the parent of $t$ as $t'$.
For each node $t \in T_\mathsf{supp}$, we associate to $t$ a set $V_t \subseteq V(G_0')$ defined as follows.
If $t$ is the root of $T_\mathsf{supp}$, $V_t = \emptyset$.
Suppose $t$ is in the $j$-th level for $j \in [m']$ and is corresponding to the connected component $C_t$ of $G_0'[L_{> i_j}]$.
Then let $V_t$ consist of all vertices $v \in V(C_t)$ satisfying $i_j < \ell(v) \leq i_{j+1}$.
We notice the following simple fact.

\begin{fact} \label{fact-support}
The support tree $T_\mathsf{supp}$ and the associated sets $V_t$ satisfy the following properties. \\
\textbf{(i)} $\{V_t\}_{t \in T_\mathsf{supp}}$ is a partition of $V(G_0')$. \\
\textbf{(ii)} The vertices in each connected component of $G_0'[Z_1 \backslash Z']$ are contained in the same $V_t$. \\
\textbf{(iii)} If two vertices $v \in V_t$ and $v' \in V_{t'}$ are neighboring in $G_0'$ and $t \neq t'$, then either $t$ is the parent of $t'$ or $t'$ is the parent of $t$.
\end{fact}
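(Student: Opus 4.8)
The plan is to read off all three properties directly from the definitions of the layers $L_i$, the indices $i_j$, and the support tree $T_\mathsf{supp}$, with Fact~\ref{fact-diff1} as essentially the only nontrivial ingredient. First I would record the relevant arithmetic. Since $i_j = (j-2)p'+q$ with $q \in [p']$ and $p' = p+2h \ge 2$, the sequence $i_1 < i_2 < \cdots$ strictly increases with consecutive gap $p'$, we have $i_1 \le 0$, and $i_{m'+1} > m$ by the choice of $m'$. Consequently $L_{>i_1} = V(G_0')$ and $L_{i_1} = \emptyset$, so $Z_1 \backslash Z' \subseteq Z_1 = \bigcup_{j=2}^{m'} L_{i_j}$; moreover the half-open intervals $(i_j,i_{j+1}]$ for $j \in [m']$ are pairwise disjoint and cover $[m] \supseteq \{\ell(v): v \in V(G_0')\}$.

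For \textbf{(i)}, given $v \in V(G_0')$ I would take the unique $j \in [m']$ with $i_j < \ell(v) \le i_{j+1}$; then $v \in L_{>i_j}$ lies in a unique component of $G_0'[L_{>i_j}]$, which names a level-$j$ node $t$ with $v \in V_t$. Conversely, if $v \in V_{t'}$ with $t'$ at level $j'$, then $i_{j'} < \ell(v) \le i_{j'+1}$ forces $j' = j$ by disjointness of the intervals, and then $C_{t'}$ must be the component of $G_0'[L_{>i_j}]$ containing $v$, so $t' = t$. Hence the $V_t$'s partition $V(G_0')$.

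The heart of the argument is \textbf{(ii)}, and it is the one step I expect to require care. The point is that two $G_0'$-adjacent vertices that both lie in $Z_1$ must lie in the \emph{same} layer: their $\ell$-values belong to $\{i_2,\dots,i_{m'}\}$, they differ by at most $1$ by Fact~\ref{fact-diff1}, yet distinct elements of $\{i_2,\dots,i_{m'}\}$ differ by at least $p' \ge 2$. So every component $K$ of $G_0'[Z_1 \backslash Z']$ has a single value $\ell \equiv i_j$ on it for some $j \in \{2,\dots,m'\}$; then $V(K) \subseteq L_{>i_{j-1}}$, and $K$, being a connected subgraph of $G_0'$, lies inside one component $C_t$ of $G_0'[L_{>i_{j-1}}]$ with $t$ at level $j-1$. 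Each $v \in V(K)$ then satisfies $i_{j-1} < \ell(v) = i_j \le i_{(j-1)+1}$ and $v \in V(C_t)$, i.e. $v \in V_t$, so $V(K) \subseteq V_t$. This is exactly where the choice $p' = p+2h$ (rather than something smaller) is used — so that a connected piece of $Z_1\backslash Z'$ cannot straddle two distinct selected layers — and an off-by-one in the set-up would be fatal here.

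For \textbf{(iii)}, let $v \in V_t$ and $v' \in V_{t'}$ be $G_0'$-adjacent with $t \ne t'$ at levels $j \le j'$. Since $\ell(v) > i_j$ and $\ell(v') > i_{j'} \ge i_j$, both vertices lie in $L_{>i_j}$ and hence in a common component of $G_0'[L_{>i_j}]$; by the uniqueness in \textbf{(i)} this component is $C_t$, so $v' \in V(C_t)$. If $j' = j$ this forces $t' = t$ (again by the bijection at level $j$), a contradiction, so $j < j'$. Fact~\ref{fact-diff1} gives $i_{j'} < \ell(v') \le \ell(v)+1 \le i_{j+1}+1$, hence $i_{j'} \le i_{j+1}$, and strict monotonicity of the $i$'s forces $j' = j+1$. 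Finally $C_{t'}$ is a component of $G_0'[L_{>i_{j+1}}] \subseteq G_0'[L_{>i_j}]$, so it lies inside a unique component of $G_0'[L_{>i_j}]$ — by definition of $T_\mathsf{supp}$ this is the component naming the parent of $t'$ — and since that component contains $v' \in C_t$ it equals $C_t$, so $t$ is the parent of $t'$ (and symmetrically if $j' \le j$). Beyond step \textbf{(ii)}, the rest is just index bookkeeping with no genuine obstacle.
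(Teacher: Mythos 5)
Your proof is correct and takes essentially the same route as the paper's: all three properties are read off from the interval decomposition $i_j < \ell(v) \leq i_{j+1}$ together with Fact~\ref{fact-diff1} (adjacent vertices in $Z_1$ lie in the same selected layer since distinct selected layers are $p' \geq 2$ apart), which is exactly the paper's argument for (i), (ii), and (iii). One small aside: the spacing argument in (ii) only needs $p' \geq 2$, so it is not quite accurate to say this is where the choice $p' = p+2h$ is used — the $+2h$ slack is there to avoid the bad (vortex-incident) layers, not for this step.
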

\begin{proof}
To show property \textbf{(i)}, we first observe that every vertex $v \in V(G_0')$ belongs to $V_t$ for some $t \in T_\mathsf{supp}$.
Indeed, there exists some $j \in [m']$ such that $i_j < \ell(v) \leq i_{j+1}$.
Then $v \in L_{> i_j}$ and thus $v$ is contained in some connected component $C$ of $G_0'[L_{> i_j}]$.
Let $t \in T_\mathsf{supp}$ be the node in the $j$-th level corresponding to $C$.
We have $v \in V_t$.
Next, we show $V_t \cap V_{t'} = \emptyset$ if $t \neq t'$.
Suppose $t$ (resp., $t'$) is in the $j$-th (resp., $j'$-th) level.
If $j < j'$, then $V_t \cap V_{t'} = \emptyset$, as $\ell(v) \leq i_{j+1} \leq i_{j'} < \ell(v')$ for all $v \in V_t$ and $v' \in V_{t'}$.
Similarly, we have $V_t \cap V_{t'} = \emptyset$ if $j > j'$.
So it suffices to consider the case $j = j'$.
In this case, since $t \neq t'$, $t$ and $t'$ correspond to different connected components of $G_0'[L_{> i_j}]$ which contain the vertices in $V_t$ and $V_{t'}$ respectively.
Hence, $V_t \cap V_{t'} = \emptyset$.
This shows $\{V_t\}_{t \in T_\mathsf{supp}}$ is a partition of $V(G_0')$.

Next, we verify property \textbf{(ii)}.
Consider a connected component $C$ of $G_0'[Z_1 \backslash Z']$.
Since $Z_1 = \bigcup_{j=1}^{m'} L_{i_j}$ and by Fact~\ref{fact-diff1} the layers $L_{i_1},\dots,L_{i_{m'}}$ are non-adjacent in $G_0'$ (i.e., there is no edge in $G_0'$ connecting two layers $L_{i_j}$ and $L_{i_{j'}}$ for $j \neq j'$), the vertices in $C$ must be contained in the same layer $L_{i_j}$ for some $j \in [m']$.
Also, since $C$ is connected, it must belong to some connected component of $G_0'[L_{> i_{j-1}}]$, which corresponds to a node $t \in T_\mathsf{supp}$ in the $(j-1)$-th level.
By definition, the vertices in $C$ are contained in $V_t$.

Finally, we show property \textbf{(iii)}.
Now let $v \in V_t$ and $v' \in V_{t'}$ be two vertices neighboring in $G_0'$ and assume $t \neq t'$.
Suppose $t$ (resp., $t'$) is in the $j$-th (resp., $j'$-th) level.
By Fact~\ref{fact-diff1}, $|\ell(v) - \ell(v')| \leq 1$, which implies $|j - j'| \leq 1$.
If $j = j'$, then $V_t$ and $V_{t'}$ belong to different connected components of $G_0'[L_{> i_j}]$ and thus $v$ and $v'$ cannot be neighboring in $G_0'$.
So we have either $j = j'+1$ or $j' = j+1$.
Without loss of generality, assume $j = j'+1$.
Let $t^* \in T_\mathsf{supp}$ be the parent of $t$, and we claim that $t^* = t'$.
Indeed, both $t^*$ and $t'$ are in the $j'$-th level of $T_\mathsf{supp}$.
If $t^* \neq t'$, then $t^*$ and $t'$ correspond to different connected components $C_{t^*}$ and $C_{t'}$ of $G_0'[L_{> i_j}]$ respectively.
Note that $v$ is contained in $C_{t^*}$ and $v'$ is contained in $C_{t'}$, which contradicts with the fact that $v$ and $v'$ are neighboring.
Thus $t^* = t'$.
\end{proof}

Now let us consider the graphs $G_0'[V_t]/(V_t \cap (Z_1 \backslash Z'))$ for $t \in T_\mathsf{supp}$.
For convenience, we write $J_t = G_0'[V_t]/(V_t \cap (Z_1 \backslash Z'))$.
By properties \textbf{(i)} and \textbf{(ii)} of Fact~\ref{fact-support}, each $J_t$ is actually an induced subgraph of $G_0'/(Z_1 \backslash Z')$ and these induced subgraphs are disjoint and cover all vertices of $G_0'/(Z_1 \backslash Z')$.
Furthermore, by property \textbf{(iii)} of Fact~\ref{fact-support}, the induced subgraph $J_t$ is only adjacent to the induced subgraphs at the parent and the children of $t$.
Based on this observation, our next plan is the following.
We shall construct, for each node $t \in T_\mathsf{supp}$, a tree decomposition $\mathcal{T}_t^*$ of $J_t$ of width $O(p+|Z'|)$ with a good property: for each child $s$ of $t$, there is a bag of $\mathcal{T}_t^*$ which contains all vertices of $J_t$ that are neighboring to $J_s$ in $G_0'/(Z_1 \backslash Z')$.
%Suppose we can do this.
By exploiting this good property, we can then ``glue'' the tree decompositions $\{\mathcal{T}_t^*\}_{t \in T_\mathsf{supp}}$ along the edges of $T_\mathsf{supp}$ to obtain a tree decomposition for $G_0'/(Z_1 \backslash Z')$ of width $O(p+|Z'|)$.
%this special property will then allow us to ``glue'' all these tree decompositions along the edges of $T_\mathsf{supp}$ to obtain a tree decomposition for $G_0'/(Z_1 \backslash Z')$ of width $O(p+|Z'|)$.
In what follows, we first describe the gluing step and then show how to construct the tree decompositions $\{\mathcal{T}_t^*\}_{t \in T_\mathsf{supp}}$.

Suppose now we already have the tree decompositions $\{\mathcal{T}_t^*\}_{t \in T_\mathsf{supp}}$.
Let $T_t^*$ be the underlying (rooted) tree of $\mathcal{T}_t^*$.
We are going to glue the trees $T_t^*$ together to obtain a new tree $T^*$ and then give each node in $T^*$ a bag so that it becomes a tree decomposition $\mathcal{T}^*$ of $G_0'/(Z_1 \backslash Z')$.
Consider a non-root node $s \in T_\mathsf{supp}$ with parent $t$.
By the good property of the tree decomposition $\mathcal{T}_t^*$, there is a node $t^* \in T_t^*$ whose bag $\beta(t^*)$ contains all vertices of $J_t$ that are neighboring to $J_s$ in $G_0'/(Z_1 \backslash Z')$; we call $t^*$ the \textit{portal} of $s$.
%We add the vertices in $\beta(t^*)$ to all bags of the nodes in $T_s^*$.
We add an edge to connect the root of $T_s^*$ and $t^*$.
We do this for all non-root nodes in $T_\mathsf{supp}$.
After that, we glue all trees $T_t^*$ together and obtain the new tree $T^*$.
Next, we associate to each node $s^* \in T^*$ a bag $\beta(s^*)$ as follows.
Consider a node $s^* \in T^*$ and suppose $s^*$ originally belongs to $T_s^*$ for $s \in T_\mathsf{supp}$.
If $s$ is the root, we simply define $\beta(s^*)$ as the bag of $s^*$ in the tree decomposition $\mathcal{T}_s^*$.
If $s$ is not the root, let $t$ be the parent of $s$ in $T_\mathsf{supp}$ and $t^* \in T_t^*$ be the portal of $s$.
We define $\beta(s^*)$ as the union of the bag of $s^*$ in $\mathcal{T}_s^*$ and the bag of $t^*$ in $\mathcal{T}_t^*$.
%Each node in $T^*$ originally belongs to (a unique) $T_t^*$ for some $t \in T_\mathsf{supp}$, so it naturally has a bag which is the one in the tree decomposition $\mathcal{T}_t^*$.
%Now we modify the bags of the nodes in $T^*$ as follows.
Let $\mathcal{T}^*$ be the tree $T^*$ with the associated bags.

\begin{fact}
$\mathcal{T}^*$ is a tree decomposition of $G_0'/(Z_1 \backslash Z')$ of width $O(p+|Z'|)$.
\end{fact}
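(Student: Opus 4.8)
The plan is to verify the three axioms of a tree decomposition for $\mathcal{T}^*$ and then to bound its bag sizes, taking for granted the existence of the per-node tree decompositions $\{\mathcal{T}_t^*\}_{t\in T_\mathsf{supp}}$ together with their stated portal property (which are constructed afterwards). The first ingredient I would record is that the graphs $J_t = G_0'[V_t]/(V_t\cap(Z_1\backslash Z'))$ are pairwise vertex-disjoint induced subgraphs of $G_0'/(Z_1\backslash Z')$ whose vertex sets partition $V(G_0'/(Z_1\backslash Z'))$: this is exactly what Fact~\ref{fact-support}(i) and (ii) supply (the partition of $V(G_0')$ by the sets $V_t$, together with the fact that no connected component of $G_0'[Z_1\backslash Z']$ straddles two parts, so that contracting inside each part agrees with the global contraction). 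Vertex coverage of $\mathcal{T}^*$ is then immediate, since every vertex $x$ lies in a unique $J_t$ and hence in some bag of $\mathcal{T}_t^*$, which is contained in the corresponding bag of $\mathcal{T}^*$.

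For edge coverage I would split an edge $(u,v)$ of $G_0'/(Z_1\backslash Z')$ into two cases. If $u,v$ lie in the same $J_t$, the edge is already covered by an original bag of $\mathcal{T}_t^*$. Otherwise, by Fact~\ref{fact-support}(iii) the edge joins $J_t$ to $J_s$ with, say, $s$ a child of $t$ in $T_\mathsf{supp}$; let $t^*$ be the portal of $s$, so that the bag of $t^*$ in $\mathcal{T}_t^*$ contains every vertex of $J_t$ adjacent to $J_s$, in particular $u$. Picking a node $s^*\in T_s^*$ with $v$ in the bag of $s^*$ in $\mathcal{T}_s^*$, and recalling that the $\mathcal{T}^*$-bag of $s^*$ is the union of its $\mathcal{T}_s^*$-bag with the $\mathcal{T}_t^*$-bag of $t^*$, we get $u,v$ together in one bag of $\mathcal{T}^*$.

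The delicate axiom, and where I expect the real work to lie, is connectivity. Fix a vertex $x$ and let $t$ be the unique node with $x\in V(J_t)$. The key structural claim is that $x$ leaks at most one level down the support tree: in $\mathcal{T}^*$, the nodes whose bags contain $x$ are precisely \textbf{(a)} the nodes of $T_t^*$ whose $\mathcal{T}_t^*$-bag contains $x$ — a connected subtree $X_t$ of $T_t^*$, because $\mathcal{T}_t^*$ is a tree decomposition — together with \textbf{(b)} for each child $s$ of $t$ whose portal node $t^*$ satisfies $x\in\beta_{\mathcal{T}_t^*}(t^*)$ (equivalently $t^*\in X_t$), \emph{all} of $T_s^*$, since the gluing adjoins the $\mathcal{T}_t^*$-bag of $t^*$ to every bag of $T_s^*$. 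I would then argue that $x$ occurs nowhere else: it is in no bag of $T_u^*$ for $u$ the parent of $t$ (those bags only ever contain vertices of $J_u$ and of the parent of $u$), and it is in no bag of a block $T_{s'}^*$ with $s'$ a grandchild of $t$, because the portal of $s'$ is a node of $T_s^*$ whose \emph{un-augmented} $\mathcal{T}_s^*$-bag consists only of vertices of $J_s$ — this is exactly the point where it is essential that the portal bags used in the gluing are the per-node bags and not the already-glued ones (otherwise bag sizes would also blow up). Given the description in (a)--(b), connectivity follows: $X_t$ is connected, each relevant $T_s^*$ is connected, and $T_s^*$ is attached to $T_t^*$ by the single edge between the root of $T_s^*$ and its portal $t^*\in X_t$, so the union remains connected in $T^*$. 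Along the way I would also note that $T^*$ is genuinely a tree, being obtained from the tree $T_\mathsf{supp}$ by substituting a tree for each node and a single edge for each parent--child link.

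Finally, for the width: every original bag of every $\mathcal{T}_t^*$ has size $O(p+|Z'|)$ by the construction of those decompositions, and in the gluing each bag of $\mathcal{T}^*$ is either such a bag (in the root block, whose bags are not augmented) or the union of two such bags (the node's own bag in $\mathcal{T}_s^*$ plus the portal bag in the parent block), hence still of size $O(p+|Z'|)$. Therefore the width of $\mathcal{T}^*$ is $O(p+|Z'|)$, and consequently $\mathbf{tw}(G_0'/(Z_1\backslash Z')) = O(p+|Z'|)$, as claimed. The only genuine obstacle in this argument is the one-level-leak claim underpinning connectivity; everything else is bookkeeping over the partition $\{V_t\}$ and the portal property.
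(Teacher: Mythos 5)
Your proposal is correct and follows essentially the same route as the paper's proof: verify edge coverage via the portal bags, establish connectivity by observing that a vertex $x\in V(J_t)$ appears only in the connected set of nodes of $T_t^*$ whose $\mathcal{T}_t^*$-bags contain $x$ plus entire child blocks attached at portals containing $x$, and bound the width since each glued bag is the union of at most two per-node bags. Your explicit justification that $x$ cannot leak past one level (because the gluing uses the un-augmented portal bags) is a point the paper leaves implicit, but it is the same argument.
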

\begin{proof}
By construction, each bag of $\mathcal{T}^*$ is either a bag or the union of two bags of some tree decompositions in $\{\mathcal{T}_t^*\}_{t \in T_\mathsf{supp}}$.
By assumption, the width of each $\mathcal{T}_t^*$ is $O(p+|Z'|)$ and hence the size of the bags of $\mathcal{T}^*$ is also $O(p+|Z'|)$.
So it suffices to show that $\mathcal{T}^*$ is a tree decomposition of $G_0'/(Z_1 \backslash Z')$.
We first show that every edge $e$ of $G_0'/(Z_1 \backslash Z')$ is covered by some bag of $\mathcal{T}^*$.
As observed before, either $e$ belongs to the induced subgraph $J_t$ for some $t \in T_\mathsf{supp}$ or $e$ connects a vertex in $J_s$ for some $s \in T_\mathsf{supp}$ and a vertex in $J_t$ where $t$ is the parent of $s$ in $T_\mathsf{supp}$.
In the first case, since $\mathcal{T}_t^*$ is a tree decomposition of $J_t$, there is a node $t^* \in \mathcal{T}_t^*$ whose bag contains the two endpoints of $e$.
Note that the bag of $t^*$ in $\mathcal{T}^*$ contains the the bag of $t^*$ in $\mathcal{T}_t^*$, and hence also contains the two endpoints of $e$.
In the second case, let $t^* \in T_t^*$ be the portal of $s$.
Then the bag of $t^*$ in $\mathcal{T}_t^*$ contains the endpoint of $e$ in $J_t$, because it contains all vertices of $J_t$ that are neighboring to $J_s$ in $G_0'/(Z_1 \backslash Z')$.
Also, there is a node $s^* \in T_s^*$ whose bag in $\mathcal{T}_s^*$ contains the endpoint of $e$ in $J_s$.
The bag of $s^*$ in $\mathcal{T}^*$, which is by definition the union of the bag of $s^*$ in $\mathcal{T}_s^*$ and the bag of $t^*$ in $\mathcal{T}_t^*$, contains both endpoints of $e$.
Finally, we show that for any vertex $v$ of $G_0'/(Z_1 \backslash Z')$, the nodes of $T^*$ whose bags contain $v$ are connected in $T^*$.
Suppose $v \in V_t$ for $t \in T_\mathsf{supp}$.
%Which bags of $\mathcal{T}^*$ contain $v$?
Observe that $v$ is contained in the bags of two types of nodes in $T^*$.
The first type are the nodes which originally belong to $T_t^*$ and whose bags in $\mathcal{T}_t^*$ contain $v$.
The second type are all nodes which originally belong to $T_s^*$ for some child $s$ of $t$ such that the bag of the portal of $s$ in $\mathcal{T}_t^*$ contains $v$.
The nodes of the first type are connected in $T^*$ because they are connected in $T_t^*$.
The nodes of the second type form some connected parts each of which consists of the nodes originally belonging to $T_s^*$ for some child $s$ of $t$.
The part corresponding to $T_s^*$ is adjacent to the portal of $s$, which is a node of the first type since its bag in $\mathcal{T}_t^*$ contains $v$.
In sum, the nodes of the first type are connected and each connected part formed by the nodes of the second type is adjacent to a node of the first type.
Therefore, the nodes whose bags contain $v$ are connected in $T^*$.
\end{proof}

We have seen above that given the tree decompositions $\{\mathcal{T}_t^*\}_{t \in T_\mathsf{supp}}$, one can construct a tree decomposition of $G_0'/(Z_1 \backslash Z')$ of width $O(p+|Z'|)$, which implies $\mathbf{tw}(G_0'/(Z_1 \backslash Z')) = O(p+|Z'|)$ and further implies $\mathbf{tw}(G/(Z_1 \backslash Z')) = O(p+|Z'|)$ by Lemma~\ref{lem-relation}.
Therefore, to complete the proof, we only need to show the existence of the tree decompositions $\{\mathcal{T}_t^*\}_{t \in T_\mathsf{supp}}$.
Recall that $\mathcal{T}_t^*$ is required to be a tree decomposition of $J_t = G_0'[V_t]/(V_t \cap (Z_1 \backslash Z'))$ of width $O(p+|Z'|)$ such that for each child $s$ of $t$ in $T_\mathsf{supp}$, there is a bag of $\mathcal{T}_t^*$ which contains $N_{t,s} \subseteq V_t$, the set of all vertices of $J_t$ that are neighboring to $J_s$ in $G_0'/(Z_1 \backslash Z')$.
Note that this requirement is equivalent to saying that $\mathcal{T}_t^*$ is a tree decomposition of $J_t'$ of width $O(p+|Z'|)$, where $J_t'$ is the graph obtained from $J_t$ by making $N_{s,t}$ a clique for all children $s$ of $t$ (because each clique in a graph is contained in some bag of its tree decomposition).
So it now suffices to prove $\mathbf{tw}(J_t') = O(p+|Z'|)$ for all $t \in T_\mathsf{supp}$.

Fix $j \in [m']$ and let us show that $\mathbf{tw}(J_t') = O(p+|Z'|)$ for all nodes $t$ in the $j$-th level of $T_\mathsf{supp}$.
Set $i^- = i_j+1$ and $i^+ = i_{j+1}$.
Recall that the sets $V_t$ for the nodes $t$ in the $j$-th level correspond to the vertices of $G_0'$ in the layers $L_{i^-},\dots,L_{i^+}$.
Consider the graph $(G_0'[L_{i^-}^{i^+}],\eta)$.
We say a face $f \in F_\eta(G_0'[L_{i^-}^{i^+}])$ of $(G_0'[L_{i^-}^{i^+}],\eta)$ is \textit{deep} if it is not contained in the outer face of $(G_0'[L_{i^+}],\eta)$, and is \textit{shallow} otherwise.
For each deep face $f \in F_\eta(G_0'[L_{i^-}^{i^+}])$, we denote by $\mathcal{C}_f$ the set of connected components of $G_0'[Z_1 \backslash Z']$ that intersect $\partial f$.
For each pair $(f,C)$ where $f \in F_\eta(G_0'[L_{i^-}^{i^+}])$ is a deep face and $C \in \mathcal{C}_f$, we arbitrarily choose a vertex $u_{f,C}$ in the intersection of $\partial f$ and $C$.
Let $U_f = \{u_{f,C}: C \in \mathcal{C}_f\}$.
We define a function $\kappa: F_\eta(G_0'[L_{i^-}^{i^+}]) \rightarrow 2^{L_{i^-}^{i^+}}$ as
\begin{equation*}
    \kappa(f) = \left\{
    \begin{array}{ll}
        U_f \cup (Z' \cap V(\partial f)) & \text{if } f \text{ is deep}, \\
        \emptyset & \text{if } f \text{ is shallow}.
    \end{array}
    \right.
\end{equation*}
Clearly, we have $\kappa(f) \subseteq V(\partial f)$.
Consider the graph obtained from $G_0'[L_{i^-}^{i^+}]$ by making $\kappa(f)$ a clique for all $f \in F_\eta(G_0'[L_{i^-}^{i^+}])$, which we denote by $(G_0'[L_{i^-}^{i^+}])^\kappa$.
We observe the following relation between the graph $(G_0'[L_{i^-}^{i^+}])^\kappa$ and the graphs $J_t'$.
\begin{fact}
$J_t'$ is a minor of $(G_0'[L_{i^-}^{i^+}])^\kappa$ for any node $t$ in the $j$-th level of $T_\mathsf{supp}$.
\end{fact}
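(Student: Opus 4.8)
The plan is to exhibit $J_t'$ as a minor of $(G_0'[L_{i^-}^{i^+}])^\kappa$ by writing down the appropriate branch sets and checking that the required adjacencies are present. Recall $J_t' = G_0'[V_t]/(V_t \cap (Z_1 \backslash Z'))$ with the extra cliques on the sets $N_{t,s}$. Since $V_t \subseteq L_{i^-}^{i^+}$, the graph $G_0'[V_t]$ is already an induced subgraph of $G_0'[L_{i^-}^{i^+}]$; I would first take the induced subgraph (delete the vertices in $L_{i^-}^{i^+} \setminus V_t$), and then realize the contraction of $V_t \cap (Z_1\backslash Z')$ as edge contractions. So the branch sets of the minor model are: singletons $\{v\}$ for each $v \in V_t$ that survives uncontracted, and for each vertex $x$ of $J_t'$ that is the image of a connected component $C$ of $G_0'[Z_1\backslash Z']$, the branch set $V(C)$ (which by property \textbf{(ii)} of Fact~\ref{fact-support} is entirely contained in $V_t$, and is connected in $G_0'[V_t]$, hence in $(G_0'[L_{i^-}^{i^+}])^\kappa$). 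This takes care of all edges of $J_t$; what remains is to account for the added clique edges of $J_t'$.

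The heart of the argument is therefore: for each child $s$ of $t$ in $T_\mathsf{supp}$ and each pair $u,v \in N_{t,s}$, I must produce an edge of $(G_0'[L_{i^-}^{i^+}])^\kappa$ between the branch sets containing $u$ and $v$ (after the deletions above). Here is where the face structure enters. By construction $s$ corresponds to a connected component $C_s$ of $G_0'[L_{> i_j}]$, and $N_{t,s}$ is the set of vertices of $V_t$ neighboring $V_s$ in $G_0'/(Z_1\backslash Z')$; unwinding the contraction, $N_{t,s}$ is the image of the set of vertices of $V_t$ on the ``boundary'' separating $V_t$ from $C_s$. I would argue that this boundary set lies on $\partial f$ for a single deep face $f$ of $(G_0'[L_{i^-}^{i^+}],\eta)$ — namely the face into which $C_s$ is embedded — using Fact~\ref{fact-sameface} (the component $C_s$ sits inside a face of $G_0'[L_{\geq i_j+1}]$ whose boundary lies in layer $L_{i^-}$, hence inside one face of $(G_0'[L_{i^-}^{i^+}],\eta)$, which is deep by definition). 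Then every vertex of $N_{t,s}$ is on $\partial f$. For the surviving (uncontracted) representatives, they lie in $Z' \cap V(\partial f) \subseteq \kappa(f)$, so they are pairwise adjacent in $(G_0'[L_{i^-}^{i^+}])^\kappa$. For a representative coming from a contracted component $C \in \mathcal{C}_f$, the chosen vertex $u_{f,C} \in U_f \subseteq \kappa(f)$ lies in $V(C)$, so it is adjacent in $(G_0'[L_{i^-}^{i^+}])^\kappa$ to the other elements of $\kappa(f)$, and $u_{f,C}$ is inside the branch set $V(C)$. Hence all the required clique edges are witnessed by the clique on $\kappa(f)$.

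The remaining bookkeeping is to confirm that distinct children $s, s'$ of $t$ either map to distinct deep faces or that double-use of the same face causes no conflict — it does not, since we only need \emph{some} edge between the branch sets, and an edge of the clique $\kappa(f)$ serves regardless of which child $s$ prompted it — and to double-check that deleting $L_{i^-}^{i^+}\setminus V_t$ does not destroy any face $f$ we rely on: the vertices of $\partial f$ we use are in $V_t$, but $\partial f$ itself may use vertices outside $V_t$. This is fine because we are building a minor of $(G_0'[L_{i^-}^{i^+}])^\kappa$ \emph{before} deleting; formally I would first perform the clique additions to get $(G_0'[L_{i^-}^{i^+}])^\kappa$, then delete the vertices not in the union of the branch sets, then contract — a minor operation in any order. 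The only genuine subtlety, and the step I expect to require the most care, is the claim that $N_{t,s}$ lies on the boundary of a \emph{single} face $f$ and that $\mathcal{C}_f$ together with $Z' \cap V(\partial f)$ really captures all of $N_{t,s}$ after contraction; this is exactly where Fact~\ref{fact-sameface} and Fact~\ref{fact-diff1} must be invoked precisely to control which layer the separating boundary lives in and to ensure $C_s$ does not ``leak'' across multiple faces of $(G_0'[L_{i^-}^{i^+}],\eta)$.
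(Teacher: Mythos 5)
Your overall plan is the same as the paper's: build the minor model with branch sets given by the uncontracted vertices of $V_t$ and by the contracted components (which, by property \textbf{(ii)} of Fact~\ref{fact-support}, are entire connected components of $G_0'[Z_1\backslash Z']$ lying inside $V_t$), and witness every added clique edge on $N_{t,s}$ by the clique on $\kappa(f)$ for the deep face $f$ of $(G_0'[L_{i^-}^{i^+}],\eta)$ that contains the child's component. However, two things must be repaired before this is a proof. First, an index error that matters: since $s$ is a \emph{child} of $t$, it corresponds to a connected component $C_s$ of $G_0'[L_{>i_{j+1}}]=G_0'[L_{>i^+}]$, not of $G_0'[L_{>i_j}]$ (that is what $t$ corresponds to). Only with the correct index is $C_s$ vertex-disjoint from $L_{i^-}^{i^+}$, which is what allows you to say that its connected image lies inside a single face $f$ of $(G_0'[L_{i^-}^{i^+}],\eta)$. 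Moreover $f$ is not deep ``by definition'': deepness means $f$ is not contained in the outer face of $(G_0'[L_{i^+}],\eta)$, and this needs the argument via statements \textbf{(i)} and \textbf{(ii)} of Fact~\ref{fact-sameface} that the image of $C_s$ avoids the outer face of $(G_0'[L_{\geq i^+}],\eta)$, which coincides with the outer face of $(G_0'[L_{i^+}],\eta)$.

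Second, the step you defer as ``the one requiring the most care'' is exactly the substance of the paper's proof and cannot remain an assertion: you claim every uncontracted representative of $N_{t,s}$ lies in $Z'\cap V(\partial f)$, but you never show membership in $Z'$. The argument is: such a vertex $\hat u\in V_t$ is adjacent in $G_0'$ to a vertex of $C_s\subseteq L_{>i^+}$, so by Fact~\ref{fact-diff1} it lies in layer $L_{i^+}\subseteq Z_1$; being uncontracted it is not in $Z_1\backslash Z'$, hence $\hat u\in Z'$, and it lies on $\partial f$ because it has a neighbor whose image is inside $f$. For a contracted branch set, property \textbf{(ii)} of Fact~\ref{fact-support} makes it a full component of $G_0'[Z_1\backslash Z']$ that meets $\partial f$ (again because it has a neighbor in $C_s$), i.e., a member of $\mathcal{C}_f$, so the chosen vertex $u_{f,C}\in U_f\subseteq\kappa(f)$ is the required representative. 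With these details supplied (and the index fixed), your argument coincides with the paper's; the remaining remarks in your proposal (order of deletions versus clique additions, reuse of a single face by several children) are correct and harmless.
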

\begin{proof}
By definition, $J_t = G_0'[V_t]/(V_t \cap (Z_1 \backslash Z'))$.
So there is a quotient map $\pi: V_t \rightarrow V(J_t) = V(J_t')$, which is surjective.
The vertex set of $(G_0'[L_{i^-}^{i^+}])^\kappa$ is $L_{i^-}^{i^+}$ and $V_t \subseteq L_{i^-}^{i^+}$.
So it suffices to show that for any $u,v \in V(J_t')$, if $(u,v)$ is an edge in $J_t'$, then there exist $\hat{u} \in \pi^{-1}(u)$ and $\hat{v} \in \pi^{-1}(v)$ such that $(\hat{u},\hat{v})$ is an edge in $(G_0'[L_{i^-}^{i^+}])^\kappa$; if this is true, then $J_t'$ is a minor of $(G_0'[L_{i^-}^{i^+}])^\kappa[V_t]$ and hence a minor of $(G_0'[L_{i^-}^{i^+}])^\kappa$.
If $(u,v)$ is an edge in $J_t$, then there exist $\hat{u} \in \pi^{-1}(u)$ and $\hat{v} \in \pi^{-1}(v)$ such that $(\hat{u},\hat{v})$ is an edge in $G_0'[V_t]$ and hence an edge in $(G_0'[L_{i^-}^{i^+}])^\kappa$.
So suppose $(u,v)$ is not an edge in $J_t$.
Then we must have $u,v \in N_{s,t}$ for some child $s$ of $t$.
Suppose $C$ is the connected component of $G_0'[L_{>i^+}]$ corresponding to the node $s$.
Since $C$ is connected, its image in $\varSigma$ under $\eta$ should be contained in a face $f \in F_\eta(G_0'[L_{i^-}^{i^+}])$ of $(G_0'[L_{i^-}^{i^+}],\eta)$.
Furthermore, $f$ must be a deep face, because (the image of) $C$ is not contained in the outer face of $(G_0'[L_{\geq i^+}])$ which is also the outer face of $(G_0'[L_{i^+}])$ by statement \textbf{(ii)} of Fact~\ref{fact-sameface}.
We claim that there exist $\hat{u} \in \pi^{-1}(u)$ and $\hat{v} \in \pi^{-1}(v)$ such that $\hat{u},\hat{v} \in \kappa(f)$, which implies $(\hat{u},\hat{v})$ is an edge in $(G_0'[L_{i^-}^{i^+}])^\kappa$.
Without loss of generality, it suffices to show the existence of $\hat{u}$.
There are two cases to be considered: $\pi^{-1}(u)$ is a single vertex in $V_t \backslash (Z_1 \backslash Z')$ and $\pi^{-1}(u)$ is a connected component of $G_0'[V_t \cap (Z_1 \backslash Z')]$.
In the first case, we simply let $\hat{u}$ be the only vertex in $\pi^{-1}(u)$.
we must have $\hat{u} \in L_{i^+} \subseteq Z_1$ since the vertices in $L_{i^-},\dots,L_{i^+-1}$ are not adjacent to any vertex in $V_s$ by Fact~\ref{fact-diff1}.
But $\hat{u} \notin Z_1 \backslash Z'$, which implies $\hat{u} \in Z'$.
Furthermore, $\hat{u} \in \partial f$ because $\hat{u}$ is adjacent to some vertex in $V_s$ (and hence adjacent to some vertex in $C$) while the image of $C$ in $\varSigma$ is contained in $f$.
It follows that $\hat{u} \in Z_1 \cap \partial f \subseteq \kappa(f)$.
In the second case, $\pi^{-1}(u)$ must intersect $\partial f$, again because $\pi^{-1}(u)$ is adjacent to some vertex in $V_s$.
By property \textbf{(ii)} of Fact~\ref{fact-support}, a connected component of $G_0'[V_t \cap (Z_1 \backslash Z')]$ is also a connected component of $G_0'[Z_1 \backslash Z']$.
Thus, we have $\pi^{-1}(u) \in \mathcal{C}_f$ (recall that $\mathcal{C}_f$ is the set of all connected components of $G_0'[Z_1 \backslash Z']$ which intersect $\partial f$).
By our construction of the set $U_f$, there exists a vertex $u_{f,\pi^{-1}(u)} \in U_f$ which lies in the intersection of $\partial f$ and $\pi^{-1}(u)$.
Setting $\hat{u} = u_{f,\pi^{-1}(u)}$, we have $\hat{u} \in \pi^{-1}(u)$ and $\hat{u} \in U_f \subseteq \kappa(f)$.
%Also note that $\pi^{-1}(u)$ is a connected component of $G_0'[Z_1 \backslash Z']$, because each connected comp.
\end{proof}

By the above observation, we have $\mathbf{tw}(J_t') \leq \mathbf{tw}((G_0'[L_{i^-}^{i^+}])^\kappa)$ for any node $t$ in the $j$-th level of $T_\mathsf{supp}$.
So it suffices to show $\mathbf{tw}((G_0'[L_{i^-}^{i^+}])^\kappa) = O(p+|Z'|)$.
Let $w_\kappa: F_\eta(G_0'[L_{i^-}^{i^+}]) \rightarrow \mathbb{N}$ be the weight function on the faces of $(G_0'[L_{i^-}^{i^+}],\eta)$ defined as $w_\kappa(f) = |\kappa(f)|$.
Recall that $\mathsf{diam}_{w_\kappa}^*(G_0'[L_{i^-}^{i^+}],\eta)$ is the $w_\kappa$-weighted vertex-face diameter of $(G_0'[L_{i^-}^{i^+}],\eta)$.
By Lemma~\ref{lem-twdiam2}, we have $\mathbf{tw}((G_0'[L_{i^-}^{i^+}])^\kappa) = O(\mathsf{diam}_{w_\kappa}^*(G_0'[L_{i^-}^{i^+}],\eta))$.
In order to bound $\mathsf{diam}_{w_\kappa}^*(G_0'[L_{i^-}^{i^+}],\eta)$, we first need to bound the value of the weight function $w_\kappa$.
Clearly, $w_\kappa(f) = 0$ if $f$ is shallow.
The following observation helps bound $w_\kappa(f)$ for deep faces $f \in F_\eta(G_0'[L_{i^-}^{i^+}])$.

\begin{fact} \label{fact-deepface}
Let $f \in F_\eta(G_0'[L_{i^-}^{i^+}])$ be a deep face of $(G_0'[L_{i^-}^{i^+}],\eta)$.
Then $\partial f$ intersects at most $O(|Z'|)$ connected components of $G_0'[Z_1 \backslash Z']$.
%In particular, $w_\kappa(f) = O(|Z'|)$.
\end{fact}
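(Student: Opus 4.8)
The plan is to reduce the claim to the degree and component bounds of Lemma~\ref{lem-degree}.

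\textbf{Step 1: a deep face lies in the innermost layer.} I would first show that a deep face $f$ of $(G_0'[L_{i^-}^{i^+}],\eta)$ is in fact a non-outer face of $(G_0'[L_{i^+}],\eta)$, so that $V(\partial f) \subseteq L_{i^+}$. By Fact~\ref{fact-sameface}(ii) the outer face of $(G_0'[L_{i^+}],\eta)$ is $o_{i^+}$, so $f$, being deep, is contained in some non-outer face $g$ of $(G_0'[L_{i^+}],\eta)$. By Fact~\ref{fact-sameface}(iii) every vertex of a layer below $i^+$ has its $\eta$-image in the interior of $o_{i^+}$, hence outside the closure of $g$; as edges of an embedded graph do not cross, no vertex or edge of $G_0'[L_{i^-}^{i^+}]$ has image inside $g$, so $g$ is already a face of $(G_0'[L_{i^-}^{i^+}],\eta)$ and $f = g$. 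Now $\partial f \subseteq G_0'[L_{i^+}]$ and $L_{i^+} = L_{i_{j+1}}$ is one of the layers of $Z_1$. Since consecutive indices among $i_1,\dots,i_{m'}$ differ by $p' = p + 2h \ge 3$ while adjacent vertices of $G_0'$ lie in layers differing by at most one (Fact~\ref{fact-diff1}), the layers $L_{i_1},\dots,L_{i_{m'}}$ are pairwise non-adjacent in $G_0'$; hence every connected component of $G_0'[Z_1\backslash Z']$ meeting $\partial f$ is a connected component of $G_0'[L_{i^+}\backslash Z']$ containing a vertex of $V(\partial f)\backslash Z'$, and it suffices to bound the number of these.

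\textbf{Step 2: applying Lemma~\ref{lem-degree}.} Let $\hat{f}$ be the face of the embedded subgraph $(\partial o_{i^+},\eta)$ that contains $f$. Every edge of $\partial o_{i^+}$ is incident to $o_{i^+}$ (by definition of the boundary subgraph), so deleting from $G_0'[L_{i^+}]$ the edges not in $\partial o_{i^+}$ cannot merge $o_{i^+}$ with another face; thus $o_{i^+}$ remains a face of $(\partial o_{i^+},\eta)$, and since $f$ is disjoint from $o_{i^+}$ we get $\hat{f}\ne o_{i^+}$. Lemma~\ref{lem-degree}, applied to the $\varSigma$-embedded graph $(G_0'[L_{i^+}],\eta)$ and its face $o_{i^+}$ (with $f':=\hat{f}$), together with $\mathsf{gns}(\varSigma)\le h = O(1)$, yields that $\partial\hat{f}$ has $O(h)$ connected components and maximum degree $O(h)$. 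Moreover $V(\partial f)\subseteq V(\partial\hat{f})$: every vertex of $\partial f$ lies in $L_{i^+} = V(\partial o_{i^+})$ and is incident to $f\subseteq\hat{f}$, hence incident to $\hat{f}$.

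\textbf{Step 3: counting, and the main obstacle.} For each connected component $D$ of $G_0'[L_{i^+}\backslash Z']$ meeting $\partial f$, pick a vertex $v_D$ of it in $V(\partial f)\backslash Z'\subseteq V(\partial\hat{f})\backslash Z'$. If $D\ne D'$, then $v_D$ and $v_{D'}$ lie in different connected components of the subgraph $(\partial\hat{f})[V(\partial\hat{f})\backslash Z']$, since a path between them there would be a path in $G_0'[L_{i^+}\backslash Z']$. Hence the number of components of $G_0'[L_{i^+}\backslash Z']$ meeting $\partial f$ is at most the number of connected components of $(\partial\hat{f})[V(\partial\hat{f})\backslash Z']$, and deleting the at most $|Z'|$ vertices of $Z'\cap V(\partial\hat{f})$ from $\partial\hat{f}$ increases the number of components by at most $(\text{maximum degree of }\partial\hat{f})\cdot|Z'|$; so the count is at most $O(h) + O(h)\cdot|Z'| = O(|Z'|)$ (the additive constant absorbing the case $Z' = \emptyset$). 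The step I expect to require the most care is Step~1: establishing that a deep face of the band graph $G_0'[L_{i^-}^{i^+}]$ is literally a non-outer face of the innermost layer $G_0'[L_{i^+}]$ is exactly what lets Lemma~\ref{lem-degree} — a statement about boundaries of faces of surface-embedded graphs — be brought to bear here. The transfer in Step~2 from $\partial f$ to the possibly larger $\hat{f}$ is routine but must be carried out at the level of vertex sets only, because an edge of $\partial f$ need not be incident to $o_{i^+}$ and so need not lie in $\partial\hat{f}$.
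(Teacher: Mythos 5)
Your proposal is correct and follows essentially the same route as the paper's proof: reduce (via the non-adjacency of the $Z_1$-layers) to counting components of $G_0'[L_{i^+}\backslash Z']$ meeting $\partial f$, place the deep face inside a face $\hat f\neq o_{i^+}$ of $(\partial o_{i^+},\eta)$ using Fact~\ref{fact-sameface}, invoke Lemma~\ref{lem-degree} to bound the number of components and the maximum degree of $\partial\hat f$, and finish with the standard vertex-deletion component bound. The only divergence is your Step~1, which is stronger than needed -- the paper never shows that a deep face is literally a non-outer face of $(G_0'[L_{i^+}],\eta)$, only that $V(\partial f)\cap L_{i^+}\subseteq V(\partial\hat f)$ -- and while your stronger claim is in fact correct (via Fact~\ref{fact-sameface}(ii),(iii)), note the small wording slip that $f$ and $o_{i^+}$ need not be disjoint (they may share boundary points); $\hat f\neq o_{i^+}$ already follows from $f\nsubseteq o_{i^+}$.
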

\begin{proof}
Let $f \in F_\eta(G_0'[L_{i^-}^{i^+}])$ be a deep face.
Since $Z_1 = \bigcup_{j=1}^{m'} L_{i_j}$, we have $Z_1 \cap L_{i^-}^{i^+} = L_{i^+}$.
So it suffices to show that $\partial f$ intersects $O(|Z'|)$ connected components of $G_0'[L_{i^+} \backslash Z']$.
Let $o_{i^+} \in F_\eta(G_0'[L_{i^+}])$ be the outer face of $(G_0'[L_{i^+}],\eta)$.
By the assumption that $f$ is deep, we have $f \nsubseteq o_{i^+}$.
Note that $o_{i^+}$ is also a face of $(\partial o_{i^+},\eta)$, which is clearly the outer face.
Because $\partial o_{i^+}$ is a subgraph of $G_0'[L_{i^+}]$ which is in turn a subgraph of $G_0'[L_{i^-}^{i^+}]$, $f$ must be contained in some face $f' \in F_\eta(\partial o_{i^+})$ of $(\partial o_{i^+},\eta)$ other than $o_{i^+}$.
By statement \textbf{(i)} of Fact~\ref{fact-sameface}, all vertices in $L_{i^+}$ are incident to $o_{i^+}$ and thus $V(\partial o_{i^+}) = L_{i^+}$.
Since $f \subseteq f'$ and $f'$ is a face of $(\partial o_{i^+},\eta)$, any vertex of $\partial o_{i^+}$ incident to $f$ must be on the boundary of $f'$, which implies $V(\partial f) \cap V(\partial o_{i^+}) \subseteq V(\partial f')$, i.e., $V(\partial f) \cap L_{i^+} \subseteq V(\partial f')$.
Therefore, it suffices to show that $\partial f'$ intersects $O(|Z'|)$ connected components of $G_0'[L_{i^+} \backslash Z']$.
As $f'$ is a face of $(\partial o_{i^+},\eta)$ and $f' \neq o_{i^+}$, by Lemma~\ref{lem-degree}, $\partial f'$ has $O(h)$ connected components and the maximum degree of $\partial f'$ is $O(h)$.
It is well-known that after removing $\alpha$ vertices from a graph of maximum degree $\Delta$, the number of connected components of the graph increases by at most $\alpha (\Delta - 1)$ (for completeness, we include a proof for this below).
Thus, $\partial f' - Z'$ has $O(h+ |Z'|(h-1))$ connected components, i.e., $O(|Z'|)$ connected components.
Note that $\partial f' - Z'$ is a subgraph of $G_0'[L_{i^+} \backslash Z']$, so every connected component of $\partial f' - Z'$ is contained in a connected component of $G_0'[L_{i^+} \backslash Z']$.
It follows that $\partial f' - Z'$ intersects $O(|Z'|)$ connected components of $G_0'[L_{i^+} \backslash Z']$, and hence $\partial f'$ intersects $O(|Z'|)$ connected components of $G_0'[L_{i^+} \backslash Z']$.

Finally, we prove the statement that after removing $\alpha$ vertices from a graph $G$ of maximum degree $\Delta$, the number of connected components of $G$ increases by at most $\alpha (\Delta - 1)$.
It suffices to consider the case $\alpha = 1$, because removing vertices can never increase the maximum degree of the graph.
Let $v$ be the vertex to be removed from $G$ and $C$ be the connected component of $G$ containing $v$.
Because $C$ is connected, every connected component of $C - \{v\}$ must contain a neighbor of $v$.
But $v$ can have at most $\Delta$ neighbors.
Therefore, $C - \{v\}$ can have at most $\Delta$ connected components.
In other words, after removing $v$, $C$ splits to at most $\Delta$ connected components.
Hence, the number of connected components of $G$ increases by at most $\Delta - 1$.
\end{proof}

The above observation implies that for any deep face $f \in F_\eta(G_0'[L_{i^-}^{i^+}])$, $|U_f| = |\mathcal{C}_f| = O(|Z'|)$ and thus $w_\kappa(f) = O(|Z'|)$.
With this in hand, we can finally bound $\mathsf{diam}_{w_\kappa}^*(G_0'[L_{i^-}^{i^+}],\eta)$, which then completes the entire proof of Lemma~\ref{lem-key}.

\begin{fact}
$\mathsf{diam}_{w_\kappa}^*(G_0'[L_{i^-}^{i^+}],\eta) = O(p+|Z'|)$.
\end{fact}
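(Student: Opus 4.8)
The plan is to pick the outer face of $(G_0'[L_{i^-}^{i^+}],\eta)$ as a ``hub'' and bound its $w_\kappa$-weighted vertex-face distance to every other vertex and face; since $w_\kappa$-weighted vertex-face distances satisfy the triangle inequality and the VFI graph here is connected (Fact~\ref{fact-connected}), this controls the whole $w_\kappa$-weighted vertex-face diameter. The natural hub is $o_{i^-}$, the outer face of $(G_0'[L_{\geq i^-}],\eta)$: by Fact~\ref{fact-sameface}(ii) it is also the outer face of $(G_0'[L_{i^-}^{i^+}],\eta)$ (as $L_{i^-}\subseteq L_{i^-}^{i^+}\subseteq L_{\geq i^-}$), by Fact~\ref{fact-sameface}(i) it is incident to every vertex of $L_{i^-}$, and since $o_{i^-}\subseteq o_{i^+}$ it is a shallow face, so $w_\kappa(o_{i^-})=0$. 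Also recall $i^+-i^-=p'-1=p+2h-1=O(p)$. Thus it suffices to prove \textbf{(A)} the $w_\kappa$-weighted vertex-face distance from $o_{i^-}$ to every vertex of $G_0'[L_{i^-}^{i^+}]$ is $O(p)$, and \textbf{(B)} the $w_\kappa$-weighted vertex-face distance from $o_{i^-}$ to every face of $(G_0'[L_{i^-}^{i^+}],\eta)$ is $O(p+|Z'|)$.

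For \textbf{(A)}, given $v\in L_{i^-+t}$ with $0\le t\le i^+-i^-$, I would build, greedily downward from $v$, a vertex-face path $o_{i^-},u_0,g_1,u_1,\dots,g_t,u_t=v$ with $u_s\in L_{i^-+s}$ and each $g_s$ a face of $(G_0'[L_{i^-}^{i^+}],\eta)$ incident to both $u_{s-1}$ and $u_s$. For the inductive step, a shortest vertex-face path from $u_s\in L_{i^-+s}$ ($s\ge1$) to $o$ in the \emph{full} graph $(G_0',\eta)$ (of length $2(i^-+s)-1$) supplies a vertex $u_{s-1}\in L_{i^-+s-1}$ and a face $f$ of $(G_0',\eta)$ incident to both, touching only the layers $L_{i^-+s-1},L_{i^-+s}$ (Fact~\ref{fact-diff1}), both inside $[i^-,i^+-1]$. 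The crucial point is that, upon deleting the layers outside $[i^-,i^+]$, this $f$ is swallowed by a \emph{shallow} face $g_s$ of $(G_0'[L_{i^-}^{i^+}],\eta)$ that is still incident to $u_{s-1},u_s$. Indeed, any face of $(G_0',\eta)$ incident to a vertex of a layer $<i^+$ lies inside $o_{i^+}$ (its incident vertices of layer $\le i^+-1$ lie in the interior of $o_{i^+}$ by Fact~\ref{fact-sameface}(iii), and $f$ is a connected region attached there, so $f\subseteq o_{i^+}$); and since the topological boundary of $o_{i^+}$ is contained in $\eta(\partial o_{i^+})$, which consists only of $L_{i^+}$-vertices and edges among them (Fact~\ref{fact-sameface}(i)) and hence survives in $G_0'[L_{i^-}^{i^+}]$, no face of $(G_0'[L_{i^-}^{i^+}],\eta)$ can straddle $\partial o_{i^+}$; therefore $g_s\subseteq o_{i^+}$, i.e.\ $g_s$ is shallow and $w_\kappa(g_s)=0$. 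Since $u_0\in L_{i^-}$ is incident to $o_{i^-}$, the resulting path has length $2t+1\le 2(i^+-i^-)+1=O(p)$ and all its faces have weight $0$, giving cost $O(p)$, which proves \textbf{(A)}.

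For \textbf{(B)}, let $f$ be a face of $(G_0'[L_{i^-}^{i^+}],\eta)$. If $f$ is shallow, pick any incident vertex $v$; then the distance from $o_{i^-}$ to $f$ is at most $\mathrm{dist}(o_{i^-},v)+1+w_\kappa(f)=O(p)+1+0=O(p)$. If $f$ is deep, then by the contrapositive of the ``no straddling $\partial o_{i^+}$'' argument above, every vertex of $G_0'[L_{i^-}^{i^+}]$ incident to $f$ lies in $L_{i^+}$; picking such a $v\in L_{i^+}$, the distance from $o_{i^-}$ to $f$ is at most $\mathrm{dist}(o_{i^-},v)+1+w_\kappa(f)$, and here $w_\kappa(f)=|U_f|+|Z'\cap V(\partial f)|=O(|Z'|)$ because $|U_f|=|\mathcal{C}_f|=O(|Z'|)$ by Fact~\ref{fact-deepface} and trivially $|Z'\cap V(\partial f)|\le|Z'|$; so the distance is $O(p)+1+O(|Z'|)=O(p+|Z'|)$. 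Combining \textbf{(A)}, \textbf{(B)}, and the triangle inequality yields $\mathsf{diam}_{w_\kappa}^*(G_0'[L_{i^-}^{i^+}],\eta)=O(p+|Z'|)$; together with Lemma~\ref{lem-twdiam2} this gives $\mathbf{tw}((G_0'[L_{i^-}^{i^+}])^\kappa)=O(p+|Z'|)$, completing the proof of Lemma~\ref{lem-key}.

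I expect the main obstacle to be precisely the topological bookkeeping used in \textbf{(A)} and \textbf{(B)}: controlling how faces of $(G_0',\eta)$ merge when the layers outside $[i^-,i^+]$ are removed, and in particular making rigorous the ``shallow swallows $f$'' claim and the claim that deep faces of $(G_0'[L_{i^-}^{i^+}],\eta)$ are incident only to $L_{i^+}$; all of this hinges on the fact that $\partial o_{i^+}$ lies entirely in $L_{i^+}\subseteq L_{i^-}^{i^+}$. The degenerate cases ($i^-\le 0$, $i^+>m$, or empty layers) require only a one-line remark — then the relevant layers contribute nothing, the outer face is just $o$, and there are no deep faces — so they are harmless.
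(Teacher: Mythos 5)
Your proposal is correct and follows essentially the same route as the paper: anchor at the outer face $o_{i^-}$, lift a shortest VFI path of $(G_0',\eta)$ into $(G_0'[L_{i^-}^{i^+}],\eta)$ using Fact~\ref{fact-sameface} to see that the intermediate faces are shallow (weight zero), bound deep-face weights by $O(|Z'|)$ via Fact~\ref{fact-deepface}, and finish with connectivity plus the triangle inequality. The only differences are cosmetic: the paper notes (via Fact~\ref{fact-diff1}) that the lifted faces have boundaries in two consecutive layers and hence are literally faces of the restricted graph, whereas you let them be absorbed into shallow faces, and your remark that the relevant layers lie in $[i^-,i^+-1]$ should read $[i^-,i^+]$ for the topmost step — a harmless slip, since your shallowness argument only needs an incident vertex of layer below $i^+$.
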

\begin{proof}
Let $o_{i^-}$ be the outer face of $(G_0'[L_{\geq i^-}],\eta)$, which is also the outer face of $(G_0'[L_{i^-}^{i^+}],\eta)$ by statement \textbf{(ii)} of Fact~\ref{fact-sameface}.
%Note that $w_\kappa(f) = 0$ for all shallow $f$ and $w_\kappa(f) = O(|Z'|)$ for all deep $f$.
We first show that for any vertex $v \in L_{i^-}^{i^+}$, the $w_\kappa$-weighted vertex-face distance between $o_{i^-}$ and $v$ in $(G_0'[L_{i^-}^{i^+}],\eta)$ is $O(p)$.
Suppose $v \in L_i$.
So the vertex-face distance between the outer face $o$ of $(G_0',\eta)$ and $v$ in $(G_0',\eta)$ is $2i-1$.
Let $\pi = (o,v_1,f_1,\dots,v_{i-1},f_{i-1},v)$ be a shortest path in the VFI graph of $(G_0',\eta)$ connecting the outer face $o$ of $(G_0',\eta)$ and $v$, where $v_1,\dots,v_{i-1} \in V(G_0')$ and $f_1,\dots,f_{i-1} \in F_\eta(G_0')$.
Since $\pi$ is a shortest path, we have $v_j \in L_j$ for all $j \in [i-1]$.
We claim that $f_{i^-},\dots,f_{i-1} \in F_\eta(G_0'[L_{i^-}^{i^+}])$ and $f_{i^-+1},\dots,f_{i-1}$ are all shallow faces.
Let $j \in \{i^-+1,\dots,i-1\}$.
Then $f_j$ is incident to $v_{j-1} \in L_{j-1}$ and $v_j \in L_j$.
By Fact~\ref{fact-diff1}, all vertices incident to $f_j$ must lie in $L_{j-1}$ or $L_j$, which implies $V(\partial f_j) \subseteq L_{j-1} \cup L_j \subseteq L_{i^-}^{i^+}$.
Therefore, $f_j$ is also a face of $(G_0'[L_{i^-}^{i^+}],\eta)$, i.e., $f_j \in F_\eta(G_0'[L_{i^-}^{i^+}])$.
To see $f_j$ is a shallow face, we need to show that $f_j \subseteq o_{i^+}$ where $o_{i^+}$ is the outer face of $(G_0'[L_{i^+}],\eta)$.
Since $G_0'[L_{i^+}]$ is a subgraph of $G_0'$, $f_j$ must be contained in some face of $(G_0'[L_{i^+}],\eta)$.
In addition, as $f_j$ is incident to $v_{j-1}$ and $v_{j-1}$ is in the interior of $o_{i^+}$ by statement \textbf{(iii)} of Fact~\ref{fact-sameface}, $f_j$ cannot be contained in any face of $(G_0'[L_{i^+}],\eta)$ other than $o_{i^+}$.
Thus, $f_j \subseteq o_{i^+}$.
This shows $f_{i^-},\dots,f_{i-1}$ are all shallow faces of $(G_0'[L_{i^-}^{i^+}],\eta)$.
By statement \textbf{(i)} of Fact~\ref{fact-sameface}, $o_{i^-}$ is incident to $v_{i^-}$ as $v_{i^-} \in L_{i^-}$.
Thus, we obtain a path $(o_{i^-},v_{i^-},f_{i^-},\dots,v_{i-1},f_{i-1},v)$ connecting $o_{i^-}$ and $v$ in $(G_0'[L_{i^-}^{i^+}],\eta)$.
The cost of this path is $(2i-2i^-+1)+ w_\kappa(o_{i^-}) +\sum_{j=i^-}^{i-1} w_\kappa(f_j)$.
Since $f_{i^-},\dots,f_{i-1}$ are all shallow, $w_\kappa(f_{i^-}) = \cdots = w_\kappa(f_{i-1}) = 0$.
Also, $o_{i^-}$ is shallow because $o_{i^-} \subseteq o_{i^+}$, so $w_\kappa(o_{i^-}) = 0$.
Therefore, the cost of the path is $2i-2i^-+1 = O(p)$.
It follows that the $w_\kappa$-weighted vertex-face distance between $o_{i^-}$ and any $v \in L_{i^-}^{i^+}$ in $(G_0'[L_{i^-}^{i^+}],\eta)$ is $O(p)$.
%Now consider the path $(v_{})$
%Recall that weighted vertex-face distances satisfy the triangle inequality.
%So it suffices to show that 
By the triangle inequality, this further implies the $w_\kappa$-weighted vertex-face distance between any two vertices $v,v' \in L_{i^-}^{i^+}$ in $(G_0'[L_{i^-}^{i^+}],\eta)$ is $O(p)$.
By Fact~\ref{fact-connected}, every $f \in F_\eta(G_0'[L_{i^-}^{i^+}])$ is adjacent to some $v \in L_{i^-}^{i^+}$ in the VFI graph of $(G_0'[L_{i^-}^{i^+}],\eta)$.
Therefore, the $w_\kappa$-weighted vertex-face distance between two faces $f,f' \in F_\eta(G_0'[L_{i^-}^{i^+}])$ in $(G_0'[L_{i^-}^{i^+}],\eta)$ is $O(p+w_\kappa(f)+w_\kappa(f'))$, and the $w_\kappa$-weighted vertex-face distance between a vertex $v \in L_{i^-}^{i^+}$ and a face $f \in F_\eta(G_0'[L_{i^-}^{i^+}])$ is $O(p+w_\kappa(f))$.
Note that $w_\kappa(f) = O(|Z'|)$ for all $f \in F_\eta(G_0'[L_{i^-}^{i^+}])$, which implies that $\mathsf{diam}_{w_\kappa}^*(G_0'[L_{i^-}^{i^+}],\eta) = O(p+|Z'|)$.
\end{proof}

\section{Applications of the framework} \label{sec-app}
In this section, we apply our framework of Section \ref{sec:framework} on various problems to obtain sub-exponential algorithms on $H$-minor free graphs. Given the framework, its application on a problem is fairly straightforward, except we have to ensure that the problem satisfies the requirements of this framework. First, we apply this framework on \probOCT and describe all the steps in detail. For other problems, the applications are similar. Thus, for convenience, we will just describe the changes needed for those problems omitting the trivial details. In all of our problems, we assume that the input graph is $H$-minor-free.

\subsection{\probOCT and \probEB}
First, we consider the  \probOCT (OCT) problem. In OCT, we are given a graph $G=(V,E)$ and a parameter $k$, and we would like to decide whether there is a subset $V'\subseteq V$ of size at most $k$ such that for each odd cycle $C$ in $G$, there is a vertex of $C$ that is in $V'$. If there is such a subset, we refer to it as a solution. By abusing the notation, we refer to a solution to the OCT problem also as an OCT. The distinction should be clear from the context. Following our framework, the first step is to apply Lemma~\ref{lem-candidate} to obtain a candidate set $\mathsf{Cand} \subseteq V$ for the instance. Note that this takes only a polynomial time. For our purpose, it is sufficient to find a solution which is a subset of $\mathsf{Cand}$. Henceforth, by the term solution we always mean a solution in $\mathsf{Cand}$.       

Note that after removing the vertices in an OCT from $G$, a bipartite graph is left. For such a bipartition, we arbitrarily define the left and the right side (or partition). Consider any subgraph $G_1=(V_1,E_1)$ and fix any solution to OCT on $G_1$. We define a configuration $\lambda$ of $V'\subseteq V_1$ with respect to the solution as a map $V' \rightarrow \{0,1,2\}$.
%to be a set of values corresponding to the vertices  in $V'$, where each value is in $\{1,2,3\}$.
The three values $0,1,2$ denote whether the corresponding vertex $v \in V'$ is in the OCT, is in the left side of the residual bipartite graph (after removing the solution OCT) or is in the right side of the residual bipartite graph.
As we want our solution to be a subset of $\mathsf{Cand}$, the 0-vertices must always come from $\mathsf{Cand}$.
Also, as the residual graph is bipartite, any two 1-vertices (resp., 2-vertices) cannot be neighboring.
Otherwise, we say the configuration is not valid.

\subsubsection{Contraction-friendly tree-decomposition DP} 

First, we describe the desired contraction-friendly tree-decomposition DP for OCT.
We are given an $n$-vertex graph $G=(V,E)$, a (possibly empty) subset $X \subseteq V$, and a tree-decomposition $\mathcal{T}$ of $G/X$. Let $T$ be the underlying tree of $\mathcal{T}$. For any $0 \le l \le k$, we would like to decide whether there is a solution to OCT on $G$ of size $l$ which is disjoint from $X$. In particular, we would like to use the tree-decomposition $\mathcal{T}$ to solve this problem efficiently. For that purpose, first we define a set of sub-problems on each node $t\in T$. As the bag $\beta(t)$ of $t$ might contain contracted vertices, we have to be careful in defining the sub-problems. 

To define the sub-problem, we need the following definition. For each subset $V'$ of vertices in $G/X$, let ${(G/X)(V')}^{-1}$ be the subgraph of $G$ that is the pre-image of ${(G/X)}[V']$, i.e., ${(G/X)(V')}^{-1}$ is obtained by undoing the contractions of all vertices in the induced subgraph ${(G/X)}[V']$. 
%Similarly, define ${(G/X)}^{-1}(\sigma(t))$. 

\begin{observation}\label{obs:distinct-config}
The number of distinct (valid) configurations of the vertices in ${(G/X)(V')}^{-1}$ with respect to the solutions disjoint from $X$ is $2^{O(|V'|)}$. 
\end{observation}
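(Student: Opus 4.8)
\textbf{Proof proposal for Observation~\ref{obs:distinct-config}.}

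The plan is to show that each connected component of $G[X]$ — equivalently, each contracted vertex appearing in $G/X$ — has only a constant number of possible states within any valid configuration coming from a solution disjoint from $X$, and then combine this with the ordinary vertices to get the bound $2^{O(|V'|)}$. First I would unwind the definitions: a vertex of $V' \subseteq V(G/X)$ is either an \emph{original} vertex of $G$ (not contracted, i.e.\ not in $X$) or a \emph{contracted} vertex whose pre-image under $\pi$ is a connected component of $G[X]$. A configuration of $(G/X)(V')^{-1}$ with respect to a solution disjoint from $X$ assigns to every vertex of this pre-image graph a value in $\{0,1,2\}$ (OCT / left / right), but the constraint ``disjoint from $X$'' forces every vertex lying in $X$ to receive a value in $\{1,2\}$ — none of them can be a $0$-vertex. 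So the only freedom on the $X$-part of the pre-image is how to $2$-color it with colors $\{1,2\}$ so that the coloring is a proper $2$-coloring (no two equal-colored neighbors), which is exactly the condition that the configuration be valid when restricted to edges inside $X$.

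The key step is the uniqueness observation: each connected component $C$ of $G[X]$ is a connected bipartite graph (it must be bipartite, since $X$ is disjoint from the OCT solution, so $G[X]$ contains no odd cycle), and a connected bipartite graph has a \emph{unique} bipartition $(\Gamma_1, \Gamma_2)$ into two independent sets, up to swapping the two sides. Hence, within a valid configuration, the vertices of $C$ can be colored in exactly two ways: $\Gamma_1 \mapsto 1, \Gamma_2 \mapsto 2$ or $\Gamma_1 \mapsto 2, \Gamma_2 \mapsto 1$. Equivalently, the single contracted vertex of $G/X$ representing $C$ has only $2$ possible ``states'' in the solution. I would state this as a small lemma (or cite the ``contraction-friendly'' discussion in Section~\ref{sec:framework}, where this exact point is made for the global graph) and apply it here to each contracted vertex appearing in $V'$.

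To finish: let $V'$ contain $a$ original vertices and $b$ contracted vertices, so $a + b \le |V'|$. Each original vertex independently takes one of at most $3$ values, and each contracted vertex takes one of at most $2$ values by the previous paragraph; moreover the coloring inside each component is then completely determined once the two-way choice is made. Therefore the number of configurations of $(G/X)(V')^{-1}$ that are valid and arise from solutions disjoint from $X$ is at most $3^a \cdot 2^b \le 3^{|V'|} = 2^{O(|V'|)}$, as claimed. (One may also want to note that not every such assignment need actually be realizable, but the statement only asks for an upper bound, so it suffices.)

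The main obstacle — though it is minor — is being careful that the validity constraints tying together an original vertex $v \in \beta(t)$ and a contracted vertex $C$ adjacent to it in $G/X$ do not blow up the count: the point is that these are \emph{constraints}, which can only decrease the number of valid configurations, so the crude product bound $3^a 2^b$ still holds. The only genuinely substantive ingredient is the bipartiteness-and-unique-bipartition argument for components of $G[X]$, which relies on the hypothesis that the solutions under consideration are disjoint from $X$; without that hypothesis a component of $G[X]$ could itself contain the OCT, and the $2$-state bound would fail.
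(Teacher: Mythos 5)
Your proposal is correct and follows essentially the same route as the paper: each contracted vertex corresponds to a connected component of $G[X]$, which must be bipartite (since the solution is disjoint from $X$) and hence, by uniqueness of the bipartition of a connected bipartite graph, admits only two states, while each uncontracted vertex admits at most three, giving the bound $3^{|V'|} = 2^{O(|V'|)}$. Your extra remarks about cross-constraints only decreasing the count and about upper bounds sufficing are fine but not needed.
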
 

\begin{proof}
%Consider any solution $S$ of OCT disjoint from $X$ on any graph $G'$ that contains ${(G/X)}^{-1}(\sigma(t))$ as an induced subgraph. 
For each contracted vertex $v$ of $V'$, consider the connected component $C$ in $G[X]$ corresponding to $v$. Then as the solutions we seek are disjoint from $X$, $C$ must be bipartite. Hence, the number of distinct valid configuration of the vertices in $C$ with respect to such  solutions is only 2 depending on which partition is designated as the left or right. Also, any vertex in $V'$ whose pre-image in $V(G)$ is a single vertex outside $X$ can have 3 configurations. Hence, the number of such distinct configurations of the vertices in ${(G/X)(V')}^{-1}$ is at most $3^{|V'|}=2^{O(|V'|)}$.  
\end{proof}

Note that given ${(G/X)(V')}^{-1}$, one can easily compute the $2^{O(|V'|)}$ such configurations. For any solution $S$ and a configuration $\lambda$ of a set of vertices $V'$, we say $S$ is consistent with $\lambda$ if the 0-vertices of $V'$ are in $S$, and 1- and 2-vertices of $V'$ are in the left and right side of the residual bipartite graph.
For every valid configuration $\lambda$ of the vertices in ${(G/X)(\sigma(t))}^{-1}$ and $0\le l\le k$, we define a sub-problem where the goal is to decide whether there is a solution to OCT on ${(G/X)(\gamma(t))}^{-1}$ which is disjoint from $X$, is consistent with $\lambda$ and uses exactly $l$ vertices of ${(G/X)(\gamma(t)\backslash \sigma(t))}^{-1}$.
Thus, in this sub-problem, the solution contains 0-vertices of $\sigma(t)$ and $l$ additional vertices from ${(G/X)(\gamma(t)\backslash \sigma(t))}^{-1}$.
We store a DP table $L(t)$ indexed by $(\lambda,l)$, for all valid configurations $\lambda$ of ${(G/X)(\sigma(t))}^{-1}$ and $0\le l\le k$.
The entry of $L(t)$ corresponding to the pair $(\lambda,l)$ stores the YES/NO solution for the sub-problem defined by $\lambda$ and $l$.
By Observation \ref{obs:distinct-config}, the number of entries in $L(t)$ is $2^{O(|\sigma(t)|)} \cdot k$, and all such relevant configurations can be computed in $2^{O(|\sigma(t)|)}n^{O(1)}$ time.

For our application, we cannot assume that the width of $\mathcal{T}$ is bounded. Instead, we make two assumptions: (i) the bag size of each non-leaf vertex in $T$ is bounded by a parameter $w$, and (ii) for each leaf $t\in T$, either the bag size is bounded by $w$ or the table $L(t)$ is given to us.
With these assumptions, we show that the table entries corresponding to all vertices of $T$ can be computed in time $2^{O(w)}n^{O(1)}$. 

Consider any leaf node $t \in T$.
If the table $L(t)$ is given to us, we are done.
Otherwise, $|\beta(t)| \leq w$ by assumption.
Suppose we want to compute the entry of the table $L(t)$ corresponding to $(\lambda,l)$.
%We simply enumerate all subsets $S \subseteq \beta(t)$, which takes $2^O(w)$ time.
Since $|\beta(t)| \leq w$, the valid configurations of the vertices in ${(G/X)(\beta(t))}^{-1}$ is $2^{O(w)}$.
We simply enumerate all these configurations.
Each such configuration gives us an OCT $S \subseteq \beta(t)$ of $G[\beta(t)]$.
If $S$ is a solution to the sub-problem $(\lambda,l)$, the entry should be YES.
If none of these $S$ is a solution to the sub-problem $(\lambda,l)$, the entry should be NO.

Now consider any non-leaf vertex $t \in T$. We can assume that the tables corresponding to the children of $t$ are already computed. Fix any configuration $\lambda$ of the vertices in ${(G/X)(\sigma(t))}^{-1}$ and an integer $0\le l\le k$. We want to compute the entry in $L(t)$ with respect to $(\lambda,l)$. First, we guess a configuration $\lambda'$ of the vertices in ${(G/X)(\beta(t)\backslash \sigma(t))}^{-1}$ with respect to the solutions disjoint from $X$. Thus, we have the configuration of all vertices in $\beta(t)$. This gives us the configuration $\lambda(s)$ of $\sigma(s)$ for each child $s$ of $t$. We note that the vertices of $\beta(t)$ might be shared by multiple children. But, in our sub-problems, the solution size does not account for the vertices in the adhesion. Thus, these solutions account for disjoint sets of vertices, and hence the sub-problems can be treated as independent. Let $l^u$ be the number of 0-vertices in $\beta(t)\backslash \sigma(t)$ with respect to $\lambda'$. Then, we would like to compute a solution, which consists of these 0-vertices and $l_s$ vertices from ${(G/X)(\gamma(s)\backslash \sigma(s))}^{-1}$ for each child $s$ of $t$, where $l=l^u+\sum_s l_s$. Using the tables of the children, we can easily compute such a solution if possible, and hence the entry $(\lambda,l)$ can be computed in polynomial time. Now, as we need to enumerate all possible $2^{O(|\beta(t)\backslash \sigma(t)|)}$ configurations $\lambda'$, it is not hard to see that computation of each entry takes $2^{O(|\beta(t)\backslash \sigma(t)|)}n^{O(1)}$ time. Hence, the table for $t$ can be computed in time $2^{O(|\beta(t)|)}n^{O(1)}=2^{O(w)}n^{O(1)}$.

% \begin{lemma}\label{lem:contraction-friendly}
% Suppose we are given a graph $G=(V,E)$, a (possibly empty) subset $X\subseteq V$, a subset $\mathsf{Cand}\subseteq V$, a tree-decomposition $\mathcal{T}$ of $G/X$ of width $w$, and an integer $\tau$, such that $G$ has a solution to OCT contained in $\mathsf{Cand}$ and of size $\tau$ that is disjoint from $X$. Then a solution to OCT on $G$ contained in $\mathsf{Cand}$, disjoint from $X$ and of size $\tau$ can be obtained in time $2^{O(w)}n^{O(1)}$.   
% \end{lemma}

\subsubsection{The algorithm}
Now, we describe our main algorithm. First, we apply Lemma \ref{lem-rsdecomp} to compute in polynomial time a tree decomposition $\mathcal{T}_\textnormal{RS}$ of $G$ with adhesion size at most $h$ in which the torso of each vertex is $h$-almost-embeddable. Note that $h$ is a constant which depends only on $H$. Given $\mathcal{T}_\textnormal{RS}$, the idea is to employ a DP on it to solve OCT on $G$. However, we have to be careful here, as the width of $\mathcal{T}_\textnormal{RS}$ can be very large. Next, we describe our DP based algorithm to solve OCT on $\mathcal{T}_\textnormal{RS}$. 

Let ${T}_\textnormal{RS}$ be the underlying tree of $\mathcal{T}_\textnormal{RS}$. The DP algorithm stores a table $M(t)$ for each vertex $t \in {T}_\textnormal{RS}$. 
%We define a configuration $\lambda$ of $\sigma(t)$ to be a set of values corresponding to the vertices  in $\sigma(t)$, where each value is in $\{1,2,3\}$. We note that these three values denote whether the corresponding vertex $v$ is in the solution, is in the left side of the residual bipartite graph (after removing the solution OCT) or is in the right side of the residual bipartite graph. As we want our solution to be a subset of $\mathsf{Cand}$, the 1-vertices must always come from $\mathsf{Cand}$. Otherwise, we say the configuration is not valid. 
Each entry of $M(t)$ is indexed by a tuple $(\lambda,l)$ for a configuration $\lambda$ of $\sigma(t)$ and an integer $0\le l\le k$ which stores a boolean value YES/NO based on whether or not there is an OCT for $G[\gamma(t)]$ such that the OCT is consistent with $\lambda$ and uses exactly $l$ vertices of $\gamma(t)\backslash \sigma(t)$.
%, i.e., 1-vertices of $\sigma(t)$ are in the OCT, and 2- and 3-vertices of $\sigma(t)$ are in the left and right side of the residual bipartite graph. 
That is, the entry with respect to $(\lambda,l)$ stores the solution of the problem $\mathsf{Prob}_{\lambda,l}$ as defined in our framework. Note that the size of the table $M(t)$ is $O(k)$, as $\sigma(t)$ is at most the constant $h$. Next, we show how to compute the entries of $M(t)$. 

Consider any vertex $t\in {T}_\textnormal{RS}$.   %Also fix a configuration $\lambda$ of $\sigma(t)$ and $1\le l\le k$. Again assume that the entry with respect to $(\lambda,l-1)$ stores NO. 
We can assume that the table entries corresponding to the children (if any) of $t$ are already computed. This is true, as in the base case, we deal with the leaves of ${T}_\textnormal{RS}$.
Recall in our framework, we defined the sub-problem $\mathsf{Prob}_{\lambda,l,Y}$ for $Y \subseteq \gamma(t)$ as the problem asking whether there is a solution for $\mathsf{Prob}_{\lambda,l}$ that is disjoint from $Y$.
First, we apply Corollary~\ref{cor-extention} on $t$ with $p = \lfloor \sqrt{k} \rfloor$ to obtain the sets $Y_1,\dots,Y_p \subseteq \gamma(t) \backslash \sigma(t)$. Then, following the framework, we construct the set $\varPi$ of pairs $(i,Y')$, where $i \in [p]$ and $Y' \subseteq Y_i$, satisfying three conditions: \textbf{(1)} $|\varPi| = |\mathsf{Cand}|^{O(\sqrt{k})}$, \textbf{(2)} $|Y'| = O(k/p)$ for all $(i,Y') \in \varPi$, and \textbf{(3)} for any configuration $\lambda$ of $\sigma(t)$ and $0\le l\le k$, the answer to $\mathsf{Prob}_{\lambda,l}$ is YES iff the answer to $\mathsf{Prob}_{\lambda,l,Y_i \backslash Y'}$ is YES for some $(i,Y') \in \varPi$. Thus, we can focus on a fixed set $Y'\subseteq Y_i$ and solve $\mathsf{Prob}_{\lambda,l,Y_i \backslash Y'}$ for all $\lambda$ and $l$. Again, following the framework, we apply Lemma \ref{lem-specialtd} to obtain a tree-decomposition $\mathcal{T}^*$ of $G[\gamma(t)]/(Y_i \backslash Y')$ in time $2^{O(\sqrt{k})} n^{O(1)}$.
We shall apply the DP procedure described in the previous section on $\mathcal{T}^*$.
Specifically, on each node $t^*$ of the underlying tree ${T}^*$ of $\mathcal{T}^*$, we want to compute the table $L(t^*)$ indexed by $(\lambda,l)$, for all valid configurations $\lambda$ of ${(G[\gamma(t)]/(Y_i \backslash Y'))(\sigma(t^*))}^{-1}$ and $0\le l\le k$..
Any non-leaf vertex in the underlying tree ${T}^*$ of $\mathcal{T}^*$ has bag size $O(\sqrt{k})$.
However, for a leaf, the bag size can be large.
Nevertheless, by property \textbf{(iii)} in Lemma \ref{lem-specialtd}, for such a leaf $t^* \in {T}^*$, $\gamma(t^*) = \beta(t^*) = \gamma(s)$ and $\sigma(t^*) = \sigma(s)$ for some child $s$ of $t$.
%Recall that $C_i$ is the set of children $s$ of $t$ satisfying $\gamma(s) \cap Y_i = \emptyset$.
Thus, the table $L(t^*)$ is nothing but the table $M(s)$ for $s$ (in the DP on ${T}_\textnormal{RS}$), and by assumption, the table $M(s)$ is already computed.
Therefore, as we have shown in the previous section, the DP on $\mathcal{T}^*$ can be done in $2^{O(\sqrt{k})} n^{O(1)}$ time.
%Subsequently, we apply the contraction-friendly DP on the tree-decomposition $\mathcal{T}^*$. 
%, and $\tau=l$. Note that Lemma \ref{lem:contraction-friendly} demands a tree-decomposition of bounded width, which $\mathcal{T}^*$ does not have. However, using the observation above regarding leaves of $T^*$, one can trivially adapt the proof of Lemma \ref{lem:contraction-friendly} to handle this special case. 
Now, by Property \textbf{(ii)} in Lemma \ref{lem-specialtd}, the root $\mathsf{rt}$ of $T^*$ has only one child $\mathsf{rt}'$, where $\sigma(\mathsf{rt}') = \sigma(t)$ and $ \beta(\mathsf{rt}) = \sigma(t)$, which implies $\gamma(\mathsf{rt}')$ contains all vertices in $G[\gamma(t)]/(Y_i \backslash Y')$.
Also note that $\sigma(t) \cap Y_i=\emptyset$, and so $\sigma(t)\cap (Y_i \backslash Y')=\emptyset$. Thus, the vertices of $\sigma(t)$ are not contracted in $\mathcal{T}^*$.
%Hence, the table $L(\mathsf{rt}')$ computed by the DP on $\mathcal{T}^*$ contains entries corresponding to all possible configurations of $\sigma(\mathsf{rt}') = \sigma(t)$ and all $0\le l\le k$.
Hence, the table $L(\mathsf{rt}')$ encodes the answers for the sub-problems $\mathsf{Prob}_{\lambda,l,Y_i \backslash Y'}$ for all $(\lambda,l)$.
After considering all pairs $(i,Y') \in \varPi$, we solve all sub-problems $\mathsf{Prob}_{\lambda,l,Y_i \backslash Y'}$ and compute the DP table $M(t)$ for $t$.
%Thus, this is exactly the table needed for $t$, and  we can just set $M(t)[\lambda,l]\leftarrow M(t)[\lambda,l] \lor L(\mathsf{rt}')[\lambda,l]$ for each configuration $\lambda$ of $\sigma(t)$ and $0\le l\le k$.
Note that in our solutions, we do not count the vertices of $\sigma(t)$. But, this can be easily fixed at the end by adding the 0-vertices for each configuration to the computed solution and computing the best combined solution.

Now, the DP on $T^*$ takes $2^{O(\sqrt{k})}n^{O(1)}$ time. Also, the total number of pairs $(i,Y')$ in $\varPi$ is ${|\mathsf{Cand}|}^{O(\sqrt{k})}$.
Thus, the computation of $M(t)$ takes in total $|\mathsf{Cand}|^{O(\sqrt{k})}\cdot 2^{O(\sqrt{k})}\cdot n^{O(1)}$ time.
%As the bag size of any non-leaf vertex in ${T}^*$ is $O(\sqrt{k})$, a trivial upper bound on this is $n^{O(\sqrt{k})}$.
By Lemma \ref{lem-candidate}, the size of $\mathsf{Cand}$ for OCT is $k^{O(1)}$.
Subsequently, we obtain the time bound of $2^{O(\sqrt{k}\log k)}n^{O(1)}$.
It follows that the OCT problem on $G$ can be solved in the same time.
The computation of the candidate set $\mathsf{Cand}$ is randomized with success probability $1-1/2^n$.
So we conclude the following.

\begin{theorem}
\probOCT on $H$-minor free graphs can be solved in $2^{O(\sqrt{k}\log k)}n^{O(1)}$ time by a randomized algorithm with success probability $1-1/2^n$. 
\end{theorem}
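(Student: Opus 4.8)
The plan is to instantiate the generic framework of Section~\ref{sec:framework} for \probOCT, so essentially all the work has already been done and what remains is to assemble the pieces and track the running time. First I would invoke Lemma~\ref{lem-candidate} to compute, in randomized polynomial time with success probability $1-1/2^n$, a candidate set $\mathsf{Cand}\subseteq V(G)$ of size $k^{O(1)}$; from here on we only look for an OCT contained in $\mathsf{Cand}$. Then I would apply Lemma~\ref{lem-rsdecomp} to compute (deterministically, in polynomial time) a Robertson--Seymour tree decomposition $\mathcal{T}_\textnormal{RS}$ of $G$ with adhesion size at most a constant $h=h(H)$, $h$-almost-embeddable torsos, and the connectivity property $G[\gamma(t)\setminus\sigma(t)]$ connected with $\sigma(t)=N_G(\gamma(t)\setminus\sigma(t))$.

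The core of the argument is the dynamic program on $\mathcal{T}_\textnormal{RS}$. For each node $t$ I would maintain a table $M(t)$ indexed by $(\lambda,l)$, where $\lambda$ is a valid \probOCT-configuration of $\sigma(t)$ (each vertex labelled ``in the OCT'', ``left'', or ``right'') and $0\le l\le k$; the entry records whether $G[\gamma(t)]$ admits an OCT consistent with $\lambda$ using exactly $l$ vertices of $\gamma(t)\setminus\sigma(t)$, i.e.\ the answer to $\mathsf{Prob}_{\lambda,l}$. Since $|\sigma(t)|\le h$, the table has $O(k)$ entries. Processing $t$ bottom-up, I would set $p=\lfloor\sqrt k\rfloor$, apply Corollary~\ref{cor-extention} to obtain $Y_1,\dots,Y_p\subseteq\gamma(t)\setminus\sigma(t)$, and build the family $\varPi$ of pairs $(i,Y')$ with $Y'\subseteq Y_i\cap\mathsf{Cand}$, $|Y'|\le hk/p$, so that $|\varPi|=|\mathsf{Cand}|^{O(\sqrt k)}$ and $\mathsf{Prob}_{\lambda,l}$ is YES iff $\mathsf{Prob}_{\lambda,l,Y_i\setminus Y'}$ is YES for some $(i,Y')\in\varPi$ (this is exactly the reduction carried out in the framework, using that $\mathsf{Cand}$ has size $k^{O(1)}$ and that every vertex lies in at most $h$ of the $Y_i$'s). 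For each fixed $(i,Y')$, Lemma~\ref{lem-specialtd} gives in $2^{O(\sqrt k)}n^{O(1)}$ time a tree decomposition $\mathcal{T}^*$ of $G[\gamma(t)]/(Y_i\setminus Y')$ whose non-leaf bags have size $O(\sqrt k)$ and whose oversized leaves $t^*$ satisfy $\beta(t^*)=\gamma(t^*)=\gamma(s)$, $\sigma(t^*)=\sigma(s)$ for a child $s$ of $t$. I would then run the contraction-friendly tree-decomposition DP for \probOCT on $\mathcal{T}^*$: the crucial point (Observation~\ref{obs:distinct-config}) is that a contracted vertex corresponds to a connected bipartite component of $G[Y_i\setminus Y']$, which has only two possible states, so each bag of size $O(\sqrt k)$ still has only $2^{O(\sqrt k)}$ configurations; for the oversized leaves the needed table $L(t^*)=M(s)$ is already available from the main DP. Hence the DP on $\mathcal{T}^*$ runs in $2^{O(\sqrt k)}n^{O(1)}$ time, and reading off the table of the unique child $\mathsf{rt}'$ of the root of $T^*$ (which covers all of $G[\gamma(t)]/(Y_i\setminus Y')$, with $\sigma(\mathsf{rt}')=\sigma(t)$ uncontracted since $\sigma(t)\cap Y_i=\emptyset$) yields the answers to all $\mathsf{Prob}_{\lambda,l,Y_i\setminus Y'}$. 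Aggregating over $\varPi$ computes $M(t)$.

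For the running-time bookkeeping: computing one $M(t)$ costs $|\varPi|\cdot 2^{O(\sqrt k)}n^{O(1)}=|\mathsf{Cand}|^{O(\sqrt k)}\cdot 2^{O(\sqrt k)}n^{O(1)}$, and with $|\mathsf{Cand}|=k^{O(1)}$ this is $2^{O(\sqrt k\log k)}n^{O(1)}$; multiplying by the polynomially many nodes of $T_\textnormal{RS}$ keeps the bound $2^{O(\sqrt k\log k)}n^{O(1)}$. The answer to the original instance is then read off the table $M(\mathsf{rt})$ of the root (adjusting for the non-counted vertices of $\sigma(\mathsf{rt})=\emptyset$ trivially). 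The only randomized ingredient is the candidate-set computation, so the overall algorithm is randomized with success probability $1-1/2^n$. I do not expect a genuine obstacle here, since the theorem is a direct application of the framework; the one place that needs a little care is verifying that \probOCT truly satisfies the ``contraction-friendly'' hypothesis, i.e.\ that contracting bipartite components collapses the configuration space to two states per component and that the DP transitions at internal nodes of $\mathcal{T}^*$ (and at $T_\textnormal{RS}$-nodes) correctly combine the independent sub-instances over shared adhesions — this is precisely Observation~\ref{obs:distinct-config} together with the standard \probOCT tree-decomposition DP, so it is routine.
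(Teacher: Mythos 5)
Your proposal is correct and follows essentially the same route as the paper: candidate set from Lemma~\ref{lem-candidate}, Robertson--Seymour decomposition from Lemma~\ref{lem-rsdecomp}, the main DP with tables $M(t)$ computed via Corollary~\ref{cor-extention}, the family $\varPi$, Lemma~\ref{lem-specialtd}, and the contraction-friendly OCT DP based on Observation~\ref{obs:distinct-config}, with the same $|\mathsf{Cand}|^{O(\sqrt{k})}\cdot 2^{O(\sqrt{k})}\cdot n^{O(1)}=2^{O(\sqrt{k}\log k)}n^{O(1)}$ bookkeeping. No gaps to report.
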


Besides computing the candidate set $\mathsf{Cand}$ using the randomized algorithm in Lemma~\ref{lem-candidate}, we can also use the trivial candidate set $\mathsf{Cand} = V(G)$.
In this case, we obtain an $n^{O(\sqrt{k})}$-time deterministic algorithm.
\begin{theorem}
\probOCT on $H$-minor free graphs can be solved in $n^{O(\sqrt{k})}$ time.
\end{theorem}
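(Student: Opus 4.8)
The statement to prove is:

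\begin{theorem}
\probOCT on $H$-minor free graphs can be solved in $n^{O(\sqrt{k})}$ time.
\end{theorem}

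\textbf{Plan of proof.} The plan is to re-run the entire algorithm described in the previous subsection verbatim, with the single modification that instead of invoking the randomized candidate-set lemma (Lemma~\ref{lem-candidate}) to obtain a set $\mathsf{Cand}$ of size $k^{O(1)}$, I take the trivial candidate set $\mathsf{Cand} = V(G)$. This is a legitimate candidate set: by definition, $(G,k)$ has an OCT of size at most $k$ if and only if it has one of size at most $k$ contained in $V(G)$, which is vacuously true. Crucially, computing this candidate set is deterministic and takes no time at all, so the only source of randomization in the earlier algorithm is removed. Everything else in the framework---the Robertson--Seymour decomposition (Lemma~\ref{lem-rsdecomp}), Lemma~\ref{lem-key}, Corollary~\ref{cor-extention}, Lemma~\ref{lem-specialtd}, and the contraction-friendly tree-decomposition DP for OCT---is deterministic and does not depend on $|\mathsf{Cand}|$ being small, only on its size appearing in the final running-time bound.

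\textbf{Key steps.} First, I would observe that the running-time analysis of the main DP on $\mathcal{T}_\textnormal{RS}$ was carried out in the earlier subsection as a function of $|\mathsf{Cand}|$: each DP table $M(t)$ is computed in $|\mathsf{Cand}|^{O(\sqrt{k})} \cdot 2^{O(\sqrt{k})} \cdot n^{O(1)}$ time, since $|\varPi| = |\mathsf{Cand}|^{O(\sqrt{k})}$ and the inner DP on $\mathcal{T}^*$ runs in $2^{O(\sqrt{k})} n^{O(1)}$ time. There are at most $n$ nodes in $T_\textnormal{RS}$, so the whole algorithm runs in $|\mathsf{Cand}|^{O(\sqrt{k})} \cdot 2^{O(\sqrt{k})} \cdot n^{O(1)}$ time. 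Second, I would substitute $|\mathsf{Cand}| = |V(G)| = n$. This yields a running time of $n^{O(\sqrt{k})} \cdot 2^{O(\sqrt{k})} \cdot n^{O(1)} = n^{O(\sqrt{k})}$, since $2^{O(\sqrt{k})} \le n^{O(\sqrt{k})}$ and $n^{O(1)}$ is absorbed. Correctness is inherited unchanged from the earlier analysis: conditions \textbf{(1)}--\textbf{(3)} on $\varPi$ hold for any candidate set (the proof of condition \textbf{(3)} only uses the definition of candidate set), so the equivalence ``$\mathsf{Prob}_{\lambda,l}$ is YES iff some $\mathsf{Prob}_{\lambda,l,Y_i\backslash Y'}$ is YES'' still holds, and the DP on $\mathcal{T}^*$ correctly computes the tables as before. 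Thus at the root of $T_\textnormal{RS}$ we read off whether $G$ has an OCT of size at most $k$.

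\textbf{Main obstacle.} Honestly, there is no real obstacle here---the statement is essentially a corollary obtained by a single substitution into an already-established running-time formula, and the authors flagged exactly this in the sentence preceding the theorem (``we can also use the trivial candidate set $\mathsf{Cand}=V(G)$''). The only point that requires a moment's care is making sure that the framework's running-time bound was genuinely stated in terms of an abstract $|\mathsf{Cand}|$ and that no step secretly assumed $|\mathsf{Cand}|$ was polynomial in $k$ (for instance, in bounding the size of a DP table or in an enumeration); a quick pass over the construction of $\varPi$ for vertex-deletion problems confirms that $|\varPi| \le p \cdot |\mathsf{Cand}|^{hk/p} = |\mathsf{Cand}|^{O(\sqrt{k})}$ holds for any $\mathsf{Cand}$, and the inner DP cost $2^{O(\sqrt{k})} n^{O(1)}$ is independent of $\mathsf{Cand}$ entirely. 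Hence the substitution is valid and the theorem follows.
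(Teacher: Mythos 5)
Your proposal is correct and is exactly the paper's argument: substitute the trivial candidate set $\mathsf{Cand} = V(G)$ into the framework's running-time bound $|\mathsf{Cand}|^{O(\sqrt{k})} \cdot 2^{O(\sqrt{k})} \cdot n^{O(1)}$, noting that correctness and the bound $|\varPi| = |\mathsf{Cand}|^{O(\sqrt{k})}$ hold for any candidate set. Nothing further is needed.
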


\subsubsection{\probEB}
The \probEB (EB) problem is similar to OCT, except here we need to remove a subset of edges. Formally, we are given an $H$-minor-free graph $G=(V,E)$ and a parameter $k$, and we would like to decide whether there is a subset $E'\subseteq E$ of size at most $k$ such that removal of the edges in $E'$ from $G$ gives us a bipartite graph. As we saw in the previous subsection, the main ingredient for applying the framework is the contraction-friendly tree-decomposition DP. Here, we describe such a DP for EB. The rest of the algorithm is similar to the one for OCT. 

% If there is such a subset, we refer to it as a solution. By abusing the notation, we refer to a solution to the OCT problem also as an OCT. The distinction should be clear from the context. Following our framework, the first step is to apply Lemma~\ref{lem-candidate} to obtain a candidate set $\mathsf{Cand} \subseteq V$ of size $k^{O(1)}$ for the instance. Note that this takes only a polynomial time. For our purpose, it is sufficient to find a solution which is a subset of $\mathsf{Cand}$. Henceforth, by the term solution we always mean a subset of $\mathsf{Cand}$.       

Consider any subgraph $G_1=(V_1,E_1)$ and fix any solution $E_\mathsf{sol}$ on $G_1$. We say that $E_\mathsf{sol}$ is disjoint from a set of vertices $X$ if the endpoints of the edges in $E_\mathsf{sol}$ are disjoint from $X$. We define a configuration $\lambda$ of $V'\subseteq V_1$ with respect to the solution to be a set of values corresponding to the vertices  in $V'$, where each value is in $\{1,2\}$. We note that these values denote whether the corresponding vertex $v$ is in the left side of the residual bipartite graph (after removing the solution edges) or is in the right side of the residual bipartite graph. Next, we describe the contraction-friendly tree-decomposition DP for EB.  

We are given an $n$-vertex graph $G=(V,E)$, a (possibly empty) subset $X\subseteq V$, and a tree-decomposition $\mathcal{T}$ of $G/X$. Let $T$ be the underlying tree of $\mathcal{T}$. For any $0\le l\le k$, we would like to decide whether there is a solution to EB on $G$ of size $l$ which is disjoint from $X$. The following observation is similar to Observation \ref{obs:distinct-config}. 

\begin{observation}\label{obs:distinct-config-EB}
The number of distinct configurations of the vertices in ${(G/X)(V')}^{-1}$ with respect to the solutions disjoint from $X$ is $2^{O(|V'|)}$. 
\end{observation}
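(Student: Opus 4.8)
The statement to prove is Observation~\ref{obs:distinct-config-EB}: the number of distinct (valid) configurations of the vertices in $(G/X)(V')^{-1}$ with respect to solutions (edge sets) disjoint from $X$ is $2^{O(|V'|)}$.

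\medskip

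The plan is to mirror the proof of Observation~\ref{obs:distinct-config} for OCT, but account for the fact that here a configuration assigns only two values $\{1,2\}$ (which side of the residual bipartition a vertex lies on) rather than three. The crucial structural point is identical: a solution $E_\mathsf{sol}$ that is disjoint from $X$ never deletes an edge with an endpoint in $X$, so in the residual bipartite graph $G - E_\mathsf{sol}$ the induced subgraph $G[X]$ is untouched and hence must itself be bipartite. More precisely, for each contracted vertex $v$ of $V'$, let $C$ be the connected component of $G[X]$ that $v$ represents. Then $C$ is a connected bipartite graph and therefore has a \emph{unique} bipartition into two independent sets $\varGamma_1, \varGamma_2$; in any residual bipartite graph these two sides must go to opposite sides of the global bipartition, so $C$ contributes only $2$ possible configurations (which of $\varGamma_1, \varGamma_2$ is labelled $1$ versus $2$).

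\medskip

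First I would set up the pre-image: $(G/X)(V')^{-1}$ is obtained from $(G/X)[V']$ by undoing the contractions, so its vertex set partitions into (a) singleton vertices of $G$ lying outside $X$, each of which can independently take either value in $\{1,2\}$, contributing a factor of at most $2$ each, and (b) connected components of $G[X]$, each contributing a factor of at most $2$ by the uniqueness-of-bipartition argument above. Since there are at most $|V'|$ such parts in total (each vertex of $V'$ corresponds to exactly one part), the number of distinct valid configurations is at most $2^{|V'|} = 2^{O(|V'|)}$. I would also note in passing that additional validity constraints (no two vertices on the same side joined by an undeleted edge) only further restrict the count, so the upper bound still holds; and that, just as in Observation~\ref{obs:distinct-config}, given $(G/X)(V')^{-1}$ these configurations can be enumerated in $2^{O(|V'|)} n^{O(1)}$ time by first computing the bipartition of each component of $G[X]$.

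\medskip

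I do not expect any real obstacle here — the argument is essentially a transcription of the OCT case with the value set shrunk from $\{0,1,2\}$ to $\{1,2\}$, and the only subtlety worth stating explicitly is why ``disjoint from $X$'' forces $G[X]$ (and hence each represented component) to be bipartite, which is exactly what makes the per-component count $2$ rather than unbounded. The proof is short enough to be given in full rather than sketched.
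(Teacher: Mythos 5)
Your proof is correct and is essentially the argument the paper intends: it states Observation~\ref{obs:distinct-config-EB} as "similar to Observation~\ref{obs:distinct-config}", i.e., exactly your transcription of the OCT argument with the value set shrunk to $\{1,2\}$, using that a solution disjoint from $X$ leaves $G[X]$ intact so each contracted component is bipartite with a unique bipartition and contributes only $2$ configurations. Nothing is missing.
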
 

For any solution $S$ and a configuration $\lambda$ of a set of vertices $V'$, we say $S$ is consistent with $\lambda$ if the 1- and 2-vertices of $V'$ are in the left and right side of the residual bipartite graph. For any configuration $\lambda$ of the vertices in ${(G/X)(\sigma(t))}^{-1}$ and $0\le l\le k$, we define a sub-problem where the goal is to decide whether there is a solution to EB on ${(G/X)(\gamma(t))}^{-1}$ which is disjoint from $X$, consistent with $\lambda$ and uses exactly $l$ edges such that at least one endpoint of each edge is not in ${(G/X)(\sigma(t))}^{-1}$. Thus, in this sub-problem, the solution contains the edges whose both endpoints are 1-vertices (or 2-vertices) among the set of edges $E^1$ in ${(G/X)(\sigma(t))}^{-1}$ and $l$ additional edges from ${(G/X)(\gamma(t))}^{-1}$ which are not in $E^1$. We store the boolean value (YES/NO) of the solution of this sub-problem in a table $L(t)$ indexed by $(\lambda,l)$. 

We note that it is sufficient to solve the above sub-problem, as the edges used in a solution from ${(G/X)(\sigma(t))}^{-1}$ can be easily derived from the corresponding configuration. The sub-problem is defined in this way to avoid any potential multiple counting of a solution edge at different levels. In this setting, if an edge $(u,v)$ is common to both parent and child sub-problems, the decision to take it or not is uniquely determined by the configuration corresponding to the children. 

By Observation \ref{obs:distinct-config-EB}, the number of entries that we need to store in $L(t)$ is $2^{O(|\sigma(t)|)}\cdot k$, and all such relevant configurations can be computed in $2^{O(|\sigma(t)|)}n^{O(1)}$ time. With these modified definitions of sub-problem and DP table, the rest of the algorithm is similar to  the one for OCT. Hence, we conclude that the table entries corresponding to all vertices of $T$ can be computed in time $2^{O(w)}n^{O(1)}$, where $w$ is the maximum size of the bags of the non-leaf vertices in $T$. 

Note that by our framework, the total running time of our algorithm is $(|\mathsf{Cand}|^{O(\sqrt{k})}\cdot 2^{O(w)})\cdot n^{O(1)}$, where $w=\sqrt{k}$. By Lemma \ref{lem-candidate}, the size of $\mathsf{Cand}$ for EB is $k^{O(1)}$. Subsequently, we obtain the time bound of $2^{O(\sqrt{k}\log k)}n^{O(1)}$.  

\begin{theorem}
\probEB on $H$-minor free graphs can be solved in $2^{O(\sqrt{k}\log k)}n^{O(1)}$ time by a randomized algorithm with success probability $1-1/2^n$.
\end{theorem}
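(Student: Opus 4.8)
The plan is to observe that \probEB satisfies exactly the three structural features required by the generic framework of Section~\ref{sec:framework}: it admits a small candidate set (Lemma~\ref{lem-candidate} gives one of size $k^{O(1)}$), it is solvable by dynamic programming over a tree decomposition in time single-exponential in the width, and this DP is contraction-friendly in the sense that components of a set $X$ known to be disjoint from the solution can be contracted without blowing up the state space. Granting these, the proof is a direct instantiation of the framework, mirroring the argument already carried out for \probOCT. First I would spell out the contraction-friendly DP (sketched in the preceding subsection) and verify its time bound; then I would run the framework: compute $\mathsf{Cand}$, compute a Robertson--Seymour decomposition $\mathcal{T}_\textnormal{RS}$ via Lemma~\ref{lem-rsdecomp}, and do DP bottom-up on $\mathcal{T}_\textnormal{RS}$, using Corollary~\ref{cor-extention} and Lemma~\ref{lem-specialtd} to implement each single step efficiently.

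For the DP itself, a configuration of a vertex set assigns each vertex a label in $\{1,2\}$ indicating which side of the residual bipartition it lies on (edges deleted, rather than vertices, so there is no third ``in-solution'' label for vertices). The crucial point is that when we contract a set $X$ disjoint from the solution, $G[X]$ must be bipartite, so each connected component of $G[X]$ has a \emph{unique} proper $2$-coloring up to swapping the two sides; hence a contracted vertex has only two possible states, and a bag of size $w$ in $G/X$ has $2^{O(w)}$ configurations (Observation~\ref{obs:distinct-config-EB}). At each node $t$ of the tree decomposition of $G/X$ the table $L(t)$ is indexed by a valid configuration $\lambda$ of the pre-image of $\sigma(t)$ together with a budget $l\le k$, and stores whether \probEB on the pre-image of $G[\gamma(t)]$ admits a solution disjoint from $X$, consistent with $\lambda$, using exactly $l$ edges \emph{with at least one endpoint outside $\sigma(t)$}. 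This last restriction is what prevents a single solution edge lying inside an adhesion from being charged at several nodes: such edges are forced by $\lambda$ and never contribute to $l$. A routine bottom-up computation — enumerating the $2^{O(|\beta(t)\setminus\sigma(t)|)}$ extensions of $\lambda$ to $\beta(t)$, splitting the budget among the children whose subproblems are independent because they do not count adhesion vertices/edges — computes every $L(t)$ in $2^{O(w)}n^{O(1)}$ time, where $w$ bounds the bags of non-leaf nodes, provided the tables of ``heavy'' leaves are supplied externally.

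With this in hand the assembly is exactly as for \probOCT: at a node $t\in T_\textnormal{RS}$, assuming the tables $M(s)$ of its children are known, apply Corollary~\ref{cor-extention} with $p=\lfloor\sqrt{k}\rfloor$ to get $Y_1,\dots,Y_p$, build the set $\varPi$ of pairs $(i,Y')$ with $|\varPi|=|\mathsf{Cand}|^{O(\sqrt{k})}$ and $|Y'|=O(\sqrt{k})$ such that $\mathsf{Prob}_{\lambda,l}$ reduces to the $\mathsf{Prob}_{\lambda,l,Y_i\setminus Y'}$; for each pair, Lemma~\ref{lem-specialtd} yields in $2^{O(\sqrt{k})}n^{O(1)}$ time a tree decomposition $\mathcal{T}^*$ of $G[\gamma(t)]/(Y_i\setminus Y')$ all of whose non-leaf bags have size $O(\sqrt{k})$ and whose heavy leaves correspond to children $s$ of $t$, so the DP above runs in $2^{O(\sqrt{k})}n^{O(1)}$ time, with the heavy-leaf tables read off from the already-computed $M(s)$. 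Summing over $\varPi$ and over all nodes of $T_\textnormal{RS}$ gives total time $|\mathsf{Cand}|^{O(\sqrt{k})}\cdot 2^{O(\sqrt{k})}\cdot n^{O(1)}$, which is $2^{O(\sqrt{k}\log k)}n^{O(1)}$ since $|\mathsf{Cand}|=k^{O(1)}$ for \probEB; the only randomness is in computing $\mathsf{Cand}$, with success probability $1-1/2^n$. I expect the main obstacle to be precisely the double-counting bookkeeping described above — making sure that edges interior to adhesions, which are shared between the torsos of a parent and its children and also created anew when Corollary~\ref{cor-extention} turns adhesions into cliques, are accounted for in exactly one place — together with the routine-but-fiddly verification that the unique-bipartition property survives all the contractions so that bag states stay $2^{O(w)}$.
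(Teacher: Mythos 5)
Your proposal is correct and follows essentially the same route as the paper: the same contraction-friendly DP with $\{1,2\}$-labelled configurations and the unique-bipartition argument for contracted components (Observation~\ref{obs:distinct-config-EB}), the same convention of charging only edges with an endpoint outside the adhesion to the budget $l$ to avoid double counting, and the same assembly via Lemma~\ref{lem-candidate}, Corollary~\ref{cor-extention}, and Lemma~\ref{lem-specialtd} yielding $|\mathsf{Cand}|^{O(\sqrt{k})}\cdot 2^{O(\sqrt{k})}\cdot n^{O(1)} = 2^{O(\sqrt{k}\log k)}n^{O(1)}$ time. No substantive difference from the paper's argument.
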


Same as before, we can use the trivial candidate set $\mathsf{Cand} = E(G)$ to obtain an $n^{O(\sqrt{k})}$-time algorithm.

\begin{theorem}
\probEB on $H$-minor free graphs can be solved in $n^{O(\sqrt{k})}$ time.
\end{theorem}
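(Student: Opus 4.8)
The plan is to run exactly the same algorithm as in the randomized $2^{O(\sqrt{k}\log k)}n^{O(1)}$-time result for \probEB, with the single modification that, instead of invoking Lemma~\ref{lem-candidate} to compute a candidate set, we use the trivial candidate set $\mathsf{Cand} = E(G)$. This is a legitimate candidate set: by definition, $I$ has a solution of size at most $k$ iff $I$ has a solution of size at most $k$ contained in $E(G)$, which is vacuously true. Since this candidate set is given for free, the resulting algorithm is deterministic, and the only thing left to check is the running time.

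First I would recall from the generic framework (Section~\ref{sec:framework}) that, once a contraction-friendly tree-decomposition DP is available — which we established for \probEB — the overall running time of the algorithm is $|\mathsf{Cand}|^{O(\sqrt{k})} \cdot 2^{O(\sqrt{k})} \cdot n^{O(1)}$, where the $2^{O(\sqrt{k})}$ factor comes from the DP on the tree decomposition $\mathcal{T}^*$ of width $O(\sqrt{k})$ produced by Lemma~\ref{lem-specialtd}, and the $|\mathsf{Cand}|^{O(\sqrt{k})}$ factor comes from enumerating the pairs in the set $\varPi$. With $\mathsf{Cand} = E(G)$ we have $|\mathsf{Cand}| = \binom{n}{2} = O(n^2)$. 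More precisely, since \probEB is an edge-deletion problem, the framework works with $\mathsf{Cand}' \subseteq V(G)$, the set of endpoints of edges in $\mathsf{Cand}$, so here $\mathsf{Cand}' = V(G)$ and $|\varPi| \le p \cdot |\mathsf{Cand}'|^{2hk/p} = \sqrt{k} \cdot n^{O(\sqrt{k})} = n^{O(\sqrt{k})}$. Hence the total running time is $n^{O(\sqrt{k})} \cdot 2^{O(\sqrt{k})} \cdot n^{O(1)} = n^{O(\sqrt{k})}$.

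Finally, I would note that every other ingredient is deterministic: Lemma~\ref{lem-rsdecomp} computes $\mathcal{T}_\textnormal{RS}$ in deterministic polynomial time, Corollary~\ref{cor-extention} and Lemma~\ref{lem-specialtd} are deterministic, and the DP itself is deterministic. Therefore we obtain a deterministic $n^{O(\sqrt{k})}$-time algorithm for \probEB on $H$-minor-free graphs, which is the claimed statement. I do not expect any real obstacle here; the only point requiring a moment's care is that the bound $|\varPi| = |\mathsf{Cand}|^{O(\sqrt{k})}$ from the framework still holds when $\mathsf{Cand}$ is polynomially large in $n$ rather than of size $k^{O(1)}$, which is immediate since $p = \lfloor\sqrt{k}\rfloor$ makes the exponent $O(k/p) = O(\sqrt{k})$ regardless of the base.
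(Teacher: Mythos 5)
Your proposal is correct and matches the paper's own argument exactly: the paper obtains this theorem by substituting the trivial candidate set $\mathsf{Cand} = E(G)$ into the same framework, so $|\varPi| = n^{O(\sqrt{k})}$ and the total time becomes $n^{O(\sqrt{k})}$, with all remaining ingredients deterministic. Your added care about the $|\varPi|$ bound holding for a polynomially large candidate set is exactly the (easy) point the framework already anticipates.
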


%%%%%%%%%%%%%%%%%%%%%%%%%%%%%%%%%%%%%%%%%%%%%%%%%%%%%%%%%%%%

\subsection{\probGFVS and \probGFES}
For a directed graph $G=(V,A)$, a finite group $\Sigma$ of size $g$ and a labeling function $\Lambda: A\rightarrow \Sigma$, $(G,\Lambda)$ is called a $\Sigma$-labeled graph if for each arc $(u,v)\in A$, $(v,u)$ is also in $A$ and $\Lambda((u,v))=\Lambda((v,u))^{-1}$. For a subset of vertices $D$, we denote by $(G\backslash D,\Lambda)$ the $\Sigma$-graph obtained by removing the vertices in $D$ from $(G,\Lambda)$. Similarly, for a subset of arcs $A'$, we denote by $(G\backslash A',\Lambda)$ the $\Sigma$-labeled graph obtained by removing the arcs in $A'$ from $(G,\Lambda)$. For a cycle $C=\{v_1,\ldots,v_t,v_1\}$, let $\Lambda(C)=\Lambda((v_1,v_2))\cdot \ldots \cdot \Lambda((v_{t-1},v_t))\cdot \Lambda((v_t,v_1))$. Here $\cdot $ is the group operation. A cycle $C$ is called a non-null cycle if $\Lambda(C)\ne 1_{\Sigma}$. In the \probGFVS (GFVS) problem, given a $\Sigma$-labeled graph $(G,\Lambda)$ and a parameter $k$, the goal is to decide whether there is a set of vertices $D$ such that $(G\backslash D,\Lambda)$ does not have any non-null cycle. The \probGFES (GFES) problem is similar, except here the goal is to decide whether there is a set of arcs $A'$ such that $(G\backslash A',\Lambda)$ does not have any non-null cycle. For simplicity, we assume that $\Sigma=\{1,\ldots,g\}$.  

First, we apply our framework on GFVS. We note that OCT is a specialization of GFVS when $\Sigma=\{0,1\}$ \cite{KratschW20}. Indeed, our algorithm for GFVS is an extension of the algorithm for OCT. In the following, we just describe the contraction-friendly tree-decomposition DP for GFVS. The remaining algorithm is similar to the one for OCT. Following our framework, the first step is to obtain a candidate set $\mathsf{Cand} \subseteq V$. Henceforth, by the term solution we always mean a subset of $\mathsf{Cand}$.       

For describing our algorithm, we need a definition. For a $\Sigma$-labeled graph $(G,\Lambda)$, $\mu: V\rightarrow \Sigma$ is called a consistent labeling iff for each arc $(u,v)$ in $G$, we have $\mu(v)=\mu(u)\cdot \Lambda((u,v))$. 

\begin{lemma}\label{lem:consistentlabels}
\cite{Guillemot11a} A $\Sigma$-labeled graph has a consistent labeling iff it does not contain any non-null cycle. 
\end{lemma}

Consider any subgraph $G_1=(V_1,A_1)$ and fix any solution $S$ of GFVS on $G_1$. Also, consider a consistent labeling $\mu$ of the vertices in the graph $(G\backslash D,\Lambda)$, which exists by the above lemma. We define a configuration $\lambda$ of $V'\subseteq V_1$ with respect to $S$ to be a set of values corresponding to the vertices in $V'$, where each value is in $\{0,1,\ldots,g\}$. We note that these values denote whether the corresponding vertex $v$ is in the solution (0) or is mapped to the element $i\in \Sigma$ via $\mu$. As we want our solution to be a subset of $\mathsf{Cand}$, the 0-vertices must always come from $\mathsf{Cand}$. We also say that $S$ is \emph{compatible} with the configuration $\lambda$.

\begin{observation}\label{obs:distinct-config-gfvs}
The number of distinct configurations of the vertices in ${(G/X)(V')}^{-1}$ with respect to the solutions disjoint from $X$ is $g^{O(|V'|)}$. 
\end{observation}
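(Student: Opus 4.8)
The plan is to mirror the counting argument of Observation~\ref{obs:distinct-config} from the OCT case, with the three states $\{0,1,2\}$ replaced by the $g+1$ states $\{0,1,\dots,g\}$ coming from the group $\Sigma$. So first I would fix a subgraph $G_1$, a solution $S$ of GFVS on $G_1$ that is disjoint from $X$, and a consistent labeling $\mu$ of $(G_1 \setminus S, \Lambda)$, which exists by Lemma~\ref{lem:consistentlabels}. The key object to control is, for a vertex $v \in V'$ that is a contracted vertex, the connected component $C$ of $G[X]$ that collapsed to $v$. Since $S$ is disjoint from $X$, the whole component $C$ survives in $G \setminus S$, so $\mu$ restricts to a consistent labeling of $C$. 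The crucial step is to observe that a consistent labeling of a \emph{connected} $\Sigma$-labeled graph is determined by the label of a single vertex: once we fix $\mu(v_0)$ for one $v_0 \in C$, the relation $\mu(v) = \mu(u)\cdot \Lambda((u,v))$ propagates the value along any path, so there are at most $g = |\Sigma|$ consistent labelings of $C$ (and hence at most $g$ possible "states" of the contracted vertex $v$).

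Then I would assemble the count. Each vertex of $V'$ whose pre-image in $V(G)$ is a single vertex outside $X$ has at most $g+1$ configurations (the $g$ group elements or the symbol $0$ for being in the solution). Each contracted vertex of $V'$ has at most $g$ configurations by the argument above — note that a contracted vertex, coming from a component of $G[X]$, is automatically disjoint from the solution, so the symbol $0$ is not available for it. Multiplying over the at most $|V'|$ vertices of $V'$, the total number of distinct configurations of the vertices in ${(G/X)(V')}^{-1}$ with respect to solutions disjoint from $X$ is at most $(g+1)^{|V'|} = g^{O(|V'|)}$, which is the claimed bound. (Here I am implicitly using that $g \geq 2$; if $g = 1$ the group is trivial, every cycle is null, GFVS is vacuous, and the statement holds trivially.)

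The one point that needs a little care — and I expect this to be the main (minor) obstacle — is making precise the correspondence between a "configuration of ${(G/X)(V')}^{-1}$" and an assignment of states to the vertices of $V'$ in the contracted graph: a configuration is nominally defined on the pre-image vertices, but the content of the observation is that within a contracted component only the "shift" of the labeling matters, so effectively one state per contracted vertex suffices. I would phrase this as: two solutions-with-consistent-labelings that induce the same state on every vertex of $V'$ (group element or $0$) are counted as the same configuration, and conversely every contracted component's internal labeling is recovered from its single state by propagation. Once that bookkeeping is stated, the proof is the three-sentence counting argument above, completely parallel to Observation~\ref{obs:distinct-config}.
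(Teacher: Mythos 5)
Your proof is correct and follows essentially the same argument as the paper: since the solution is disjoint from $X$, each contracted component of $G[X]$ admits a consistent labeling that is determined by the label of a single vertex (giving $g$ choices), while each uncontracted vertex has $g+1$ states, yielding the $(g+1)^{|V'|} = g^{O(|V'|)}$ bound. The extra bookkeeping you mention (identifying a configuration of the pre-image with one state per vertex of $V'$) and the $g=1$ remark are fine but not needed beyond what the paper already does implicitly.
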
 

\begin{proof}
%Consider any solution $S$ of OCT disjoint from $X$ on any graph $G'$ that contains ${(G/X)}^{-1}(\sigma(t))$ as an induced subgraph. 
For each contracted vertex $v$ of $V'$, consider the connected component $C$ in $G[X]$ corresponding to $v$. Then as the solutions we seek are disjoint from $X$, $C$ must have no non-null cycles. It follows that the vertices of $C$ has a consistent labeling $\mu$. Now, fix the label of any vertex in $C$. By definition of consistent labeling, this also fixes the labeling of the remaining vertices of $C$. Thus, the number of distinct labelings of the vertices of $C$ is only $g$. Hence, the number of distinct configurations of the vertices in $C$ with respect to the solutions disjoint from $X$ is only $g$. Also, any vertex $u$ of $V'\backslash X$ can have $g+1$ configurations. Hence, the number of such distinct configurations of the vertices in ${(G/X)(V')}^{-1}$ is at most $(g+1)^{|V'|}=g^{O(|V'|)}$.  
\end{proof}

Next, we describe the contraction-friendly tree-decomposition DP for GFVS. A sub-problem is defined in the same way as for OCT. For any configuration $\lambda$ of the vertices in ${(G/X)(\sigma(t))}^{-1}$ and $0\le l\le k$, we define a sub-problem where the goal is to decide whether there is a solution to GFVS on ${(G/X)(\gamma(t))}^{-1}$ which is disjoint from $X$, is compatible with $\lambda$ and  uses exactly $l$ vertices of ${(G/X)(\gamma(t)\backslash \sigma(t))}^{-1}$. We store the boolean value (YES/NO) of the solution of this sub-problem in a table $L(t)$ indexed by $(\lambda,l)$. For simplicity, we assume wlog that all configurations are valid, i.e., for each configuration $\lambda$, 1-vertices of ${(G/X)(\sigma(t))}^{-1}$ come from $\mathsf{Cand}$. 
By Observation \ref{obs:distinct-config-gfvs}, the number of entries that we need to store in $L(t)$ is $g^{O(|\sigma(t)|)}\cdot k$, and all such relevant configurations can be computed in $g^{O(|\sigma(t)|)}n^{O(1)}$ time. With these modified definitions of sub-problem and DP table, the rest of the algorithm is similar to  the one for OCT. Hence, we conclude that the table entries corresponding to all vertices of $T$ can be computed in time $g^{O(w)}n^{O(1)}$, where $w$ is the maximum size of the bags of the non-leaf vertices in $T$. 

Note that by our framework, the total running time of our algorithm is $(|\mathsf{Cand}|^{O(\sqrt{k})}\cdot g^{O(w)})\cdot n^{O(1)}$, where $w=\sqrt{k}$. By Lemma \ref{lem-candidate}, the size of $\mathsf{Cand}$ for GFVS is $k^{O(g)}$. Subsequently, we obtain the time bound of $2^{O(g \sqrt{k}\log(gk))} \cdot n^{O(1)}$.  

\begin{theorem}
\probGFVS on $H$-minor free graphs can be solved in $2^{O(g \sqrt{k}\log(gk))} \cdot n^{O(1)}$ time by a randomized algorithm with success probability $1-1/2^n$.
\end{theorem}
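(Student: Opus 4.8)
The plan is to follow exactly the same recipe already executed for OCT and EB, changing only the two problem-specific ingredients: the notion of a ``configuration'' of a set of vertices and the contraction-friendly tree-decomposition DP. By Lemma~\ref{lem:consistentlabels}, a $\Sigma$-labeled graph is non-null-cycle-free iff it admits a consistent labeling $\mu\colon V\to\Sigma$; hence a configuration of a vertex set $V'$ (with respect to a solution $D$) simply records, for each $v\in V'$, either that $v\in D$ (the symbol $0$) or the value $\mu(v)\in\Sigma$, giving $g+1$ symbols per vertex. The key structural point — which plays the role of ``a connected bipartite graph has only two bipartitions'' in the OCT argument — is Observation~\ref{obs:distinct-config-gfvs}: a connected component $C$ of $G[X]$ that is disjoint from the solution has no non-null cycle, so it has a consistent labeling, and once the label of one vertex of $C$ is fixed the labels of all of $C$ are determined; thus a contracted vertex has only $g$ possible states, and the pre-image of any vertex set $V'$ of the contracted graph has $(g+1)^{|V'|}=g^{O(|V'|)}$ valid configurations.

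Next I would set up the DP on a tree decomposition $\mathcal{T}$ of $G/X$ exactly as for OCT: at each node $t$, store a table $L(t)$ indexed by pairs $(\lambda,l)$ where $\lambda$ is a valid configuration of $(G/X)(\sigma(t))^{-1}$ and $0\le l\le k$, whose entry is YES iff $(G/X)(\gamma(t))^{-1}$ has a solution that is disjoint from $X$, compatible with $\lambda$, and uses exactly $l$ vertices of $(G/X)(\gamma(t)\setminus\sigma(t))^{-1}$. The table has $g^{O(|\sigma(t)|)}\cdot k$ entries. For leaves with bag size $\le w$ one enumerates all $g^{O(w)}$ configurations of the bag directly; for leaves whose table is supplied (the ``heavy'' leaves of Lemma~\ref{lem-specialtd}) there is nothing to do; for an internal node one guesses a configuration $\lambda'$ of $(G/X)(\beta(t)\setminus\sigma(t))^{-1}$, which is consistent across children because adhesions are shared but solution vertices in adhesions are not double-counted, and combines the children's tables over all splittings $l=l^u+\sum_s l_s$. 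The only genuine check here is that a configuration of $\beta(t)$ is \emph{valid}, i.e.\ locally consistent with $\Lambda$ on the edges inside $\beta(t)$ and with $0$-vertices drawn from $\mathsf{Cand}$ — this is a polynomial-time test per configuration, so each table is computed in $g^{O(|\beta(t)|)}n^{O(1)}=g^{O(w)}n^{O(1)}$ time.

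With the contraction-friendly DP in hand, the generic framework of Section~\ref{sec:framework} applies verbatim: compute $\mathsf{Cand}$ (Lemma~\ref{lem-candidate}), the Robertson--Seymour decomposition $\mathcal{T}_\textnormal{RS}$ (Lemma~\ref{lem-rsdecomp}), and do the main DP on $\mathcal{T}_\textnormal{RS}$; at each node $t$ apply Corollary~\ref{cor-extention} with $p=\lfloor\sqrt{k}\rfloor$, form the family $\varPi$ of pairs $(i,Y')$ with $|\varPi|=|\mathsf{Cand}|^{O(\sqrt{k})}$ and $|Y'|=O(k/p)$, and for each pair build the special tree decomposition $\mathcal{T}^*$ of $G[\gamma(t)]/(Y_i\setminus Y')$ of Lemma~\ref{lem-specialtd} on which the above DP runs in $g^{O(\sqrt{k})}n^{O(1)}$ time, using the already-computed tables $M(s)$ at the heavy leaves. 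Here $f(\vec r,w)=g^{O(w)}$, so the total running time is $|\mathsf{Cand}|^{O(\sqrt{k})}\cdot g^{O(\sqrt{k})}\cdot n^{O(1)}$; plugging $|\mathsf{Cand}|=k^{O(g)}$ from Lemma~\ref{lem-candidate} gives $2^{O(g\sqrt{k}\log(gk))}\cdot n^{O(1)}$, and the only randomization is in computing $\mathsf{Cand}$, with success probability $1-1/2^n$.

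I do not expect a serious obstacle: the argument is a routine instantiation of the framework, and the one place needing a little care is making sure the DP really is ``contraction-friendly'', i.e.\ that contracting a solution-avoiding part $X$ only collapses the state space (each component of $G[X]$ to $g$ states) rather than losing information — this is precisely what Observation~\ref{obs:distinct-config-gfvs} and Lemma~\ref{lem:consistentlabels} guarantee, since a consistent labeling of $G\setminus D$ restricts to a consistent labeling of each component of $G[X]$, and conversely any assignment that is consistent on each contracted component and on the remaining vertices glues to a consistent labeling of the whole residual graph. The bookkeeping of which $0$-vertices lie in $\mathsf{Cand}$ and of not double-counting solution vertices in shared adhesions is identical to the OCT case and poses no difficulty.
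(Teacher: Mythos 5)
Your proposal is correct and follows essentially the same route as the paper: the same consistent-labeling characterization (Lemma~\ref{lem:consistentlabels}), the same $g^{O(|V'|)}$ bound on configurations of contracted components (Observation~\ref{obs:distinct-config-gfvs}), the same $(\lambda,l)$-indexed contraction-friendly DP, and the same instantiation of the generic framework with $|\mathsf{Cand}|=k^{O(g)}$ yielding $2^{O(g\sqrt{k}\log(gk))}\cdot n^{O(1)}$.
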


By using $\mathsf{Cand} = V(G)$, we have the following.

\begin{theorem}
\probGFVS on $H$-minor free graphs can be solved in $(ng)^{O(\sqrt{k})}$ time.
\end{theorem}

\subsubsection{\probGFES}
The algorithm for GFES is similar to the one for GFVS. Here, we describe the main differences. For simplicity, we use the terms edge and arc interchangeably. Consider any subgraph $G_1=(V_1,E_1)$ and fix any solution $E_\mathsf{sol}$ on $G_1$. We say that $E_\mathsf{sol}$ is disjoint from a set of vertices $X$ if the endpoints of the edges in $E_\mathsf{sol}$ are disjoint from $X$. Consider a consistent labeling $\mu$ of the vertices in the graph $(G\backslash E_\mathsf{sol},\Lambda)$, which exists by Lemma \ref{lem:consistentlabels}. We define a configuration $\lambda$ of $V'\subseteq V_1$ with respect to $E_\mathsf{sol}$ to be a set of values corresponding to the vertices in $V'$, where each value is in $\{1,\ldots,g\}$. We note that these values denote whether the corresponding vertex $v$ is mapped to the element $i\in \Sigma$ via $\mu$. We also say that $S$ is \emph{compatible} with the configuration $\lambda$. The following observation is similar to Observation \ref{obs:distinct-config-gfvs}.

\begin{observation}\label{obs:distinct-config-gfes}
The number of distinct configurations of the vertices in ${(G/X)(V')}^{-1}$ with respect to the solutions disjoint from $X$ is $g^{O(|V'|)}$. 
\end{observation}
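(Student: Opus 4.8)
The plan is to adapt, essentially verbatim, the proof of Observation~\ref{obs:distinct-config-gfvs}, exploiting the fact that for \probGFES{} a configuration assigns to each vertex only a group label from $\{1,\dots,g\}$ (unlike \probGFVS, there is no ``in the solution'' state for a vertex in an edge-deletion problem). First I would fix an arbitrary solution $E_\mathsf{sol}$ disjoint from $X$ and, invoking Lemma~\ref{lem:consistentlabels}, a consistent labeling $\mu$ of $(G\backslash E_\mathsf{sol},\Lambda)$; the configuration of ${(G/X)(V')}^{-1}$ induced by $E_\mathsf{sol}$ is precisely the restriction of such a $\mu$ to the vertices of this pre-image, so it suffices to count how many such restrictions are possible.

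Next I would go vertex by vertex over $V'$. If the pre-image of $v\in V'$ is a single vertex lying outside $X$, then it can receive at most $g$ labels. If $v$ is a contracted vertex, its pre-image is a connected component $C$ of $G[X]$; since $E_\mathsf{sol}$ is disjoint from $X$, no edge of $C$ belongs to $E_\mathsf{sol}$, so $C$ is an induced subgraph of $(G\backslash E_\mathsf{sol},\Lambda)$ and therefore contains no non-null cycle. By Lemma~\ref{lem:consistentlabels}, $C$ admits a consistent labeling, and because $C$ is connected the label of one vertex of $C$ forces the labels of all the others via the identity $\mu(y)=\mu(x)\cdot\Lambda((x,y))$ propagated along a path in $C$. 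Hence $C$ has at most $g$ consistent labelings, i.e., at most $g$ configurations for its vertices.

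Multiplying these bounds over the at most $|V'|$ single vertices and components that partition the pre-image, the number of distinct configurations of the vertices in ${(G/X)(V')}^{-1}$ with respect to solutions disjoint from $X$ is at most $g^{|V'|}=g^{O(|V'|)}$. I do not anticipate any real obstacle here: the argument is a direct transcription of the \probGFVS{} case, and the only point that warrants a line of care is that edge-disjointness of $E_\mathsf{sol}$ from $X$ still guarantees that every component $C\subseteq X$ is free of non-null cycles, which is immediate because none of $C$'s edges can be deleted.
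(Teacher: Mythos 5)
Your proposal is correct and matches the paper's intent exactly: the paper gives no separate proof for this observation, simply declaring it ``similar to Observation~\ref{obs:distinct-config-gfvs}'', and your argument is precisely that adaptation — since the solution edges have no endpoint in $X$, each contracted component $C$ of $G[X]$ survives intact, is non-null-cycle-free, and by Lemma~\ref{lem:consistentlabels} plus connectivity admits at most $g$ consistent labelings, while each uncontracted vertex has at most $g$ labels, giving $g^{|V'|}=g^{O(|V'|)}$. The only (harmless) difference is that you note the per-vertex count drops from $g+1$ to $g$ because the edge-deletion setting has no ``in the solution'' state for vertices.
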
 

Next, we describe the contraction-friendly tree-decomposition DP for GFES. For any configuration $\lambda$ of the vertices in ${(G/X)(\sigma(t))}^{-1}$ and $0\le l\le k$, we define a sub-problem where the goal is to decide whether there is a solution to GFES on ${(G/X)(\gamma(t))}^{-1}$ which is disjoint from $X$, consistent with $\lambda$ and uses exactly $l$ edges such that at least one endpoint of each edge is not in ${(G/X)(\sigma(t))}^{-1}$. We store the boolean value (YES/NO) of the solution of this sub-problem in a table $L(t)$ indexed by $(\lambda,l)$. By Observation \ref{obs:distinct-config-gfes}, the number of entries that we need to store in $L(t)$ is $g^{O(|\sigma(t)|)}\cdot k$, and all such relevant configurations can be computed in $g^{O(|\sigma(t)|)}n^{O(1)}$ time. With these modified definitions of sub-problem and DP table, the rest of the algorithm is similar to  the one for EB. Hence, we conclude that the table entries corresponding to all vertices of $T$ can be computed in time $g^{O(w)}n^{O(1)}$, where $w$ is the maximum size of the bags of the non-leaf vertices in $T$. 

Note that by our framework, the total running time of our algorithm is $(|\mathsf{Cand}|^{O(\sqrt{k})}\cdot g^{O(w)})\cdot n^{O(1)}$, where $w=\sqrt{k}$. By Lemma \ref{lem-candidate}, the size of $\mathsf{Cand}$ for GFES is $k^{O(\log^3 k)}$. Subsequently, we obtain the time bound of $2^{O(\sqrt{k} \log g \log^4 k)} \cdot n^{O(1)}$.  

\begin{theorem}
\probGFES on $H$-minor free graphs can be solved in $2^{O(\sqrt{k} \log g \log^4 k)} \cdot n^{O(1)}$ time by a randomized algorithm with success probability $1-1/2^n$.
\end{theorem}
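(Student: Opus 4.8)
The plan is to instantiate the generic framework of Section~\ref{sec:framework} for GFES, in exactly the way it was instantiated for GFVS and EB. First I would invoke Lemma~\ref{lem-candidate} to compute, in randomized polynomial time with success probability $1-1/2^n$, a candidate set $\mathsf{Cand}\subseteq E(G)$ of size $k^{O(\log^3 k)}$; from then on a ``solution'' always means a solution contained in $\mathsf{Cand}$. The only remaining task is to supply a contraction-friendly tree-decomposition dynamic program for GFES of the form required by the framework; everything else (the Robertson--Seymour decomposition, Corollary~\ref{cor-extention}, Lemma~\ref{lem-specialtd}, and the Baker-style enumeration of pairs $(i,Y')$) is reused verbatim.

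To set up configurations I would rely on Lemma~\ref{lem:consistentlabels}: a $\Sigma$-labeled graph has no non-null cycle iff it admits a consistent labeling $\mu:V\to\Sigma$. Thus after deleting an edge solution $E_\mathsf{sol}$ the residual graph has a consistent labeling, and a \emph{configuration} of a vertex subset $V'$ records the restriction of such a labeling to $V'$, i.e., one value in $\Sigma$ per vertex. The key contraction-friendly point (the GFES analogue of Observation~\ref{obs:distinct-config-gfes}) is: if $E_\mathsf{sol}$ is disjoint from a set $X$, meaning no solution edge has an endpoint in $X$, then $G[X]$ retains all of its edges, hence has no non-null cycle, so each connected component of $G[X]$ admits a consistent labeling which is uniquely determined once the label of a single vertex is fixed — only $g$ possibilities. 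Consequently the preimage of a bag $\beta(t)$ of size $w$ in $G/X$ has only $g^{O(w)}$ relevant configurations, which can all be enumerated in $g^{O(w)}n^{O(1)}$ time.

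With this in hand I would describe the DP: given $X$ and a tree decomposition $\mathcal T$ of $G/X$, at each node $t$ the table $L(t)$ is indexed by pairs $(\lambda,l)$ with $\lambda$ a configuration of the preimage of $\sigma(t)$ and $0\le l\le k$, the entry asking whether there is a solution on the preimage of $\gamma(t)$ that is disjoint from $X$, compatible with $\lambda$, and uses exactly $l$ edges having at least one endpoint outside the preimage of $\sigma(t)$. The ``at least one endpoint outside the adhesion'' convention is the same device used for EB to avoid double-counting an edge shared by a parent and a child subproblem: for such a shared edge, whether it is deleted is forced by the child's configuration. Exactly as in the EB/GFVS arguments, the table of a node with bag size $w$ is computed in $g^{O(w)}n^{O(1)}$ time from the tables of its children, and the heavy leaves of the special tree decomposition $\mathcal T^*$ of Lemma~\ref{lem-specialtd} have their tables supplied for free by the outer DP on $\mathcal T_\textnormal{RS}$.

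Plugging $f(\vec r,w)=g^{O(w)}$ into the framework with $p=\lfloor\sqrt k\rfloor$ and $|\mathsf{Cand}|=k^{O(\log^3 k)}$, the set $\varPi$ has size $|\mathsf{Cand}|^{O(\sqrt k)}$, each pair is processed in $g^{O(\sqrt k)}n^{O(1)}$ time, and summing over $\varPi$ and over the nodes of $\mathcal T_\textnormal{RS}$ gives total running time $|\mathsf{Cand}|^{O(\sqrt k)}\cdot g^{O(\sqrt k)}\cdot n^{O(1)}=k^{O(\sqrt k\log^3 k)}\cdot g^{O(\sqrt k)}\cdot n^{O(1)}=2^{O(\sqrt k\log g\log^4 k)}\cdot n^{O(1)}$. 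The sole randomized ingredient is the candidate set, so the algorithm succeeds with probability $1-1/2^n$. I do not expect a real obstacle here: the one thing that needs care is re-verifying, for the edge-deletion variant, both the consistent-labeling argument (disjointness from $X$ forces $G[X]$ to be non-null-cycle-free) and the edge-double-counting convention across adhesions; the rest is a direct transcription of the GFVS and EB treatments.
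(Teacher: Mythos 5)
Your proposal matches the paper's own proof essentially step for step: the same candidate set of size $k^{O(\log^3 k)}$ from Lemma~\ref{lem-candidate}, the same consistent-labeling configurations with $g^{O(|V'|)}$ possibilities on preimages (the analogue of Observation~\ref{obs:distinct-config-gfes}, justified exactly as for GFVS), the same ``at least one endpoint outside the adhesion'' convention borrowed from EB, and the same plug-in of $f(\vec r,w)=g^{O(w)}$ into the generic framework yielding $2^{O(\sqrt k\,\log g\,\log^4 k)}\cdot n^{O(1)}$ with randomness only in the candidate-set computation. It is correct and follows the paper's route.
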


By using $\mathsf{Cand} = E(G)$, we have the following.

\begin{theorem}
\probGFES on $H$-minor free graphs can be solved in $(ng)^{O(\sqrt{k})}$ time.
\end{theorem}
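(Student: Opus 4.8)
The plan is to instantiate the generic framework of Section~\ref{sec:framework} on GFES with the \emph{trivial} candidate set $\mathsf{Cand}=E(G)$, running exactly the algorithm described just above for the randomized case but without invoking Lemma~\ref{lem-candidate}. Concretely, I would first apply Lemma~\ref{lem-rsdecomp} to compute in polynomial time a Robertson--Seymour decomposition $\mathcal{T}_\textnormal{RS}$ of the input $H$-minor-free graph $G$ with adhesion size at most a constant $h$ and $h$-almost-embeddable torsos, where $h$ depends only on $H$. Then I would perform the DP on $\mathcal{T}_\textnormal{RS}$ for GFES, computing at each node $t$ a table $M(t)$ whose entries are indexed by a configuration $\lambda$ of $\sigma(t)$ (an assignment of labels in $\{1,\dots,g\}$ witnessing a consistent labeling, via Lemma~\ref{lem:consistentlabels}, which equates ``no non-null cycle'' with ``admits a consistent labeling'') together with a value $0\le l\le k$. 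Since $|\sigma(t)|\le h$, each $M(t)$ has at most $g^{O(1)}\cdot k$ entries.

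The heart is implementing a single DP step at a node $t$ provided the tables of its children, and here I would follow the framework verbatim: set $p=\lfloor\sqrt{k}\rfloor$, apply Corollary~\ref{cor-extention} to obtain $Y_1,\dots,Y_p\subseteq\gamma(t)\setminus\sigma(t)$, and form the collection $\varPi$ of pairs $(i,Y')$ with $Y'\subseteq Y_i\cap\mathsf{Cand}'$, where $\mathsf{Cand}'$ is the set of endpoints of edges of $\mathsf{Cand}=E(G)$, and $|Y'|=O(k/p)$. By the framework's analysis for edge-deletion problems, $|\varPi|=|\mathsf{Cand}|^{O(\sqrt{k})}=n^{O(\sqrt{k})}$, and the answer to $\mathsf{Prob}_{\lambda,l}$ on $G[\gamma(t)]$ is YES iff $\mathsf{Prob}_{\lambda,l,Y_i\setminus Y'}$ is YES for some $(i,Y')\in\varPi$. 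For each such pair I would call Lemma~\ref{lem-specialtd} to build, in $2^{O(\sqrt{k})}n^{O(1)}$ time, the special tree decomposition $\mathcal{T}^*$ of $G[\gamma(t)]/(Y_i\setminus Y')$, and then run the contraction-friendly tree-decomposition DP for GFES on $\mathcal{T}^*$. The structural fact making this work is the contraction-friendliness of GFES: a connected component $C$ of $G[Y_i\setminus Y']$ is contracted to one vertex, and since $C$ is disjoint from the solution it has no non-null cycle, hence by Lemma~\ref{lem:consistentlabels} admits a consistent labeling uniquely determined by the label of any single vertex of $C$; thus $C$ has only $g$ possible states (Observation~\ref{obs:distinct-config-gfes}). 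Consequently every non-heavy-leaf bag of $\mathcal{T}^*$ has size $O(\sqrt{k})$ and contributes $g^{O(\sqrt{k})}$ states, while the heavy leaves have $\gamma(t^*)=\beta(t^*)=\gamma(s)$ and $\sigma(t^*)=\sigma(s)$ for children $s$ of $t$, so their tables $M(s)$ are available for free. As in the EB/GFES edge cases, one must charge each solution edge to the unique level at which both its endpoints are non-adhesion, which the GFES sub-problem definition already enforces, avoiding double counting across the tree.

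Combining, computing $M(t)$ costs $|\varPi|\cdot 2^{O(\sqrt{k})}\cdot g^{O(\sqrt{k})}\cdot n^{O(1)}=n^{O(\sqrt{k})}\cdot g^{O(\sqrt{k})}\cdot n^{O(1)}$, and there are at most $n$ nodes in $\mathcal{T}_\textnormal{RS}$, so the whole algorithm runs in $(ng)^{O(\sqrt{k})}$ time; reading the answer off $M(\mathsf{rt})$ finishes. I expect the only non-routine point to be re-verifying that the contraction-friendly DP for GFES is correct under contraction — that restricting each contracted component to its $g$ consistent-labeling states loses no solution and that these interact consistently with the $\sigma(s)$-configurations — but this is precisely what was already established for the randomized GFES algorithm, so no new ideas are needed: the only changes are the substitution $\mathsf{Cand}=E(G)$ (with the induced bound on $|\mathsf{Cand}|$, which is $O(n)$ for sparse $H$-minor-free graphs) and the fact that the algorithm becomes deterministic since it no longer calls the randomized routine of Lemma~\ref{lem-candidate}.
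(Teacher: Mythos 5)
Your proposal is correct and is essentially identical to the paper's argument: the paper obtains this theorem by running the very same framework-based GFES algorithm with the trivial candidate set $\mathsf{Cand}=E(G)$ in place of the randomized candidate set of Lemma~\ref{lem-candidate}, giving $|\varPi|=n^{O(\sqrt{k})}$ and per-node DP cost $g^{O(\sqrt{k})}n^{O(1)}$, hence $(ng)^{O(\sqrt{k})}$ total. No substantive differences to report.
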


%%%%%%%%%%%%%%%%%%%%%%%%%%%%%%%%%%%%%%%%%%%%%%%%%%%%%%%%%%%%%%%%%%%

\subsection{\probVMultiC and \probEMultiC}
In \probVMultiC (VMC), given a graph $G=(V,E)$, a collection $U$ of $r$ pairs of vertices $U \subseteq V \times V$, and an integer $k$, the goal is to decide if there is a set of $k$ vertices of $V \backslash R$ whose removal separates each pair of vertices in $U$ where $R$ is the set of vertices appearing in the pairs of $U$. \probEMultiC (EMC) is similar, except here we remove a set of $k$ edges. Here $R$ is called the set of terminals. First, we consider VMC. Following our framework, the first step is to obtain a candidate set $\mathsf{Cand} \subseteq V$. Henceforth, by the term solution we always mean a subset of $\mathsf{Cand}$. Next, we describe the contraction-friendly tree-decomposition DP for VMC. Our DP is a natural extension of a DP for bounded treewidth graphs \cite{GuoHKNU08}. 

We are given an $n$-vertex graph $G=(V,E)$, a (possibly empty) subset $X\subseteq V$, and a tree-decomposition $\mathcal{T}$ of $G/X$. Let $T$ be the underlying tree of $\mathcal{T}$. We would like to solve a sub-problem in ${(G/X)(\gamma(t))}^{-1}$. Consider any partition $\Pi$ of $R$ such that each pair in $U$ appears in two different parts. We can have $|R|^{O(|R|)}=r^{O(r)}$ such distinct partitions. Note that each such partition helps us guess which terminals will be on the same connected component after removal of a solution. We  define sub-problems with respect to $\Pi$, assuming $\Pi$ is the correct configuration of the terminals. Let $\alpha\le 2r$ be the number of parts. We treat each part as a color between $1$ to $\alpha$. 

Consider any subgraph $G_1=(V_1,E_1)$ and fix any solution $S$ on $G_1$. We define a configuration $\lambda$ of $V'\subseteq V_1$ with respect to the solution to be a set of values corresponding to the vertices in $V'$, where each value is in $\{0,1,2,\ldots,\alpha\}$. We note that these values denote whether the corresponding vertex $v$ is in $S$ (0), or is in the $i$-th connected component along with the vertices of $R$ in $i$-th part ($i$). We say that $S$ is consistent with $\lambda$.  

\begin{observation}\label{obs:distinct-config-vmc}
The number of distinct configurations of the vertices in ${(G/X)(V')}^{-1}$ with respect to the solutions disjoint from $X$ is ${\alpha}^{O(|V'|)}$. 
\end{observation}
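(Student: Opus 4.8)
The statement to prove is Observation~\ref{obs:distinct-config-vmc}: the number of distinct configurations of the vertices in ${(G/X)(V')}^{-1}$ with respect to solutions disjoint from $X$ is $\alpha^{O(|V'|)}$.

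The plan is to mirror the proofs of Observations~\ref{obs:distinct-config}, \ref{obs:distinct-config-gfvs}, and their analogues, adapting the key combinatorial fact that a connected component of $G[X]$ has only a bounded number of admissible states. First I would recall that each vertex of $V'$ in $G/X$ is the image either of a single vertex of $G$ lying outside $X$, or of a connected component $C$ of $G[X]$. For a vertex of the first kind, there are exactly $\alpha+1$ possible values of the configuration (the value $0$ meaning it is in the solution $S$, or a color $i \in \{1,\dots,\alpha\}$ meaning it lies in the connected component of $G-S$ containing the $i$-th part of $R$). Since the solution $S$ must be disjoint from $X$, a contracted vertex coming from a component $C$ of $G[X]$ cannot take the value $0$; moreover all vertices of $C$ survive in $G-S$, and since $C$ is connected they all lie in the same connected component of $G-S$, hence receive the same color. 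Therefore a contracted vertex has at most $\alpha$ admissible states, and in particular at most $\alpha+1$.

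The second step is the counting: each of the $|V'|$ vertices of $V'$ independently has at most $\alpha+1$ admissible states, so the total number of distinct configurations of the vertices in ${(G/X)(V')}^{-1}$ is at most $(\alpha+1)^{|V'|} = \alpha^{O(|V'|)}$, using $\alpha \geq 1$. (Strictly, the configuration is really a map from the pre-image vertex set, but since the pre-image of each contracted vertex is forced to be monochromatic, the configuration is determined by the assignment to the $|V'|$ vertices of $G/X$, so the count is exactly as stated.) I would also note, as in the earlier observations, that these configurations can be enumerated in $\alpha^{O(|V'|)} n^{O(1)}$ time given ${(G/X)(V')}^{-1}$, though this is not strictly part of the statement.

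There is essentially no obstacle here — the only point requiring a sentence of justification is why a connected component $C$ of $G[X]$ must be monochromatic in any valid configuration, which is immediate from the facts that $S \cap X = \emptyset$ (so no vertex of $C$ is in the solution) and that $C$ is connected (so no edge of $C$ is cut, forcing all its vertices into one component of $G-S$, i.e.\ one color class). Unlike in the OCT and GFVS cases, there is no ``number of consistent labelings of $C$'' subtlety: here the constraint is simply connectivity, giving a single forced color class per component up to the choice of which of the $\alpha$ colors it receives. So the proof is a short two-paragraph argument: establish the per-vertex bound of $\alpha+1$ (with the stronger bound $\alpha$ for contracted vertices), then multiply.
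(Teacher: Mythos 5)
Your proof is correct and follows essentially the same route as the paper's: contracted vertices (components of $G[X]$) are forced to be monochromatic since the solution avoids $X$ and the component stays connected in $G-S$, giving at most $\alpha$ states, while uncontracted vertices get at most $\alpha+1$ states, and multiplying yields $(\alpha+1)^{|V'|}=\alpha^{O(|V'|)}$. The extra sentence you add justifying monochromaticity is exactly the point the paper states more tersely, so there is nothing to change.
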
 

\begin{proof}
%Consider any solution $S$ of OCT disjoint from $X$ on any graph $G'$ that contains ${(G/X)}^{-1}(\sigma(t))$ as an induced subgraph. 
For each contracted vertex $v$ of $V'$, consider the connected component $C$ in $G[X]$ corresponding to $v$. Then as the solutions we seek are disjoint from $X$, all vertices in $X$ have the same value in any configuration with respect to such a solution. Hence, the number of distinct configurations of the vertices in $C$ with respect to such  solutions is only $\alpha$ depending on the connected component in which they end up. Also, any vertex $u$ of $V'\backslash X$ can have at most $\alpha+1$ configurations. Hence, the number of such distinct configurations of the vertices in ${(G/X)(V')}^{-1}$ is at most ${(\alpha+1)}^{|V'|}={\alpha}^{O(|V'|)}$.  
\end{proof}

For any configuration $\lambda$ of the vertices in ${(G/X)(\sigma(t))}^{-1}$ and $0\le l\le k$, we define a sub-problem where the goal is to decide whether there is a solution to VMC on ${(G/X)(\gamma(t))}^{-1}$ which is disjoint from $X$, is consistent  with $\lambda$ and uses exactly $l$ vertices of ${(G/X)(\gamma(t)\backslash \sigma(t))}^{-1}$. We store the boolean value (YES/NO) of the solution of this sub-problem in a table $L(t)$ indexed by $(\lambda,l)$. By Observation \ref{obs:distinct-config-vmc}, the number of entries that we need to store in $L(t)$ is $r^{O(|\sigma(t)|)}\cdot k$, and all such relevant configurations can be computed in $r^{O(|\sigma(t)|)}n^{O(1)}$ time. With these modified definitions of sub-problem and DP table, the rest of the algorithm is similar to  the one in \cite{GuoHKNU08}. Hence, we conclude that the table entries corresponding to all vertices of $T$ can be computed in time $r^{O(r+w)}n^{O(1)}$, where $w$ is the maximum size of the bags of the non-leaf vertices in $T$. 

Note that by our framework, the total running time of our algorithm is $(|\mathsf{Cand}|^{O(\sqrt{k})}\cdot r^{O(r+w)})\cdot n^{O(1)}$, where $w=\sqrt{k}$. By Lemma \ref{lem-candidate}, the size of $\mathsf{Cand}$ for VMC is $k^{O(\sqrt{r})}$. Subsequently, we obtain the time bound of $2^{O(r\sqrt{k} \log k)} \cdot n^{O(1)}$.  

\begin{theorem}
\probVMultiC on $H$-minor free graphs can be solved in $2^{O((\sqrt{rk}+r) \log rk))} \cdot n^{O(1)}$ time by a randomized algorithm with success probability $1-1/2^n$. 
\end{theorem}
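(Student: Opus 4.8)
The plan is to instantiate the generic framework of Section~\ref{sec:framework}. Two ingredients are required: a small candidate set for VMC, which is provided by Lemma~\ref{lem-candidate} (size $k^{O(\sqrt{r})}$), and a \emph{contraction-friendly} tree-decomposition dynamic program for VMC. Once both are available, the framework (via Lemma~\ref{lem-key}, Corollary~\ref{cor-extention}, and Lemma~\ref{lem-specialtd}) immediately yields an algorithm running in $|\mathsf{Cand}|^{O(\sqrt{k})}\cdot f(r,\sqrt{k})\cdot n^{O(1)}$ time, where $f(r,w)\cdot n^{O(1)}$ is the time to run the DP on a $w$-width tree decomposition of the graph obtained after contracting a part disjoint from the solution.

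For the DP, I would first guess a partition $\Pi$ of the terminal set $R$ into $\alpha\le 2r$ color classes such that the two vertices of every demand pair fall into different classes; there are $r^{O(r)}$ such partitions, and the DP is run once per partition, interpreting class $i$ as ``the connected component of $G-S$ containing the class-$i$ terminals''. A configuration of a vertex set $V'$ then assigns to each $v\in V'$ a label in $\{0,1,\dots,\alpha\}$, where $0$ means ``$v$ lies in the solution'' (permitted only for $v\in\mathsf{Cand}$, and never for a terminal) and $i\ge 1$ means ``$v$ lies in the class-$i$ component''; such a configuration is \emph{valid} only if no edge joins two label-$\ge 1$ vertices with distinct labels and every terminal carries its prescribed class. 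The crucial contraction-friendly observation is Observation~\ref{obs:distinct-config-vmc}: when the contracted set $X$ is disjoint from the solution, each connected component of $G[X]$ lies entirely inside one component of $G-S$, hence admits only $\alpha$ possible labels rather than $\alpha^{|C|}$; consequently the preimage of a bag of $G/X$ of size $w+1$ has only $\alpha^{O(w)}=r^{O(w)}$ valid configurations. Indexing the table $L(t)$ of a node $t$ by a pair $(\lambda,l)$, where $\lambda$ is a valid configuration of the preimage of $\sigma(t)$ and $0\le l\le k$ is the number of solution vertices to be used inside the preimage of $\gamma(t)\setminus\sigma(t)$, the table is filled bottom-up exactly as in the bounded-treewidth Multicut DP of \cite{GuoHKNU08}: at a small-bag leaf we enumerate all $r^{O(w)}$ configurations of the preimage of $\beta(t)$ directly; at a large-bag leaf, property~(iii) of Lemma~\ref{lem-specialtd} guarantees $\gamma(t^*)=\beta(t^*)=\gamma(s)$ and $\sigma(t^*)=\sigma(s)$ for a child $s$ in $\mathcal{T}_\textnormal{RS}$, so $L(t^*)$ is the already-computed table $M(s)$; at an internal node we guess the configuration of the preimage of $\beta(t)\setminus\sigma(t)$, distribute $l$ among the children, and combine their tables, the subproblems at different children being independent because the budgets never count adhesion vertices. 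Each step costs $r^{O(w)}\cdot n^{O(1)}$ time, so over all $r^{O(r)}$ partitions the DP runs in $f(r,w)\cdot n^{O(1)}=r^{O(r+w)}\cdot n^{O(1)}$ time.

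Plugging $f(r,w)=r^{O(r+w)}$ with $w=\lfloor\sqrt{k}\rfloor$ and $|\mathsf{Cand}|=k^{O(\sqrt{r})}$ into the framework gives total running time $k^{O(\sqrt{r}\cdot\sqrt{k})}\cdot r^{O(r+\sqrt{k})}\cdot n^{O(1)}=2^{O(\sqrt{rk}\log k+(r+\sqrt{k})\log r)}\cdot n^{O(1)}$, which is $2^{O((\sqrt{rk}+r)\log(rk))}\cdot n^{O(1)}$ since $\sqrt{k}\le\sqrt{rk}$ and $\log k,\log r\le\log(rk)$. The only randomized step is the computation of $\mathsf{Cand}$ via Lemma~\ref{lem-candidate}, which succeeds with probability $1-1/2^n$, so the whole algorithm inherits this success probability.

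I expect the main obstacle to be verifying that the DP really is contraction-friendly, i.e.\ that the Multicut correctness argument survives the contraction: one has to check that an edge lying inside a contracted component of $G[X]$ is automatically never a ``cut edge'' (precisely because $X$ avoids the solution, which is what forces every component of $G[X]$ to receive a single color), that validity of configurations is preserved under taking preimages, and that the bookkeeping at the heavy leaves correctly identifies $L(t^*)$ with a previously computed $M(s)$. All remaining details are a routine adaptation of the OCT argument of this section together with the standard bounded-treewidth Multicut dynamic program.
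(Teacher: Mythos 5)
Your proposal is correct and follows essentially the same route as the paper: candidate set of size $k^{O(\sqrt{r})}$ from Lemma~\ref{lem-candidate}, the same guess of a terminal partition into $\alpha\le 2r$ classes, the same contraction-friendly configuration count (the paper's Observation on distinct configurations for VMC), the adaptation of the bounded-treewidth Multicut DP of \cite{GuoHKNU08} with heavy leaves resolved by Lemma~\ref{lem-specialtd}, and the same plug-in of $f(r,w)=r^{O(r+w)}$ into the framework. Your final bookkeeping $2^{O(\sqrt{rk}\log k+(r+\sqrt{k})\log r)}=2^{O((\sqrt{rk}+r)\log(rk))}$ matches the theorem as stated.
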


By using $\mathsf{Cand} = V(G)$, we have the following.

\begin{theorem}
\probVMultiC on $H$-minor free graphs can be solved in 
$n^{O(r+\sqrt{k})}$ time. 
\end{theorem}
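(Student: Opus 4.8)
The plan is to instantiate the generic framework of Section~\ref{sec:framework} for the \probVMultiC problem, so the only real work is to exhibit (a) a polynomial-time computable candidate set of the right size, which is already provided by Lemma~\ref{lem-candidate} (size $k^{O(\sqrt{r})}$), and (b) a \emph{contraction-friendly tree-decomposition DP} for \probVMultiC, i.e. an algorithm that, given a vertex set $X$ disjoint from the sought solution and a width-$w$ tree decomposition of $G/X$, solves the problem in $f(r,w)\cdot n^{O(1)}$ time for a suitable $f$. Once both ingredients are in place, the framework plugs them together and yields a running time of $|\mathsf{Cand}|^{O(\sqrt k)}\cdot f(r,\sqrt k)\cdot n^{O(1)}$.

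First I would recall the standard bounded-treewidth DP for \probVMultiC (e.g.\ following Guo et al.~\cite{GuoHKNU08}): after deleting the solution, each connected component either contains at most one terminal of a demand pair, so the state at a bag is an assignment giving, for each non-deleted bag vertex, the ``block'' it lies in together with which terminals that block may eventually see. More concretely, guess a partition $\Pi$ of the terminal set $R$ into $\alpha\le 2r$ colour classes such that the two endpoints of every demand pair get different colours; there are $r^{O(r)}$ such partitions. Fixing $\Pi$, a \emph{configuration} of a bag is a map to $\{0,1,\dots,\alpha\}$ telling, for each vertex, whether it is in the solution ($0$) or in the component of colour class $i$. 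The DP then composes these configurations bottom-up along the tree, checking consistency at edges and counting solution vertices outside adhesions; this runs in $\alpha^{O(w)}\cdot n^{O(1)} = r^{O(w)}\cdot n^{O(1)}$ time, and with the $r^{O(r)}$ outer guesses for $\Pi$ the total is $r^{O(r+w)}\cdot n^{O(1)}$.

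Next I would make this DP \emph{contraction-friendly}. The key observation, analogous to Observation~\ref{obs:distinct-config}, is: if $X$ is disjoint from the solution, then $G[X]$ is untouched, so each connected component of $G[X]$ lies entirely in one colour class; hence a contracted vertex of $G/X$ has only $\alpha$ possible configurations (plus $\alpha+1$ for an uncontracted vertex), and the pre-image $(G/X)(V')^{-1}$ of a bag of size $|V'|$ still has only $\alpha^{O(|V'|)}$ relevant configurations (Observation~\ref{obs:distinct-config-vmc}). With this, the DP tables have size $r^{O(w)}\cdot k$, each table is computable in $r^{O(w)}\cdot n^{O(1)}$ time given the children's tables, and the ``heavy-leaf'' shortcut of Lemma~\ref{lem-specialtd} supplies the tables of the $C_i$-children for free during the outer DP on $\mathcal T_{\textnormal{RS}}$. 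I would phrase the sub-problem at a node $t$ exactly as $\mathsf{Prob}_{\lambda,l}$ (respectively $\mathsf{Prob}_{\lambda,l,Y}$) of the framework, with $\lambda$ a valid configuration of $\sigma(t)$ and $l$ the number of solution vertices in $\gamma(t)\setminus\sigma(t)$.

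Finally I would run the framework verbatim: compute $\mathsf{Cand}$ via Lemma~\ref{lem-candidate}, compute $\mathcal T_{\textnormal{RS}}$ via Lemma~\ref{lem-rsdecomp}, set $p=\lfloor\sqrt k\rfloor$, apply Corollary~\ref{cor-extention} and build $\varPi$ with $|\varPi|=|\mathsf{Cand}|^{O(\sqrt k)}$ and $|Y'|=O(k/p)$, then for each $(i,Y')$ build $\mathcal T^*$ via Lemma~\ref{lem-specialtd} (width $O(\sqrt k)$ except at heavy leaves) and run the contraction-friendly DP with $w=O(\sqrt k)$. This costs $|\mathsf{Cand}|^{O(\sqrt k)}\cdot r^{O(r+\sqrt k)}\cdot n^{O(1)}$; with $|\mathsf{Cand}|=k^{O(\sqrt r)}$ this is $k^{O(\sqrt{rk})}\cdot r^{O(r+\sqrt k)}\cdot n^{O(1)} = 2^{O((\sqrt{rk}+r)\log(rk))}\cdot n^{O(1)}$, as claimed (and with the trivial candidate set $\mathsf{Cand}=V(G)$ the same argument gives $n^{O(r+\sqrt k)}$). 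The main obstacle I anticipate is purely bookkeeping rather than conceptual: carefully defining the configuration at an \emph{adhesion} so that solution vertices lying in an adhesion shared by several children are not double-counted, and verifying that the colour-class consistency conditions across the added clique edges $\sigma(s)$ in $G_t^{C_i}$ (and across the contracted components) are correctly enforced — i.e.\ checking that the framework's "compatibility" relation for \probVMultiC is preserved under contraction and under the torso/clique operations.
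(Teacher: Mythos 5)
Your proposal is correct and follows essentially the same route as the paper: the generic framework combined with the contraction-friendly DP for \probVMultiC that guesses one of the $r^{O(r)}$ terminal partitions, uses configurations in $\{0,1,\dots,\alpha\}$ with $\alpha\le 2r$ so that contracted components of $G[X]$ admit only $\alpha$ states per bag vertex, and runs in $r^{O(r+w)}\cdot n^{O(1)}$ time. Plugging in the trivial candidate set $\mathsf{Cand}=V(G)$ with $w=O(\sqrt{k})$ gives the claimed $n^{O(r+\sqrt{k})}$ bound, exactly as in the paper.
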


%\min\{n^{O(\sqrt{k}+r)}

As mentioned in \cite{GuoHKNU08}, the DP for VMC is almost the same as the DP for EMC. Similarly, the contraction-friendly tree-decomposition DP for EMC is very similar to the one for VMC. Also, the candidate set size for EMC is $(r+k)^{O(\log^3(r+k))}$. Hence, we obtain the following theorem. 

\begin{theorem}
\probEMultiC on $H$-minor free graphs can be solved in $2^{O((r+\sqrt{k})\log^4 (rk))} \cdot n^{O(1)}$ 
time by a randomized algorithm with success probability $1-1/2^n$. 
\end{theorem}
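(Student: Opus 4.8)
The plan is to instantiate the generic framework of Section~\ref{sec:framework} with \probEMultiC, in exactly the same way it was instantiated for \probVMultiC, and then to read off the running time from the candidate-set size for EMC given by Lemma~\ref{lem-candidate}. Since EMC is an edge-deletion problem on an $H$-minor-free graph, the framework needs only two ingredients: (a) a small candidate set $\mathsf{Cand} \subseteq E(G)$, which Lemma~\ref{lem-candidate} supplies with $|\mathsf{Cand}| = (r+k)^{O(\log^3(r+k))}$; and (b) a contraction-friendly tree-decomposition DP, i.e.\ a procedure that, given a vertex set $X$ guaranteed to be untouched by the solution (no solution edge incident to $X$) together with a width-$w$ tree decomposition of $G/X$, computes the DP tables of all nodes in time $f(r,w)\cdot n^{O(1)}$. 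Once (a) and (b) are in place, the framework immediately yields total time $|\mathsf{Cand}|^{O(\sqrt k)}\cdot f(r,\sqrt k)\cdot n^{O(1)}$.

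For (b) I would adapt the VMC DP essentially verbatim. First guess a partition $\Pi$ of the terminal set $R$ into at most $\alpha \le 2r$ color classes in which the two terminals of every request pair get different colors; there are $r^{O(r)}$ such partitions and one is the ``correct'' one induced by the connected components of $G-S$. With $\Pi$ fixed, a configuration of a vertex subset $V'$ assigns to each vertex a color in $\{1,\dots,\alpha\}$, recording which component (equivalently, which terminal block) it ends up in; an edge of $G$ belongs to the solution precisely when its two endpoints receive different colors. The contraction-friendly point, mirroring Observation~\ref{obs:distinct-config-vmc}, is that a contracted vertex of $G/X$ is a connected component $C$ of $G[X]$, and since no solution edge is incident to $C$, all of $C$ lies in one component of $G-S$ and hence admits only $\alpha$ colors; therefore the number of configurations of ${(G/X)(V')}^{-1}$ is $\alpha^{O(|V'|)} = r^{O(|V'|)}$, each DP table has $r^{O(|\sigma(t)|)}\cdot k$ entries, and the standard bottom-up computation — with the heavy leaves handed to us for free by the outer DP on $\mathcal{T}_\textnormal{RS}$, exactly as in the OCT exposition, and with the ``at least one endpoint outside $\sigma(t)$'' bookkeeping used for EB/GFES to avoid double-counting a shared edge — runs in $r^{O(r+w)}\cdot n^{O(1)}$ time. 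This is literally the bounded-treewidth edge-multicut DP of~\cite{GuoHKNU08}, phrased over contracted vertices.

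Combining the two ingredients, the framework sets $p=\lfloor\sqrt k\rfloor$, produces $|\varPi| = |\mathsf{Cand}|^{O(\sqrt k)}$ subproblems at each node of $\mathcal{T}_\textnormal{RS}$, and solves each by the DP above on the width-$O(\sqrt k)$ tree decomposition furnished by Corollary~\ref{cor-extention} and Lemma~\ref{lem-specialtd}. Hence the running time is
\[
|\mathsf{Cand}|^{O(\sqrt k)}\cdot r^{O(r+\sqrt k)}\cdot n^{O(1)} \;=\; (r+k)^{O(\sqrt k\,\log^3(r+k))}\cdot 2^{O((r+\sqrt k)\log r)}\cdot n^{O(1)} \;=\; 2^{O((r+\sqrt k)\log^4(rk))}\cdot n^{O(1)},
\]
where the last step uses $\log(r+k)=O(\log(rk))$ together with $\sqrt k \le r+\sqrt k$. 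The randomization and the $1-1/2^n$ success probability come solely from the candidate-set computation of Lemma~\ref{lem-candidate}; every other step is deterministic.

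I expect the only non-routine part to be verifying the contraction-friendly property in (b): one must confirm that contracting a component of $G[X]$ genuinely collapses its state space to the $O(r)$ color choices, and that the modified DP — which counts an edge as cut based only on the colors of its endpoints — still charges each solution edge exactly once across parent and child subproblems, so that the count $l$ used in the table is consistent with the framework's use of $Y'\subseteq Y_i\cap\mathsf{Cand}'$. Both points are handled the same way they were for EB, GFES, and VMC, so no genuinely new idea is needed; the remainder is a direct transcription of the VMC analysis and the generic framework.
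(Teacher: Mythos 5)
Your proposal is correct and follows essentially the same route as the paper: the paper also instantiates the generic framework by adapting the VMC-style partition/coloring DP (via \cite{GuoHKNU08}) to the edge-deletion setting, notes contraction-friendliness exactly as in Observation~\ref{obs:distinct-config-vmc}, and combines the $r^{O(r+\sqrt{k})}$ DP cost with the EMC candidate set of size $(r+k)^{O(\log^3(r+k))}$ from Lemma~\ref{lem-candidate} to obtain $2^{O((r+\sqrt{k})\log^4(rk))}\cdot n^{O(1)}$. Your accounting of the double-counting issue and the randomization source matches the paper's treatment, so nothing further is needed.
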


%By using $\mathsf{Cand} = V(G)$, we have the following.

\begin{theorem}
\probEMultiC on $H$-minor free graphs can be solved in 
$n^{O(r+\sqrt{k})}$ time. 
\end{theorem}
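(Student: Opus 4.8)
The plan is to run the generic framework of Section~\ref{sec:framework} with the \emph{trivial} candidate set $\mathsf{Cand} = E(G)$, together with the contraction-friendly tree-decomposition DP for EMC, exactly as we did for the randomized version but without invoking Lemma~\ref{lem-candidate}. Concretely, I would first record the contraction-friendly DP for EMC. As in \probVMultiC, we enumerate every partition $\varPi$ of the terminal set $R$ into color classes such that the two terminals of each pair in $U$ receive distinct colors; there are $r^{O(r)}$ such partitions, and it suffices to solve, for each fixed $\varPi$ with $\alpha \le 2r$ colors, the problem of deleting $\le k$ edges so that (i) no pair in $U$ ends up in the same component and (ii) the resulting components refine $\varPi$. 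A configuration of $V' \subseteq V_1$ then assigns to each vertex a color in $[\alpha]$ (for the edge-deletion variant there is no ``deleted'' state on vertices), and a solution is consistent with it in the obvious way. The key point for contraction-friendliness is the analogue of Observation~\ref{obs:distinct-config-vmc}: since the solution we seek is disjoint from $X$, no edge inside a component $C$ of $G[X]$ is deleted, so all vertices of $C$ lie in one component of $G - S$ and hence receive a single color; thus each contracted vertex has only $\alpha \le 2r$ possible states, and the pre-image of a bag $\beta(t^*)$ of size $w$ has $r^{O(w)}$ possible configurations.

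Next I would carry out the DP itself, which is the bounded-treewidth DP of Guo et al.~\cite{GuoHKNU08} lifted through the contraction in the standard way (identical in structure to the EB/VMC write-ups): at each node $t^*$ of the tree decomposition $\mathcal{T}$ of $G/X$ we store a table $L(t^*)$ indexed by $(\lambda, l)$ where $\lambda$ ranges over the $r^{O(|\sigma(t^*)|)}$ valid configurations of the pre-image of $\sigma(t^*)$ and $0 \le l \le k$, and the entry asks whether ${(G/X)(\gamma(t^*))}^{-1}$ admits a solution disjoint from $X$, consistent with $\lambda$, using exactly $l$ edges with at least one endpoint outside the pre-image of $\sigma(t^*)$. (The ``at least one endpoint outside'' convention prevents double-counting an edge of the adhesion across levels, as in the EB case.) Computing $L(t^*)$ from the children takes $r^{O(|\beta(t^*)|)} n^{O(1)}$ time by enumerating configurations of the pre-image of $\beta(t^*)$, so over a tree decomposition whose non-leaf bags have size $\le w$ — and whose heavy leaves are supplied for free by the already-computed tables $M(s)$ of the children of the current Robertson–Seymour node, via property \textbf{(iii)} of Lemma~\ref{lem-specialtd} — the whole DP runs in $r^{O(r+w)} n^{O(1)}$ time, summing over all $r^{O(r)}$ partitions $\varPi$.

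Finally I would plug this into the framework with $w = \sqrt{k}$ and $\mathsf{Cand} = E(G)$. The edge-deletion reduction from $\mathsf{Prob}_{\lambda,l}$ to the sub-problems $\mathsf{Prob}_{\lambda,l,Y_i\backslash Y'}$ of Section~\ref{sec:framework} applies verbatim (using $\mathsf{Cand}' \subseteq V(G)$, the endpoints of edges in $\mathsf{Cand}$), giving $|\varPi| = |\mathsf{Cand}|^{O(\sqrt{k})} = n^{O(\sqrt{k})}$ pairs; for each pair we call the DP once at cost $r^{O(r+\sqrt{k})} n^{O(1)}$. Thus the main DP procedure on $\mathcal{T}_\textnormal{RS}$ runs in $n^{O(\sqrt{k})} \cdot r^{O(r+\sqrt{k})} \cdot n^{O(1)}$ time, and since $r \le n^2$ we have $r^{O(r+\sqrt{k})} = 2^{O((r+\sqrt{k})\log r)} = n^{O(r+\sqrt{k})}$, so the total running time is $n^{O(r+\sqrt{k})}$ and the algorithm is deterministic because it no longer uses the randomized candidate-set computation. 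The only genuinely delicate point is checking that the EMC DP through contractions is correct — in particular that components of $G[X]$ are indeed monochromatic under any admissible solution and that the ``one-endpoint-outside'' edge-counting convention composes correctly along both the second-level tree $T^*$ and the Robertson–Seymour tree $T_\textnormal{RS}$ — but this is routine given the treatment of EB and VMC above, and everything else is a direct substitution into the established framework.

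\begin{proof}
As sketched above, apply the framework of Section~\ref{sec:framework} with $\mathsf{Cand} = E(G)$ and the contraction-friendly DP for EMC; details are analogous to the proofs for \probEB and \probVMultiC. The running time is $|\mathsf{Cand}|^{O(\sqrt{k})} \cdot r^{O(r+\sqrt{k})} \cdot n^{O(1)} = n^{O(r+\sqrt{k})}$.
\end{proof}
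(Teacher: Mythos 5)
Your proposal is correct and follows essentially the same route as the paper: the paper also obtains this theorem by running the generic framework with the trivial candidate set $\mathsf{Cand} = E(G)$ and the contraction-friendly tree-decomposition DP for \probEMultiC, which it treats as a minor variant of the \probVMultiC DP (partition the terminals into $\le 2r$ color classes, note that components of $G[X]$ are monochromatic, and avoid double-counting adhesion edges), yielding $|\mathsf{Cand}|^{O(\sqrt{k})}\cdot r^{O(r+\sqrt{k})}\cdot n^{O(1)} = n^{O(r+\sqrt{k})}$ time. Your write-up merely makes explicit the details the paper leaves implicit, and the deterministic conclusion and bookkeeping (e.g.\ $r \le n^2$) are all in order.
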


\subsection{\probVMC and \probEMC}
In \probVMC (VMWC), given a graph $G$, a subset of $r$ vertices $R \subseteq V (G)$, called terminals, and an integer $k$, the goal is to decide if there is a set of at most $k$ vertices of $V(G) \backslash R$ hitting every path between a pair of terminals. \probEMC (EMWC) is similar, except the goal is to decide if there is a set of at most $k$ edges of $E(G)$ hitting every path between a pair of terminals. 

Note that VMWC is a special case of VMC where we have to separate all $r^2$ pairs of terminals in $R\times R$. Also, we need to consider only one partition where each terminal belongs to a separate part. Thus, we readily obtain a contraction free DP that runs in $r^{O(w)}n^{O(1)}$ time.  

By our framework, the total running time of our algorithm is $(|\mathsf{Cand}|^{O(\sqrt{k})}\cdot r^{O(w)})\cdot n^{O(1)}$, where $w=\sqrt{k}$. By Lemma \ref{lem-candidate}, the size of $\mathsf{Cand}$ for VMWC is $k^{O(r)}$. Subsequently, we obtain the time bound of $2^{O(r\sqrt{k} \log k)} \cdot n^{O(1)}$.  

\begin{theorem}
\probVMC on $H$-minor free graphs can be solved in $2^{O(r\sqrt{k} \log k)} \cdot n^{O(1)}$ time by a randomized algorithm with success probability $1-1/2^n$. 
\end{theorem}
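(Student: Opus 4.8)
The plan is to apply the generic framework of Section~\ref{sec:framework} essentially verbatim, reducing \probVMC to the already-treated \probVMultiC (VMC). The key observation, as noted just above the statement, is that \probVMC is the special case of \probVMultiC in which the set of pairs to be separated is exactly $R \times R$ (all pairs of terminals), and moreover there is only \emph{one} relevant partition $\Pi$ of $R$ to consider --- the partition into singletons --- because any feasible solution must place every pair of distinct terminals into different connected components, so no two terminals can ever share a part. Consequently, when specializing the contraction-friendly tree-decomposition DP of the VMC subsection, the outer loop over the $r^{O(r)}$ partitions of $R$ collapses to a single iteration, and the number of colors $\alpha$ equals $r$.

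First I would record the problem definition and the reduction to the candidate-set formulation: invoke Lemma~\ref{lem-candidate} to obtain $\mathsf{Cand} \subseteq V(G)$ of size $k^{O(r)}$ for VMWC in randomized polynomial time, so that it suffices to search for a solution inside $\mathsf{Cand}$. Next I would describe the contraction-friendly DP exactly as in the VMC subsection but with the simplifications above: given $G$, a set $X \subseteq V$ disjoint from the sought solution, and a tree decomposition $\mathcal{T}$ of $G/X$, a configuration of a vertex set $V'$ is a map to $\{0,1,\dots,r\}$ (value $0$ meaning ``in the solution'', value $i$ meaning ``in the component containing the $i$-th terminal''). The analogue of Observation~\ref{obs:distinct-config-vmc} gives $r^{O(|V'|)}$ distinct valid configurations of the preimage $(G/X)(V')^{-1}$ --- each contracted component of $G[X]$ must lie entirely within one component, hence has only $r$ possible states --- so each DP table $L(t)$ has $r^{O(|\sigma(t)|)} \cdot k$ entries, and a standard merge over children shows all tables can be computed in $r^{O(w)} n^{O(1)}$ time, where $w$ bounds the non-leaf bag sizes (leaf tables with large bags are supplied ``for free'' by the outer Robertson--Seymour DP, exactly as in the OCT write-up). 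This verifies that VMWC is ``contraction-friendly'' with $f(\vec r, w) = r^{O(w)}$.

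Then I would plug this into the framework: set $p = \lfloor \sqrt{k} \rfloor$, apply Lemma~\ref{lem-rsdecomp}, Corollary~\ref{cor-extention}, and Lemma~\ref{lem-specialtd} node-by-node on $\mathcal{T}_\textnormal{RS}$, obtaining the bound $|\mathsf{Cand}|^{O(\sqrt{k})} \cdot f(\vec r, \sqrt{k}) \cdot n^{O(1)} = k^{O(r\sqrt{k})} \cdot r^{O(\sqrt{k})} \cdot n^{O(1)} = 2^{O(r\sqrt{k}\log k)} \cdot n^{O(1)}$, and the success probability $1 - 1/2^n$ is inherited from Lemma~\ref{lem-candidate}. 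I do not anticipate a genuine obstacle here, since all the heavy lifting (Lemma~\ref{lem-key} and its corollaries, and the generic framework) is already in place; the only points requiring care are (a) correctly arguing the single-partition simplification and the ``two states per contracted component'' bound so that the color count stays at $r$ rather than $2r$, and (b) confirming that the undeletable-terminal constraint $R \cap \mathsf{Cand} = \emptyset$ is respected both by the candidate set and by forbidding value $0$ on terminals in every configuration. These are the mild bookkeeping steps; everything else is a direct instantiation of the framework already applied to OCT, EB, GFVS, and VMC above.

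\begin{proof}[Proof sketch]
\probVMC is the restriction of \probVMultiC to the instance where the set of pairs to be separated is $R \times R$. Since any feasible solution must separate every pair of distinct terminals, in the DP of the \probVMultiC subsection only the partition of $R$ into singletons is relevant, so the number of ``colors'' is $\alpha = r$ and there is no outer loop over partitions. The contraction-friendly tree-decomposition DP therefore runs in $r^{O(w)} n^{O(1)}$ time (the analysis is identical to that subsection after setting $\alpha = r$ and dropping the partition loop; leaf bags of large size are handled by the outer Robertson--Seymour DP exactly as in the \probOCT write-up). Plugging $f(\vec r, w) = r^{O(w)}$ into the generic framework of Section~\ref{sec:framework} with $p = \lfloor\sqrt{k}\rfloor$ yields a running time of $|\mathsf{Cand}|^{O(\sqrt{k})} \cdot r^{O(\sqrt{k})} \cdot n^{O(1)}$. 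By Lemma~\ref{lem-candidate}, $|\mathsf{Cand}| = k^{O(r)}$ for VMWC, so the total time is $k^{O(r\sqrt{k})} \cdot r^{O(\sqrt{k})} \cdot n^{O(1)} = 2^{O(r\sqrt{k}\log k)} \cdot n^{O(1)}$, and the randomization (with success probability $1 - 1/2^n$) comes solely from computing $\mathsf{Cand}$.
\end{proof}
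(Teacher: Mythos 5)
Your proposal is correct and follows essentially the same route as the paper: the paper also observes that \probVMC is the special case of \probVMultiC with all pairs in $R\times R$ and only the singleton partition relevant, giving a contraction-friendly DP in $r^{O(w)}n^{O(1)}$ time, and then plugs $|\mathsf{Cand}| = k^{O(r)}$ into the generic framework to obtain $2^{O(r\sqrt{k}\log k)}\cdot n^{O(1)}$. (The passing mention of ``two states per contracted component'' should read $r$ states, as you yourself state earlier, but this does not affect the argument.)
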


By using $\mathsf{Cand} = V(G)$, we have the following.

\begin{theorem}
\probVMC on $H$-minor free graphs can be solved in $n^{O(\sqrt{k})}$ time. 
\end{theorem}

\subsubsection{\probEMC}
We describe the contraction-friendly tree-decomposition DP for EMWC. This DP is partly motivated by the DP in \cite{DengLZ13} for Multi-Multiway cut which is a generalization of EMWC. We are given an $n$-vertex graph $G=(V,E)$, a (possibly empty) subset $X\subseteq V$, and a tree-decomposition $\mathcal{T}$ of $G/X$. Let $T$ be the underlying tree of $\mathcal{T}$. Consider any non-leaf vertex $t \in T$. Note that removal of the solution edges in EMWC induces a partition of the vertices where each part (or connected component) contains at most one terminal of $R$. We would like to guess an optimal partition. Consider any subset $V' \subseteq \beta(t)$, and any partition $\Pi=\{X_i\mid i\in I\}$ of the vertices in ${(G/X)(V')}^{-1}$ for some index set $I$. We say a set of edges $C$ in ${(G/X)(\gamma(t))}^{-1}$ is consistent with $\Pi$ and vice versa iff 
\begin{itemize}
    \item The endpoints of the edges in $C$ are disjoint from $X$;
    \item $C$ separates all the terminals; and
    \item Let $G'$ be the subgraph ${(G/X)(\gamma(t))}^{-1}$ after removing the edges in $C$. Let $\{P_i\mid i\in I\}$ be the family of connected components such that $P_i\cap V'\ne \emptyset$ for all $i\in I$. Then $X_i=P_i$ for all $i\in I$. 
\end{itemize}

\begin{observation}\label{obs:distinct-config-emwc}
The number of distinct partitions of the vertices in ${(G/X)(V')}^{-1}$ consistent with sets of edges is ${|V'|}^{O(|V'|)}$.
\end{observation}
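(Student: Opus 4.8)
The plan is to show that, although the vertex set of ${(G/X)(V')}^{-1}$ may be arbitrarily large --- a single contracted vertex of $V'$ can unfold into a huge connected component of $G[X]$ --- every partition $\Pi$ that is consistent with some edge set $C$ is completely determined by a partition of the much smaller set $V'$, and then to bound the latter by a Bell number. The first and crucial step is to record the following atomicity fact: since consistency of $C$ forces every edge of $C$ to have both endpoints outside $X$, no edge of $G[X]$ is deleted, so for each contracted vertex $\bar v \in V'$ the entire pre-image of $\bar v$ (a connected component of $G[X]$) survives as a connected subgraph after removing $C$, and hence lies inside a single connected component $P_i$, i.e., inside a single block of $\Pi$. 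Thus the pre-images of the vertices of $V'$ behave atomically with respect to any consistent $\Pi$.

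Next I would define a map $\Pi \mapsto \bar\Pi$, where $\bar\Pi$ is the partition of $V'$ in which $\bar u$ and $\bar v$ lie in the same block exactly when the pre-image of $\bar u$ and the pre-image of $\bar v$ lie in the same block of $\Pi$; this is well-defined precisely because of atomicity. Conversely, $\Pi$ is reconstructed from $\bar\Pi$ by lifting: the lift of each block of $\bar\Pi$ (the union of the pre-images of its vertices) is exactly a block of $\Pi$, and these lifts exhaust $\Pi$ because every vertex of ${(G/X)(V')}^{-1}$ lies in the pre-image of exactly one vertex of $V'$. Consequently the map $\Pi \mapsto \bar\Pi$ is injective on the set of partitions consistent with some edge set, so that set has size at most the number of partitions of $V'$, which is the Bell number $B_{|V'|} \le |V'|^{|V'|} = |V'|^{O(|V'|)}$.

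The step I expect to need the most care is the atomicity claim together with the notational conflation of $V'$ with its pre-image built into the definition of consistency above; I would spell out explicitly that "the pre-image of $\bar v$ lies in one block of $\Pi$" is meaningful, and that lifting genuinely inverts the map $\Pi \mapsto \bar\Pi$. Once those points are pinned down, the remainder is a routine counting estimate.
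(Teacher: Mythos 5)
Your proof is correct and follows essentially the same route as the paper's: the key point in both is that a consistent edge set avoids $X$, so each contracted vertex's pre-image (a connected component of $G[X]$) stays intact and lies in a single block, whence every consistent partition is determined by a partition of $V'$, of which there are at most $|V'|^{O(|V'|)}$. Your write-up is in fact somewhat more careful about the injectivity/lifting step than the paper's brief argument, but it is the same idea.
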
 

\begin{proof}
%Consider any solution $S$ of OCT disjoint from $X$ on any graph $G'$ that contains ${(G/X)}^{-1}(\sigma(t))$ as an induced subgraph. 
First note that the number of distinct partitions of the vertices in $V'$ is ${|V'|}^{O(|V'|)}$. However, $(G/X)$ $(V'){}^{-1}$ might contain many more vertices, as we undo the contraction of the component vertices in $V'$. For each contracted vertex $v$ of $V'$, consider the connected component $C$ in $G[X]$ corresponding to $v$. Then as the endpoints of the edges in consistent sets $C$ are disjoint from $X$, all vertices in $X$ must end up in the same connected component after removal of the edges from $C$. It follows that the number of such distinct partitions of the vertices in ${(G/X)(V')}^{-1}$ is again bounded by ${|V'|}^{O(|V'|)}$. 
\end{proof}

For any partition $\Pi$ of the vertices in ${(G/X)(\sigma(t))}^{-1}$, we define a sub-problem where the goal is to find the minimum size set of edges $C$ in ${(G/X)(\gamma(t))}^{-1}$ which is consistent with $\Pi$. It is sufficient to solve this sub-problem for our purpose, as if we know the minimum value, we can also solve the decision version of the problem. We store a solution $C$ of this sub-problem in a table $L(t)$ indexed by $\Pi$. To compute the table corresponding to $t$, we need to merge the tables of its children. In the following, we describe how to merge the tables of two children. One can easily extend it to more than two children.  

Consider any partition $\Pi_1$ of the vertices in ${(G/X)(\sigma(t))}^{-1}$. We extend this partition to obtain a partition $\Pi_2$ of the vertices in ${(G/X)(\beta(t))}^{-1}$. For the purpose of extension, we define compatibility of two partitions. Consider two partitions  $\Pi_1=\{X_i\mid i\in I\}$ and $\Pi_2=\{Y_j\mid j\in J\}$. Also, consider the bipartite graph $B=(I,J,E')$ such that $(i,j)\in E'$ iff $X_i\cap Y_j\ne \emptyset$. Then $\Pi_1$ and $\Pi_2$ are compatible if $E'$ is a perfect matching and for each edge $(i,j)\in E'$, $i=j$. 
%Consider such a mergeable pair. For $(i,i)\in E'$, let $Z_i=(X_i\cup Y_i)\cap \beta(t)$. Let $\Pi=\cup_i Z_i$. We denote $\Pi$ by $\Pi_1\xor \Pi_2$. 

Now, fix a partition $\Pi$ of the vertices in ${(G/X)(\sigma(t))}^{-1}$. We show how to compute the entry $L(t)[\Pi]$ by merging tables of two children $s_1$ and $s_2$. First, we take a partition $\Pi'$ of the vertices in ${(G/X)(\beta(t))}^{-1}$ which is compatible with $\Pi$. Given $\Pi'$, the partition of the vertices in  ${(G/X)(\sigma(s_1))}^{-1}$ (or ${(G/X)(\sigma(s_2))}^{-1}$) compatible with $\Pi'$, is also fixed. Let $\Pi_1$ and $\Pi_2$ be these two partitions with respect to $s_1$ and $s_2$. The entry $L(t)[\Pi]=\arg \min_{|C|} \{C=L(s_1)[\Pi_1]\cup L(s_2)[\Pi_2]\cup C'\}$, where $C'$ is a subset of edges $\{(u,v)\mid u,v \notin X\}$ such that $(u,v)$ is in ${(G/X)(\beta(t))}^{-1}$, but not in ${(G/X)(\gamma(s_1))}^{-1}$ or ${(G/X)(\gamma(s_2))}^{-1}$, and $u$ and $v$ are contained in different parts of $\Pi'$.   

The correctness follows from the fact that if an edge $(u,v)$ is common to both parent and child sub-problems, the decision to take it or not is uniquely determined by the partition corresponding to the children. Thus, no edge is counted more than once in the computed solution.  

Note that computation of a table entry requires enumerating all possible $\Pi'$. The number of such partitions is bounded by $w^{O(w)}$ by Observation \ref{obs:distinct-config-emwc}, where $w$ is the maximum size of the bags of the non-leaf vertices in $T$. By the same observation, the number of entries that we need to store in $L(t)$ is ${|\sigma(t)|}^{O(|\sigma(t)|)}\cdot k$, and all such relevant partitions can be computed in ${|\sigma(t)|}^{O(|\sigma(t)|)}n^{O(1)}$ time. Hence, we conclude that the table entries corresponding to all vertices of $T$ can be computed in time $w^{O(w)} n^{O(1)}$. 

By our framework, the total running time of our algorithm is $(|\mathsf{Cand}|^{O(\sqrt{k})}\cdot w^{O(w)})\cdot n^{O(1)}$, where $w=\sqrt{k}$. By Lemma \ref{lem-candidate}, the size of $\mathsf{Cand}$ for EMWC is $k^{O(\log^3 k)}$. Subsequently, we obtain the time bound of $2^{O(\sqrt{k}\log^4 k)} \cdot n^{O(1)}$.  

\begin{theorem}
\probEMC on $H$-minor free graphs can be solved in $2^{O(\sqrt{k}\log^4 k)} \cdot n^{O(1)}$ time by a randomized algorithm with success probability $1-1/2^n$. 
\end{theorem}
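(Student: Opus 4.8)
The plan is to apply the generic framework of Section~\ref{sec:framework} essentially verbatim, with the only problem-specific ingredient being the contraction-friendly tree-decomposition DP for \probEMC. I would first observe, as is standard (and noted in the excerpt after the \probVMC algorithm), that \probEMC on a graph of treewidth $w$ can be solved in time $w^{O(w)} n^{O(1)}$: deleting an optimal solution partitions the vertex set so that each connected component contains at most one terminal, and one does DP on a tree decomposition where a state at a node records, for the (at most $w+1$) vertices of the bag, into which part of the partition they fall; the partition has at most $w+1$ relevant classes so there are $w^{O(w)}$ states, following e.g.\ the Multi-Multiway cut DP of Deng, Lin and Zhu. I would then adapt this to the contraction-friendly setting exactly as done for \probOCT/\probVMC: given $X$ disjoint from the solution and a tree decomposition of $G/X$, the key point is that a contracted vertex (a connected component of $G[X]$) must lie entirely in a single part of the partition, because no solution edge touches $X$; hence the pre-image of a bag of size $w+1$ still has only $w^{O(w)}$ admissible ``states'', so each DP table has $w^{O(w)} \cdot k$ entries and can be computed from the children's tables in $w^{O(w)} n^{O(1)}$ time. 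To avoid double-counting a solution edge across parent/child sub-problems, I would define the sub-problem at node $t$ so that only edges with at least one endpoint outside the adhesion pre-image are charged, as in the \probEB treatment; the decision on a shared edge is then forced by the configuration of the child.

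With this DP in hand, the framework immediately yields running time $|\mathsf{Cand}|^{O(\sqrt{k})} \cdot f(r,\sqrt{k}) \cdot n^{O(1)}$ where here $f(r,w) = w^{O(w)}$ (the DP for \probEMC, being the all-pairs special case, does not need an extra $r^{O(r)}$ factor since there is only one relevant partition of the terminals — each terminal in its own class). Plugging in $w = \sqrt{k}$ gives $w^{O(w)} = 2^{O(\sqrt{k}\log k)}$. For the candidate set, Lemma~\ref{lem-candidate} gives $|\mathsf{Cand}| = k^{O(\log^3 k)}$ for \probEMC, so $|\mathsf{Cand}|^{O(\sqrt{k})} = 2^{O(\sqrt{k}\log^4 k)}$, which dominates. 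Multiplying through yields the claimed $2^{O(\sqrt{k}\log^4 k)} \cdot n^{O(1)}$ bound, and the randomization comes solely from the candidate-set computation of Lemma~\ref{lem-candidate}, which succeeds with probability $1 - 1/2^n$.

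I do not expect a genuine obstacle here, since all the heavy lifting — Lemma~\ref{lem-key}, Corollary~\ref{cor-extention}, Lemma~\ref{lem-specialtd}, and the reduction from $\mathsf{Prob}_{\lambda,l}$ to the sub-problems $\mathsf{Prob}_{\lambda,l,Y_i\backslash Y'}$ — is already established and is problem-agnostic. The one point requiring a little care is making the \probEMC DP ``contraction-friendly'' in precisely the sense the framework needs: the sub-problem at a node must be phrased in terms of the pre-image ${(G[\gamma(t)]/(Y_i\backslash Y'))(\cdot)}^{-1}$, the DP table must be indexed by a configuration of the pre-image of the adhesion together with a budget $l \le k$, and the heavy leaves of the tree decomposition $\mathcal{T}^*$ (those with $\beta(t^*) = \gamma(s) = \beta(s)$ for a child $s$ of $t$ in $\mathcal{T}_\textnormal{RS}$) must have their tables supplied ``for free'' from the outer DP on $\mathcal{T}_\textnormal{RS}$ — exactly as spelled out for \probOCT. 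Getting the bookkeeping of which edges are charged at which level right, so that the minimum-size solution is computed without multiple-counting, is the only mildly delicate step, and it is handled by the same device already used for \probEB and for \probEMC in the excerpt (the edge-charging convention in the definition of $C'$ when merging children's tables).
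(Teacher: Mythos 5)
Your proposal is correct and follows essentially the same route as the paper: the same partition-based, contraction-friendly DP (contracted components forced into a single part, $w^{O(w)}$ states per bag pre-image, the EB-style edge-charging convention to avoid double counting, heavy leaves supplied by the outer DP on $\mathcal{T}_\textnormal{RS}$), combined with the framework and the $k^{O(\log^3 k)}$ candidate set to get $2^{O(\sqrt{k}\log^4 k)}\cdot n^{O(1)}$. The only cosmetic difference is that the paper indexes its tables by partitions alone and stores minimum-size consistent edge sets rather than boolean entries with a budget $l$, which changes nothing substantive.
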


By using $\mathsf{Cand} = E(G)$, we have the following.

\begin{theorem}
\probEMC on $H$-minor free graphs can be solved in $n^{O(\sqrt{k})}$ time. 
\end{theorem}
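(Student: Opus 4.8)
The final statement to prove is:

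\begin{quote}
\probEMC on $H$-minor free graphs can be solved in $n^{O(\sqrt{k})}$ time.
\end{quote}

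This is the "trivial candidate set" variant of the preceding theorem, which already gives a randomized $2^{O(\sqrt{k}\log^4 k)}\cdot n^{O(1)}$ algorithm using the small candidate set $\mathsf{Cand}$ of size $(r+k)^{O(\log^3(r+k))}$ produced by Lemma~\ref{lem-candidate}. So the plan is simply to re-run the very same framework but replace the call to Lemma~\ref{lem-candidate} by the trivial candidate set $\mathsf{Cand}=E(G)$, which has size $O(n^2)$ and can be computed in polynomial time deterministically.

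The plan is as follows. First I would invoke the EMWC contraction-friendly tree-decomposition DP established just above, which solves EMWC in $f(w)\cdot n^{O(1)}$ time with $f(w)=w^{O(w)}$ on any $w$-width tree decomposition of the graph after contracting a part disjoint from the solution (in the edge-deletion variant, "disjoint" means no edge of the solution is incident to a contracted vertex). Second, I would plug this DP into the generic framework of Section~\ref{sec:framework}: compute a Robertson--Seymour decomposition $\mathcal{T}_\textnormal{RS}$ via Lemma~\ref{lem-rsdecomp}, and do DP on $\mathcal{T}_\textnormal{RS}$ from the leaves upward. At each node $t$, computing its DP table $M(t)$ reduces, via Corollary~\ref{cor-extention} with $p=\lfloor\sqrt k\rfloor$ and the construction of $\varPi$, to solving the restricted sub-problems $\mathsf{Prob}_{\lambda,l,Y_i\backslash Y'}$ over all $(i,Y')\in\varPi$; each of those is handled by building the tree decomposition $\mathcal{T}^*$ of Lemma~\ref{lem-specialtd} (of "effective" width $O(\sqrt k)$, with the heavy leaves being precisely the children $s$ of $t$ whose tables $M(s)$ are already available) and running the EMWC DP on $\mathcal{T}^*$. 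The only change from the previous theorem is the substitution $\mathsf{Cand}=E(G)$, so $|\mathsf{Cand}'|\le 2|E(G)|=O(n^2)$ where $\mathsf{Cand}'$ is the vertex set incident to $\mathsf{Cand}$, hence $|\varPi|\le p\cdot|\mathsf{Cand}'|^{O(k/p)}=n^{O(\sqrt k)}$ and condition \textbf{(3)} of the framework holds trivially since the trivial candidate set contains every solution. The DP on $\mathcal{T}^*$ runs in $2^{O(\sqrt k\log k)}\cdot n^{O(1)}$ time (using $w=\sqrt k$ in $w^{O(w)}$), which is absorbed into $n^{O(\sqrt k)}$.

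Putting the pieces together, the total running time is $|\mathsf{Cand}|^{O(\sqrt k)}\cdot f(\sqrt k)\cdot n^{O(1)} = n^{O(\sqrt k)}\cdot 2^{O(\sqrt k\log k)}\cdot n^{O(1)} = n^{O(\sqrt k)}$, and the algorithm is deterministic since no randomized kernelization is used. There is essentially no new obstacle here: the entire argument is that the framework as described in Section~\ref{sec:framework} explicitly allows the trivial candidate set (as remarked after Lemma~\ref{lem-key} and again at the end of the framework section), and the EMWC DP is exactly the one established in the proof of the previous theorem. The only point meriting a sentence of care is that with the trivial candidate set the factor $|\mathsf{Cand}|^{O(\sqrt k)}$ becomes polynomial-in-$n$ raised to the $O(\sqrt k)$ power, i.e. $n^{O(\sqrt k)}$, so it dominates and the $w^{O(w)}$ DP factor with $w=O(\sqrt k)$ is the same order; hence the claimed bound.

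\begin{proof}
We run exactly the algorithm underlying the previous theorem, using the generic framework of Section~\ref{sec:framework} with the contraction-friendly tree-decomposition DP for EMWC described above, which on a $w$-width tree decomposition of the graph after contracting a part disjoint from the solution runs in $w^{O(w)}\cdot n^{O(1)}$ time. The only modification is that instead of computing a small candidate set via Lemma~\ref{lem-candidate}, we take the trivial candidate set $\mathsf{Cand}=E(G)$, which is computable in polynomial time and, by definition, contains every solution; thus condition \textbf{(3)} in the construction of $\varPi$ is satisfied for free. Since $|\mathsf{Cand}|=O(n^2)$, the set $\mathsf{Cand}'$ of endpoints of edges in $\mathsf{Cand}$ also has size $O(n^2)$, so with $p=\lfloor\sqrt k\rfloor$ we get $|\varPi|\le p\cdot|\mathsf{Cand}'|^{O(k/p)}=n^{O(\sqrt k)}$. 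For each pair $(i,Y')\in\varPi$ we build the tree decomposition $\mathcal{T}^*$ of Lemma~\ref{lem-specialtd}, whose non-leaf nodes have bags of size $O(\sqrt k)$ and whose heavy leaves $t^*$ satisfy $\gamma(t^*)=\beta(t^*)=\gamma(s)$, $\sigma(t^*)=\sigma(s)$ for some child $s$ of $t$, so that the DP table of $t^*$ equals the already-computed table $M(s)$; running the EMWC DP on $\mathcal{T}^*$ then takes $(\sqrt k)^{O(\sqrt k)}\cdot n^{O(1)}=2^{O(\sqrt k\log k)}\cdot n^{O(1)}$ time. Hence computing each $M(t)$ takes $|\varPi|\cdot 2^{O(\sqrt k\log k)}\cdot n^{O(1)}=n^{O(\sqrt k)}$ time, and the main DP over $\mathcal{T}_\textnormal{RS}$ takes $n^{O(\sqrt k)}$ time in total. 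The algorithm is deterministic, as it uses no randomized subroutine.
\end{proof}
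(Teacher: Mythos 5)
Your proposal is correct and matches the paper's own (one-line) argument: rerun the EMWC framework with the trivial candidate set $\mathsf{Cand}=E(G)$, so that $|\varPi|=n^{O(\sqrt{k})}$ and the $w^{O(w)}$-time contraction-friendly DP with $w=O(\sqrt{k})$ yields $n^{O(\sqrt{k})}$ total time, deterministically. One cosmetic slip: the preceding randomized theorem for \probEMC uses the candidate set of size $k^{O(\log^3 k)}$ (the $(r+k)^{O(\log^3(r+k))}$ bound belongs to \probEMultiC), but this does not affect your argument since you discard the small candidate set anyway.
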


%%%%%%%%%%%%%%%%%%%%%%%%%%%%%%%%%%%%%%%%%%%%%%%%%%%%%%%%%%%%%%%%%%%%%%%%%%%%%%%%%%%%%

\section{Conclusion and future work} \label{sec-conclusion}
In this paper, we presented a generic framework to design subexponential-time parameterized algorithms for various cut and cycle hitting problems on $H$-minor-free graphs, including \probOCT, \probEB, \probVMC, \probEMC, \probVMultiC, \probEMultiC, \probGFVS, and \probGFES.
Our framework is robust and can possibly be applied to other problems.
For example, it can lead to an $n^{O(\sqrt{k})}$-time algorithm for the classical problem \textsc{Bisection}, because \textsc{Bisection} is contraction-friendly and there is a known $2^{O(w)}$-time DP algorithm on a tree decomposition of width $w$.

Our framework is based on a new decomposition theorem on almost-embeddable graphs, which states that for every $h$-almost-embeddable graph $G$ and every $p \in \mathbb{N}$, there exist disjoint sets $Z_1,\dots,Z_p \subseteq V(G)$ such that for every $i \in [p]$ and every $Z' \subseteq Z_i$, $\mathbf{tw}(G / (Z_i \backslash Z')) = O(p+|Z'|)$.
This generalizes the classical contraction decomposition theorem on planar graphs.
We believe this decomposition theorem is quite powerful and should find applications beyond this paper.

Next, we pose some open questions for future study.
The first interesting question is whether our decomposition theorem above generalizes to $H$-minor-free graphs.
In this work, we were only able to prove the decomposition theorem for almost-embeddable graphs, and hence required a complicated two-layer DP procedure in order to achieve the algorithmic results on $H$-minor-free graphs.
If the same theorem holds even for $H$-minor-free graphs, all problems studied in this paper can be solved much more directly.
Furthermore, thanks to a simultaneous work by Marx et al. \cite{marx2022framework} which showed a similar decomposition theorem for planar graphs (among many other results), such a decomposition theorem for $H$-minor-free graphs can also result in subexponential parameterized algorithms for other important cycle hitting problems on $H$-minor-free graphs, such as \textsc{Subset Feedback Vertex Set}, \textsc{Subset Odd Cycle Transversal}, etc.
%An interesting question left open in this work is the \textsc{Subset Feedback Vertex Set} problem on $H$-minor-free graphs.
%By \cite{Wahlstrom20}, \textsc{Subset Feedback Vertex Set} also admits small candidate sets.
%However, our framework cannot be directly applied to \textsc{Subset Feedback Vertex Set} as it is not contraction-friendly.
Another open question is whether one can obtain subexponential parameterized algorithms for the problems studied in the paper with dependence only on $k$.
Note that whereas the running time of our algorithms are subexponential in $k$, they also depend on additional parameters such as $r$ and $g$, for problems like \probVMC, \probVMultiC, \probEMultiC, \probGFVS, and \probGFES.
Removing the dependency on these parameters (or at least making the dependency subexponential) seems to be an interesting direction for future study.
Finally, in a very recent follow-up work of this paper \cite{bandyapadhyay2022true}, the authors showed that unit-disk graphs admit a decomposition theorem similar to the one proved in this paper.
One can then ask whether theorems of this type hold for other classes of geometric intersection graphs, such as disk graphs, unit-ball graphs, string graphs, etc.
%So a natural question is to obtain subexponential FPT algorithms for these problems with dependence only on $k$.
%Also, is there a subexponential algorithm for \probEVMultiC on $H$-minor-free graphs parameterized by $r$? 

\bibliographystyle{plainurl}
\bibliography{my_bib.bib}

\newpage

\appendix

\noindent
\textsc{\LARGE Appendix}

\section{Proof of Lemma~\ref{lem-rsdecomp}} \label{appx-proofrs}
%Let $G$ be an $H$-minor-free graph for some fixed graph $H$.
The fact that any $H$-minor free graph admits a tree decomposition of adhesion size at most $c$ whose torsos are $c$-almost-embeddable (for some constant $c>0$ depending on $H$) follows from the profound work of Robertson and Seymour \cite{robertson2003graph}; we call such a tree decomposition \textit{Robertson-Seymour} decomposition.
Several algorithms have been developed to compute a Robertson-Seymour decomposition (and the almost-embeddable structures of the torsos) in polynomial time \cite{demaine2005algorithmic,grohe2013simple,kawarabayashi2011simpler}.
%We then do some simple modifications on $\mathcal{T}_\textnormal{RS}$ to make it satisfy the additional property that $G[\gamma(t) \backslash \sigma(t)]$ is connected and $\sigma(t) = N_G(\gamma(t) \backslash \sigma(t))$ for all node $t$.
So it suffices to show how to make a Robertson-Seymour decomposition have the additional property.
%i.e., $G[\gamma(t) \backslash \sigma(t)]$ is connected and $\sigma(t) = N_G(\gamma(t) \backslash \sigma(t))$ for all node $t$.

In fact, one can always modify a given tree decomposition of a graph $G$ to make it satisfy the property that $G[\gamma(t) \backslash \sigma(t)]$ is connected and $\sigma(t) = N_G(\gamma(t) \backslash \sigma(t))$ for all node $t$ such that \textbf{(i)} the adhesion size of the tree decomposition does not increase after the modification and \textbf{(ii)} each torso in the modified tree decomposition is a subgraph of a torso in the original tree decomposition.
To see this, let $\mathcal{T}$ be a tree decomposition of a graph $G$ and $T$ be the underlying (rooted) tree of $\mathcal{T}$.
A \textit{redundant leaf} refers to a leaf $t \in T$ such that $\beta(t) \subseteq \beta(t')$ where $t'$ is the parent of $t$.
%Note that after removing a redundant leaf, $\mathcal{T}$ is still a tree decomposition of $G$.
We define an operation $\mathsf{Prune}$, which modifies $\mathcal{T}$ by repeatedly removing redundant leaves until there is no redundant leaf (one can easily verify that the order of removing the redundant leaves does not matter).
Note that after the operation $\mathsf{Prune}$, $\mathcal{T}$ is still a tree decomposition of $G$.
Also note that after $\mathsf{Prune}$, the number of leaves in $T$ is at most $n = |V(G)|$.
Indeed, after $\mathsf{Prune}$, the bag of each leaf $t \in T$ contains a vertex of $G$ that is not contained in the bag of the parent of $t$ (and thus not contained in the bag of any other node), implying that the number of leaves in $T$ is at most the number of vertices in $G$.
%We say a node $t \in T$ is \textit{good} if $G[\gamma(t) \backslash \sigma(t)]$ is connected and $\sigma(t) = N_G(\gamma(t) \backslash \sigma(t))$ for all node $t$, and is \textit{bad} otherwise.
Besides $\mathsf{Prune}$, we define two operations $\mathsf{Split}(t)$ and $\mathsf{Clean}(t)$ for a node $t \in T$.
Let $T_t$ denote the subtree of $T$ rooted at $t$.
\begin{itemize}
    \item $\mathsf{Split}(t)$.
    Let $C_1,\dots,C_q$ be the connected components of $G[\gamma(t) \backslash \sigma(t)]$.
    Define $V_i = \sigma(t) \cup V(C_i)$ for $i \in [q]$.
    We create $q$ copies of $T_t$, denoted by $T_t^{(1)},\dots,T_t^{(q)}$.
    For each node $s \in T_t$, let $s^{(i)}$ denote the copy of $s$ in $T_t^{(i)}$ and define $\beta(s^{(i)}) = \beta(s) \cap V_i$ as the bag of $s^{(i)}$, for $i \in [q]$.
    %Then each $T_t^{(i)}$ with the bags is a tree decomposition of $G[V_i]$.
    The $\mathsf{Split}(t)$ operation replaces $T_t$ with $T_t^{(1)},\dots,T_t^{(q)}$ in $T$, that is, it deletes $T_t$ from $T$ and then adds $T_t^{(1)},\dots,T_t^{(q)}$ as the subtrees of the parent of $t$.
    Note that after doing $\mathsf{Split}(t)$, $\mathcal{T}$ is still a tree decomposition of $G$.
    Indeed, if $(u,v) \in E(G)$ is an edge where $u,v \in \gamma(t)$, then $u,v \in V_i$ for some $i \in [q]$ and hence $u,v \in \beta(s^{(i)})$ for some $s \in T_t$ such that $u,v \in \beta(s)$.
    Furthermore, after doing $\mathsf{Split}(t)$, $G[\gamma(t^{(i)}) \backslash \sigma(t^{(i)})]$ is connected for all $i \in [q]$.
    \item $\mathsf{Clean}(t)$.
    We have $N_G(\gamma(t) \backslash \sigma(t)) \subseteq \sigma(t)$ because every vertex in $\gamma(t) \backslash \sigma(t)$ is only contained in the bags of the nodes in $T_t$ and hence can only be adjacent to the vertices in $\gamma(t)$.
    The $\mathsf{Clean}(t)$ operation removes the vertices in $\sigma(t) \backslash N_G(\gamma(t) \backslash \sigma(t))$ from the bags of all nodes in $T_t$.
    Note that after doing $\mathsf{Clean}(t)$, $\mathcal{T}$ is still a tree decomposition of $G$.
    Indeed, if $(u,v) \in E(G)$ is an edge where $u \in \sigma(t) \backslash N_G(\gamma(t) \backslash \sigma(t))$ and $v \in \gamma(t)$, then we must have $v \in \sigma(t)$ and hence $u,v \in \beta(t')$ where $t'$ is the parent of $t$.
    Furthermore, after doing $\mathsf{Clean}(t)$, $\sigma(t) = N_G(\gamma(t) \backslash \sigma(t))$.
    %after doing $\mathsf{Clean}(t)$, the adhesion size of $\mathcal{T}$ does not increase, the torso of any node $s \in T$ becomes a subgraph of the original torso of $s$, and 
\end{itemize}

With the three operations $\mathsf{Prune}$, $\mathsf{Split}$, and $\mathsf{Clean}$ in hand, we now modify $\mathcal{T}$ in two phases.
We say a node $t \in T$ is \textit{bad} if $G[\gamma(t) \backslash \sigma(t)]$ is not connected.
In the first phase, we repeat the following procedure: finding the highest bad node $t \in T$ and applying $\mathsf{Split}(t)$ followed by $\mathsf{Prune}$.
This phase terminates when all nodes in $T$ are good.
We claim that this phase can terminate in $O(dn)$ rounds where $d$ is the original depth of $T$ and $n = |V(G)|$.
To see this, we define a variant $(c_1,c_2)$ for $\mathcal{T}$, where $c_1$ is equal to $d$ minus the depth of the highest bad node $t \in T$ and $c_2$ is the number of bad nodes in the same level as $t$.
It is easy to see that after each round, $(c_1,c_2)$ decreases lexicographically, i.e., either $c_1$ decreases or $c_1$ remains unchanged and $c_2$ decreases.
%First, observe that the $\mathsf{Split}$ and $\mathsf{Prune}$ operations cannot increase the depth of $T$.
Indeed, for any node $t \in T$, doing $\mathsf{Split}$ on a descendant of $t$ does not change $\gamma(t)$ and $\sigma(t)$.
Thus, $c_1$ can never increase.
Furthermore, when doing $\mathsf{Split}(t)$ on a bad node $t \in T$, $T_t$ is replaced with its copies $T_t^{(1)},\dots,T_t^{(q)}$ in which 
$t^{(1)},\dots,t^{(q)}$ are not bad nodes.
So the number of bad nodes in the same level as $t$ deceases by 1.
It follows that after each round, if $c_1$ remains unchanged, then $c_2$ must decrease by 1.
Therefore, $(c_1,c_2)$ decreases lexicographically after each round.
Note that $0 \leq c_1 \leq d$ because the $\mathsf{Split}$ and $\mathsf{Prune}$ operations do not increase the depth of $T$.
In addition, $0 \leq c_2 \leq n$ after each round, since the number of leaves in $T$ is at most $n = |V(G)|$ after $\mathsf{Prune}$ (as observed before).
As a result, the first phase has at most $O(dn)$ rounds and can be done in $n^{O(1)}$ time as long as the original size of $T$ is $n^{O(1)}$.
In the second phase, we repeat the following procedure: finding a node $t \in T$ such that $\sigma(t) \neq N_G(\gamma(t) \backslash \sigma(t))$ and applying $\mathsf{Clean}(t)$ followed by $\mathsf{Prune}$.
This phase terminates when $\sigma(t) = N_G(\gamma(t) \backslash \sigma(t))$ for all $t \in T$.
Clearly, the second phase has at most $O(dn^2)$ rounds and hence can be done in polynomial time, since the resulting tree after the first phase has size $O(dn)$ and each round of the second phase removes at least one vertex from one bag.
Note that a $\mathsf{Clean}$ operation does not change $\gamma(t) \backslash \sigma(t)$ for any node $t \in T$.
So after the second phase, $G[\gamma(t) \backslash \sigma(t)]$ is still connected for all $t \in T$.
Therefore, when the two-phase modification terminates, the resulting $\mathcal{T}$ has the desired property that $G[\gamma(t) \backslash \sigma(t)]$ is connected and $\sigma(t) = N_G(\gamma(t) \backslash \sigma(t))$ for all $t \in T$.
Furthermore, the adhesion size of $\mathcal{T}$ does not increase after the modification (as all of the three operations do not increase the adhesion size of $\mathcal{T}$) and each torso in the resulting $\mathcal{T}$ is a subgraph of a torso in the original $\mathcal{T}$.
To see the latter, observe that after $\mathsf{Clean}$ and $\mathsf{Prune}$ operations, the torso of each node $t \in T$ becomes a subgraph of the original torso of $t$, as these two operations only remove vertices from bags or remove nodes from $T$.
In a $\mathsf{Split}(t)$ operation, if the subtree $T_t$ is replaced by its copies $T_t^{(1)},\dots,T_t^{(q)}$, then the torso of $s^{(i)}$ is a subgraph of the torso of $s$ for any $s \in T_t$ and $i \in [q]$.

With the above argument, now we are ready to prove the lemma.
We first compute a Robertson-Seymour decomposition $\mathcal{T}_\text{RS}$ of $G$ (and the almost-embeddable structures of the torsos) using a polynomial-time algorithm \cite{demaine2005algorithmic,grohe2013simple,kawarabayashi2011simpler}, where the adhesion size of $\mathcal{T}_\text{RS}$ is at most $c$ and the torso of each node is $c$-almost embeddable.
Then we apply the above modification to $\mathcal{T}_\text{RS}$.
After this, $\mathcal{T}_\text{RS}$ satisfies the additional condition that $G[\gamma(t) \backslash \sigma(t)]$ is connected and $\sigma(t) = N_G(\gamma(t) \backslash \sigma(t))$ for all node $t$.
Furthermore, the adhesion size of $\mathcal{T}_\text{RS}$ is still bounded by $c$ and the torso of each node in the modified $\mathcal{T}_\text{RS}$ is a subgraph of a torso in the original $\mathcal{T}_\text{RS}$, where the latter is $c$-almost embeddable.
In what follows, we show that any subgraph of an $c$-almost-embeddable graph is $3c$-almost-embeddable, and thus setting $h = 3c$ completes the proof of the lemma.
(The fact that a subgraph of an almost-embeddable graph is also almost-embeddable should be a known result, while we cannot find it in the literature. So we include a proof below for completeness.)

Let $G$ be an $c$-almost-embeddable graph (see Definition~\ref{def-almost} for the formal definition of an almost-embeddable graph) with apex set $A \subseteq V(G)$, embeddable skeleton $G_0$ with an embedding $\eta$ to a genus-$c$ surface $\varSigma$, and vortices $G_1,\dots,G_c$ attached to disjoint facial disks $D_1,\dots,D_c$ in $(G_0,\eta)$ with witness pairs $(\tau_1,\mathcal{P}_1),\dots,(\tau_c,\mathcal{P}_c)$.
Let $G'$ be a subgraph of $G$.
Define $A' = A \cap V(G')$ and $G_i' = G_i \cap G'$ for $i \in \{0\} \cap [c]$.
So we have $G' - A' = G_0' \cup G_1' \cup \cdots \cup G_c'$.
If the vertices in the permutations $\tau_1,\dots,\tau_c$ are all vertices in $G'$, we are done, because in this case $G'$ is a $c$-almost-embeddable graph with apex set $A'$, embeddable skeleton $G_0'$ (embedded in $\varSigma$ by the embedding induced by $\eta$), vortices $G_1',\dots,G_c'$ attached to $D_1,\dots,D_c$ with witness pairs $(\tau_1,\mathcal{P}_1'),\dots,(\tau_c,\mathcal{P}_c')$, where $\mathcal{P}_i'$ is the path decomposition of $G_i'$ induced by $\mathcal{P}_i$, i.e., every bag of $\mathcal{P}_i'$ is the intersection of the corresponding bag of $\mathcal{P}_i$ with $V(G_i')$.
The difficult case is that some vertices in $\tau_1,\dots,\tau_c$ are not included in $G'$.
In this case, we need to add some vertices to $G_0'$ and modify $(\tau_1,\mathcal{P}_1),\dots,(\tau_c,\mathcal{P}_c)$ to obtain the witness pairs for the vortices $G_1',\dots,G_c'$.
Since the vortices $G_1',\dots,G_c'$ and the disks $D_1,\dots,D_c$ are disjoint, we can consider them individually.
Let $i \in [c]$ and suppose $\tau_i = (v_{i,1},\dots,v_{i,q_i})$.
Also, suppose $\pi_i = (u_{i,1},\dots,u_{i,q_i})$ is the underlying path of the path decomposition $\mathcal{P}_i$.
By definition, we have $v_{i,j} \in \beta(u_{i,j})$ for all $j \in [q_i]$.
For convenience, we write $\beta(u_{i,0}) = \emptyset$.
For $j \in [q_i]$, we say $u_{i,j}$ is an \textit{anchor} if $v_{i,j} \notin V(G')$ and there exists $a_j \in \beta(u_{i,j}) \cap V(G_i')$ such that $a_j \notin \beta(u_{i,j-1})$ and $a_j \notin \{v_{i,1},\dots,v_{i,q_i}\}$.
Note that for any two anchors $u_{i,j}$ and $u_{i,j'}$ where $j < j'$, we have $a_j \neq a_{j'}$, because $a_{j'} \in \beta(u_{i,j'})$ but $a_{j'} \notin \beta(u_{i,j'-1})$, which implies $a_{j'} \notin \beta(u_{i,j})$.
For each anchor $u_{i,j}$, we add the vertex $a_j$ to $G_0'$ and embed it at the position $\eta(v_{i,j}) \in \varSigma$; this is fine because $v_{i,j} \notin V(G')$.
For $j \in [q_i]$ such that $v_{i,j} \in V(G')$, we define $a_j = v_{i,j}$.
Let $J = \{j \in [q_i]: v_{i,j} \in V(G') \text{ or } u_{i,j} \text{ is an anchor}\}$ and suppose $J = \{j_1,\dots,j_{q_i'}\}$ where $j_1 < \cdots < j_{q_i'}$.
Then we have $V(G_0') \cap V(G_i') = \{a_{j_1},\dots,a_{j_{q_i'}}\}$, as we added the $a_j$'s of the anchors to $G_0'$.
Furthermore, $a_{j_1},\dots,a_{j_{q_i'}}$ are distinct and are embedded on the boundary of $D_i$ in clockwise or counterclockwise order.
Now we define $\tau_i' = (a_{j_1},\dots,a_{j_{q_i'}})$.
For $j^-,j^+ \in [q_i]$, we write $U(j^-,j^+) = \bigcup_{j=j^-}^{j^+} (\beta(u_{i,j}) \cap V(G'))$.
For convenience, set $j_0 = 0$ and $j_{q_i'+1} = q_i+1$.
Define a path $\pi_i' = (u_{i,j_1}',\dots,u_{i,j_{q_i'}}')$ where each node $u_{i,j_x}'$ is associated with a bag $\beta(u_{i,j_x}') = U(j_{x-1}+1,j_{x+1}-1)$.
Clearly, $a_{j_x} \in \beta(u_{i,j_x}')$.
Let $\mathcal{P}_i'$ be the path $\pi_i'$ with the associated bags.
One can easily verify that $\mathcal{P}_i'$ is a path decomposition of $G_i'$, because the bag of each node in $\pi_i'$ is the union of some consecutive bags in $\mathcal{P}_i$ intersecting $V(G')$ and these unions cover all bags in $\mathcal{P}_i$.
We show that the width of $\mathcal{P}_i'$ as at most $3c$.
Consider a node $u_{i,j_x}'$ in $\pi_i'$.
We claim that $\beta(u_{i,j_x}') = \beta(u_{i,j^-}) \cup \beta(u_{i,j_x}) \cup \beta(u_{i,j^+})$ where $j^- = j_{x-1}+1$ and $j^+ = j_{x+1} -1$.
We have $\beta(u_{i,j^-}) \cup \beta(u_{i,j_x}) \cup \beta(u_{i,j^+}) \subseteq \beta(u_{i,j_x}')$ since $\beta(u_{i,j_x}') = U(j^-,j^+)$.
To see $\beta(u_{i,j_x}') \subseteq \beta(u_{i,j^-}) \cup \beta(u_{i,j_x}) \cup \beta(u_{i,j^+})$, suppose there is a vertex $v \in V(G_i')$ that is in $\beta(u_{i,j_x}')$ but not in $\beta(u_{i,j^-}) \cup \beta(u_{i,j_x}) \cup \beta(u_{i,j^+})$.
As $v \in \beta(u_{i,j_x}') = U(j^-,j^+)$, we must have either $v \in U(j^-+1,j_x-1)$ or $v \in U(j_x+1,j^+-1)$.
Without loss of generality, assume $v \in U(j^-,j_x-1)$.
Observe that $v$ can be contained in $\beta(u_{i,j})$ only for $j \in \{j^-+1,\dots,j_x-1\}$, simply because $v \notin \beta(u_{i,j^-}) \cup \beta(u_{i,j_x})$ and the nodes in $\pi_i$ whose bags contain $v$ must form a connected subset.
As such, we have $v \notin \{v_{i,1},\dots,v_{i,q_i}\}$, because $j^-+1,\dots,j_x-1 \notin J$.
Also, we have $v \notin V(G_i') \backslash \{v_{i,1},\dots,v_{i,q_i}\}$, for otherwise $u_{i,j^*}$ is an anchor, where $j^* \in \{j^-+1,\dots,j_x-1\}$ is the smallest index such that $v \in \beta(u_{i,j^*})$.
Thus, $v \notin V(G_i')$ and we get a contradiction.
It follows that $\beta(u_{i,j_x}') = \beta(u_{i,j^-}) \cup \beta(u_{i,j_x}) \cup \beta(u_{i,j^+})$, which implies $\beta(u_{i,j_x}') \leq 3c$ as the width of $\mathcal{P}_i$ is $c$.
Now we use $(\pi_i',\mathcal{P}_i')$ as the witness pair of $G_i'$.
After we do this for all $i \in [c]$, we obtain the witness pairs of all vortices $G_1',\dots,G_c'$, which in turn gives us an almost-embeddable structure for $G'$ showing it is $3c$-almost embeddable.

\section{Proof of Lemma~\ref{lem-candidate}} \label{appx-proofcand}
The result of this lemma is implicit in literature. Thus, for each problem, we point out the references that give us the desired candidate set. Known kernels for all these problems actually first compute a candidate set and then apply reduction rules (in fact, an appropriate torso operation) to reduce the graph size.  

\medskip

\noindent 
{\bf EB and OCT:}   It follows from Theorems $6.9$, and $6.7$, which relies on Lemma $6.5$ of~\cite{KratschW20}. Also, see ~\cite[Lemma $3$]{LokshtanovSW12}. Here, replace the approximation algorithm for OCT on planar graphs with an approximation algorithm with factor $O(\sqrt{\log k})$ on general graphs~\cite[Theorem $6.8$]{KratschW20}.  

\smallskip

\noindent 
{\bf EMWC:} See~\cite[Corollary~$4$]{corr/abs-2002-08825}. Let $(G,T,k)$ be an instance of EMWC. We first reduce the capacity of $T$ to at most $2k$ (the number of edges incident to the terminal vertices).  
Let $T=\{t_1, \ldots, t_r\}$ be the terminal set, and let $k_i=\lambda(t_i, T-t_i)$ be the size of the ``isolating min-cut" between terminal $t_i$ and all other terminals. That is, an isolating cut for a terminal $t_i$ is a minimum set of edges which disconnects $t_i$ from each of the other terminals in $T$. It is known that there is a half-integral path-disjoint packing of paths of weight $\sum_{i=1}^r \frac{k_i}{2}$, hence if 
$\sum_i k_i > 2k$ we can reject the instance~(see the book of Schrijver,~\cite[Corollary $73.2d$]{schrijver2003combinatorial}).  By applying a reduction rule that contracts an edge into the terminal we can assume 
 %There is a reduction rule to assume
  that the edges incident to $t_i$ is a min-cut between $t_i$ and $T-t_i$.  This follows from the pushing lemma (\cite[Lemma $8.18$]{CyganFKLMPPS15}), that says that for any solution $X$, for any one terminal $t$, we can {\em push} $X$ so that it contains an ``important-cut" between $t$ and $T-t$. Further, we know that important-cut is found by a process of starting with a {\em furthest min-cut}, then doing a branching on  edges. That is, we select an edge $(x,y)$ leaving the furthest min-cut and recurse whether an important-cut contains $(x,y)$ or not (see~\cite[Theorem $8.11$]{CyganFKLMPPS15}). Hence, it  is guaranteed to exist an optimal solution $X'$ which is at least as far away from $t$ as the furthest $(t,T-t)$-min-cut. This implies the reduction rule mentioned above. More explicitly: 

\smallskip
\noindent 
{\sf Reduction Rule:} Let $(G, T,k)$ be an instance of the EMWC problem. Let $(t_1,v)$ be an edge adjacent to a terminal $t_1$,  and assume that there is a minimum isolating cut of $t_1$ that does not contain the edge $(t_1,v)$. Then the instances $(G',T,k)$ and $(G,T,k)$ are equivalent. Here, $G'$ is the graph that has been obtained by contracting the edge $(t_1,v)$, and calling the newly vertex as $t_1$. 

The above description implies that we can obtain an equivalent instance which is a minor of the original graph and   capacity of $T$ to at most $2k$. After this we follow the algorithm for EMC given in \cite[Corollary~$4$]{corr/abs-2002-08825} and design the candidate set $Z$. 

\smallskip
\noindent 
{\bf EMC:} Construction of set $Z$ in \cite[Corollary~$4$]{corr/abs-2002-08825}. 

\smallskip
\noindent 
{\bf VMWC:} Construction of set $Z$ in \cite[Lemma~$7.3$]{KratschW20}. 

\smallskip
\noindent 
{\bf VMC:} Construction of set $Z\cup T$ in \cite[Lemma~$7.4$]{KratschW20}. 

\smallskip
\noindent 
{\bf GFVS:} Construction of set $Z\cup T$ in \cite[Lemma~$7.10$]{KratschW20}. 

\smallskip
\noindent
{\bf GFES:} The proof is given  in \cite[Corollary~$4$]{corr/abs-2002-08825}. Important steps of this are as follow.  Let $(G,\phi, k)$ be an instance of GFES. Here, $\phi$ is a direction-dependent labelling of the edges of $G$ from some multiplicative group.  The algorithm first computes an approximate solution  $X_0$ of size $O(k \log k)$. Then follows the proof of  \cite[Lemma~$7.10$]{KratschW20}, and applies  \cite[Theorem~$2$ and Corollary~$3$]{corr/abs-2002-08825} to get the desired set $Z$, as our candidate set lemma. 
This concludes the lemma. 

\end{document}